\documentclass[a5paper,11pt,russian]{article}
\NeedsTeXFormat{LaTeX2e}
\usepackage[cp1251]{inputenc}
\usepackage[T2A]{fontenc}
\usepackage[russian]{babel}
\usepackage[dvips]{graphicx}
\usepackage{caption}
\captionsetup{labelsep=period}


\usepackage{amsmath}
\usepackage{amssymb}
\usepackage{amsxtra}
\usepackage{amsthm,amscd}
\usepackage{color}
\usepackage[unicode,colorlinks]{hyperref}
\usepackage{bookmark}

\usepackage[papersize={175mm,250mm},left=2cm,right=1.5cm,
    top=2cm,bottom=2cm,bindingoffset=0cm]{geometry}

\newcommand{\ds}{\displaystyle}
\newcommand{\Bif}{\textit{Bif}}

\newcommand{\mbs}[1]{ {\boldsymbol #1} }

\newcommand{\id}[1]{\mathop{\rm id}\nolimits_{#1}}

\newcommand {\bbR}{\mathbb{R}}
\newcommand {\bbI}{\bbR^3(h,g,k)}

\newcommand {\tm}{T\mathcal{M}}
\newcommand {\bbRR}{\bbR^3}
\newcommand {\bbRRR}{\bbR_0^3}
\newcommand {\Id}{\mathop{\rm Id}\nolimits}
\newcommand {\Ass}{\mathop{\rm Ass}\nolimits}
\newcommand {\grad}{\mathop{\rm grad}\nolimits}
\newcommand {\ddiv}{\mathop{\rm div}\nolimits}
\newcommand {\rang}{\mathop{\rm rank}\nolimits}
\newcommand {\is}{\mathop{\rm is}\nolimits}
\newcommand{\iid}{\mathop{\rm id}\nolimits}

\newcommand {\astup}[1]{\mathop{\mbs{#1}}\limits^{\ast} \vphantom{#1}}

\newcommand{\Osc}{\mathop{\rm Osc}\nolimits}
\newcommand{\Con}{\mathop{\rm Con}\nolimits}
\newcommand{\Asp}{\mathop{\rm Asp}\nolimits}
\newcommand{\sn}{\mathop{\rm sn}\nolimits}
\newcommand{\cn}{\mathop{\rm cn}\nolimits}
\newcommand{\dn}{\mathop{\rm dn}\nolimits}

\newcommand{\cons}{\mathop{\rm const}\nolimits}

\newcommand{\gs}{\geqslant}
\newcommand{\ls}{\leqslant}
\newcommand{\sgn}{\mathop{\rm sgn}\nolimits}

%

%

\numberwithin{equation}{section}
\theoremstyle{plain}
\newtheorem{propos}{Предложение}[section]
\newtheorem{theorem}{Теорема}[section]
\newtheorem{lemma}{Лемма}[section]
\theoremstyle{definition}
\newtheorem{defin}{Определение}[section]
\theoremstyle{remark}
\newtheorem{remark}{Замечание}[section]
\newtheorem{example}{Пример}[section]
\newtheorem{soglash}{Соглашение}[section]
\numberwithin{figure}{section}

\newcounter{myta}
\numberwithin{myta}{section}
\newcommand{\myt}{\refstepcounter{myta}\themyta}
\newcommand {\ts}[1] {\textsl{#1}}

\begin{document}

\begin{titlepage}

{\footnotesize

\noindent ЛЕНИНГРАДСКИЙ ОРДЕНА ЛЕНИНА\\
И ОРДЕНА ТРУДОВОГО КРАСНОГО ЗНАМЕНИ\\
ГОСУДАРСТВЕННЫЙ УНИВЕРСИТЕТ\\
имени А.А. ЖДАНОВА

}

\vspace{3cm}

\noindent {\bf {\Huge М. П. Х А Р Л А М О В}}

\vspace{3cm}

{\bf \huge

\noindent ТОПОЛОГИЧЕСКИЙ АНАЛИЗ

\vspace{2mm}

\noindent ИНТЕГРИРУЕМЫХ ЗАДАЧ

\vspace{2mm}

\noindent ДИНАМИКИ ТВЕРДОГО ТЕЛА

}

\vspace{7cm}

{\footnotesize

\noindent ЛЕНИНГРАД\\
ИЗДАТЕЛЬСТВО\\
ЛЕНИНГРАДСКОГО УНИВЕРСИТЕТА\\
1988

}

\end{titlepage}
\setcounter{page}{2}

\tableofcontents

\clearpage

\section*{Предисловие к первому изданию}\label{sec0}

Классическая задача о движении твёрдого тела вокруг неподвижной точки в потенциальном силовом поле занимает исключительно важное место в теоретической механике. Значение ее сравнимо с тем, какое задача трех тел имеет в динамике систем, моделируемых совокупностью точечных масс. Впрочем, в последние десятилетия эти две основные области аналитической механики испытывают взаимотяготение и своеобразное <<врастание>> друг в друга. С другой стороны, задачу небесной механики теперь нередко ставят уже не как задачу точечных масс, а как задачу твердых тел (Г.Н.\,Дубошин, В.Т.\,Кондурарь, В.В.\,Белецкий и др.). С другой стороны, вследствие восходящей к Г.В.\,Колосову аналогии редуцированных уравнений динамики твердого тела с уравнениями движения материальных точек, оказалось возможным распространить в динамику твердого тела методы небесной механики (работы В.Г.\,Демина и его учеников). Таким образом, новые результаты, полученные в одной из указанных областей, достаточно быстро находят отклик и в другой из указанных областей, достаточно быстро находят отклик и в другой, что существенно повышает их ценность для теоретической механики в целом.

Традиционный подход к решению проблем механики, состоявший в отыскании случаев интегрируемости и указания способов сведения задачи к квадратурам, в динамике твердого тела привел, по существу, к тупиковой ситуации. Это обусловлено, во-первых, принципиальной неинтегрируемостью общей задачи (теорема Пуанкаре\,--\,Гюссона о не\-существовании алгебраических интегралов, теорема В.В.\,Козлова о несуществовании интеграла, аналитического по малому параметру в окрестности случая Эйлера), во-вторых, значительными техническими препятствиями при построении решений. Исчерпывает себя метод алгебраических инвариантных соотношений: уже для соотношений четвертой степени полное исследование требует огромной работы, а для более высоких степеней аналитические вычисления практически невыполнимы\footnote[1]{В 80-х годах XX в. представлялось невероятным, что аналитические вычисления можно полностью возложить на вычислительные машины.}. В связи с этим наиболее перспективны методы глобального анализа механических систем, привлекающие аппарат дифференциальной геометрии, теория гладких многообразий и гладких отображений, КАМ-теория, теория Морса. Их появлением мы обязаны в первую очередь задачам небесной механики. Намеченная С.\,Смейлом \cite{bib41} программа топологического исследования классических механических систем, указанные им пути ее реализации в натуральных системах с симметрией дали толчок и к исследованиям фазовой топологии в динамике твердого тела. К настоящему времени изучены бифуркации совместных уровней интеграла энергии и одного (в случае осесимметричного потенциала) или двух (в случае Лагранжа) интегралов, линейных относительно компонент угловой скорости (А.\,Якоб \cite{bib67}, Я.В.\,Татаринов \cite{bib51,bib52} и др.).

Так называемые общие случаи интегрируемости задачи о движении твердого тела вокруг неподвижной точки (в которых нет ограничений на начальные условия, за исключением, быть может, фиксации постоянной площадей) и их обобщения на задачу о движении гиростата с топологической точки зрения не рассматривались. Связано это в определенной мере с большим числом свободных параметров, а в части обобщений - с отсутствием удобной, действительно рабочей конкретизации гамильтоновой теории для механических систем с гироскопическими силами. Принципиальная же трудность состоит в том, что дополнительные интегралы уравнений Эйлера\,--\,Пуассона в решениях Эйлера\,--\,Жуковского, Ковалевской и Чаплыгина\,--\,Сретенского нелинейны по компонентам угловой скорости в подвижных осях и изучение соответствующих интегральных многообразий не укладываются в схему Смейла. В то же время роль точных решений в динамике твердого тела весьма велика. Подобно тому, как особые точки кривой определяют ее свойства в целом, так и интегрируемые случаи, порождая разделяющие поверхности в пространстве параметров изучаемого объекта, задают основные тенденции движения и должны послужить фундаментом для изучения существенно более сложных задач.

В предлагаемой книге изложен метод исследования фазовой топологии механических систем с нелинейными по скоростям первыми интегралами и выполнен топологический анализ указанных общих случаев интегрируемости задачи о движении гиростата.

Согласно известной теореме Лиувилля\,--\,Арнольда регулярная поверхность уровня первых интегралов вполне интегрируемой гамильтоновой системы представляет собой объединение торов, заполненных условно-периодическими траекториями. Более сложным и более интересным с практической точки зрения оказывается вопрос об устройстве критических интегральных поверхностей, разделяющих области независимых интегралов. Эти поверхности несут на себе все особые движения и полностью определяют характер перестроек, происходящих с интегральными многообразиями при изменении констант первых интегралов. Изучение критических поверхностей и связанных с ними бифуркаций интегральных многообразий в классических задачах твердого тела - основная цель настоящего исследования. Наличие явных закономерностей в многочисленных, иногда весьма нетривиальных, примерах бифуркаций позволяет надеяться на возникновение в ближайшем будущем общей <<теории Морса интегрируемых систем>>\footnote[1]{К концу 80-х такая теория действительно была создана в работах А.Т.\,Фоменко и его школы.}.

От читателя требуется знакомство с динамикой твердого тела в объеме стандартного курса теоретической механики и с основными понятиями геометрии многообразий. Используемые обозначения общеприняты в теории множеств и анализе.

В книге изложены результаты, полученные автором в 1975~-- 1982~гг. Оглядываясь назад, я понимаю, сколь многим обязан моему учителю Владимиру Михайловичу Алексееву, светлой памяти которого посвящаю эту работу.

\clearpage

\section{Гироскопические системы и симметрия}\label{sec1}
Современные задачи аналитической механики требуют в первую очередь изучения свойств, присущих самим системам, а не являющихся следствием выбора той или иной системы обобщенных координат. Существенную роль начинает играть язык описания механических систем. Уровень формализации изначально накладывает ограничения на широту рассматриваемых проблем. На сегодняшний день наиболее перспективным в плане приложений к механике представляется активно развивающийся аппарат нескольких отраслей математики, которые можно объединить под названием <<дифференциальная топология>>. Кроме очевидных соображений удобства и компактности изложения необходимость использования этого аппарата диктуется желанием вскрыть геометрическую сущность явлений, возможностью увидеть результаты, ускользающие от восприятия при классическом <<координатном>> подходе.

\subsection{Формализм Лагранжа}\label{ssec11}
В этом параграфе излагаются основные понятия, связанные с определением ла\-г\-ранжевых систем на гладких многообразиях. Подробности и доказательства неявно фигурирующих утверждений можно найти в \cite{bib03, bib65}.

Симплектической структурой на гладком многообразии $\mathfrak{M}$ называется замкнутая невырожденная 2-форма на $\mathfrak{M}$. Многообразие с заданной на нем симплектической структурой называется симплектическим. Симплектическое многообразие всегда имеет четную размерность.

Векторному полю $X$ на $\mathfrak{M}$ и $k$-форме $\alpha$ на $\mathfrak{M}$ сопоставим $(k - 1)$-форму, называемую внутренним произведением поля $X$ и формы $\alpha$ и обозначаемую $i_X \alpha$, такую, что для любых векторных полей $X_1 , …, X_{k-1}$ на $\mathfrak{M}$
$$(i_X\alpha)( X_1 , …, X_{k-1}) = \alpha(X, X_1 , …, X_{k-1}).$$

Пусть $(\mathfrak{M}, \sigma)$ -- симплектическое многообразие. Для любой 1-формы $\lambda$ на $\mathfrak{M}$ существует единственное векторное поле $X$ на $\mathfrak{M}$, удовлетворяющее равенству
$$
i_X\sigma = \lambda.
$$
Если форма $\lambda$ точна, то поле $X$ называется гамильтоновым, а функция $H$, такая, что
$$
i_X\sigma = -dH
$$
-- гамильтонианом поля $X$. При заданной симплектической структуре гамильтониан определен с точностью до аддитивной постоянной.

Пусть ${F_1, F_2}$ -- функции на симплектическом многообразии $(\mathfrak{M}, \sigma)$. Их скобкой Пуассона называется величина
$$
\{F_1, F_2\}=\sigma(X_1, X_2),
$$
где $X_i$ -- гамильтоново поле, порожденное гамильтонианом $F_i \: (i = 1, 2)$. Очевидны равенства
\begin{equation}\label{aeq1_1}
    \{F_1, F_2\} = X_2F_1 = dF_1(X_2) = -X_1F_2 = -dF_2(X_1).
\end{equation}
Говорят, что функции $F_1, F_2$ находятся в инволюции, если
$$
\{F_1, F_2\} \equiv 0.
$$

Пусть $\mathcal{M}$ - гладкое конечномерное многообразие. Положим $\mathfrak{M} = T^*\mathcal{M}$ и обозначим через
$$
q_\mathcal{M} : T^*\mathcal{M} \rightarrow \mathcal{M}
$$
стандартную проекцию кокасательного расслоения. Касательное к ней отображение
$$
Tq_\mathcal{M} : T\mathfrak{M} \rightarrow T\mathcal{M}
$$
переводит слой над элементом $\lambda_x \in \mathfrak{M}$ в слой над точкой $x = q_M(\lambda_x) \in \mathcal{M}$. Соответствие
$$
X \in T_{\lambda_x}\mathfrak{M} \mapsto \lambda_x(Tq_\mathcal{M}(X))
$$
является 1-формой на многообразии $\mathfrak{M}$. Обозначим эту форму через $\theta_*$. Ее внешний дифференциал $\sigma_* = d\theta_*$ определяет на $\mathfrak{M}$ симплектическую структуру. Формы $\theta_*$ и $\sigma_*$ называются каноническими формами на $T^*\mathcal{M}$.

Обозначим через
$$
p_\mathcal{M}:T\mathcal{M} \rightarrow \mathcal{M}
$$
стандартную проекцию касательного расслоения.

Векторное поле $X$ на многообразии $T\mathcal{M}$ (т.е. сечение второго касательного расслоения $X : T\mathcal{M} \rightarrow TT\mathcal{M}$) называется уравнением второго порядка (на $\mathcal{M}$) или, в терминологии \cite{bib10}, специальным векторным полем, если
\begin{equation}\label{aeq1_2}
    T_{p_\mathcal{M}}{\circ} X = \id{T\mathcal{M}}
\end{equation}
(тождественное отображение). Каждая интегральная кривая $w: (-\varepsilon, \varepsilon) \rightarrow T\mathcal{M}$ уравнения второго порядка является производной своей проекции
\begin{equation*}
    w(t)=\frac{d}{dt}(p_\mathcal{M}{\circ} w(t)).
\end{equation*}
Проекции $p_\mathcal{M}{\circ} w$ интегральных кривых принято называть решениями уравнения второго порядка.

Пусть $L$ - гладкая функция на многообразии $T\mathcal{M}$. Обозначим через $L_x$ ограничение $L$ на слой $T_x\mathcal{M}$ над точкой $x \in \mathcal{M}$.

\begin{defin}\label{def111}
Функция $L$ называется лагранжианом, если для всех $x \in \mathcal{M}$ функция $L_x$ имеет всюду отличный от нуля гессиан.
\end{defin}

Каждому лагранжиану $L$ сопоставим преобразование Лежандра
\begin{equation}\label{aeq1_3}
    \Phi_L : T\mathcal{M} \rightarrow T^*\mathcal{M},
\end{equation}
ставящее в соответствие вектору $w \in T_x\mathcal{M}$ элемент $dL_x(w)$ сопряженного пространства $T_x^*\mathcal{M}$. Отображение $\Phi_L$ сохраняет слои и всюду регулярно.

Энергией лагранжиана $L$ называется функция $H$ на $T\mathcal{M}$, определяемая соотношением
\begin{equation}\label{aeq1_4}
    H(w)=\Phi_L(w)\cdot w -L(w), \qquad w \in T\mathcal{M}.
\end{equation}

Если $f:A \to B$ -- отображение многообразий, то через $f^*:\Lambda(B) \to \Lambda(A)$ будем обозначать порожденное им антиотображение дифференциальных форм.

Лагранжиану $L$ на $T\mathcal{M}$ сопоставим дифференциальные формы
\begin{equation}\label{aeq1_5}
    \theta_L=\Phi_L^*\theta_*, \qquad \sigma_L=\Phi_L^*\sigma_*.
\end{equation}
Поскольку отображение $\Phi_L$ регулярно, пара $(T\mathcal{M}, \sigma_L)$ есть симплектическое многообразие.

\begin{defin}\label{def112}
Лагранжева система с лагранжианом $L$ -- это векторное поле $X$ на многообразии $T\mathcal{M}$, такое, что
\begin{equation}\label{aeq1_6}
    i_X \sigma_L = -dH.
\end{equation}
\end{defin}

Таким образом, лагранжева система есть гамильтоново векторное поле на симплектическом многообразии $(T\mathcal{M}, \sigma_L)$ с гамильтонианом, равным энергии лагранжиана $L$.

Пусть $q^1, \ldots, q^n$ -- локальные координаты в некоторой области на $\mathcal{M}$. Дифференциалы координатных функций $dq^1, \ldots, dq^n$ образуют базис в каждом кокасательном пространстве $T_x^*\mathcal{M}$. Следовательно, величины
$$
\dot{q}^1=dq^1(v_x), \; \ldots, \;\dot{q}^n=dq^n(v_x)
$$
могут быть выбраны в качестве координат касательного вектора $v_x \in T_x\mathcal{M}$, а набор функций $q^1, \ldots, q^n, \dot{q}^1, \ldots, \dot{q}^n$ -- в качестве системы координат в соответствующей области $T\mathcal{M}$. Такие системы координат на $T\mathcal{M}$ будем в дальнейшем называть естественными.

Записывая в естественных координат $L=L({\bf q},\dot{{\bf q}})$ и обозначая через ${d}/{dt}$ производную вдоль векторного поля $X$, получим, что лагранжева система, определенная согласно \eqref{aeq1_6}, равносильна заданию системы дифференциальных уравнений
\begin{equation}\label{aeq1_7}
\begin{array}{c}
    \ds \frac{dq^i}{dt}=\dot{q}^i, \quad \frac{d}{dt}\frac{\partial}{\partial \dot{q}^i}\left({\bf q},\frac{d{\bf q}}{dt}\right)-\frac{\partial L}{\partial q^i}\left({\bf q},\frac{d{\bf q}}{dt}\right)=0 \\[2mm]
    (i=1, \ldots, n).
\end{array}
\end{equation}
В частности, лагранжева система представляет собой уравнение второго порядка на $\mathcal{M}$.

Выделим особо класс лагранжианов на $T\mathcal{M}$, у которых ограничение на каждый слой $T_x\mathcal{M}$ имеет структуру полинома второй степени. Такие лагранжианы будем называть квадратичными. Квадратично-однородную часть лагранжиана $L$ обозначим через $Q(L)$. Очевидно, $Q(L)$ определяет на каждом слое $T_x\mathcal{M}$ невырожденную квадратичную форму.
Любой квадратичный лагранжиан на $T\mathcal{M}$ однозначно записывается в виде
\begin{equation}\label{aeq1_8}
    L=Q(L)+\lambda-\Pi \circ p_\mathcal{M},
\end{equation}
где $\lambda$ есть 1-форма на $\mathcal{M}$, рассматриваемая как функция на $T\mathcal{M}$, а $\Pi$ -- функция на $\mathcal{M}$.
Выражение энергии для лагранжиана \eqref{aeq1_8} получим из \eqref{aeq1_4} и теоремы Эйлера об однородных функциях:
\begin{equation}\label{aeq1_9}
    H=Q(L)+\Pi \circ p_\mathcal{M}.
\end{equation}

\begin{propos}\label{pro113}
 Пусть лагранжиан $L$ имеет вид \eqref{aeq1_8}. Тогда
\begin{equation}\label{aeq1_10}
    \sigma_L=\sigma_{Q(L)}+p_\mathcal{M}^*d\lambda.
\end{equation}
\end{propos}

\begin{proof} Из выражения \eqref{aeq1_8}, определения отображения \eqref{aeq1_3} и линейности функции $\lambda$ на слоях $T\mathcal{M}$ найдем
\begin{equation}\label{aeq1_11}
    \Phi_L=\Phi_Q(L)+\lambda \circ p_\mathcal{M}.
\end{equation}

Пусть $w \in T_x\mathcal{M}, X \in T_wT\mathcal{M}$. По определению \eqref{aeq1_5}
\begin{equation}\label{aeq1_12}
    \theta_L(w) \cdot X = \theta_*(\Phi_L(w)) \cdot T_w\Phi_L(X) = \Phi_L(w) \cdot T_w (q_\mathcal{M} \circ \Phi_L)(X).
\end{equation}
Подставляя в это выражение \eqref{aeq1_11}, учтем, что $\Phi_L$ сохраняет слои над $\mathcal{M}$, т.е. $q_\mathcal{M} \circ \Phi_L = p_\mathcal{M}$. Поэтому
\begin{equation}\label{aeq1_13}
    \theta_L(w) \cdot X = \theta_*(\Phi_L(w)) \cdot T_{p_\mathcal{M}}(X) + \lambda(x) \cdot T_{p_\mathcal{M}}(X).
\end{equation}
С другой стороны, заменив в \eqref{aeq1_12} $L$ на $Q(L)$, получим
\begin{equation}\label{aeq1_14}
    \theta_{Q(L)}(w) \cdot X = \Phi_{Q(L)}(w) \cdot T_{p_\mathcal{M}}(X).
\end{equation}
Сравнение соотношений \eqref{aeq1_13}, \eqref{aeq1_14} дает $\theta_L = \theta_{Q(L)} + p_\mathcal{M}^*\lambda$. Дифференцируя последнее равенство внешним образом, приходим к выражению \eqref{aeq1_10}, что и требовалось доказать.\end{proof}

Согласно формулам \eqref{aeq1_6}, \eqref{aeq1_9}, \eqref{aeq1_10} лагранжева система зависит не от входящей в лагранжиан линейной формы, а от ее внешнего дифференциала -- некоторой точной \mbox{2-формы} на конфигурационном пространстве $\mathcal{M}$. Соответствующие слагаемые в уравнениях \eqref{aeq1_7} обычно трактуются как действующие на систему гироскопические силы. Требование лагранжевости таких сил иногда становится слишком ограничительным. В самом деле, гироскопические силы являются обычно следствием <<скрытых движений>>, т.е. понижения порядка натуральных систем. При этом могут возникать системы, не обладающие глобальным лагранжианом. Естественно желание снять подобные ограничения, что приводит к определению понятия гироскопической системы, обобщающего понятие системы с квадратичным лагранжианом.

\subsection{Механические системы с гироскопическими силами}\label{ssec12}
Механической системой с гироскопическими силами, или, короче, гироскопической системой, назовем четверку
\begin{equation}\label{aeq2_1}
    (\mathcal{M}, s, \Pi, \varkappa),
\end{equation}
в которой $\mathcal{M}$ -- гладкое многообразие (конфигурационное пространство системы), $s$ -- риманова метрика на $\mathcal{M}$ (скалярное произведение в касательных пространствах $T_x\mathcal{M}$, гладко зависящее от точки $x \in \mathcal{M}$), $\Pi$ -- дифференцируемая функция на $\mathcal{M}$ (потенциальная энергия системы или потенциал), $\varkappa$ -- замкнутая 2-форма на $\mathcal{M}$ (форма гироскопических сил).

Скалярное произведение в метрике $s$ обозначим для удобства угловыми скобками. Пусть, как обычно, $\|w\|=\langle w,w\rangle ^{1/2}, \: w \in T\mathcal{M}$. Функция
 \begin{equation}\label{aeq2_2}
    w \mapsto \frac{1}{2}\|w\|^2, \quad w \in T\mathcal{M}
 \end{equation}
называется кинетической энергией системы \eqref{aeq2_1}. Полная энергия определяется равенством
\begin{equation}\label{aeq2_3}
    H(w)=\frac{1}{2}\|w\|^2+\Pi\left(p_\mathcal{M}(w)\right).
\end{equation}

Для описания системы необходимо ввести соответствующее векторное поле на $T\mathcal{M}$. Начнем с определения форм, аналогичных \eqref{aeq1_5}.

Пусть $w \in T\mathcal{M}$. В касательном пространстве $T_wT\mathcal{M}$ определим линейную форму $\theta(w)$, положив для всякого $Y \in T_w T\mathcal{M}$
\begin{equation}\label{aeq2_4}
    \theta(w) \cdot Y = \langle w, T_{p_\mathcal{M}}(Y) \rangle .
\end{equation}
\begin{propos}\label{pro121}
Соответствие
\begin{equation}\label{aeq2_5}
    \theta:w \mapsto \theta(w)
\end{equation}
задает на $T\mathcal{M}$ дифференциальную 1-форму. Ее внешний дифференциал
\begin{equation}\label{aeq2_6}
    \sigma=d\theta,
\end{equation}
называемый лагранжевой формой метрики $s$, превращает $T\mathcal{M}$ в симплектическое многообразие.
\end{propos}

Нетрудно увязать это утверждение со свойствами форм \eqref{aeq1_5}. Проведем, однако, независимое наглядное доказательство.

Пусть $\left( q^1,\ldots, q^m, \dot{q}^1, \ldots, \dot{q}^m \right)$ -- естественные координаты на $T\mathcal{M}$ в окрестности некоторой точки $w$. Линейные формы $dq^1, \ldots, dq^m, d\dot{q}^1, \ldots, d\dot{q}^m$ образуют базис в слоях кокасательного пространства $T^*T\mathcal{M}$. Если $A = \left(a_{ij}({\bf q})\right)$ -- положительно-определенная симметрическая матрица римановой метрики $s$, т.е.
\begin{equation*}
    s_q\left(\dot{\bf q}_{(1)}, \dot{\bf q}_{(2)}\right) = \sum_{i,j=1}^m a_{ij}\left({\bf q}\right) \dot{q}_{(1)}^i \dot{q}_{(2)}^j,
\end{equation*}
то соответствие \eqref{aeq2_5} согласно \eqref{aeq2_4} принимает вид
\begin{equation}\label{aeq2_7}
     \theta \left({\bf q}, \dot{\bf q}\right) = \sum_{i,j=1}^m a_{ij}\left({\bf q}\right) \dot{q}^i dq^j
\end{equation}
и является гладким. Следовательно, \eqref{aeq2_5} представляет собой дифференцируемое сечение $T^*T\mathcal{M}$, т.е. дифференциальную 1-форму на $T\mathcal{M}$.

Для доказательства второго утверждения необходимо проверить лишь невырожденность формы \eqref{aeq2_6}. Дифференцируя выражение \eqref{aeq2_7} внешним образом, находим
\begin{equation*}
     \sigma \left({\bf q}, \dot{\bf q}\right) = \sum_{i,j=1}^m a_{ij}\left({\bf q}\right) d\dot{q}^i \wedge dq^j + \sum_{i,j,k=1}^m \dot{q}^i\frac{\partial a_{ij}\left({\bf q}\right)}{\partial q^k} dq^k \wedge dq^j.
\end{equation*}
Следовательно, матрица формы $\sigma$
\begin{equation}\label{aeq2_8}
    S=\left[ \begin{array}{cc}* & -A \\ A & 0 \end{array}\right]
\end{equation}
имеет отличный от нуля определитель $\det S = (\det A)^2$, что и требовалось.
\begin{theorem}\label{the122}
Существует единственное векторное поле $X$ на многообразии $T\mathcal{M}$, удовлетворяющее соотношению
\begin{equation}\label{aeq2_9}
    i_X(\sigma + p_\mathcal{M}^*\varkappa)=-dH.
\end{equation}
Это поле является уравнением второго порядка.
\end{theorem}

\begin{proof}
Заметим, что образ формы $\varkappa$ при отображении $p_\mathcal{M}^*$ не содержит в локальной записи форм $d {\dot q}^i$, поэтому матрица 2-формы $\sigma + p_\mathcal{M}^* \varkappa$ отличается от матрицы \eqref{aeq2_8} лишь левым верхним углом. В частности, ее детерминант совпадает с $\det S$ и отличен от нуля. Таким образом, форма $\sigma + p_\mathcal{M}^* \varkappa$ невырождена, откуда следует существование и единственность поля $X$. Второе утверждение элементарно проверяется с помощью естественных координат на $T\mathcal{M}$. Условие \eqref{aeq1_2} сводится при этом к виду $\ds \dot{\bf q} = d{\bf q}/{dt}$, где $d/{dt}$ -- производная вдоль поля $X$, и для его проверки достаточно приравнять коэффициенты при $d \dot{q}^i$ в обеих частях соотношения \eqref{aeq2_9}.\end{proof}

\begin{defin}\label{def123}
Векторное поле $X$ на многообразии $T\mathcal{M}$, введенное согласно \eqref{aeq2_9}, называем динамической системой, соответствующей механической системе \eqref{aeq2_1}. Интегральные кривые поля $X$ называем фазовыми траекториями, а его решения (проекции интегральных кривых на $\mathcal{M})$ -- движениями в системе \eqref{aeq2_1}.
\end{defin}

Интегралом механической системы \eqref{aeq2_1} назовем первый интеграл соответствующей динамической системы.

\begin{propos}\label{pro124}
Полная энергия $H$ является интегралом системы \eqref{aeq2_1}.
\end{propos}

\noindent Действительно, $H$ есть гамильтониан поля $X$ на симплектическом многообразии
$$
\left(T\mathcal{M}, \sigma+p_\mathcal{M}^*\varkappa\right).
$$
Другой важный класс интегралов укажем в следующем параграфе.

В заключение укажем любопытную геометрическую интерпретацию движений в гироскопических системах, обобщающую принцип Мопертюи.

Зафиксируем постоянную $h$ интеграла энергии. Из его представления \eqref{aeq2_3} следует, что движения происходят в области $\mathcal{M}_h=\{x \in \mathcal{M}: \Pi(x)\ls h\}$. Определим в этой области риманову метрику (вырождающуюся на границе) $s_h$, положив для любых $w_1,w_2 \in T_x\mathcal{M}_h$
\begin{equation*}
    s_h(x)(w_1, w_2) = 2[h - \Pi(x)]\langle w_1, w_2 \rangle_x.
\end{equation*}
В частности, $\|w\|_{s_h}=\sqrt{2[h - \Pi(x)]}\|w\|_s$, если $w \in T_x\mathcal{M}_h$. Пусть $\Gamma_h:T_x\mathcal{M}_h\rightarrow T_x^*\mathcal{M}_h$ -- оператор <<опускания индексов>> в  метрике $s_h$, так что
\begin{equation*}
    \Gamma_h(v) \cdot w=s_h(x)(v, w), \qquad v, w \in T_x\mathcal{M}_h.
\end{equation*}

\begin{theorem}\label{the125}
Пусть $x(t)$ -- движение в системе \eqref{aeq2_1}, при котором выполнен закон сохранения
\begin{equation}\label{aeq2_10}
    H\left(x'(t)\right) \equiv h.
\end{equation}
Если через $\overline{x}(\tau)$ обозначить ту же кривую, но параметризированную длиной дуги метрики $s_h$, то справедливо равенство
\begin{equation}\label{aeq2_11}
    \frac{D}{d\tau}\frac{d\overline{x}}{d\tau} = -\Gamma_h^{-1}\left(i_{\frac{d\overline{x}}{d\tau}} \varkappa\right),
\end{equation}
где ковариантная производная берется в метрике $s_h$.

Обратно, пусть $\overline{x}(\tau)$ -- кривая в области $\mathcal{M}_h$, параметризованная дугой метрики $s_h$ и удовлетворяющая равенству \eqref{aeq2_11}. Тогда существует такая монотонная замена параметра $\tau = \tau(t)$, что кривая $x(t) = \overline{x}(\tau(t))$ является движением в системе \eqref{aeq2_1}, при котором выполнен закон \eqref{aeq2_10}.
\end{theorem}
Утверждение носит локальный характер, и его доказательство в координатах сводится к элементарным выкладкам. Вариационное доказательство первой части для случая двумерного конфигурационного пространства $\mathcal{M}$ имеется в работе [2].

Потоки на многообразиях постоянной энергии, определенные уравнением вида \eqref{aeq2_11}, естественно называть потоками заданной кривизны. В случае $\varkappa=0$ получаем обычный геодезический поток метрики $s_h$ (одна из формулировок принципа Мопертюи).

Отметим случай ${\rm dim}\mathcal{M}=2$. При этом геодезическая кривизна траекторий \eqref{aeq2_11} зависит лишь от точки многообразия, а не от направления самих траекторий. А именно, пусть $\omega_h$~-- форма объема в области ${\rm int}\mathcal{M}=\{x \in \mathcal{M}:\Pi(x)<h\}$, соответствующая метрике $s_h$. Тогда существует функция $\varkappa = k_h \omega_h$. Для любого вектора $v \in T_x\mathcal{M}$, единичного в метрике $s_h$, вектор $w = -\Gamma_h^{-1}(i_v\varkappa)$ ортогонален $v$ в метрике $s_h$:
\begin{equation*}
    s_h(v, w) = \Gamma_h(w) \cdot v = -\varkappa(v, v) = 0.
\end{equation*}
Длина вектора $w$ (геодезическая кривизна траектории, если положить $\ds v = \frac{d\overline{x}}{d\tau}$) поэтому равна $|\omega_h(v, w)|$. С другой стороны,
\begin{equation*}
    \begin{array}{rcl}
        \|w\|_{s_h}^2 & = & \|\Gamma_h^{-1}(i_v\varkappa)\|_{s_h}^2 = i_v\varkappa \cdot \Gamma_h^{-1}(i_v\varkappa) = \\
        & = & \varkappa \left(v,\Gamma_h^{-1}(i_v\varkappa) \right) = -\varkappa(v, w) = -k_h\omega_h(v, w).
    \end{array}
\end{equation*}

Отсюда вытекает, во-первых, что $\|w\|_{s_h} = |k_h|$. Во-вторых, $\varkappa(v, w) > 0$, если $k_h \neq 0$, так что базис $\{w,v \}$ в $T_x\mathcal{M}_h$ определяет ту же ориентацию, что и 2-форма $\varkappa$. Итак, доказано следующее утверждение.

\begin{propos}\label{pro126}
Пусть в системе \eqref{aeq2_1} ${\rm dim}\mathcal{M}=2$. Кривая $x(t)$, удовлетворяющая закону \eqref{aeq2_10}, является движением в этой системе тогда и только тогда, когда, будучи параметризована длиной дуги $s_h$, имеет геодезическую кривизну, равную $\left| {\varkappa}/{\omega_x} \right|$, а направление искривления таково, что базис {\rm \{}вектор кривизны, касательный вектор{\rm \}} ориентирует касательное к $\mathcal{M}$ пространство в точке $x(t)$ так же, как и форма гироскопических сил.
\end{propos}

\begin{remark}\label{rem121}
В утверждениях этого параграфа не использована замкнутость формы гироскопических сил. Однако это требование естественно. Уже упоминавшееся понижение порядка всегда приводит к замкнутым формам \cite{bib59}. Кроме того, замкнутость необходима, чтобы 2-форма $\sigma + p_\mathcal{M}^*\varkappa$ определяла на фазовом пространстве $T\mathcal{M}$ симплектическую структуру и система \eqref{aeq2_1} обладала всеми свойствами систем Гамильтона.
\end{remark}

\subsection{Симметрия в гироскопических системах}\label{ssec13}
Пусть на гладком многообразии $\mathcal{M}$ действует однопараметрическая группа диффеоморфизмов $\Psi = \left\{\psi^\tau\right\}$:
\begin{equation}\label{aeq3_1}
    \psi ^ \tau  : \mathcal{M} \to \mathcal{M}, \qquad \tau \in \bbR.
\end{equation}
Действие $\Psi$ распространяется на $T\mathcal{M}$ с помощью касательных отображений:
\begin{equation}\label{aeq3_2}
    T {\psi ^ \tau} : T\mathcal{M} \to T\mathcal{M}, \qquad \psi^\tau \in \ \Psi.
\end{equation}
Векторные поля, порождающие действия \eqref{aeq3_1}, \eqref{aeq3_2}, обозначим
\begin{eqnarray}
&     \ds v(x) & = \left. \frac{d}{d\tau}\right|_{\tau = 0} \psi^\tau(x), \quad x \in \mathcal{M}, \label{aeq3_3}\\
&     \ds v_T(x) & = \left. \frac{d}{d\tau}\right|_{\tau = 0} T\psi^\tau(x), \quad x \in T\mathcal{M}. \label{aeq3_4}
\end{eqnarray}

Согласно известной теореме Нётер (см., например, \cite[\S 20]{bib05}), лагранжева система на $T\mathcal{M}$ с $\Psi$-сохранным лагранжианом $L : T\mathcal{M} \to \bbR$
$$
 \forall \tau \in \bbR, w \in T\mathcal{M} : L \left(T \psi^ \tau (w) \right) = L(w)
$$
обладает первым интегралом $G : T\mathcal{M} \to \bbR$, имеющим в локальных координатах вид
\begin{equation}\label{aeq3_5}
    \ds G \left( {\bf q}, \dot{{\bf q}} \right) = \frac{\partial L}{\partial {\dot {\bf q}}} \cdot \left. \frac{d}{d\tau}\right|_{\tau = 0} \psi^\tau({\bf q}).
\end{equation}
В случае, когда лагранжиан $L$ квадратичен и его квадратично-однородная часть порождена римановой метрикой $ \langle,\rangle$ на $\mathcal{M}$, интеграл \eqref{aeq3_5} можно записать в инвариантном виде
\begin{equation}\label{aeq3_6}
    G(w) = \langle v, w \rangle + f \circ p_\mathcal{M}(w),
\end{equation}
где $f : \mathcal{M} \to \bbR$ -- гладкая функция, $v$ -- поле \eqref{aeq3_3}.

Условие $\Psi$-сохранности лагранжиана является достаточным, но отнюдь не необходимым условием для существования интеграла \eqref{aeq3_6}. В самом деле, если к $L$ прибавить произвольную замкнутую форму (в локальных координатах~-- полную производную функции от ${\bf q}$), то, вообще говоря, нарушится свойство сохранности. Однако, как показано в $\S$ 1.1, соответствующее векторное поле на $T\mathcal{M}$ не изменяется (в координатах не изменяются уравнения \eqref{aeq1_7}~-- факт хорошо известный \cite{bib30, bib54}), а значит, остается и интеграл \eqref{aeq3_6}. Поэтому имеет смысл перейти к инвариантным формулировкам предыдущего параграфа и поставить вопрос о нахождении критериев существования у гироскопической системы \eqref{aeq2_1} интегралов вида \eqref{aeq3_6}.

Предварительно отметим одно интересное характеристическое свойство ура\-внений второго порядка.
\begin{lemma}\label{lem131}
Векторное поле $X$ на $T\mathcal{M}$ является уравнением второго порядка тогда и только тогда, когда для всякой функции $f$ на многообразии $\mathcal{M}$ выполнено равенство
\begin{equation}\label{aeq3_7}
    \ds \frac{d}{dt}(f \circ p_\mathcal{M})=df.
\end{equation}
\end{lemma}
\noindent Здесь ${d}/{dt}$ -- производная вдоль поля $X$, а форма $df$ на $\mathcal{M}$ рассматривается как функция на $T\mathcal{M}$.

\begin{proof}
Для любого векторного поля $X$ на $T\mathcal{M}$ имеем
\begin{equation}\label{aeq3_8}
    \ds \frac{d}{dt}(f \circ p_\mathcal{M}) = d(f \circ p_\mathcal{M}) \cdot X = df \cdot T_{p_\mathcal{M}}(X).
\end{equation}
Согласно определению \eqref{aeq1_2} необходимость \eqref{aeq3_7} очевидна.

Пусть \eqref{aeq3_7} выполнено для любой функции $f$. Допустим, что при этом
$$
T_{p_\mathcal{M}}(X(w_1)) = w_2 \neq w_1, \quad w_1, w_2 \in T_x\mathcal{M}.
$$
Всегда найдется функция $f$, для которой $df(x) (w_1 - w_2) \neq 0$, что противоречит равенствам \eqref{aeq3_7}, \eqref{aeq3_8}. Этим доказана достаточность условия \eqref{aeq3_7}.\end{proof}

Рассмотрим теперь гироскопическую систему \eqref{aeq2_1}, на конфигурационном пространстве которой действует группа \eqref{aeq3_1}.
\begin{defin}\label{def132}
Группу $\Psi$ называем группой симметрий системы
$$(\mathcal{M}, s, \Pi, \varkappa),$$
если ее преобразования являются изометриями $s$ и сохраняют потенциал $\Pi$ и форму $\varkappa$.
\end{defin}
При $\varkappa \equiv 0$ получаем определение симметрии в натуральной механической системе \cite{bib41}.

Данное определение можно записать в терминах порождающих полей \eqref{aeq3_3}, \eqref{aeq3_4}. В этом случае получим: если $\Psi$ -- группа симметрий системы \eqref{aeq2_1}, то выполнены равенства
\begin{gather}
    v\Pi \equiv 0, \label{aeq3_9}\\
    v_T H \equiv 0,\label{aeq3_10}\\
    \mathbf{L}_v\varkappa \equiv 0,\label{aeq3_11}
\end{gather}
где $\mathbf{L}_v$ -- производная Ли вдоль векторного поля $v$.

\begin{lemma}\label{lem133}
Если $\Psi$ -- группа симметрий, то она сохраняет дифференциальные формы \eqref{aeq2_5}, \eqref{aeq2_6}.
\end{lemma}
С учетом сохранности метрики $s$ это очевидно.

\begin{propos}\label{pro134}
Если $\Psi$ -- группа симметрий, то ее преобразования являются симплектическими диффеоморфизмами $(T\mathcal{M}, \sigma + p_\mathcal{M}^* \varkappa).$
\end{propos}
\begin{proof}
Заметим, что для всех $w \in T_x\mathcal{M}$
\begin{equation*}
\begin{array}{rcl}
    \ds{T_{p_\mathcal{M}}(v_T(w))} & = & \ds{T_{p_\mathcal{M}}\left(\left. \frac{d}{dt}\right|_{\tau=0}T\psi^\tau(w)\right) =} \\[3mm]
    & = & \ds{\left. \frac{d}{dt}\right|_{\tau=0}p_\mathcal{M} \circ T\psi^\tau(w) = \left. \frac{d}{dt}\right|_{\tau=0}\psi^\tau \left(p_\mathcal{M}(w) \right) = v(x).}
\end{array}
\end{equation*}
Следовательно,
\begin{equation}\label{aeq3_12}
    T_{p_\mathcal{M}}(v_T) = v.
\end{equation}
Тогда
\begin{equation*}
    {\bf L}_{v_T}(\sigma + p_\mathcal{M}^* \varkappa)={\bf L}_{v_T}\sigma +p_\mathcal{M}^*{\bf L}_{T_{p_M}(v_T)}\varkappa = 0
\end{equation*}
в силу \eqref{aeq3_11} и леммы \ref{lem133}. Поэтому преобразования группы $\Psi$ сохраняют симплектическую структуру, что и утверждалось.\end{proof}

Далее $X$ означает векторное поле, определяющее динамику системы \eqref{aeq2_1} с группой симметрий \eqref{aeq3_1}, $d/dt$ -- производную вдоль $X$ функций на
${\tm}$. Положим
\begin{equation}\label{aeq3_13}
    \widetilde{G}(w) = \langle v(x), w \rangle, \qquad w \in T_x\mathcal{M}.
\end{equation}

\begin{propos}\label{pro135}
Справедливо равенство
\begin{equation}\label{aeq3_14}
    \frac{d\widetilde{G}}{dt} = i_v(\varkappa),
\end{equation}
в котором форма $i_v \varkappa$ на многообразии $\mathcal{M}$ рассматривается как функция на ${\tm}$.
\end{propos}
\begin{proof}
С учетом соотношения \eqref{aeq3_12} представим функцию \eqref{aeq3_13} в виде
\begin{equation}\label{aeq3_15}
    \widetilde{G}(w) = \left\langle w, T_{p_\mathcal{M}}(v_T)\right\rangle = \left( i_{v_T}\theta\right)(w).
\end{equation}
Следовательно,
\begin{equation*}
    \ds \frac{d\widetilde{G}}{dt}=Xi_{v_T}\theta = d \left(i_{v_T}\theta \right) \cdot X.
\end{equation*}
Используя формулу для производной Ли \cite{bib13}
\begin{equation}\label{aeq3_16}
    {\bf L}_y = d\:i_y + i_y d,
\end{equation}
с учетом леммы \ref{lem133} получаем
\begin{equation}\label{aeq3_17}
    di_{v_T}\theta = i_{v_T}\sigma.
\end{equation}
Согласно формулам \eqref{aeq1_2}, \eqref{aeq2_9}, \eqref{aeq3_10}, \eqref{aeq3_12} находим
\begin{equation*}
\begin{array}{rcl}
    \ds \frac{d\widetilde{G}}{dt} & = & - \left(i_{v_T}\sigma \right) \cdot X = \left(i_X\sigma \right) \cdot v_T = \\
    {}& = & -dH \cdot v_T - \left(i_X p_\mathcal{M}^* \varkappa \right) \cdot v_T = \varkappa \left(T_{p_\mathcal{M}}(v_T), T_{p_\mathcal{M}}(X) \right) = i_v \varkappa,
\end{array}
\end{equation*}
что и требовалось доказать.\end{proof}

\begin{theorem}\label{the136}
Пусть на конфигурационном пространстве механической системы с гироскопическими силами $(\mathcal{M}, s, \Pi, \varkappa)$ действует однопараметрическая группа, сохраняющая риманову метрику и потенциал. Тогда для существования интеграла вида \eqref{aeq3_6}, где $v$ -- поле, порождающее действие группы, необходимо и достаточно, чтобы 1-форма $i_v \varkappa$ была точной. При этом ее связь с интегралом \eqref{aeq3_6} такова:
\begin{equation}\label{aeq3_18}
    i_v \varkappa = -df.
\end{equation}
\end{theorem}
\begin{proof}
Согласно лемме \ref{lem131} и равенству \eqref{aeq3_14}, условие \eqref{aeq3_18} равносильно тому, что $\ds \frac{d}{dt}(\widetilde{G} + f \circ p_\mathcal{M})\equiv 0$, т.е. существованию интеграла \eqref{aeq3_6}.\end{proof}

Условие \eqref{aeq3_18} по форме совпадает с определением <<гамильтоновой системы>>  $v$ на <<симплектическом многообразии>>   $(\mathcal{M}, \varkappa)$ с <<гамильтонианом>> $f$. Его точный смысл устанавливается следующим образом.
\begin{propos}\label{pro137}
Если выполнено условие \eqref{aeq3_18}, то поток, определяемый группой $\Psi$ на симплектическом многообразии $({\tm}, \sigma + p_\mathcal{M}^* \varkappa)$, гамильтонов, и его гамильтониан есть функция \eqref{aeq3_6}.
\end{propos}
\begin{proof}
Так как поток $\Psi$ на ${\tm}$ определяется полем \eqref{aeq3_4}, то необходимо доказать, что
\begin{equation}\label{aeq3_19}
    i_{v_T}(\sigma + p_\mathcal{M}^* \varkappa) = -dG.
\end{equation}
Учитывая соотношение \eqref{aeq3_12}, имеем $i_{v_T}(\sigma + p_\mathcal{M}^* \varkappa) = i_{v_T} \sigma + p_\mathcal{M}^*i_v \varkappa$.
Сравнивая выражения \eqref{aeq3_15} и \eqref{aeq3_17}, находим $i_{v_T}\sigma = -d\widetilde{G}$, а из условия \eqref{aeq3_18} получаем
$$
p_\mathcal{M}^*i_v \varkappa = -p_\mathcal{M}^*df = -d(f \circ p_\mathcal{M}).
$$
Таким образом, $i_{v_T}(\sigma + p_\mathcal{M}^* \varkappa) = -d(f \circ p_\mathcal{M})$, и в силу \eqref{aeq3_6} получаем выражение \eqref{aeq3_19}. Предложение доказано.\end{proof}

Из предложения \ref{pro134} следует, что поток $\Psi$ на ${\tm}$ локально гамильтонов. Тот же факт вытекает и непосредственно из условия \eqref{aeq3_11}. В самом деле, форма $\varkappa$ замкнута $(d\varkappa = 0)$, так что по общей формуле \eqref{aeq3_16} имеем ${\bf L}_v \varkappa = di_v \varkappa = 0$, т.е. локально условие \eqref{aeq3_18} выполнено для любой группы симметрий. Таким образом, группа симметрий порождает набор локальных интегралов вида \eqref{aeq3_6}, из которых <<склеить>> глобальный интеграл, вообще говоря, нельзя.

Отметим также очевидную связь теоремы \ref{the136} с обобщенной теоремой Нётер \cite[\S~40]{bib05}, в которой глобальная гамильтоновость потока группы симплектических диффеоморфизмов предполагается {\it a priori}.

\begin{remark}\label{rem131}
Интеграл \eqref{aeq3_6} является обобщением интеграла момента в натуральных системах с симметрией \cite{bib05, bib69, bib70,bib71}. Чтобы не возникало терминологической путаницы (в реальных системах определен вектор кинетического момента, не обязательно постоянный), будем, следуя традиции небесной механики и динамики твердого тела, называть интеграл \eqref{aeq3_6} интегралом площадей.
\end{remark}
\begin{remark}\label{rem132} Из точности формы $i_v \varkappa$ не следует точность формы $\varkappa$. Поэтому интеграл площадей может существовать в системах, не допускающих глобального лагранжиана.
\end{remark}
\begin{remark}\label{rem133} Из точности форм $\varkappa$ и $i_v \varkappa$ следует существование $\Psi$-сохранной 1-формы $\lambda$, такой, что $\varkappa = d \lambda$. Таким образом, если система имеет глобальный лагранжиан и обладает интегралом площадей, то в целом ее можно записать с помощью $\Psi$-сохранного лагранжиана. Здесь, следовательно, применима теорема Нётер.\end{remark}

Отметим еще одно, почти очевидное, свойство, связанное с вопросом об отыскании дополнительных интегралов системы \eqref{aeq2_1}. Поскольку поле $X$, задающее ее динамику, в конечном счете гамильтоново, желательно, чтобы подобные интегралы были в инволюции с интегралом \eqref{aeq3_6}.

\begin{propos}\label{pro138}
Функция $K$ на ${\tm}$ находится в инволюции с интегралом площадей тогда и только тогда, когда $K$ сохраняется группой $\Psi$.
\end{propos}
\begin{proof}
По предложению \ref{pro137} функция $G$ -- гамильтониан поля \eqref{aeq3_4}. Поэтому из равенств \eqref{aeq1_1} следует, что $\{K, G\} = dK(v_T) = v_TK$, поэтому условие $\{K, G\} = 0$ равносильно равенству $v_TK = 0$.
\end{proof}
Отметим, что гамильтониан всегда сохраняется своим гамильтоновым полем. Следовательно,
\begin{equation}\label{aeq3_20}
    v_TG=0.
\end{equation}
В частности, $\Psi$-сохранна и входящая в интеграл площадей функция $f$ (что следует непосредственно и из условия \eqref{aeq3_18}):
\begin{equation}\label{aeq3_21}
    vf = 0.
\end{equation}
Эти факты будут использованы в дальнейшем.

\subsection{Пример с локальными интегралами}\label{ssec14}
Рассмотрим задачу о движении материальной точки с единичной массой по цилиндру
\begin{equation}\label{aeq4_1}
    \mathcal{M} = \{x = (\varphi \mathop{\rm mod}\nolimits 2\pi, z)\}
\end{equation}
в поле потенциальных сил, не зависящих от $z$ (так что потенциал $\Pi = \Pi (\varphi)$ -- функция с периодом $2\pi$), и гироскопических сил, определяемых 2-формой
\begin{equation}\label{aeq4_2}
    \varkappa = dz \wedge d\varphi
\end{equation}
Последняя есть, очевидно, форма объема на $\mathcal{M}$.

Несмотря на то что координата $\varphi$ -- многозначная функция точки $\mathcal{M}$, ее дифференциал $d\varphi$ -- корректно определенная замкнутая (но не точная) 1-форма на $\mathcal{M}$. Поэтому форма \eqref{aeq4_2} точна: $\varkappa = d\lambda$, где
\begin{equation}\label{aeq4_3}
    \lambda = zd\varphi.
\end{equation}
В частности, рассматриваемая система имеет глобальную функцию Лагранжа вида \eqref{aeq1_8}
\begin{equation}\label{aeq4_4}
    \ds L=\frac{1}{2}\left(\dot{\varphi}^2 + \dot{z}^2\right) + z\dot{\varphi}-\Pi(\varphi).
\end{equation}
Его энергия \eqref{aeq1_9}
\begin{equation}\label{aeq4_5}
    \ds H=\frac{1}{2}\left(\dot{\varphi}^2 + \dot{z}^2\right) + \Pi(\varphi)
\end{equation}
сохраняется группой преобразований сдвига
\begin{equation}\label{aeq4_6}
    \Psi=\left\{\psi^\tau\right\}, \qquad \psi^\tau(\varphi, z)=(\varphi,z+\tau).
\end{equation}
Соответствующие преобразования ${\tm}=\mathcal{M} \times \bbR^2(\dot{\varphi}, \dot{z})$ имеют вид
\begin{equation}\label{aeq4_7}
    T\varphi^\tau(\varphi, z, \dot{\varphi}, \dot{z})
\end{equation}
и сохраняют форму \eqref{aeq4_2}.

Итак, выполнены все условия определения \ref{def132} и $\Psi$ -- группа симметрий. Ее порождающее поле $v=(0, 1)$, поэтому из \eqref{aeq4_2}
\begin{equation}\label{aeq4_8}
    i_v\varkappa=d\varphi.
\end{equation}
По теореме \ref{the136} глобальный интеграл площадей не существует -- форма \eqref{aeq4_8} не точна. Однако во всякой односвязной подобласти $U \subset \mathcal{M}$ может быть выбрана однозначная функция $f(\varphi, z) = -\varphi$, чтобы выполнялось условие \eqref{aeq3_18}. Соответствующий локальный интеграл вида \eqref{aeq3_6} есть
\begin{equation}\label{aeq4_9}
    G=\dot{z}-\varphi.
\end{equation}
Его существование, конечно, легко увидеть и из уравнений Лагранжа
\begin{equation}\label{aeq4_10}
    \ddot{z} - \dot{\varphi} = 0, \qquad \ddot{\varphi} + \dot{z} + \Pi'(\varphi) = 0.
\end{equation}
В односвязных областях интеграл \eqref{aeq4_9} является циклическим -- замена в формуле \eqref{aeq4_4} слагаемого $z\dot{\varphi}$ на $-\varphi \dot{z}$ не меняет уравнений \eqref{aeq4_10}. Однако в целом на многообразии \eqref{aeq4_1} определить эти уравнения $\Psi$-сохранным лагранжианом (не зависящим от $z$) нельзя -- не существует 1-формы, сохраняемой преобразованиями \eqref{aeq4_6} и лежащей в одном классе когомологий с формой \eqref{aeq4_3}.

Отметим еще интересный факт в связи с двойственностью Лагранжа\,--\,Га\-миль\-то\-на. Произведем преобразование Лежандра:
\begin{equation}\label{aeq4_11}
\begin{array}{c}
    \Phi_L:(\varphi, z,\dot{\varphi},\dot{z})\mapsto(\varphi,z,p_\varphi,p_z),\\
    p_\varphi=\dot{\varphi}+z,\quad p_z=\dot{z}.
\end{array}
\end{equation}
Уравнения \eqref{aeq4_10} перейдут в каноническую систему на $T^*\mathcal{M}$
\begin{equation*}
\begin{array}{ll}
    \dot{z} = p_z, & \dot{p}_z = p_\varphi - z,\\
    \dot{\varphi} = p_\varphi - z, & \dot{p}_\varphi = -\Pi'(\varphi)
\end{array}
\end{equation*}
с гамильтонианом
\begin{equation}\label{aeq4_12}
    \ds H_* = \frac{1}{2} \left(p_{\varphi}^2 + p_z^2\right) - zp_\varphi + \frac{z^2}{2} + \Pi(\varphi).
\end{equation}
Эта функция явно зависит от $z$ и $\varphi$, поэтому <<видимых>> симметрий, т.е. преобразований, кокасательных к диффеоморфизмам конфигурационного пространства $\mathcal{M}$, в терминологии \cite{bib11} у нее нет. Преобразования \eqref{aeq4_7} переходят при отображениях \eqref{aeq4_11} в диффеоморфизмы
\begin{equation*}
    \psi_*^\tau(\varphi, z, p_\varphi, p_z) = (\varphi, z + \tau, p_\varphi + \tau, p_z),
\end{equation*}
сохраняющие гамильтониан \eqref{aeq4_12}. Таким образом, симметрия, <<видимая>> в ла\-г\-ранжевой постановке (функция \eqref{aeq4_5} не зависит от $z$), становится <<скрытой>> в гамильтоновой.

\subsection{Понижение порядка в гироскопических системах\\  с симметрией}\label{ssec15}

Обратимся вновь к ситуации, рассмотренной в  $\S$~\ref{ssec13}, сохраняя введенные обозначения. Пусть
\begin{equation}\label{aeq5_1}
    (\mathcal{M}, s, \Pi, \varkappa),
\end{equation}
-- гироскопическая система, допускающая однопараметрическую группу симметрий $\Psi$, которой отвечает глобальный интеграл площадей $G$, а $X$ -- динамическая система, определенная равенством \eqref{aeq2_9}.

Будем предполагать действие $\Psi$ на $\mathcal{M}$ таким, что существует фактор-мно\-го\-обра\-зие $\widetilde{\mathcal{M}} = \mathcal{M} / \Psi$ и соответствующее главное $\Psi$-расслоение \cite{bib08}
\begin{equation}\label{aeq5_2}
    p:\mathcal{M} \to \widetilde{\mathcal{M}}.
\end{equation}

Покажем, что при этих условиях можно перейти от системы \eqref{aeq5_1} к гироскопической системе
\begin{equation}\label{aeq5_3}
    (\widetilde{\mathcal{M}}, \widetilde{s}, \Pi_g, \varkappa_g),
\end{equation}
в которой метрика $\widetilde{s}$ зависит только от действия группы $\Psi$ на римановом многообразии $(\mathcal{M}, s)$, а приведенный потенциал $\Pi_g$ и форма гироскопических сил $\varkappa_g$ определяются функцией $\Pi$, формой $\varkappa$, значением $g$ интеграла площадей и свойствами расслоения $p$. Локально такой переход будет отвечать понижению порядка по Раусу.

Тем самым будет установлено, что определенный выше класс механических систем с гироскопическими силами замкнут относительно процедуры редукции, чего нельзя сказать о классах натуральных систем и систем с квадратичными лагранжианами.

Согласно общей теории, гамильтоновой системе на фазовом пространстве размерности $2n$, допускающей однопараметрическую группу симметрий, соответствует приведенная система на фазовом пространстве размерности $2n-2$, также гамильтонова в некоторой симплектической структуре. Алгоритм построения приведенных объектов -- фазового пространства, симплектической структуры и гамильтониана -- известен \cite{bib05}. В нашей задаче необходимо конкретизировать все эти объекты для системы \eqref{aeq5_1}.

Отметим некоторые факты, непосредственно вытекающие из сделанных допущений.

В силу наличия главного расслоения \eqref{aeq5_2} группа $\Psi$ действует на $\mathcal{M}$ без неподвижных точек. Следовательно, порождающее поле \eqref{aeq3_3} нигде не обращается в нуль. Поэтому интеграл площадей \eqref{aeq3_6}, существующий по предположению, является всюду регулярной функцией на ${\tm}$. Зафиксируем его значение $g \in \mathbb{R}$. Тогда подмножество
\begin{equation}\label{aeq5_4}
    G_g = G^{-1}(g)\subset{\tm}
\end{equation}
есть гладкое многообразие в фазовом пространстве ${\tm}$ коразмерности 1. Оно, очевидно, инвариантно относительно фазового потока динамической системы $X$, а согласно \eqref{aeq3_20} -- и относительно преобразований группы $\Psi$.

Обозначим через
\begin{equation}\label{aeq5_5}
    \tau_g : G_g \to T\widetilde{\mathcal{M}}
\end{equation}
ограничение на $G_g$ отображения $T_p$, касательного к проекции \eqref{aeq5_2}.

\begin{propos}\label{pro151} Отображение $\tau_g$ есть главное расслоение со структурной группой $\Psi$.
\end{propos}

Доказательству предпошлем некоторые построения.

Пусть $x \in \mathcal{M}$. Обозначим через $V_x$ прямую в касательном пространстве $T_x\mathcal{M}$, натянутую на вектор $v(x)$, а через $N_x$ -- гиперплоскость, ортогональную $V_x$ в метрике $s$. Множество $N = \bigcup \limits_{x \in \mathcal{M}} N_x$ есть гладкое подмногообразие в ${\tm}$ коразмерности 1. Положим
\begin{equation}\label{aeq5_6}
    \tau = \left. Tp \right|_N : N \to T\widetilde{\mathcal{M}}.
\end{equation}
Ввиду инвариантности метрики $s$ совокупность подпространств $N_x$ определяет связность в расслоении \eqref{aeq5_2}. Пусть $\xi$ -- дифференциальная 1-форма данной связности
\begin{equation}\label{aeq5_7}
    \xi_x(v(x)) \equiv 1, \qquad \xi_x(w) = 0 \quad \forall w \in N_x,
\end{equation}
а $\eta = d\xi$ -- ее форма кривизны.

Из структуры интеграла \eqref{aeq3_6} видно, что множество $G_g(x) = G_g \bigcap T_x\mathcal{M}$ -- это гиперплоскость в $T_x\mathcal{M}$, параллельная $N_x$. Ее пересечение с $V_x$ состоит из единственного вектора
\begin{equation}\label{aeq5_8}
    v_g(x)=\frac{g-f(x)}{\|v(x)\|^2}v(x).
\end{equation}
Векторное поле $v_g$ гладкое и сохраняется преобразованиями группы $\Psi$.

Рассмотрим диффеоморфизм
\begin{equation}\label{aeq5_9}
    \gamma_g: N \to G_g,
\end{equation}
определенный формулой $\gamma_g(w) = w + v_g, \quad w \in N$. Очевидно, что для любого $\psi^t \in \Psi$
\begin{equation}\label{aeq5_10}
    T\psi^t \circ \gamma_g = \gamma_g \circ T\psi^t.
\end{equation}
Кроме того, поскольку $Tp$ линейно на касательных пространствах и $Tp(v_g) = 0$,
\begin{equation}\label{aeq5_11}
    \tau_g \circ \gamma_g = \tau.
\end{equation}

Покажем, что отображение \eqref{aeq5_6} является главным $\Psi$-расслоением. Предложение \ref{pro151} будет следовать тогда из свойств \eqref{aeq5_10}, \eqref{aeq5_11} диффеоморфизма \eqref{aeq5_9}.

Пусть $\{U_\alpha,\varphi_\alpha\}$ -- атлас расслоения \eqref{aeq5_2}, т.е. набор открытых множеств $U_\alpha \subset \widetilde{\mathcal{M}}$ и отображений $\varphi_\alpha : p^{-1}(U_\alpha) \to \Psi$, таких, что

1) $\{U_\alpha\}$ -- покрытие $\widetilde{\mathcal{M}}$;

2) все $\varphi_\alpha$ коммутируют с заданным действием $\Psi$ на $\mathcal{M}$ и естественным действием $\Psi$ на самой себе;

3) все отображения $p{\times} \varphi_\alpha : p^{-1}(U_\alpha) \to U_\alpha {\times} \Psi$ являются диффеоморфизмами.

Построим атлас расслоения \eqref{aeq5_6}. Множества $TU_\alpha$ открыты и покрывают $T\widetilde{\mathcal{M}}$. Обозначим $\mathcal{M}_\alpha=p^{-1}(U_\alpha)$. Отображение
\begin{equation*}
    Tp {\times} (\varphi_\alpha \circ p_\mathcal{M}) {\times} \xi : {\tm}_\alpha \to TU_\alpha {\times} \Psi {\times} \mathbb{R}
\end{equation*}
является диффеоморфизмом. Здесь форма связности $\xi$ рассматривается как функция на ${\tm}$. Зафиксировав значение $\xi = 0$, получим согласно \eqref{aeq5_7} диффеоморфизм $N \bigcap {\tm}_\alpha \to TU_\alpha {\times} \Psi$, композиция которого с проекцией $TU_\alpha {\times} \Psi \to \Psi$ дает отображение $\Phi_\alpha : \tau^{-1} (TU_\alpha) \to \Psi$, коммутирующее с действием $\Psi$ на прообразе и образе. Набор $\{TU_\alpha, \Phi_\alpha\}$ и есть искомый атлас расслоения $\tau$.

\begin{lemma}\label{lem152}
Ограничение симплектической формы $\sigma + p_\mathcal{M}^*\varkappa$ на подмногообразие $G_g$ является горизонтальной 2-формой в смысле расслоения \eqref{aeq5_5}.
\end{lemma}

\begin{proof} Напомним, что дифференциальная форма на тотальном пространстве главного расслоения называется горизонтальной, если она обращается в нуль, как только какой-нибудь из ее аргументов есть касательный вектор к слою.

В нашем случае касательное пространство к слою в точке $x \in G_g$ -- это прямая, натянутая на вектор $v_T(x)$. Поэтому необходимо доказать, что для любого $w \in T_x G_q$
\begin{equation}\label{aeq5_12}
    (\sigma + p_\mathcal{M}^*\varkappa)(v_T(x), w) = 0.
\end{equation}

Равенство \eqref{aeq5_12} известно в общей теории как косоортогональность группы симметрий и многообразия постоянного момента \cite{bib05}. Проверим его непосредственно. По формулам \eqref{aeq3_15}, \eqref{aeq3_17}
\begin{equation*}
\begin{array}{rcl}
    \sigma(v_T, w) & = & -d(i_{v_t}\theta)(w) = d(f \circ p_\mathcal{M} - G)(w) = \\
    & = & df(T_{p_\mathcal{M}}(w))-wG.
\end{array}
\end{equation*}

Но, поскольку, вектор $w$ касается уровня функций $G$, то $wG = 0$. Из равенств \eqref{aeq3_12}, \eqref{aeq3_18} получим
\begin{equation*}
\begin{array}{rcl}
    df(T_{p_\mathcal{M}}(w))& = & -\varkappa(T_{p_\mathcal{M}}(v_T), T_{p_\mathcal{M}}(w)) = \\
       &    = & -(p_\mathcal{M}^* \varkappa)(v_T, w),
\end{array}
\end{equation*}
т.е. $\sigma(v_T, w)= -(p_\mathcal{M}^* \varkappa)(v_T, w)$, что равносильно равенству \eqref{aeq5_12}. \end{proof}

{\bf Следствие.} Существует единственная 2-форма $\sigma_g$ на пространстве кокасательного расслоения $T\widetilde{\mathcal{M}}$, удовлетворяющая равенству
\begin{equation}\label{aeq5_13}
    \tau_g^* \sigma_g = \left. (\sigma + p_{\mathcal{M}}^* \varkappa) \right| _{G_g}.
\end{equation}
Доказательство очевидно.

Построим теперь приведенную систему, соответствующую постоянной площадей $g$. Исходное векторное поле $X$ касается подмногообразия $G_g$ и сохраняется преобразованиями группы $\Psi$ (точнее, преобразованиями, касательными к диффеоморфизмам \eqref{aeq3_2}). Поэтому существует единственное векторное поле $X_g$ на $T\widetilde{\mathcal{M}}$, удовлетворяющее равенству
\begin{equation}\label{aeq5_14}
    T\tau_g \circ X = X_g \circ \tau_g.
\end{equation}
Поле $X_g$ и будем называть приведенной динамической системой. Интегральные кривые $X_g$ -- это образы при отображении $\tau_g$ интегральных кривых поля $X$, лежащих на заданном уровне интеграла площадей.

Вспоминая $\Psi$-сохранность функции $H$ (равенство \eqref{aeq3_10}), определим приведенный гамильтониан $H_g$ соотношением
\begin{equation}\label{aeq5_15}
    H_g \circ \tau_g = \left. H \right|_{G_g}.
\end{equation}
Из равенств \eqref{aeq5_13} - \eqref{aeq5_15} совместно с определением \eqref{aeq2_9} следует соотношение
\begin{equation}\label{aeq5_16}
    i_{X_g} \sigma_g = -dH_g.
\end{equation}

Найдем явное выражение для функции $H_g$. Для этого превратим $\widetilde{\mathcal{M}}$ в риманово многообразие.
Пусть $z \in \widetilde{\mathcal{M}}$, $\widetilde{w}_1, \widetilde{w}_2 \in T_z\widetilde{\mathcal{M}}$. Выберем произвольно точку $x \in p^{-1}(z) \subset \mathcal{M}$, и пусть векторы $w_1, w_2 \in N_x$ таковы, что $Tp(w_i) = \widetilde{w}_i$ $(i = 1, 2)$. Положим
\begin{equation}\label{aeq5_17}
    \widetilde{s}_x(\widetilde{w}_1, \widetilde{w}_2) = s_x(w_1, w_2).
\end{equation}
В силу инвариантности метрики $s$ и подпространств $N_x$ относительно действия группы симметрий это равенство корректно определяет скалярное произведение в слоях $T\widetilde{\mathcal{M}}$.

Возьмем произвольный вектор $\widetilde{w} \in T_z\widetilde{\mathcal{M}}$. Найдутся точка $x \in \mathcal{M}$, $p(x)=z$ и вектор $w \in G_g(x)$, такие, что $\tau_g(w) = \widetilde{w}.$ В этом случае
\begin{equation*}
    w = w_0 + v_g(x), \quad w_0 \in N_x, \quad w_0 \bot v_g(x),
\end{equation*}
и, следовательно,
\begin{equation*}
\begin{array}{rcl}
    H(w) & = & \ds {\frac{1}{2} \left( \|w_0\|^2 + \|v_g(x)\|^2 \right) + \Pi(x) =} \\
    &  = & \ds {\frac{1}{2}s_x(w_0, w_0)+\Pi(x)+\frac{[g-f(x)]^2}{\|v(x)\|^2}.}
\end{array}
\end{equation*}
Здесь использовано равенство \eqref{aeq5_8}. Выражение в правой части, не зависящее от $w_0$, вследствие \eqref{aeq3_9}, \eqref{aeq3_21} является функцией на многообразии $\mathcal{M}$, инвариантной относительно действия $\Psi$. Она индуцирует функцию $\Pi_g$ на $\widetilde{\mathcal{M}}$, называемую эффективным потенциалом:
\begin{equation}\label{aeq5_18}
    \Pi_g(p(x))=\Pi(x)+\frac{[g-f(x)]^2}{\|v(x)\|^2}.
\end{equation}
Сохраняя для нормы вектора в метрике $\widetilde{s}$ стандартное обозначение, из соотношений \eqref{aeq5_15}, \eqref{aeq5_17}, \eqref{aeq5_18} получаем
\begin{equation}\label{aeq5_19}
    \ds H_g(\widetilde{w}) = \frac{1}{2}\|\widetilde{w}\|^2 + \Pi_g \left( p_\mathcal{M} (\widetilde{w}) \right), \quad \widetilde{w} \in T\widetilde{\mathcal{M}}.
\end{equation}

Установим теперь связь формы $\sigma_g$ на $T\widetilde{\mathcal{M}}$ с элементами исходной гироскопической системы.

\begin{lemma}\label{lem153}
Пусть $\theta$ -- 1-форма на ${\tm}$, определенная согласно \eqref{aeq2_4}. Тогда
\begin{equation}\label{aeq5_20}
    \gamma_g^* \left( \left. \theta \right|_{G_g}\right) = \left. \theta \right|_N + (g - f \circ p_\mathcal{M}) p_\mathcal{M}^* \xi.
\end{equation}
\end{lemma}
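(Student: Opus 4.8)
The plan is to verify the identity \eqref{aeq5_20} pointwise, by evaluating both sides on an arbitrary tangent vector to $N$. First I would fix $w\in N$ with base point $x=p_\mathcal{M}(w)$ and take $Y\in T_wN$. Unwinding the definitions of restriction and pullback reduces the left-hand side of \eqref{aeq5_20} to $\theta_{\gamma_g(w)}\bigl(T_w\gamma_g(Y)\bigr)$, where $\theta$ is the canonical $1$-form on $\tm$ given by \eqref{aeq2_4}, now evaluated at the point $\gamma_g(w)=w+v_g(x)\in G_g$ (recall \eqref{aeq5_9}). Here I use that $T_w\gamma_g(Y)$ is tangent to $G_g$, so restricting $\theta$ to $G_g$ changes nothing.

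The key observation is that $\gamma_g$ preserves fibers: since $v_g(x)$ lies in $T_x\mathcal{M}$, we have $p_\mathcal{M}\circ\gamma_g=p_\mathcal{M}|_N$, and differentiating this relation gives $T_{p_\mathcal{M}}\bigl(T_w\gamma_g(Y)\bigr)=T_{p_\mathcal{M}}(Y)$. Substituting into \eqref{aeq2_4} and using bilinearity of $\langle,\rangle$, the left-hand side splits as
\begin{equation*}
\langle w+v_g(x),\,T_{p_\mathcal{M}}(Y)\rangle = \langle w,\,T_{p_\mathcal{M}}(Y)\rangle + \langle v_g(x),\,T_{p_\mathcal{M}}(Y)\rangle.
\end{equation*}
By \eqref{aeq2_4} again the first summand is precisely $(\theta|_N)_w(Y)$, so it remains to identify the second summand with the correction term.

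For this last step I would use the defining properties \eqref{aeq5_7} of $\xi$ together with the orthogonal decomposition $T_x\mathcal{M}=V_x\oplus N_x$. Writing an arbitrary $u\in T_x\mathcal{M}$ as $u=\alpha\,v(x)+n$ with $n\in N_x$, one reads off $\xi_x(u)=\alpha$ from \eqref{aeq5_7}, while $\langle v(x),u\rangle=\alpha\|v(x)\|^2$ by orthogonality; hence $\langle v(x),\,\cdot\,\rangle=\|v(x)\|^2\,\xi_x$ on $T_x\mathcal{M}$. Applying this to $u=T_{p_\mathcal{M}}(Y)$ and inserting $v_g(x)=\dfrac{g-f(x)}{\|v(x)\|^2}\,v(x)$ from \eqref{aeq5_8}, the norms cancel and the second summand becomes $(g-f(x))\,\xi_x\bigl(T_{p_\mathcal{M}}(Y)\bigr)=\bigl(g-f\circ p_\mathcal{M}\bigr)(w)\,(p_\mathcal{M}^*\xi)_w(Y)$, which is exactly the claimed correction. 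I expect no serious obstacle: the whole argument is a one-point computation, and the only step needing care is the fiber-preserving property of $\gamma_g$ used in the second paragraph, which is what lets the shift $v_g$ pass undifferentiated through $T_{p_\mathcal{M}}$.
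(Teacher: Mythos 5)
Your proof is correct and follows essentially the same route as the paper's: both arguments evaluate $\gamma_g^*\theta$ on a tangent vector, use $p_\mathcal{M}\circ\gamma_g=p_\mathcal{M}$ to keep $T_{p_\mathcal{M}}(Y)$ unchanged, split by bilinearity, and identify the correction term via the relation $\xi_x(\cdot)=\langle v(x),\cdot\rangle/\|v(x)\|^2$ (the paper's formula \eqref{aeq5_21}) together with \eqref{aeq5_8}. No gaps.
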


\begin{proof}
Из соотношений \eqref{aeq5_7} найдем выражение формы связности на любом векторе $w \in T_x\mathcal{M}$:
\begin{equation}\label{aeq5_21}
    \xi_x(w) = \frac{\langle v(x), w\rangle}{\|v(x)\|^2}.
\end{equation}

Возьмем произвольно $x \in \mathcal{M}, w \in N_x, Y \in T_wN$. По определению имеем
\begin{equation}\label{aeq5_22}
    \theta_w(Y) = \langle w, T_{p_\mathcal{M}}(Y) \rangle.
\end{equation}

С другой стороны,
\begin{equation*}
    (\gamma_g^* \theta)_w (Y) = \theta_{\gamma_g (w)}(T\gamma_g(Y)) = \langle w + v_g(x), T(p_\mathcal{M} \circ \gamma_g)(Y)\rangle.
\end{equation*}
Но по построению диффеоморфизма \eqref{aeq5_9} $p_\mathcal{M} \circ \gamma_g = p_\mathcal{M}$. Следовательно, с учетом формул \eqref{aeq5_8}, \eqref{aeq5_21}, \eqref{aeq5_22}
\begin{equation*}
\begin{array}{rcl}
    (\gamma_g^* \theta)_w (Y) &=& \langle w, Tp_\mathcal{M}(Y) \rangle + \langle v_g(x), Tp_\mathcal{M}(Y) \rangle = \\
    &=& \theta_w(Y) + [g-f(x)](p_\mathcal{M}^* \xi)(Y),
\end{array}
\end{equation*}
что и доказывает наше утверждение.\end{proof}

Применим антиотображение $\gamma_g^*$ к обеим частям равенства \eqref{aeq5_13}. По формуле \eqref{aeq5_11} имеем $\gamma_g^* \circ \tau_g^* = \tau^*$. Кроме того, $\gamma_g^* \circ p_\mathcal{M}^* \varkappa = (p_\mathcal{M} \circ \gamma_g)^* \varkappa = p_\mathcal{M}^* \varkappa$. Принимая во внимание \eqref{aeq2_6}, \eqref{aeq5_20} и перестановочность внешнего дифференциала с антиотображениями дифференциальных форм, приходим к соотношению
\begin{equation}\label{aeq5_23}
    \tau^* \sigma_g = \sigma + p_\mathcal{M}^*[(g-f)\eta + \varkappa -df\wedge \xi]
\end{equation}
в точках многообразия $N$.

Пусть $\widetilde{\sigma}$ -- лагранжева форма на $T\widetilde{\mathcal{M}}$, отвечающая приведенной метрике (см. предложение \ref{pro121}). Непосредственно проверяется, что $\left. \sigma \right|_N = \tau^*\widetilde{\sigma}$.

То, что форма кривизны $\eta$ горизонтальна в смысле расслоения \eqref{aeq5_2}, -- известный факт из теории связностей \cite{bib08}. Поэтому она индуцирована некоторой 2-формой $\widetilde{\eta}$ на $\widetilde{\mathcal{M}}: \eta = p^* \widetilde{\eta}$ (иногда формой кривизны называют именно профакторизованную форму $\widetilde{\eta}$ \cite{bib53}). Горизонтальность формы $\varkappa - df \wedge \xi$ следует из предположения \eqref{aeq3_18}, равенства \eqref{aeq3_21} и определения формы связности $\xi$. Следовательно, существует форма $\widetilde{\varkappa}$ на $\widetilde{\mathcal{M}}$, такая, что $p^*\widetilde{\varkappa} = \varkappa - df \wedge \xi$. Функция $f$ сохраняется группой $\Psi$, поэтому функция $\widetilde{f} = f / \Psi$ корректно определена на $\widetilde{\mathcal{M}}$. Замечая, наконец, что в точках многообразия $N$ выполнено равенство $p \circ p_\mathcal{M} = p_{\widetilde{\mathcal{M}}} \circ \tau$, из \eqref{aeq5_23} получим
\begin{equation}\label{aeq5_24}
    \sigma_g = \widetilde{\sigma} + p_{\widetilde{\mathcal{M}}}^*[g - \widetilde{f}\widetilde{\eta} + \widetilde{\varkappa}].
\end{equation}

Сравнивая теперь соотношения \eqref{aeq5_16}, \eqref{aeq5_19}, \eqref{aeq5_24} с общим определением динамики в гироскопической системе, приходим к следующему утверждению.

\begin{theorem}\label{the154}
При фиксированной постоянной интеграла площадей $g$ проекции интегральных кривых на приведенное фазовое пространство являются решениями уравнения второго порядка, соответствующего гироскопической системе \eqref{aeq5_3} с эффективным потенциалом \eqref{aeq5_18} и формой гироскопических сил $(g - \widetilde{f})\widetilde{\eta} + \widetilde{\varkappa}$.
\end{theorem}

Отметим случай, когда исходная система \eqref{aeq5_1} натуральна. При этом $\varkappa \equiv 0, f \equiv 0$, и форма гироскопических сил в приведенной системе есть произведение постоянной интеграла площадей на форму кривизны связности, индуцированной инвариантной метрикой в расслоении конфигурационного пространства на орбиты группы симметрий. При $g = 0$ приведенная система натуральна. При $g \neq 0$ она обладает глобальной функцией Лагранжа лишь в случае, когда форма $\widetilde{\varkappa}$ точна, т.е. порождает нулевой элемент в группе $H^2(\widetilde{\mathcal{M}}, \mathbb{R})$ когомологий де Рама. Пусть $\Psi = S^1$. Классы изоморфных $S^1$-расслоений над $\widetilde{\mathcal{M}}$ находятся во взаимно однозначном соответствии с элементами группы $H^2(\widetilde{\mathcal{M}}, \mathbb{Z})$. Интерпретация любого целочисленного коцикла как вещественного порождает гомоморфизм $H^2(\widetilde{\mathcal{M}}, \mathbb{Z}) \to H^2(\widetilde{\mathcal{M}}, \mathbb{R})$, причем характеристический класс расслоения переходит в класс когомологий форм кривизны допустимых связностей в этом расслоении (подробности см. в \cite{bib53,bib60}). Таким образом, если группа $H^2(\widetilde{\mathcal{M}}, \mathbb{Z})$ не имеет кручений, то приведенная система при отличной от нуля постоянной площадей обладает глобальной функцией Лагранжа тогда и только тогда, когда $S^1$-расслоение \eqref{aeq5_2} тривиально. С примерами нетривиальных расслоений мы встретимся далее при изучении задач динамики твердого тела.

\begin{remark}\label{rem151}
Переход к приведенной системе можно осуществить по другому. Поскольку $\Psi$ действует на ${\tm}$ без неподвижных точек, перейдем к фактор-многообразию
\begin{equation}\label{aeq5_25}
    \mathfrak{M} = {\tm}/\Psi.
\end{equation}
\end{remark}
Это пространство векторного расслоения над $\widetilde{\mathcal{M}}$, слои которого изоморфны $T_x\mathcal{M}$. Очевидно, что поле $X$ и его первые интегралы $H$ и $G$ переносятся на $\mathfrak{M}$ с помощью отображения факторизации. Нетрудно показать, что функция $G/\Psi : \mathfrak{M} \to \mathbb{R}$ остается всюду регулярной, поэтому $\{G/\Psi = g\}$ -- гладкое интегральное подмногообразие в $\mathfrak{M}$ коразмерности 1. Оно, очевидно, диффеоморфно $T\widetilde{\mathcal{M}}$.  Соответствующий диффеоморфизм осуществляется субмерсией
\begin{equation*}
    Tp/\Psi:\mathfrak{M} \to T\widetilde{\mathcal{M}}.
\end{equation*}
Действительно, для любого $z \in T\widetilde{\mathcal{M}}$ множество $(Tp)^{-1}(z) \bigcap \{G = g\}$ есть единственная орбита группы $\Psi$ (слой расслоения $\tau_g$ над точкой $z$). Следовательно, пересечение
\begin{equation*}
    (Tp / \Psi)^{-1}(z) \bigcap \{G / \Psi = g\}
\end{equation*}
состоит из единственной точки.

Таким образом, с точностью до диффеоморфизма следующие пути преобразования \mbox{$\Psi$-со}\-хран\-ных объектов на ${\tm}$ эквивалентны: ограничение на $\{G = g\}$ с последующей факторизацией по $\Psi$ и факторизация по $\Psi$ с последующим ограничением на $\{G/\Psi = g\}$.

Многообразие \eqref{aeq5_25} играет основную роль в динамике твердого тела, так как служит фазовым пространством классических уравнений Эйлера\,--\,Пуассона.

\subsection{Комментарий к главе 1}\label{ssec16}

Широкое проникновение в механику идей и методов дифференциальной геометрии и топологии, восходящее, по сути, еще к А.Пуанкаре \cite{bib38} и Дж.Биркгофу \cite{bib07} и в значительной мере стимулированное фундаментальной работой С.Смейла \cite{bib41}, потребовало свободного от неинвариантности локальных координат описания основного объекта теории -- механической системы.

Значительные результаты получены на основе так называемого канонического формализма -- инвариантного описания систем Гамильтона (его изложение имеется, например, в \cite{bib05,bib65}). Сюда следует отнести вопросы интегрируемости гамильтоновых систем \cite{bib26,bib27,bib28}, КАМ-теорию \cite{bib04} и многие другие.

Однако гамильтоновы системы, обладая общей природой, не учитывают многих специфических свойств задач механики. В этом смысле <<ближе к действительности>> оказывается подход Лагранжа. Существующие весьма разнообразные по форме его инвариантные изложения (см., например, \cite{bib10, bib48,bib51,bib65}), по существу, конкретизируют понятие гамильтоновой системы для случая, когда переменные имеют смысл координат и скоростей.

Предложенное в этой главе описание часто встречающегося в механике объекта -- гироскопической системы -- представляет собой в определенном смысле конгломерат двух упомянутых подходов. Иную формализацию гироскопических систем можно получить, используя понятие интегрального инварианта и обобщая некоторые идеи \cite{bib66} по поводу задачи о движении заряженной частицы в магнитно-электрическом поле.

Понятие симметрии в гироскопической системе естественным образом обобщает симметрию натуральных систем \cite{bib41}. С ним тесно связан вопрос о факторизации соответствующего векторного поля. Хорошо известно, что если гамильтониан канонической системы уравнений не зависит от $m$ обобщенных координат, то соответствующие им импульсы являются первыми интегралами и, зафиксировав их значения, можно понизить порядок системы на $2m$ единиц. В лагранжевой постановке такой ситуации отвечает система с циклическими координатами, и процедура редукции называется <<понижением порядка по Раусу>> или <<игнорированием циклических координат>>. Эти результаты используют координатное представление и, следовательно, носят локальный характер. Глобальная теория понижения порядка в гамильтоновых системах с группой симметрий разработана Дж.Марсденом и А. Вейнстейном \cite{bib69}. Ее упрощенный вариант для коммутативных групп приведен в \cite{bib70}. Доступное изложение вопросов, связанных с канонической редукцией, имеется в \cite{bib05}.

Глобальный аналог понижения порядка по Раусу дан в работе \cite{bib57}, где, по-видимому, впервые отмечен тот факт, что приведенная система не всегда обладает глобальной функцией Лагранжа. В \cite{bib60} установлена связь этого явления со свойствами расслоения конфигурационного пространства на орбиты группы симметрий. Основные результаты работ \cite{bib32,bib57,bib58,bib59} нашли отражение в настоящей главе. Несколько иной формализм понижения порядка в натуральных системах с симметрией, близкой к методам римановой геометрии, изложен в \cite{bib48}.

Несуществование глобальной функции Лагранжа у некоторых механических систем ведет к тому, что их траектории становятся экстремалями многозначных функционалов. Для таких систем С.П.\,Новиковым \cite{bib34,bib35} разработано обобщение теории Люстерника~-- Шнирельмана~-- Морса.

\clearpage

\section{Формализация задачи о движении твердого тела\\  под действием потенциальных и гироскопических сил}\label{sec2}

Твердое тело -- это материальная область в трехмерном евклидовом пространстве, движущаяся относительно наблюдателя таким образом, что расстояние между любыми двумя ее точками (частицами) остается постоянным. Если вдобавок некоторая частица занимает в пространстве неизменное положение, то говорят, что тело имеет неподвижную точку.

В задачах динамики наряду с неподвижными (относительно наблюдателя) системами отсчета широко используются системы, жестко связанные с объектом изучения. Уравнения вращательного движения в последних записываются несравненно проще. Однако конечной целью полагается все же исследование движения относительно наблюдателя, вследствие чего отказ от неподвижных систем отсчета невозможен.

Таким образом, каждой точке реального пространства необходимо сопоставить два набора координат (в подвижной и неподвижной системах отсчета) и каждому вектору -- два набора компонент (в подвижном и неподвижном базисах). В связи с этим при формализации задачи возможны два подхода. Первый \cite{bib49,bib50} состоит в том, чтобы каждому реальному вектору сопоставить два вектора одного и того же арифметического пространства. Тогда движение тела есть семейство вращений этого пространства, переводящих некоторое раз и навсегда фиксированное положение тела в его положение в заданный момент времени. Такая точка зрения естественна, если иметь в виду обобщение на механические системы, конфигурационным пространством которых служит некоторая группа Ли. Левые и правые сдвиги в единицу группы сопоставляют одному касательному вектору два, вообще говоря, различных элемента алгебры Ли, что применительно к твердому телу дает угловую скорость <<в теле>> и <<в пространстве>>.

Подобная <<раздвоенность>> векторов, однако, не всегда удобна при определении ориентации тела в пространстве. Трудными для восприятия становятся такие важные понятия, как абсолютное и относительное дифференцирование, распределение скоростей в теле и т.п. Поэтому далее мы предпочтем иметь дело с двумя различными арифметическим пространствами, координаты в которых -- проекции векторов на неподвижные оси и на оси, связанные с телом, и их вложениями друг в друга.

Указанные подходы в конечном счете равносильны. Выбор того или иного диктуется лишь целью исследования.

\subsection{Конфигурация, движение, классические формулы}\label{ssec21}

Рассмотрим два трехмерных арифметических пространства. Первое обозначим $\bbRRR$ и назовем неподвижным, второе -- $\bbRR$ -- подвижным.
Евклидову структуру на $\bbRRR$ и $\bbRR$ определяет скалярное произведение $( \cdot )$ векторов, а структуру алгебры Ли -- векторное $({\times})$ произведение.
Элементы введенных пространств обозначаем жирными строчными буквами, снабжая их в случае $\bbRRR$ индексом <<ноль>>. Этот индекс будет иметь и дополнительный смысл.
Стандартные координаты точки $\mbs{\xi} \in \bbRR \left(\mbs{\xi}^0 \in \bbRRR\right) $ будут $\xi_i$ (соответственно $\xi_i^0$), где ${i = 1, 2, 3}$. Элементы с нулевыми координатами обозначаем, как обычно, нулем.

Пусть
\begin{equation}\label{beq1_1}
    \mbs{e}, \mbs{e'}, \mbs{e''} \in \bbRR
\end{equation}
и
\begin{equation}\label{beq1_2}
    \mbs{\nu}_0, \mbs{\nu}_0', \mbs{\nu}_0'' \in \bbRRR
\end{equation}
-- канонические единичные базисы.

\begin{defin}\label{def211}
Конфигурацией твердого тела назовем отображение
\begin{equation}\label{beq1_3}
    Q: \bbRR \to \bbRRR,
\end{equation}
сохраняющее евклидову структуру (скалярное произведение).
\end{defin}

В частности, $Q$ есть изоморфизм евклидовых пространств, и $Q(0) = 0$. Это значит, что при любой конфигурации начало координат подвижного пространства совмещается с одной и той же точкой (а именно началом координат) неподвижного пространства. Таким образом, мы ограничились задачей о движении твердого тела с неподвижной точкой.
Условимся теперь векторы, соответствующие друг другу при изоморфизме \eqref{beq1_3}, обозначать одной и той же буквой, так что
\begin{equation}\label{beq1_4}
\begin{array}{ll}
    \forall \mbs{\xi} \in \bbRR : & Q \mbs{\xi} = \mbs{\xi}^0 \in \bbRRR,\\
    \forall \mbs{\xi}^0 \in \bbRRR : & Q^{-1} \mbs{\xi}^0 = \mbs{\xi} \in \bbRR.
\end{array}
\end{equation}
Кроме того, отождествим изоморфизм $Q$ с его матрицей
\begin{equation}\label{beq1_5}
    Q = \left(
    \begin{array}{ccc}
     \nu_1 & \nu_2 & \nu_3 \\
    \nu_1' & \nu_2' & \nu_3' \\
    \nu_1'' & \nu_2'' & \nu_3''
    \end{array}
    \right)
\end{equation}
в базисах \eqref{beq1_1}, \eqref{beq1_2}.

Без ограничения общности можно считать изоморфизм \eqref{beq1_3} сохраняющим ориентацию, поскольку матрица \eqref{beq1_5} ортогональна и ее определитель равен~1. В частности, выполняется соотношение
\begin{equation}\label{beq1_6}
    QQ^T = Q^TQ = \Id.
\end{equation}
Таким образом, конфигурационное пространство твердого тела с неподвижной точкой является группой $SO(3)$.

Поясним обозначение компонент \eqref{beq1_5}. Согласно общепринятому, столбцы
\begin{equation}\label{beq1_7}
    \mbs{\nu}, \mbs{\nu}', \mbs{\nu}''
\end{equation}
матрицы $Q^T = Q^{-1}$ должны быть разложениями по базису \eqref{beq1_1} образов векторов \eqref{beq1_2} в $\bbRR$, что соответствует соотношению \eqref{beq1_4}.

Для удобства всюду в этой главе обозначаем группу $SO(3)$ через $\mathcal{M}$. Многообразие $\mathcal{M}$ вложено в $\mathbb{R}^9(\nu_1, \ldots, \nu_3'')$  и определяется там соотношениями
\begin{equation}\label{beq1_8}
    \nu_i \nu_j + \nu_i' \nu_j' + \nu_i'' \nu_j'' = \left\{
    \begin{array}{c}
    1, \qquad i = j \\
    0, \qquad i \neq j
    \end{array}
    \right. , \qquad i,j = 1, 2, 3,
\end{equation}
либо им равносильными
\begin{equation}\label{beq1_9}
    \nu_1^{(i)} \nu_1^{(j)} + \nu_2^{(i)} \nu_2^{(j)} + \nu_3^{(i)} \nu_3^{(j)} = \left\{
    \begin{array}{c}
    1, \qquad i = j \\
    0, \qquad i \neq j
    \end{array}
    \right. , \qquad i,j = 0, 1, 2.
\end{equation}

\begin{defin}\label{def212}
Движением твердого тела назовем семейство конфигураций, гладко зависящих от параметра $t \in \bbR$, или, что то же самое, дифференцируемую кривую $\bbR \to \mathcal{M}$ $(t \mapsto Q(t))$.
\end{defin}

Фазовое пространство рассматриваемой задачи -- касательное расслоение $\tm$ многообразия $\mathcal{M}$. Изучим его простейшие свойства.

Пусть $\Omega_Q$ -- касательный вектор к многообразию $\mathcal{M}$ в точке $Q$. Тогда существует кривая $Q(t)$ на $\mathcal{M}$, такая что
\begin{equation}\label{beq1_10}
    Q(t_0)=Q, \qquad \ds \left. \frac{dQ(t)}{dt} \right|_{t=t_0} = \Omega_Q.
\end{equation}

Ясно, что $\Omega_Q$ -- некоторая $3 {\times} 3$-матрица. Если тело движется по закону $Q(t)$, будем называть матрицу \eqref{beq1_10} мгновенным вращением тела в момент $t_0$.
Дифференцируя по $t$ тождество \eqref{beq1_6}, найдем, что $Q^{-1} \Omega_Q$ и $\Omega_QQ^{-1}$ принадлежат $\Ass(3)$ -- трехмерному линейному пространству кососимметрических матриц. Таким образом,
\begin{equation}\label{beq1_11}
    T_Q\mathcal{M} = Q \cdot \Ass(3) = \Ass(3) \cdot Q.
\end{equation}
В частности, алгебра Ли группы $\mathcal{M}$ (касательное пространство в единице) есть $\Ass(3)$, и коммутатор в ней -- обычный коммутатор матриц $[A,B] = AB - BA$.

Определим изоморфизмы
\begin{equation}\label{beq1_12}
    \mbs{\iota}, \mbs{\iota}_0 : \Ass(3) \to \bbRR, \bbRRR,
\end{equation}
действующие по закону
\begin{equation*}
    \left(
    \begin{array}{ccc}
    0 & -c & b \\
    c & 0 & -a \\
    -b & a & 0
    \end{array}
    \right) \mapsto
    \left(
    \begin{array}{c}
    a \\
    b \\
    c
    \end{array}
    \right).
\end{equation*}
Для вектора $\mbs{\xi} \in \bbRR$ матрицу $\mbs{\iota}^{-1}(\mbs{\xi})$ обозначим для краткости через $/\mbs{\xi}/$. Отметим известные соотношения \cite{bib49}.

Пусть $\mbs{\xi}, \mbs{\eta} \in \bbRR$, $Q \in \mathcal{M}$. Тогда
\begin{equation}\label{beq1_13}
\begin{array}{l}
    [/\mbs{\xi}/, /\mbs{\eta}/] = /\mbs{\xi} {\times} \mbs{\eta}/, \qquad
    /\mbs{\xi}/\mbs{\eta} = \mbs{\xi} {\times} \mbs{\eta}, \qquad
    /Q\mbs{\xi}/ = Q /\mbs{\xi}/ Q^{-1}.
\end{array}
\end{equation}
То же самое справедливо, конечно, и для $\bbRRR$.

Теперь с помощью изоморфизмов \eqref{beq1_12} устанавливаем две различные тривиализации расслоения $\tm$:
\begin{eqnarray}
& &    {\rm triv}_0: \tm \to \mathcal{M} {\times} \bbRRR, \qquad {\rm triv}_0(\Omega_Q) = (Q, \mbs{\iota}_0(\Omega_QQ^{-1})),\label{beq1_14}\\
& &    {\rm triv}: \tm \to \mathcal{M} {\times} \bbRR, \qquad \; {\rm triv}(\Omega_Q) = (Q, \mbs{\iota}(Q^{-1}\Omega_Q)). \label{beq1_15}
\end{eqnarray}
Корректность этих определений следует из \eqref{beq1_11}.

Векторы
\begin{equation}\label{beq1_16}
    \mbs{\omega}^0 = \mbs{\iota}_0(\Omega_QQ^{-1})
\end{equation}
и
\begin{equation}\label{beq1_17}
    \mbs{\omega} = \mbs{\iota}(Q^{-1}\Omega_Q)
\end{equation}
назовем соответственно внешней и внутренней угловой скоростью мгновенного вращения $\Omega_Q$.

Из соотношений \eqref{beq1_13}, \eqref{beq1_16}, \eqref{beq1_17} находим
\begin{equation}\label{beq1_18}
    Q\mbs{\omega} = \mbs{\omega}^0,
\end{equation}
т.е. обозначения согласованы с \eqref{beq1_4}. Физически равенство \eqref{beq1_18} означает, что векторы $\mbs{\omega}$ и $\mbs{\omega}^0$ соответствуют одному и тому же реальному вектору, фигурирующему в \eqref{beq1_10}, -- мгновенной угловой скорости тела в момент $t_0$. Подтвердим сказанное выводом привычных соотношений.

Пусть некоторая точка движется в пространстве $\bbRRR$ по закону $\mbs{\xi}^0 = \mbs{\xi}^0(t)$. Вектор
\begin{equation}\label{beq1_19}
    \dot{\mbs{\xi}}^0(t) = \ds \frac{d{\mbs{\xi}}^0(t)}{dt}
\end{equation}
назовем абсолютной скоростью точки (абсолютной производной вектора $\mbs{\xi}^0(t)$).

Если при этом тело движется по закону $Q(t)$, то вектор
\begin{equation}\label{beq1_20}
    \astup{\xi}^0(t) = \ds Q(t) \frac{d}{dt}(Q^{-1}(t) \mbs{\xi}^0(t))
\end{equation}
будем называть относительной скоростью точки (относительной производной вектора $\mbs{\xi}^0(t)$).

Используя соотношения \eqref{beq1_10}, \eqref{beq1_13}, \eqref{beq1_16}, имеем цепочку равенств
\begin{equation*}
\begin{array}{c}
    \astup{\xi}^0 = \ds Q(t) \frac{d}{dt}(Q^T \mbs{\xi}^0) = Q\Omega_Q^T\mbs{\xi}^0 + \dot{\mbs{\xi}}^0 = (\Omega_QQ^{-1})^T\mbs{\xi}^0 + \dot{\mbs{\xi}}^0 = \\
    = -\Omega_QQ^{-1}\mbs{\xi}^0 + \dot{\mbs{\xi}}^0 = -/\mbs{\omega}/\mbs{\xi}^0 + \dot{\mbs{\xi}}^0 = \dot{\mbs{\xi}}^0 - \mbs{\omega}^0 {\times} \mbs{\xi}^0,
\end{array}
\end{equation*}
откуда
\begin{equation}\label{beq1_21}
    \dot{\mbs{\xi}}^0 =\astup{\xi}^0 + \mbs{\omega}^0 {\times} \mbs{\xi}^0.
\end{equation}

Опишем теперь движение точки радиус-вектором $\mbs{\xi}(t)$ в подвижном пространстве. Тогда естественно ввести абсолютную и относительную скорости точки (абсолютную и относительную производные вектора $\mbs{\xi}(t)$), полагая соответственно
\begin{equation}\label{beq1_22}
    \dot{\mbs{\xi}}(t) = \ds Q^{-1}(t) \frac{d}{dt}(Q(t)\mbs{\xi}(t)), \qquad \astup{\xi} = \ds \frac{d\mbs{\xi}(t)}{dt}.
\end{equation}
Аналогично \eqref{beq1_21} имеем
\begin{equation}\label{beq1_23}
    \dot{\mbs{\xi}}(t) = \astup{\xi}(t) + \mbs{\omega} {\times} \mbs{\xi}.
\end{equation}

Сравнивая \eqref{beq1_21}, \eqref{beq1_23} с известными соотношениями \cite{bib47,bib61}, убеждаемся, что $\mbs{\omega}$ и $\mbs{\omega}^0$ суть векторы, изображающие реальный вектор угловой скорости.

Отметим некоторые следствия формул \eqref{beq1_19} - \eqref{beq1_23}.

\begin{propos}\label{pro213}
$1$. Производные векторов $\mbs{\xi}(t)$ и $\mbs{\xi}^0(t)$ связаны соотношениями
\begin{equation}\label{beq1_24}
    \dot{\mbs{\xi}}^0(t) = Q(t)\dot{\mbs{\xi}}(t), \qquad \astup{\xi}^0(t) = Q(t)\astup{\xi}(t).
\end{equation}

$2$. Если вектор $\mbs{\xi}$ коллинеарен $\mbs{\omega}$ {\rm (}$\mbs{\xi}^0$ коллинеарен $\mbs{\omega}^0${\rm )}, то
\begin{equation*}
    \dot{\mbs{\xi}} = \astup{\xi}, \qquad \dot{\mbs{\xi}}^0 = \astup{\xi}^0.
\end{equation*}

$3$. Если вектор $\mbs{\xi}$ движется вместе с телом $(\astup{\xi} = 0)$, то
\begin{equation*}
    \dot{\mbs{\xi}}(t) = \mbs{\omega} {\times} \mbs{\xi}, \qquad \dot{\mbs{\xi}}^0(t) = \mbs{\omega}^0 {\times} \mbs{\xi}^0
\end{equation*}
{\rm (}закон распределения скоростей в твердом теле{\rm )}.

$4$. Если вектор $\mbs{\xi}^0$ неподвижен в пространстве $( \dot{\mbs{\xi}}^0 = 0 )$, то
\begin{equation}\label{beq1_25}
    \astup{\xi}^0 = \mbs{\xi}^0 {\times} \mbs{\omega}^0, \qquad \astup{\xi} = \mbs{\xi} {\times} \mbs{\omega}
\end{equation}
{\rm (}уравнения Пуассона{\rm )}.
\end{propos}

\begin{proof}
Утверждение 1 следует непосредственно из соглашения \eqref{beq1_4} и определений \eqref{beq1_19}, \eqref{beq1_20}, \eqref{beq1_22}. Остальные вытекают из \eqref{beq1_21}, \eqref{beq1_23}, \eqref{beq1_24} и того факта, что ортогональные отображения сохраняют векторные произведения (отображение \eqref{beq1_3} -- изоморфизм алгебр Ли). \end{proof}

Отметим, что свойство \eqref{beq1_24} совместно с \eqref{beq1_4} означает, что операции абсолютного и относительного дифференцирования коммутируют с <<приписыванием нолика>>, т.е. с переходом из $\bbRR$ в $\bbRRR$.

\begin{remark}\label{rem211}
Относительные и абсолютные производные скалярных функций совпадают между собой. Пусть, например, $\mbs{\xi}(t), \mbs{\eta}(t) \in \bbR$. Тогда
\begin{equation*}
    (\mbs{\xi} \cdot \mbs{\eta})^\mbs{\cdot} = (\astup{\xi} + \mbs{\omega} {\times} \mbs{\xi}) \cdot \mbs{\eta} + \mbs{\xi} \cdot (\astup{\eta} + \mbs{\omega} {\times} \mbs{\eta}) = \astup{\xi} \cdot \mbs{\eta} + \mbs{\xi} \cdot \astup{\eta} = (\mbs{\xi} \cdot \mbs{\eta})^*,
\end{equation*}
так как выражение $\mbs{\xi} \cdot (\mbs{\eta} {\times} \mbs{\zeta})$ есть кососимметрическая форма. Таким образом, если $\mbs{\xi}(t) \in \bbRR$, то
\begin{equation*}
    \astup{\xi}(t) = \left( \dot{\xi}_1(t), \dot{\xi}_2(t), \dot{\xi}_3(t)\right)^T.
\end{equation*}
\end{remark}

\subsection{Некоторые структуры на группе вращений}\label{ssec22}

Введем на многообразии $\mathcal{M} = {SO}(3)$ векторные поля
\begin{equation}\label{beq2_1}
    \Omega_i:\mathcal{M} \to \tm,
\end{equation}
индуцированные каноническим базисом $\bbRR$ при отображении \eqref{beq1_15}
\begin{equation}\label{beq2_2}
    \Omega_1(Q) = {\rm triv}^{-1}(Q, {\bf e}), \quad \Omega_2(Q) = {\rm triv}^{-1}(Q, {\bf e}'), \quad \Omega_3(Q) = {\rm triv}^{-1}(Q, {\bf e}'').
\end{equation}
В явном виде
\begin{equation*}
    \Omega_1(Q)=Q\left/ \left( \begin{array}{c} 1 \\ 0 \\ 0 \end{array} \right) \right/, \quad \Omega_2(Q)=Q\left/ \left( \begin{array}{c} 0 \\ 1 \\ 0 \end{array} \right) \right/, \quad \Omega_3(Q)=Q\left/ \left( \begin{array}{c} 0 \\ 0 \\ 1 \end{array} \right) \right/,
\end{equation*}
поэтому поля \eqref{beq2_1} левоинвариантны. Их выражение через базисные поля в $\bbR^9$ таково:
\begin{equation}\label{beq2_3}
\begin{array}{c}
    \ds \Omega_1(Q) = \nu_3 \frac{\partial}{\partial \nu_2} - \nu_2 \frac{\partial}{\partial \nu_3} + \nu_3' \frac{\partial}{\partial \nu_2'} - \nu_2' \frac{\partial}{\partial \nu_3'} + \nu_3'' \frac{\partial}{\partial \nu_2''} - \nu_2'' \frac{\partial}{\partial \nu_3''}, \\
    \ds \Omega_2(Q) = \nu_1 \frac{\partial}{\partial \nu_3} - \nu_3 \frac{\partial}{\partial \nu_1} + \nu_1' \frac{\partial}{\partial \nu_3'} - \nu_3' \frac{\partial}{\partial \nu_1'} + \nu_1'' \frac{\partial}{\partial \nu_3''} - \nu_3'' \frac{\partial}{\partial \nu_1''}, \\
    \ds \Omega_3(Q) = \nu_2 \frac{\partial}{\partial \nu_1} - \nu_1 \frac{\partial}{\partial \nu_2} + \nu_2' \frac{\partial}{\partial \nu_1'} - \nu_1' \frac{\partial}{\partial \nu_2'} + \nu_2'' \frac{\partial}{\partial \nu_1''} - \nu_1'' \frac{\partial}{\partial \nu_2''}.
\end{array}
\end{equation}
Соответствующие однопараметрические группы вращают тело вокруг фиксированных в нем осей, определяемых триэдром \eqref{beq1_1}. Базис \eqref{beq2_3} в $T_Q\mathcal{M}$ неголономен. Из первого свойства \eqref{beq1_13} имеем
\begin{equation}\label{beq2_4}
    [\Omega_1, \Omega_2] = \Omega_3, \quad [\Omega_2, \Omega_3] = \Omega_1, \quad [\Omega_3, \Omega_1] = \Omega_2.
\end{equation}

В силу соотношений \eqref{beq1_8}, \eqref{beq1_9} следующие равенства корректно определяют дифференциальные 1-формы на $\mathcal{M}$:
\begin{equation}\label{beq2_5}
\begin{array}{c}
    \Lambda_1(Q) = \nu_3 d\nu_2 + \nu_3' d\nu_2' + \nu_3'' d\nu_2'' = -(\nu_2 d\nu_3 + \nu_2' d\nu_3' + \nu_2'' d\nu_3''), \\
    \Lambda_2(Q) = \nu_1 d\nu_3 + \nu_1' d\nu_3' + \nu_1'' d\nu_3'' = -(\nu_3 d\nu_1 + \nu_3' d\nu_1' + \nu_3'' d\nu_1''), \\
    \Lambda_3(Q) = \nu_2 d\nu_1 + \nu_2' d\nu_1' + \nu_2'' d\nu_1'' = -(\nu_1 d\nu_2 + \nu_1' d\nu_2' + \nu_1'' d\nu_2'').
\end{array}
\end{equation}
Непосредственно проверяется, что эти формы образуют базис в $T_Q^*\mathcal{M}$, дуальный к \eqref{beq2_3}.
Отметим полезное в дальнейшем обращение зависимостей \eqref{beq2_5}:
\begin{equation}\label{beq2_6}
\begin{array}{c}
    d\nu_1^{(i)} = \nu_2^{(i)} \Lambda_3 - \nu_3^{(i)} \Lambda_2, \\
    d\nu_2^{(i)} = \nu_3^{(i)} \Lambda_1 - \nu_1^{(i)} \Lambda_3, \quad
    d\nu_3^{(i)} = \nu_1^{(i)} \Lambda_2 - \nu_2^{(i)} \Lambda_1 \\
    (i = 0, 1, 2).
\end{array}
\end{equation}

\begin{lemma}\label{lem221}
Форма $\Lambda_i$ сопоставляет мгновенному вращению {\rm (}касательному вектору к $\mathcal{M}${\rm )} $i$-ю компоненту его внутренней угловой скорости. Формы \eqref{beq2_5} левоинвариантны и удовлетворяют равенствам
\begin{equation}\label{beq2_7}
    d\Lambda_1 = \Lambda_3 \wedge \Lambda_2, \quad d\Lambda_2 = \Lambda_1 \wedge \Lambda_3, \quad d\Lambda_3 = \Lambda_2 \wedge \Lambda_1.
\end{equation}
\end{lemma}

\begin{proof}
Пусть $\Omega_Q \in T_Q\mathcal{M}$ и $\mbs{\omega}$ -- вектор \eqref{beq1_17} внутренней угловой скорости. Тогда в силу \eqref{beq1_15} ${\rm triv}(\Omega_Q) = (Q, \omega_1{\bf e} + \omega_2{\bf e}' + \omega_3{\bf e}'')$. Применяя к этому равенству отображение ${\rm triv}^{-1}$, учтем определение \eqref{beq2_2}. В результате получим
\begin{equation}\label{beq2_8}
    \Omega_Q = \omega_1 \Omega_1(Q) + \omega_2 \Omega_2(Q) + \omega_3 \Omega_3(Q).
\end{equation}
Утверждения леммы вытекают теперь из дуальности базисов \eqref{beq2_3}, \eqref{beq2_5} и из равенств \eqref{beq2_4}, \eqref{beq2_8}.
\end{proof}

Произвольные поля и 1-формы на $\mathcal{M}$, в отличие от выделенных базисов, будем обозначать строчными буквами. Кроме того, отступая от формальной строгости, компоненты полей и форм в этих базисах обозначим теми же буквами, но с индексом 1, 2, 3. Так, векторное поле $\omega: \mathcal{M} \to \tm$ имеет вид
\begin{equation}\label{beq2_9}
    \omega = \omega_1 \Omega_1 + \omega_2 \Omega_2 + \omega_3 \Omega_3.
\end{equation}

Подчеркнем различие между выражениями \eqref{beq2_8} и \eqref{beq2_9}. Вектор $\Omega_Q$ -- <<одинокий>>, и  поэтому $\omega_i$ в \eqref{beq2_8} -- числа. Напротив, $\omega$ -- векторное поле и $\omega_i : \mathcal{M} \to \bbR$ -- гладкие функции. Но и при этом для каждого $Q \in \mathcal{M}$ вектор
$$\left( \begin{array}{c} \omega_1(Q) \\ \omega_2(Q) \\ \omega_3(Q) \end{array}\right) \in \bbRR
$$
есть вектор внутренней угловой скорости мгновенного вращения $\omega(Q) \in T_Q\mathcal{M}$. Функции $\omega_i(Q)$ в формуле \eqref{beq2_9} постоянны тогда и только тогда, когда поле $\omega$ левоинвариантно.

Дифференциальная 1-форма $\lambda: \mathcal{M} \to T^*\mathcal{M}$ записывается в виде
\begin{equation}\label{beq2_10}
    \lambda = \lambda_1 \Lambda_1 + \lambda_2 \Lambda_2 + \lambda_3 \Lambda_3,
\end{equation}
где $\lambda_i : \mathcal{M} \to \bbR$ -- гладкие функции, а всякую 2-форму $\varkappa$ на $\mathcal{M}$, учитывая \eqref{beq2_7}, можно записать так:
\begin{equation}\label{beq2_11}
\begin{array}{rcl}
    \varkappa &=& \varkappa_1 \Lambda_3 \wedge \Lambda_2 + \varkappa_2 \Lambda_1 \wedge \Lambda_3 + \varkappa_3 \Lambda_2 \wedge \Lambda_1 =\\
     &=& \varkappa_1 d\Lambda_1 + \varkappa_2 d\Lambda_2 + \varkappa_3 d\Lambda_3.
\end{array}
\end{equation}
И здесь $\varkappa_i : \mathcal{M} \to \bbR$ -- гладкие функции.

Пространство 3-форм на $\mathcal{M}$ одномерно как модуль над кольцом функций на $\mathcal{M}$ и порождается формой объема $\Lambda_1 \wedge \Lambda_2 \wedge \Lambda_3$.

Пусть $F$ -- функция на $\mathcal{M}$. Гладко продолжая ее на некоторую окрестность $\mathcal{M}$ в $\bbR^9$ можно считать $F = F(\nu_1, \ldots, \nu_3'')$. Используя \eqref{beq2_6}, получим формулы дифференциала $F$:
\begin{equation}\label{beq2_12}
\begin{array}{c}
    dF = (\Omega_1F)\Lambda_1 + (\Omega_2F)\Lambda_2 + (\Omega_3F)\Lambda_3 = \\
    = \ds \sum\limits_{i=0}^2 \left[\left(\nu_2^{(i)} \Lambda_3 - \nu_3^{(i)} \Lambda_2 \right)\frac{\partial F}{\partial \nu_1^{(i)}} +
    \left(\nu_3^{(i)} \Lambda_1 - \nu_1^{(i)} \Lambda_3 \right)\frac{\partial F}{\partial \nu_2^{(i)}} + \right. \\
    \ds \left. + \left(\nu_1^{(i)} \Lambda_2 - \nu_2^{(i)} \Lambda_1 \right)\frac{\partial F}{\partial \nu_3^{(i)}}
    \right].
\end{array}
\end{equation}
После этого нетрудно записать и правила внешнего дифференцирования форм степени 1 и 2 (дифференциал 3-формы всегда равен нулю, так как ${\rm dim}\mathcal{M} = 3$), но в таком общем виде они в дальнейшем не понадобятся.

Рассмотрим второе касательное расслоение $T\tm$. Тривиализация \eqref{beq1_15} превращает каждый слой $T_{\Omega_Q}\tm$ в произведение $T_Q\mathcal{M} {\times} T_\mbs{\omega}\bbRR$. Последний сомножитель естественно отождествляется с $\bbRR$. Теперь любой элемент $X \in T\tm$ имеет вид
\begin{equation}\label{beq2_13}
    X = ((Q, \mbs{\omega}), (\Delta, \mbs{\varepsilon})),
\end{equation}
где матрица $\Delta$ принадлежит $Q{\cdot}\Ass(3)$, а $\mbs{\varepsilon}$ --некоторый вектор из $\bbRR$.

Напомним, что $p_\mathcal{A}:T\mathcal{A} \to \mathcal{A}$ для любого многообразия $\mathcal{A}$ означает каноническую проекцию касательного расслоения. В нашем случае
\begin{equation}\label{beq2_14}
\begin{array}{c}
    p_\mathcal{M}(Q, \mbs{\omega}) = Q, \qquad  p_{\tm}((Q, \mbs{\omega}), (\Delta, \mbs{\varepsilon})) = (Q, \mbs{\omega}), \\
    T_{p_\mathcal{M}}((Q, \mbs{\omega}), (\Delta, \mbs{\varepsilon})) = (Q, \mbs{\iota}(Q^{-1}\Delta)).
\end{array}
\end{equation}

Условимся векторные поля \eqref{beq2_3} рассматривать и на $\tm$, считая $\Delta(Q, \mbs{\omega}) = \Omega_i(Q), \mbs{\varepsilon} = 0$. В пространстве $T_{\Omega_Q}\tm$ имеем базис
\begin{equation}\label{beq2_15}
    \ds \Omega_1(Q), \Omega_2(Q), \Omega_3(Q), \frac{\partial}{\partial \omega_1}, \frac{\partial}{\partial \omega_2}, \frac{\partial}{\partial \omega_3}.
\end{equation}

Формы \eqref{beq2_5} <<поднимаются>> в кокасательное расслоение $T^*\tm$ с помощью антиотображения $p_\mathcal{M}^*$. Полученные в результате 1-формы на $\tm$ условимся для упрощения записи обозначать так же, как исходные. Таким образом, если $X \in T_{\Omega_Q}\tm$, то по определению полагаем
\begin{equation}\label{beq2_16}
    \Lambda_i(Q, \omega) \cdot X = \Lambda_i(Q) \cdot T_{p_\mathcal{M}}(X).
\end{equation}
Поэтому базис в $T_{\Omega_Q}^*\tm$, сопряженный к \eqref{beq2_15}, таков:
\begin{equation}\label{beq2_17}
    \Lambda_1(Q), \Lambda_2(Q), \Lambda_3(Q), d\omega_1, d\omega_2, d\omega_3.
\end{equation}

Отметим, что эти базисы, как и \eqref{beq2_3}, \eqref{beq2_5} сохраняются левыми сдвигами группы $\mathcal{M} = {SO}(3)$, а точнее, отображениями $\tm \to \tm$, касательными к левым сдвигам.
Нетрудно построить аналогичные правоинвариантные объекты, исходя из тривиализации \eqref{beq1_14}, однако в дальнейшем они использованы не будут.

\subsection{Уравнения движения твердого тела\\  в поле потенциальных и гироскопических сил}\label{ssec23}

Введем на многообразии $\mathcal{M} = {SO}(3)$ риманову метрику $s$, полагая с учетом \eqref{beq1_17}
\begin{equation}\label{beq3_1}
    s_Q(\Omega_q^1, \Omega_q^2) = \mbs{A}\mbs{\omega}^1 \cdot \mbs{\omega}^2,
\end{equation}
где $\mbs{A}:\bbRR \to \bbRR$ -- симметричный постоянный оператор (тензор инерции), а точка, как обычно, означает стандартное скалярное произведение в $\bbRR$.

Определим теперь твердое тело с неподвижной точкой как механическую систему с гироскопическими силами
\begin{equation}\label{beq3_2}
    (\mathcal{M}, s, \Pi, \varkappa),
\end{equation}
где
\begin{equation}\label{beq3_3}
    \Pi = \Pi(Q)
\end{equation}
-- функция на $\mathcal{M}$, а $\varkappa$ -- замкнутая 2-форма на $\mathcal{M}$. Будем считать, что последняя задана в виде \eqref{beq2_11}. Условие замкнутости $(d\varkappa = 0)$ с учетом \eqref{beq2_12} запишется так:
\begin{equation}\label{beq3_4}
    \Omega_1\varkappa_1 + \Omega_2\varkappa_2 + \Omega_3\varkappa_3 =0.
\end{equation}

Найдем выражения дифференциальных форм \eqref{aeq2_5}, \eqref{aeq2_6} в базисе \eqref{beq2_17}. Согласно формулам \eqref{aeq2_4}, \eqref{beq2_13}, \eqref{beq2_14}, \eqref{beq3_1} имеем
\begin{equation*}
    \theta(\Omega_Q) \cdot X = s_Q(\Omega_Q, \Delta) = \mbs{A} \mbs{\omega} \cdot \mbs{\iota}(Q^{-1}\Delta),
\end{equation*}
откуда по лемме \ref{lem221}
\begin{equation}\label{beq3_5}
    \theta(Q, \omega) = \sum\limits_{i, j = 1}^3 A_{ij}\omega_i \Lambda_j(Q).
\end{equation}
Здесь $A_{ij}$ -- компоненты тензора $\mbs{A}$ в каноническом базисе $\bbRR$.
Дифференцируя \eqref{beq3_5}, находим
\begin{equation}\label{beq3_6}
    \sigma(Q, \omega) = \sum\limits_{i, j = 1}^3 A_{ij}(d\omega_i \wedge \Lambda_j(Q) + \omega_i d\Lambda_j(Q))
\end{equation}
Ограничимся случаем, когда тензор инерции диагонален (что достигается подходящим выбором подвижных осей). Обозначая $\mbs{A} = {\rm diag} \{A_1, A_2, A_3\}$ и учитывая \eqref{beq2_7}, перепишем уравнение \eqref{beq3_6}:
\begin{equation}\label{beq3_7}
\begin{array}{l}
    \sigma = A_1 d\omega_1 \wedge  \Lambda_1 + A_2 d\omega_2 \wedge  \Lambda_2 + A_3 d\omega_3 \wedge  \Lambda_3 + \\
    \qquad  + A_1 \omega_1 \Lambda_3 \wedge  \Lambda_2 + A_2 \omega_2 \Lambda_1 \wedge  \Lambda_3 + A_3 \omega_3 \Lambda_2 \wedge  \Lambda_1.
\end{array}
\end{equation}

Пусть
\begin{equation}\label{beq3_8}
    X(Q, \omega) = \left((Q, \mbs{\omega}),(\Delta(Q, \mbs{\omega}), \mbs{\varepsilon}(Q, \mbs{\omega})) \right)
\end{equation}
-- векторное поле, определяющее динамику системы \eqref{beq3_2} и $d/dt$ -- оператор дифференцирования вдоль $X$:
\begin{equation}\label{beq3_9}
    \ds \frac{dQ}{dt} = \Delta(Q, \mbs{\omega}), \qquad \frac{d\mbs{\omega}}{dt} = \mbs{\varepsilon}(Q, \mbs{\omega}).
\end{equation}
По теореме \ref{the122} поле $X$ есть уравнение второго порядка. Условие \eqref{aeq1_2} в силу \eqref{beq2_14} приводит к соотношению
\begin{equation}\label{beq3_10}
    \mbs{\omega} = \mbs{\iota}(Q^{-1} \Delta),
\end{equation}
откуда $\Delta = Q \mbs{\iota}^{-1}(\mbs{\omega}) = Q/\mbs{\omega}/$. Поэтому, транспонируя первое (матричное) уравнение \eqref{beq3_9}, имеем
\begin{equation*}
    \ds \frac{dQ^T}{dt} = -/\mbs{\omega}/Q^T.
\end{equation*}
Учитывая, что столбцы $Q^T$ -- это векторы \eqref{beq1_7}, и используя второе свойство \eqref{beq1_13}, получаем уравнения Пуассона \eqref{beq1_25}:
\begin{equation}\label{beq3_11}
    \astup{\nu} = \mbs{\nu} {\times} \mbs{\omega}, \quad \astup{\nu}' = \mbs{\nu}' {\times} \mbs{\omega}, \quad  \astup{\nu}'' = \mbs{\nu}'' {\times} \mbs{\omega}.
\end{equation}

Вычислим элементы общего уравнения \eqref{aeq2_9}. Кинетическая энергия \eqref{aeq2_2} для системы \eqref{beq3_2} в силу \eqref{beq3_1} имеет обычный вид: $\ds \frac{1}{2}\mbs{A} \mbs{\omega} \cdot \mbs{\omega}$. В этом случае полная энергия \eqref{aeq2_3}
\begin{equation}\label{beq3_12}
    H(Q, \mbs{\omega}) = \ds \frac{1}{2}\mbs{A} \mbs{\omega} \cdot \mbs{\omega} + \Pi(Q).
\end{equation}
Следовательно,
\begin{equation}\label{beq3_13}
    dH(Q, \mbs{\omega}) = A_1\omega_1d\omega_1 + A_2\omega_2d\omega_2 + A_3\omega_3d\omega_3 + d\Pi(Q).
\end{equation}
Отметим, что по определению $d\omega_i(Q, \mbs{\omega}) (X) = X\omega_i = \ds \frac{d\omega_i}{dt}$, а в силу \eqref{beq3_10} и соглашения \eqref{beq2_16} $\Lambda_i (Q, \mbs{\omega}) (X) = \omega_i$. Поэтому
\begin{equation}\label{beq3_14}
\begin{array}{rcl}
    i_X\sigma &=& \ds \Bigl[A_1 \frac{d\omega_1}{dt} + (A_3 - A_2)\omega_2\omega_3  \Bigr] \Lambda_1 + \\[2mm]
     & + & \ds \Bigl[A_2 \frac{d\omega_2}{dt} + (A_1 - A_3)\omega_1\omega_3  \Bigr] \Lambda_2 + \\[2mm]
     & + & \ds \Bigl[A_3 \frac{d\omega_3}{dt} + (A_2 - A_1)\omega_1\omega_2  \Bigr] \Lambda_3 - \\[2mm]
     & - & A_1\omega_1d\omega_1 - A_2\omega_2d\omega_2 - A_3\omega_3 d\omega_3 ,\\[2mm]
    i_X p_\mathcal{M}^*\varkappa &=& (\omega_2 \varkappa_3 - \omega_3 \varkappa_2)\Lambda_1 + (\omega_3 \varkappa_1 - \omega_1 \varkappa_3)\Lambda_2 + (\omega_1 \varkappa_2 - \omega_2 \varkappa_1)\Lambda_3.
\end{array}
\end{equation}
Подставляя \eqref{beq3_13}, \eqref{beq3_14} в \eqref{aeq2_9} и приравнивая коэффициенты при независимых формах $\Lambda_i$ $(i = 1, 2, 3)$, учтем правило \eqref{beq2_12}. В результате преобразований приходим к уравнениям Эйлера (см. замечание в конце $\S$~\ref{ssec21}):
\begin{equation}\label{beq3_15}
\begin{array}{c}
    \ds A_1\dot{\omega}_1 + (A_3 - A_2)\omega_2 \omega_3 + \omega_2 \varkappa_3 - \omega_3 \varkappa_2 = \sum\limits_{i=0}^2 \Bigl[ \nu_2^{(i)}\frac{\partial \Pi}{\partial \nu_3^{(i)}} - \nu_3^{(i)}\frac{\partial \Pi}{\partial \nu_2^{(i)}}\Bigr], \\
    \ds A_2\dot{\omega}_2 + (A_1 - A_3)\omega_3 \omega_1 + \omega_3 \varkappa_1 - \omega_1 \varkappa_3 = \sum\limits_{i=0}^2 \Bigl[ \nu_3^{(i)}\frac{\partial \Pi}{\partial \nu_1^{(i)}} - \nu_1^{(i)}\frac{\partial \Pi}{\partial \nu_3^{(i)}}\Bigr], \\
    \ds A_3\dot{\omega}_3 + (A_2 - A_1)\omega_1 \omega_2 + \omega_1 \varkappa_2 - \omega_2 \varkappa_1 = \sum\limits_{i=0}^2 \Bigl[ \nu_1^{(i)}\frac{\partial \Pi}{\partial \nu_2^{(i)}} - \nu_2^{(i)}\frac{\partial \Pi}{\partial \nu_1^{(i)}}\Bigr].
\end{array}
\end{equation}
Эта система замыкается уравнениями \eqref{beq3_11}.

Согласно общему предложению \ref{pro124} функция \eqref{beq3_12} является первым интегралом для векторного поля \eqref{beq3_8}, а значит, и для системы \eqref{beq3_11}, \eqref{beq3_15}.

Подчеркнем, что здесь $\varkappa_1, \varkappa_2, \varkappa_3$ -- любые функции на многообразии $\mathcal{M}$. При выводе не использовано даже условие \eqref{beq3_4}. Таким образом, получены самые общие уравнения, описывающие движение твердого тела вокруг неподвижной точки в поле потенциальных и гироскопических сил.

\subsection{Существование интеграла площадей}\label{ssec24}

Потенциальные силовые поля, встречающиеся в наиболее интересных задачах динамики твердого тела, обладают осью симметрии. Так, осесимметричными являются поле силы тяжести и центральное ньютоновское поле. С точки зрения, изложенной в предыдущих параграфах, это означает, что потенциал \eqref{beq3_3} сохраняется группой вращений вокруг некоторого направления, неизменного в неподвижном пространстве. По аналогии с основным случаем поля силы тяжести такое направление будем называть вертикалью.

Без ограничения общности можно считать, что в триэдре \eqref{beq1_2} вертикален первый вектор. Образом вертикали в подвижном пространстве будет, следовательно, первый вектор триэдра \eqref{beq1_7}. Тогда, если конфигурация $Q'$ твердого тела получена из конфигурации $Q$ поворотом на угол $\tau \in \bbR$ вокруг оси $[{\boldsymbol \nu}]$, то справедливо равенство $Q' = Q^\tau Q$, где
\begin{equation}\label{beq4_1}
    Q^\tau = \left(
    \begin{array}{ccc}
    1 & 0 & 0 \\
    0 & \cos(\tau) & -\sin(\tau) \\
    0 & \sin(\tau) & \cos(\tau)
    \end{array}
    \right).
\end{equation}
Рассмотрим однопараметрическую группу диффеоморфизмов $\Psi = \{\psi^\tau: \tau \in \bbR\}$ многообразия~$\mathcal{M}$:
\begin{equation}\label{beq4_2}
    \psi^\tau(Q) = Q^\tau Q.
\end{equation}
Очевидно, $\Psi$ изоморфна $S^1$ и, в частности, компактна.
Порождающее поле
\begin{equation}\label{beq4_3}
    \ds v(Q) = \left. \frac{d}{d\tau} \right|_{\tau=0}\psi^\tau(Q) = \left. \frac{dQ^\tau}{d\tau} \right|_{\tau=0}Q
\end{equation}
правоинвариантно и, поскольку траектория группы есть вращение $\bbRR$ в $\bbRRR$ с постоянной угловой скоростью вокруг первой оси $[{\boldsymbol \nu}_0]$, то вектор внутренней угловой скорости мгновенного вращения \eqref{beq4_3} равен ${\boldsymbol \nu}$. Вычислим его по определению \eqref{beq1_17}, воспользовавшись при этом последним свойством \eqref{beq1_13} и выражением \eqref{beq4_1}:
\begin{equation}\label{beq4_4}
    \mbs{\iota}(Q^{-1}v(Q)) = \mbs{\iota}\left(Q^{-1}\left/ \left(
    \begin{array}{c} 1 \\ 0 \\ 0 \end{array} \right) \right/ Q\right) = \mbs{\iota}\left(\left/Q^{-1} \left(\begin{array}{c} 1 \\ 0 \\ 0 \end{array} \right) \right/ \right) = \mbs{\nu}.
\end{equation}
Отображение $p:Q \mapsto \mbs{\nu}$, определенное в соответствии с \eqref{beq4_4}, переводит многообразие $\mathcal{M}$ на единичную сферу в $\bbRR$
\begin{equation}\label{beq4_5}
    \nu_1^2 + \nu_2^2 + \nu_3^2 = 1,
\end{equation}
называемую сферой Пуассона. Прообраз любой точки сферы при этом является в точности орбитой группы $\Psi$. Следовательно,
\begin{equation}\label{beq4_6}
    p:\mathcal{M} \to S^2
\end{equation}
есть отображение факторизации и, в силу компактности группы, главное $\Psi$-расслоение \cite{bib08,bib09}.

Известно, что в этом случае всякая $\Psi$-сохранная гладкая функция $F$ на $\mathcal{M}$ представляет собой композицию $F = \widetilde{F} \circ p$, где $\widetilde{F}$ -- гладкая функция на $\mathcal{M}/ \Psi$. Таким образом, будучи записана в переменных $\nu_i^{(j)} (i = 1,2,3; j = 0,1,2)$, $\Psi$-сохранная функция на $SO(3)$ имеет вид
\begin{equation}\label{beq4_7}
    F = F(\nu_1, \nu_2, \nu_3).
\end{equation}
Зависимость \eqref{beq4_5} переменных $\nu_i$ здесь не существенна, так как любая гладкая функция на сфере может быть продолжена и в некоторый шаровой слой.

Итак, симметрия потенциального силового поля означает, что потенциальная энергия \eqref{beq3_3} может быть записана в виде \eqref{beq4_7}:
\begin{equation}\label{beq4_8}
    \Pi = \Pi(\nu_1, \nu_2, \nu_3).
\end{equation}

Заметим, что метрика \eqref{beq3_1} по своему определению левоинвариантна и, в частности, сохраняется преобразованиями \eqref{beq4_2}. Поэтому, предполагая выполнение свойства \eqref{beq4_8}, попадаем в ситуацию, которой посвящен $\S$~\ref{ssec13}. В этом случае можно ставить вопрос о существовании у рассматриваемой системы первого интеграла вида \eqref{aeq3_6}.
По теореме \ref{the136}, для существования такого интеграла необходимо и достаточно, чтобы поле \eqref{beq4_3} и форма гироскопических сил $\varkappa$ системы \eqref{beq3_2} удовлетворяли условию \eqref{aeq3_18} с некоторой функцией $f$ на~$\mathcal{M}$. Согласно \eqref{aeq3_21} функция $f$ сохраняется группой $\Psi$, т.е. искать ее нужно в виде
\begin{equation}\label{beq4_9}
    F = F(\nu_1, \nu_2, \nu_3).
\end{equation}
Используя лемму \ref{lem221}, выражения \eqref{beq2_11}, \eqref{beq2_12} и свойства \eqref{beq4_4}, \eqref{beq4_9}, находим
\begin{equation}\label{beq4_10}
\begin{array}{c}
    i_v\varkappa = (\varkappa_3 \nu_2 - \varkappa_2 \nu_3)\Lambda_1 + (\varkappa_1 \nu_3 - \varkappa_3 \nu_1)\Lambda_2 + (\varkappa_2 \nu_1 - \varkappa_1 \nu_2)\Lambda_3, \\
    \ds -df = \left(\nu_2 \frac{\partial f}{\partial \nu_3} - \nu_3 \frac{\partial f}{\partial \nu_2} \right) \Lambda_1 + \left(\nu_3 \frac{\partial f}{\partial \nu_1} - \nu_1 \frac{\partial f}{\partial \nu_3} \right) \Lambda_2 + \left(\nu_1 \frac{\partial f}{\partial \nu_2} - \nu_2 \frac{\partial f}{\partial \nu_1} \right) \Lambda_3.
\end{array}
\end{equation}
Введем вектор-функции
\begin{equation}\label{beq4_11}
   \ds \mbs{\varkappa} = \left( \begin{array}{c} \varkappa_1\\ \varkappa_2 \\ \varkappa_3 \end{array} \right),\quad
   \grad f = \left( \begin{array}{c} {\partial f}/{\partial \nu_1} \\ {\partial f}/{\partial \nu_2} \\ {\partial f}/{\partial \nu_3}\end{array} \right).
\end{equation}
Подстановка \eqref{beq4_10}, \eqref{beq4_11} в \eqref{aeq3_18} приводит к тождеству
\begin{equation}\label{beq4_12}
    (\mbs{\varkappa} - \grad f) {\times} \mbs{\nu} = 0.
\end{equation}
Как отмечалось в $\S$~\ref{ssec13}, из замкнутости формы $\mbs{\varkappa}$ и условия \eqref{aeq3_18} следует ее $\Psi$-сохранность:
\begin{equation}\label{beq4_13}
    {\bf L}_v\mbs{\varkappa} = 0.
\end{equation}
С другой стороны, формы $\Lambda_i$ левоинвариантны и, в частности, ${\bf L}_v\Lambda_i = 0$. Поэтому в представлении \eqref{beq2_11}
\begin{equation}\label{beq4_14}
    {\bf L}_v\mbs{\varkappa} = (v\varkappa_1)\Lambda_3 \wedge \Lambda_2 + (v\varkappa_2)\Lambda_1 \wedge \Lambda_3 + (v\varkappa_3)\Lambda_2 \wedge \Lambda_1.
\end{equation}
Сравнивая \eqref{beq4_13}, \eqref{beq4_14}, заключаем, что $v\varkappa_i = 0$, т.е.
\begin{equation}\label{beq4_15}
    \varkappa_i = \varkappa_i(\nu_1, \nu_2, \nu_3), \qquad i = 1, 2, 3.
\end{equation}
Из формул \eqref{beq4_12}, \eqref{beq4_15} получаем следующее утверждение.

\begin{theorem}\label{the241}
Для существования интеграла площадей в системе \eqref{beq3_2} с потенциалом \eqref{beq4_8} необходимо и достаточно, чтобы вектор, составленный из коэффициентов формы гироскопических сил, мог быть записан в виде
\begin{equation}\label{beq4_16}
    \mbs{\varkappa} = F \mbs{\nu} + \grad f,
\end{equation}
где $F$ и $f$ -- некоторые функции, зависящие только от $\nu_1, \nu_2, \nu_3$. Если выполнено условие \eqref{beq4_16}, то интеграл площадей имеет следующее выражение:
\begin{equation}\label{beq4_17}
    G(Q, \mbs{\omega}) = A_1\omega_1 \nu_1 + A_2\omega_2 \nu_2 + A_3\omega_3 \nu_3 + f(\nu_1, \nu_2, \nu_3).
\end{equation}
\end{theorem}

Для доказательства достаточно использовать теорему \ref{the136} и равенства \eqref{aeq3_6}, \eqref{beq3_1}, \eqref{beq4_4}. Отметим, что в случае \eqref{beq4_16} равенство \eqref{beq3_4} выполняется автоматически.

Пусть выполнены условия существования интеграла площадей. Для полной интегрируемости рассматриваемой системы необходимо указать интеграл $K$ на $\tm$, не зависящий от $H$ и $G$ и находящийся в инволюции с ними. Согласно предложению \ref{pro138} этот интеграл должен иметь вид
\begin{equation}\label{beq4_18}
    K = K(\omega_1, \omega_2, \omega_3, \nu_1, \nu_2, \nu_3).
\end{equation}

Итак, при наличии интегралов \eqref{beq4_17}, \eqref{beq4_18} задача может быть сведена к квадратурам. Опыт показывает, однако, что в большинстве случаев эти квадратуры крайне сложны и не поддаются качественному исследованию. Значительная информация о движениях может быть получена на основе изучения одних лишь первых интегралов. Подобным вопросам посвящена следующая глава.

Укажем, как видоизменяются уравнения \eqref{beq3_15} при наличии группы симметрий. Поскольку в правые части \eqref{beq3_15} входят теперь лишь переменные $\nu_1, \nu_2, \nu_3$, то система замыкается уже первой группой уравнений \eqref{beq3_11}. Тем самым с учетом зависимости уравнений \eqref{beq3_11} порядок системы реально понижен на единицу. Фазовое пространство уравнений \eqref{beq3_15} и первой группы уравнений \eqref{beq3_11} соответствует многообразию \eqref{aeq5_25}. Более подробно оно будет рассмотрено в главе 3.

Известно, что с топологической точки зрения $\mathcal{M} = \bbR P^3$. В самом деле, сопоставим каждому вращению $Q$ его вектор конечного поворота \cite{bib31}. Получим гомеоморфизм многообразия $\mathcal{M}$ и трехмерного пространства, оснащенного бесконечно удаленными точками (повороты на $\pm \pi$), причем две бесконечно удаленные точки на одной прямой, проходящей через начало координат, необходимо отождествлять (повороты на $\pm \pi$ суть одно и то же). А это и есть $\bbR P^3$. Можно поступить иначе. Введем параметры Родрига-Гамильтона $(\lambda_0, \lambda_1, \lambda_2, \lambda_3) \in S^3$. Каждому вращению $Q$ отвечает два набора параметров -- диаметрально противоположные точки трехмерной сферы. Отождествляя их, вновь получим $S^3 / \mathbb{Z}_2 = \bbR P^3$.
Группа целочисленных когомологий $H^2(\bbR P^3)$ есть $\mathbb{Z}_2$ \cite[с. 115, 121]{bib55}. По теореме об универсальных коэффициентах \cite{bib12} находим $H^2(\mathcal{M}, \bbR) = 0$. В силу теоремы де Рама (см., например, \cite{bib53}) это означает, что всякая замкнутая 2-форма на $\mathcal{M}$ является точной. В частности, в нашей задаче форма гироскопических сил всегда точна, и согласно \eqref{aeq1_8}, \eqref{beq2_10} система представляется функцией Лагранжа
\begin{equation}\label{beq4_19}
    \ds L = \frac{1}{2} (A_1\omega_1^2 + A_2\omega_2^2 + A_3\omega_3^2) + \lambda_1 \omega_1 + \lambda_2 \omega_2 + \lambda_3 \omega_3 -\Pi,
\end{equation}
где $\lambda_i:\mathcal{M} \to \bbR$.

В общем случае выражения $\varkappa_i$ через $\lambda_j$, полученные из условия
\begin{equation}\label{beq4_20}
    \varkappa = d\lambda,
\end{equation}
весьма громоздки, и делать такую замену в \eqref{beq3_15} бессмысленно. Рассмотрим случай \eqref{beq4_15}. По замечанию \ref{rem133} в $\S$~\ref{ssec13} равенство \eqref{beq4_20} выполнено для некоторой $\Psi$-сохранной 1-формы $\lambda$, т.е. можно считать
\begin{equation}\label{beq4_21}
    \lambda_i = \lambda_i(\nu_1, \nu_2, \nu_3), \qquad (i = 1, 2, 3).
\end{equation}
Дифференцируя \eqref{beq2_10} с учетом \eqref{beq2_12}, из равенства \eqref{beq4_20} получаем
\begin{equation}\label{beq4_22}
\begin{array}{c}
    \ds \varkappa_1 = \lambda_1 + \left(\nu_2 \frac{\partial \lambda_2}{\partial \nu_1} -  \nu_1 \frac{\partial \lambda_2}{\partial \nu_2}\right) -
    \left(\nu_1 \frac{\partial \lambda_3}{\partial \nu_3} -  \nu_3 \frac{\partial \lambda_3}{\partial \nu_1}\right). \\
    (123)
\end{array}
\end{equation}
(символ (123) показывает, что равенство сохраняется при циклических перестановках индексов.)
Введем вектор-функцию
\begin{equation}\label{beq4_23}
    \lambda(\nu_1, \nu_2, \nu_3) = \left( \begin{array}{c} \lambda_1(\nu_1, \nu_2, \nu_3) \\ \lambda_2(\nu_1, \nu_2, \nu_3) \\ \lambda_3(\nu_1, \nu_2, \nu_3)\end{array} \right).
\end{equation}
Равенства \eqref{beq4_22} запишем в компактном виде
\begin{equation}\label{beq4_24}
    \mbs{\varkappa} = -\mbs{\nu} \ddiv \mbs{\lambda} + \grad (\mbs{\lambda} \cdot \mbs{\nu}).
\end{equation}
Сравнивая формулу \eqref{beq4_24} с \eqref{beq4_16}, находим вид функций $f$ и $F$:
\begin{equation}\label{beq4_25}
    f(\mbs{\nu}) = \mbs{\lambda}(\mbs{\nu}) \cdot \mbs{\nu}, \qquad F(\mbs{\nu}) = -\ddiv \mbs{\lambda}.
\end{equation}
Используя формулу \eqref{beq4_24}, уравнения \eqref{beq3_15} можно представить как систему в векторной форме:
\begin{equation}\label{beq4_26}
    {\bf A} \astup{\omega} + \mbs{\omega} {\times} [{\bf A}\mbs{\omega} + \mbs{\lambda}+(\Lambda - E \ddiv \mbs{\lambda})\mbs{\nu}] = \mbs{\nu} {\times} \grad \Pi.
\end{equation}
Здесь $\ds \Lambda^T = (\frac{\partial \lambda_i}{\partial \nu_j})$ -- матрица Якоби, $E$ -- единичная матрица.
Интеграл площадей \eqref{beq4_17} с учетом \eqref{beq4_25} запишем следующим образом:
\begin{equation}\label{beq4_27}
    G = (A_1 \omega_1 + \lambda_1)\nu_1 + (A_2 \omega_2 + \lambda_2)\nu_2 + (A_3 \omega_3 + \lambda_3)\nu_3.
\end{equation}
Вектор \eqref{beq4_23} играет, следовательно, роль гиростатического момента.

Систему \eqref{beq4_26} можно получить и классическим способом, записав уравнения Эйлера\,--\,Лагранжа (Пуанкаре\,--\,Больцмана\,--\,Гаммеля) \cite{bib31} с лагранжианом \eqref{beq4_19}, в котором $\lambda_i$ -- функции вида \eqref{beq4_21}. Дифференцирование по квазикоординате, соответствующей квазискорости $\omega_i$, -- это дифференцирование вдоль векторного поля \eqref{beq2_1}. Трехиндексные символы Больцмана (структурные константы алгебры Ли левоинвариантных векторных полей на $\mathcal{M}$ в базисе \eqref{beq2_1}) уже определены соотношениями \eqref{beq2_4}.

На практике гироскопические силы могут быть обусловлены, например, наличием роторов, оси которых закреплены в теле-носителе. При этом обычно полагается, что угловые скорости роторов либо являются заданными функциями времени \cite{bib72}, либо определяются действующими на систему силами \cite{bib39}. Ситуация \eqref{beq4_21} наблюдается, например, в обобщенных уравнениях типа Кирхгофа \cite{bib34,bib35}.

Форму \eqref{beq4_26} имеют уравнения движения наэлектризованного твердого тела в магнитном поле, напряженность которого постоянна и коллинеарна вектору $\mbs{\nu}$. При этом $\mbs{\lambda} = \Lambda \mbs{\nu}$, где матрица $\Lambda$ также постоянна и пропорциональна <<электрическому тензору инерции>>.

Отметим еще один хорошо известный, но важный частный случай -- задачу о движении гиростата в потенциальном силовом поле. Он характеризуется равенством
\begin{equation}\label{beq4_28}
    \mbs{\lambda} = \cons.
\end{equation}
При этом условии выделим из \eqref{beq3_11}, \eqref{beq4_26} замкнутую подсистему
\begin{equation}\label{beq4_29}
\begin{array}{c}
    A_1 \dot{\omega}_1 + (A_3 - A_2)\omega_2 \omega_3 + \omega_2 \lambda_3 - \omega_3 \lambda_2 = \nu_2 \Pi_{\nu_3} - \nu_3 \Pi_{\nu_2}, \\
    A_2 \dot{\omega}_2 + (A_1 - A_3)\omega_3 \omega_1 + \omega_3 \lambda_1 - \omega_1 \lambda_3 = \nu_3 \Pi_{\nu_1} - \nu_1 \Pi_{\nu_3}, \\
    A_3 \dot{\omega}_3 + (A_2 - A_1)\omega_1 \omega_2 + \omega_1 \lambda_2 - \omega_2 \lambda_1 = \nu_1 \Pi_{\nu_2} - \nu_2 \Pi_{\nu_1}, \\
    \dot{\nu}_1 = \nu_2 \omega_3 - \nu_3 \omega_2, \\
    \dot{\nu}_2 = \nu_3 \omega_1 - \nu_1 \omega_3, \\
    \dot{\nu}_3 = \nu_1 \omega_2 - \nu_2 \omega_1.
\end{array}
\end{equation}
Левые части уравнений Эйлера -- это, конечно, компоненты относительной производной вектора ${\bf M} = {\bf A} \mbs{\omega} + \mbs{\lambda}$, называемого вектором момента количества движения гиростата или, короче, кинетическим моментом. Первый интеграл \eqref{beq4_27} выражает постоянство проекции кинетического момента на вертикаль.

Отметим возможность дальнейшего понижения порядка в системе \eqref{beq3_11}, \eqref{beq4_26}. Согласно теореме \ref{the154} и замечанию в конце $\S$~\ref{ssec15} индуцированное векторное поле на фиксированном уровне $G = g$ интеграла \eqref{beq4_17} с помощью отображения $Tp/\Psi$, где $p$ -- отображение \eqref{beq4_6}, переводится в векторное поле, задающее динамику некоторой системы с гироскопическими силами на многообразии $TS^2$. Его решения -- это, очевидно, траектории <<в теле>> направляющего вектора вертикали, соответствующие движениям с заданной постоянной площадей. В структуре прямого произведения \eqref{beq1_15} имеем
$Tp(Q, \mbs{\omega}) = (p(Q),p(Q){\times}\mbs{\omega})$, поэтому $(Tp / \Psi)(\mbs{\nu}, \mbs{\omega}) = (\mbs{\nu},\mbs{\nu} {\times} \mbs{\omega})$. Воспользовавшись этим фактом и формулами \eqref{beq2_6}, \eqref{beq2_7}, \eqref{beq2_12}, \eqref{beq4_8}, \eqref{beq4_9}, \eqref{beq4_16}, найдем элементы приведенной системы на $S^2$:
\begin{equation*}
\begin{array}{c}
    \ds \widetilde{s}_\nu \left(\astup{\nu}^1, \astup{\nu}^2 \right) = \frac{A_1A_2A_3}{I(\mbs{\nu})} \left( \frac{\dot{\nu}_1^1 \dot{\nu}_1^2}{A_1} + \frac{\dot{\nu}_2^1 \dot{\nu}_2^2}{A_2} + \frac{\dot{\nu}_3^1 \dot{\nu}_3^2}{A_3}\right), \\
    \ds \Pi_g(\mbs{\nu}) = \Pi(\mbs{\nu}) + \frac{[g-f(\mbs{\nu})]^2}{2I(\mbs{\nu})}, \\
    \ds \varkappa_g = \frac{[g-f(\mbs{\nu})]\Delta(\mbs{\nu})}{I^2(\mbs{\nu})}(\nu_1d\nu_2 \wedge d\nu_3 + \nu_2d\nu_3 \wedge d\nu_1 + \nu_3d\nu_1 \wedge d\nu_2) - \\
    \ds - \frac{I(\mbs{\nu})F(\mbs{\nu}) + \mbs{A}\mbs{\nu} \cdot \grad f(\mbs{\nu})}{I^2(\mbs{\nu})}(A_1\nu_1d\nu_2 \wedge d\nu_3 + A_2\nu_2d\nu_3 \wedge d\nu_1 + A_3\nu_3d\nu_1 \wedge d\nu_2).
\end{array}
\end{equation*}
Здесь обозначено
\begin{equation*}
\begin{array}{c}
    I(\mbs{\nu}) = A_1 \nu_1^2 + A_2 \nu_2^2 + A_3 \nu_3^2, \\
    \Delta(\mbs{\nu}) = (A_2 + A_3 - A_1)A_1 \nu_1^2 + (A_3 + A_1 - A_2)A_2 \nu_2^2 + (A_1 + A_2 - A_3)A_3 \nu_3^2.
\end{array}
\end{equation*}
Соответствующие дифференциальные уравнения второго порядка на $S^2$ весьма громоздки и в дальнейшем нам не понадобятся. В частном случае, когда в исходной системе отсутствуют гироскопические силы, они приведены, например, в \cite{bib50}.

Отметим, что если исходная система $SO(3)$ натуральна (в нашем случае это означает $F(\mbs{\nu}) \equiv 0$ и  $f(\mbs{\nu}) \equiv 0$), то при $g \neq 0$ приведенная форма гироскопических сил с точностью до множителя $3g\Delta(\mbs{\nu}) / I^2(\mbs{\nu})$, нигде не обращающегося в нуль на $S^2$, совпадает с формой объема на $S^2$, индуцированной из $\bbRR$. В частности, эта форма неточна. Этот же факт вытекает из результатов, отмеченных в $\S$~\ref{ssec15}: группа $H^2(S^2, \mathbb{Z}) = \mathbb{Z}$; расслоение \eqref{beq4_6} дифференцируемо изоморфно нетривиальному расслоению единичных касательных векторов двумерной сферы. Таким образом, приведенная система в задаче о движении твердого тела в потенциальном поле при отличной от нуля постоянной площадей не допускает глобальной функции Лагранжа.

\clearpage

\section{Основные принципы\\  топологического и геометрического анализа\\  интегрируемых механических систем}\label{sec3}

Поведение большого класса объектов, изучаемых в механике, может быть с той или иной степенью адекватности описано системой обыкновенных дифференциальных уравнений первого порядка. Множество, на котором такая система задана, называют пространством состояний объекта или его фазовым пространством. Изучение различных свойств решений системы уравнений, соответствующих заданному объекту, получило название метода фазового пространства. В этой главе будут рассмотрены два аспекта метода -- исследование фазовой топологии системы и ее геометрический анализ.

\subsection{Фазовая топология динамической системы}\label{ssec31}

Динамическая система -- это гладкое векторное поле $X$ на конечномерном $C^\infty$-много\-обра\-зии~$\mathfrak{M}$. Многообразие $\mathfrak{M}$ есть по определению фазовое пространство системы $X$.

Фазовой топологией динамической системы $X$ называют кривую $x(t) \in \mathfrak{M}$, $t \in (a,b) \subset \bbR$, удовлетворяющую условию $T_t\,x(t, {\bf 1}) = X(x(t))$ (где ${\bf 1}$ -- единичное векторное поле на $\bbR$, направленное в сторону возрастания $t$), или, короче,
\begin{equation}\label{ceq1_1}
    \ds \frac{dx}{dt} = X(x).
\end{equation}

Конечная задача теории динамических систем состоит в том, чтобы дать описание всех фазовых траекторий изучаемой системы. Ясно, что в общем случае такое идеальное решение вряд ли возможно. Поэтому стремятся разбить множество траекторий на какие-либо как можно более узкие классы, указывая их специфические свойства. Естественное разбиение на подобные классы порождают первые интегралы динамической системы, т.е. функции, постоянные вдоль фазовых траекторий.

Пусть система \eqref{ceq1_1} обладает первыми интегралами
\begin{equation}\label{ceq1_2}
    K_1, \ldots, K_n: \mathfrak{M} \to \bbR \,.
\end{equation}
Тогда их совместный уровень
\begin{equation}\label{ceq1_3}
    J_k = \{x \in \mathfrak{M}: K_i(x) = k_i, i = 1, \ldots, n\}
\end{equation}
где $k = (k_1, \ldots, k_n)$ -- набор постоянных, является подмножеством фазового пространства, инвариантным относительно фазового потока системы \eqref{ceq1_1}. Это означает, что всякая траектория, пересекающая множество \eqref{ceq1_3}, целиком в нем и содержится. В частности, векторное поле~$X$ касается множества \eqref{ceq1_3}.

Множества \eqref{ceq1_3} называются интегральными многообразиями системы \eqref{ceq1_1}. Их классификация представляет собой значительный интерес при исследовании динамических систем. Во-первых, поскольку размерность этих многообразий меньше, чем размерность фазового пространства, можно ожидать, что изучение динамической системы на $J_k$ окажется проще, чем изучение исходной системы. Во-вторых, уже сама топологическая структура интегрального многообразия накладывает существенные ограничения на свойства соответствующих фазовых траекторий.

Под задачей изучения фазовой топологии системы \eqref{ceq1_1}, обладающей интегралами \eqref{ceq1_2}, будем понимать задачу задачу классификации топологического и дифференцируемого типа интегральных многообразий \eqref{ceq1_3} по всем наборам постоянных $(k_1, \ldots, k_n) \in \bbR^n$. Основную роль в такой классификации играет так называемое бифуркационное множество -- множество точек $k \in \bbR^n$, при переходе через которые меняется тип интегрального многообразия.

Пусть $f: A \to B$ -- дифференцируемое отображение многообразий. Для каждой точки $y \in B$ обозначим
\begin{equation}\label{ceq1_4}
    f_y = \{x \in A: f(x) = y\}.
\end{equation}
В терминологии расслоений $f_y$ -- слой над точкой $y$.

Отображение $f$ называется локально-тривиальным над точкой $y_0 \in B$, если существует такая окрестность $U$ точки $y_0$ в $B$, что $f^{-1}(U)$ диффеоморфно $f_{y_0} {\times} U$, и этот диффеоморфизм замыкает диаграмму ($p_2$ -- проекция на второй сомножитель):

$$
\begin{CD}
f^{-1}(U) @>{}>> f_{y_0} {\times} U\\
@VfVV @VVp_2V \\
U @>\id{U}>>U
\end{CD}
$$

Бифуркационным множеством отображения $f$ называется множество $\Sigma$ точек $y \in B$, над которыми $f$ не является локально-тривиальным \cite{bib41}.

Очевидно, если $y \in B \backslash \Sigma$, то отображение $f$ регулярно в любой точке $x \in f_y$. Отсюда следует, что множество $\hat{\Sigma}$ критических отображений $f$ содержится в $\Sigma$. Случаи, когда $\hat{\Sigma} \neq \Sigma$, являются, вообще говоря, патологическими и связаны в основном с некомпактностью слоев. В задачах, изучаемых в дальнейшем, такие ситуации не возникают, поэтому условимся сразу рассматривать лишь те отображения, для которых
\begin{equation}\label{ceq1_5}
    \Sigma = \hat{\Sigma}.
\end{equation}

Сделаем одно важное замечание. Пусть $B = B_1 {\times} B_2$. Отображение $f$ можно представить в виде $f = f_1 {\times} f_2$, где $f_i:A \to B_i$ $(i = 1, 2)$. Допустим, что $f_2$ всюду регулярно. Тогда для всех $y_2 \in B_2$ множество $f_2^{-1}(y_2)$ является гладким подмногообразием в $A$ \cite{bib46}. Обозначим через $\Sigma_{y_2}$ бифуркационное множество отображения
\begin{equation*}
    f[y_2] = \left. f_1 \right|_{f_2^{-1}(y_2)} : f_2^{-1}(y_2) \to B_1.
\end{equation*}

\begin{lemma}\label{lem311}
В этой ситуации бифуркационное множество отображения $f$ имеет вид
\begin{equation*}
    \Sigma = \bigcup \limits_{y_2 \in B_2} \Sigma_{y_2} {\times} \{y_2\}.
\end{equation*}
\end{lemma}

С учетом \eqref{ceq1_5} это утверждение легко вытекает из теоремы о неявной функции и в некоторых случаях значительно облегчает нахождение бифуркационных множеств.

Возвращаясь к системе \eqref{ceq1_1}, введем интегральное отображение
\begin{equation}\label{ceq1_6}
    J = K_1 {\times} K_2 {\times} \ldots K_n: \mathfrak{M} \to \bbR^n\,.
\end{equation}
Тем самым, обозначение \eqref{ceq1_3} согласовано с \eqref{ceq1_4}.

Назовем бифуркационным множеством системы \eqref{ceq1_1} с интегралами \eqref{ceq1_2} бифуркационное множество $\Sigma$ отображения \eqref{ceq1_6}.

Точки $\Sigma$ разбивают $\bbR^n$ на открытые связные области, в каждой из которых тип интегрального многообразия сохраняется. После того, как такое разбиение осуществлено, возникает задача установить тип в каждой из упомянутых областей. Дело осложняется тем, что встречающиеся на практике системы содержат числовые параметры, и естественно требовать описания фазовой топологии при всевозможных их значениях. С другой стороны, как мы сейчас увидим, наличие параметров может и упростить задачу.

Рассмотрим динамическую систему
\begin{equation}\label{ceq1_7}
    \ds \frac{dx}{dt} = X(x,c),
\end{equation}
зависящую от набора вещественных параметров $c = (c_1, c_2, \ldots, c_\ell)$ и обладающую интегралами
\begin{equation}\label{ceq1_8}
    K_i(x, c) = k_i, \qquad i = 1, \ldots, n.
\end{equation}

Естественно, что интегральные многообразия и бифуркационные множества зависят от параметров. Поэтому будем писать $J_k(c)$, $\Sigma(c)$. Возникает вопрос, можно ли, зная фазовую топологию при некотором фиксированном $c = c_*$, сказать что-либо о других значениях $c$. Чтобы ответить на него, поступим следующим образом. Дополним уравнения \eqref{ceq1_7} системой
\begin{equation}\label{ceq1_9}
    \ds \frac{dc_j}{dt} = 0, \qquad j = 1, \ldots, \ell,
\end{equation}
а соотношения \eqref{ceq1_8} -- соотношениями
\begin{equation}\label{ceq1_10}
    C_j(x,c) = c_j, \qquad j = 1, \ldots, \ell,
\end{equation}
где отображения $C_j: \mathfrak{M} {\times} \bbR^\ell \to \bbR$ на самом деле от $x$ не зависят, а вектору $c$ сопоставляют его компоненты.
Интегральное отображение расширенной системы \eqref{ceq1_7}, \eqref{ceq1_9} будет таково:
\begin{equation*}
    \widetilde{J} = J {\times} C_1 {\times} \ldots {\times} C_\ell: \mathfrak{M} {\times} \bbR^\ell \to \bbR^{n + \ell}.
\end{equation*}
Его бифуркационное множество обозначим через $\widetilde{\Sigma}$.

Совершенно ясно, что для $(k, c) \in \bbR^{n + \ell}$
\begin{equation}\label{ceq1_11}
    \widetilde{J}_{k,c} = J_k(c).
\end{equation}
Кроме того, отображение $C = C_1 {\times} \ldots {\times} C_\ell$ в силу \eqref{ceq1_10} тождественно по $c$ и, следовательно, всюду регулярно. По лемме \ref{lem311}
\begin{equation}\label{ceq1_12}
    \widetilde{\Sigma} = \bigcup \limits_{c \in \bbR^\ell} \Sigma(c) {\times} \{c\},
\end{equation}
так что дополнительных трудностей в построении $\widetilde{\Sigma}$ не возникает. В то же время тип интегральных многообразий сохраняется в связной подобласти $\bbR^{n + \ell} \backslash \widetilde{\Sigma}$.

\begin{defin}\label{def311}
Пусть $c, c_*$ -- два набора параметров динамической системы \eqref{ceq1_7}, а $U$ и $U_*$ -- связные области в $\bbR^n \backslash \Sigma(c)$ и $\bbR^n \backslash \Sigma(c_*)$ соответственно. Будем говорить, что области $U$ и $U_*$ эквивалентны, если для некоторой пары $k \in U$ и $k_* \in U_*$ найдется непрерывный путь в $\bbR^{n + \ell} \backslash \widetilde{\Sigma}$, соединяющий $(k, c)$ и $(k_*, c_*)$.
\end{defin}

Из определения следует, что эквивалентные области можно непрерывно деформировать одна в другую в $\bbR^{n+\ell}$, минуя $\widetilde{\Sigma}$.

\begin{theorem}\label{the312}
Если наборы констант интегралов \eqref{ceq1_8} $k$ и $k_*$ лежат в эквивалентных областях, то соответствующие интегральные многообразия диффеоморфны:
\begin{equation}\label{ceq1_13}
    J_k(c) \cong J_{k_*}(c_*).
\end{equation}
\end{theorem}

\begin{proof}
Согласно \eqref{ceq1_12} эквивалентные области принадлежат сечениям плоскостями $c = \cons$ одной и той же связной компоненты в $\bbR^{n+\ell} \backslash \widetilde{\Sigma}$. Поэтому в условиях теоремы $\widetilde{J}_{k,c} \cong \widetilde{J}_{k_*, c_*}$, так что \eqref{ceq1_13} вытекает из равенства \eqref{ceq1_11}.\end{proof}

Таким образом, чтобы установить тип $J_k(c)$ в некоторой области $\bbR^n \backslash \Sigma(c)$, достаточно проследить непрерывное изменение этой области до каких-либо значений параметров $c_*$, для которых интегральные многообразия легко устанавливаются. При этом, конечно, необходимо быть уверенным, что при значениях $c_*$ изучаемая область не вырождается.

Приведенные здесь элементарные соображения неявно использовались практически во всех исследованиях по фазовой топологии \cite{bib23,bib26,bib51,bib52}.

\subsection{Области возможности движения в механических системах}\label{ssec32}

Вопрос об исследовании типов интегральных многообразий $J_k$, определенных в предыдущем параграфе, во многих весьма важных случаях может быть решен с помощью классификации их проекций на такое многообразие, размерность которого, вообще говоря, не превосходит самих $J_k$.

В задачах механики подобное многообразие возникает естественным образом. Это -- пространство фазовых переменных, зависящих лишь от положения (конфигурации) реальной механической системы. Такие переменные назовем конфигурационными. Подчеркнем, что конфигурация реальной системы не обязана полностью определяться заданием этих переменных. Так, например, в динамике твердого тела в качестве конфигурационных переменных сохраняют обычно лишь компоненты в подвижных осях одного неизменного в пространстве вектора. Тем самым положение тела определено с точностью до поворотов вокруг соответствующей оси.
Исследование проекций фазовых траекторий на пространство конфигурационных переменных само по себе представляет интерес, поскольку эти проекции непосредственно связаны с движениями реальной системы.

Сформулируем теперь основные понятия в достаточно общем виде.

Пусть фазовое пространство $\mathfrak{M}$ динамической системы \eqref{ceq1_1} служит тотальным пространством локально-тривиального расслоения
\begin{equation}\label{ceq2_1}
    \pi:\mathfrak{M} \to \mathfrak{N}
\end{equation}
со слоем $F$, и пусть по-прежнему система \eqref{ceq1_1} обладает интегралами \eqref{ceq1_2}, т.е. определены интегральное многообразие \eqref{ceq1_6} и множества \eqref{ceq1_3}.
\begin{defin}\label{def321}
Областью возможности движения (ОВД) на многообразии $\mathfrak{N}$, соответствующей постоянным $k = (k_1, \ldots, k_n)$ первых интегралов, назовем образ интегрального многообразия \eqref{ceq1_3} при отображении~\eqref{ceq2_1}.
\end{defin}

Будем иметь дело со следующими основными ситуациями:

А) Натуральная механическая система с конфигурационным пространством $\mathcal{M}$ размерности ${\dim \mathcal{M} = m}$. Положим тогда
\begin{equation}\label{ceq2_2}
    \mathfrak{N} = \mathcal{M}, \qquad \mathfrak{M} = \tm, \qquad \pi = p_\mathcal{M}.
\end{equation}
Слой $F_x$ над точкой $z \in \mathfrak{N}$ есть $m$-мерное пространство $T_x\mathcal{M}$, так что $F = \bbR^m$.

 Б) Механическая система с симметрией. Предположим, что на конфигурационном пространстве $\mathcal{M}$ действует группа Ли $\Psi$ таким образом, что определено фактор-многообразие $\widetilde{\mathcal{M}} = \mathcal{M} / \Psi$ и соответствующее главное расслоение \cite{bib09} $p: \mathcal{M} \to \widetilde{\mathcal{M}}$. Пусть вдобавок преобразования, касательные к преобразованиям группы, сохраняют векторное поле на $\tm$, задающее динамику рассматриваемой системы\footnote[1]{Здесь мы не делаем никаких предположений относительно структуры системы. Весьма сложно может быть устроена и группа симметрий.}. Тогда это поле факторизуется по действию $\Psi$, и естественным образом определена динамическая система $X$ на $\tm / \Psi$. Это пространство представляет собой векторное расслоение над $\widetilde{\mathcal{M}}$ -- сумму Уитни касательного расслоения $T\widetilde{\mathcal{M}}$ и расслоения-произведения со слоем, изоморфным алгебре Ли группы $\Psi$. Положим
\begin{equation}\label{ceq2_3}
    \mathfrak{M} = \tm / \Psi, \qquad \mathfrak{N} = \widetilde{\mathcal{M}}, \qquad \pi = (p \circ p_\mathcal{M})/\Psi.
\end{equation}
Последняя запись означает, что каноническая проекция $\pi$ расслоения $\tm / \Psi$ над $\widetilde{\mathcal{M}}$ индуцирована $\Psi$-сохранным отображением $p \circ p_\mathcal{M}: \tm \to\widetilde{\mathcal{M}}$. Как и в ситуации (А), нетрудно убедиться, что слой $F$ изоморфен $\bbR^m$. В качестве $F_z = \pi^{-1}(z)$ может быть рассмотрено любое $T_x\mathcal{M}$, где $p(x) = z$ (факторизация $\tm / \Psi$ <<склеивает>> все касательные пространства в точках орбиты в одно, после чего сопоставляет векторам полученного пространства их <<точку приложения>> -- орбиту группы, т.е. элемент $\widetilde{\mathcal{M}}$). В силу этого все элементы $F$ называем ниже векторами.

Отметим, что определение фазового пространства согласно \eqref{ceq2_3} предполагает $\Psi$-сох\-ран\-ность первых интегралов исходной системы на $\tm$, что выполнено далеко не всегда. В случае однопараметрической группы $\Psi$, которой отвечает глобальный интеграл площадей $G$, это по предложению \ref{pro138} означает, что в рассмотрение можно включать лишь интегралы, находящиеся в инволюции с $G$.

По предположению каждая точка $\mathfrak{N}$ обладает окрестностью $U$, такой, что уравнения, задающие множество $J_k \bigcap \pi^{-1}(U)$, могут быть записаны в виде
\begin{equation}\label{ceq2_4}
    K_i(z,w) = k_i, \qquad i = 1, \ldots,n,
\end{equation}
где $w \in F$, а $z \in U$.

Обозначим через $U_k = \pi(J_k)$ ОВД, соответствующую набору $k$.
\begin{propos}\label{pro322}
Следующие утверждения равносильны:

а{\rm )} точка $z$ принадлежит $U_k${\rm ;}

б{\rm )} через точку $z$ проходит хотя бы одна проекция фазовой траектории с заданными постоянными $(k_1, \ldots, k_n)$ первых интегралов{\rm ;}

в{\rm )} множество решений системы \eqref{ceq2_4} относительно $w$ при заданном $z$ не пусто.
\end{propos}

Это непосредственно следует из определения \ref{def321}.

Вектор $w$, удовлетворяющий \eqref{ceq2_4} при заданном $z$, назовем допустимой скоростью в точке $z$. В случае \eqref{ceq2_2} он действительно является скоростью движения вдоль проекции фазовой траектории на конфигурационное пространство.

Ясно, что ОВД полностью определена распределением допустимых скоростей на $\mathfrak{N}$. Классификацию областей возможности движения с указанием структуры множества допустимых скоростей в каждой точке называют геометрическим анализом механической системы. Будем пользоваться этим термином, хотя он и не совсем удачен.

Пусть
\begin{equation}\label{ceq2_5}
    \pi_k: J_k \to \mathfrak{N}
\end{equation}
есть ограничение на $J_k$ отображения \eqref{ceq2_1}. Ясно, что множество допустимых скоростей в точке $z$ диффеоморфно $\pi_k^{-1}(z)$. Таким образом, геометрический анализ сводится к классификации прообразов точек $\mathfrak{N}$ при отображении \eqref{ceq2_5}, зависящем от параметров $k$. Следовательно, и здесь большую роль играет понятие бифуркации.

\begin{defin}\label{def323}
Назовем обобщенной границей ОВД $U_k = \pi(J_k)$ бифуркационное множество отображения \eqref{ceq2_5}. Обозначим это множество через $\hat{\partial}U_k$.
\end{defin}

Пусть $k \notin \Sigma$. Тогда $J_k$ -- гладкое многообразие, и $\hat{\partial}U_k$ есть образ множества критических точек гладкого отображения $\pi_k$. В теории особенностей гладких отображений есть термин <<каустика>>, однако он имеет и специфическое значение \cite{bib05}, поэтому в столь широком смысле применять его не будем.

Пусть $k \in \Sigma$. Тогда $J_k$ содержит в себе некоторое множество $\delta_k$ критических значений интегрального отображения $J$. В окрестности $\delta_k$ множество $J_k$ не является многообразием общего положения. Здесь либо падает его размерность, либо нарушается гладкость или локально-евклидова структура. Поэтому над точками $\pi_k(\delta_k)$ отображение \eqref{ceq2_5} не является локально-тривиальным и всегда $\pi_k(\delta_k) \subset \hat{\partial}U_k$. Остальная часть $\hat{\partial}U_k$ получается , как и в предыдущем случае, при отображении гладких участков $J_k$.

Геометрический анализ проводится, таким образом, в два этапа. На первом строится множество $\hat{\partial}U_k$. Если
\begin{equation}\label{ceq2_6}
    \dim J_k \gs \dim \mathfrak{N}
\end{equation}
(а только такой случай и имеет смысл рассматривать), то $\hat{\partial}U_k$ делит $\mathfrak{N}$ на открытые компоненты, в которых тип множества допустимых скоростей сохраняется. На втором этапе этот тип устанавливается в каждой из упомянутых компонент и в точках $\hat{\partial}U_k$.

Укажем путь построения обобщенной границы ОВД. Пусть $x \in J_k \subset \mathfrak{M}$. В локальном представлении $x = (z, w)$, где $ z \in \mathfrak{N}, w \in F$, касательное пространство к $\mathfrak{M}$ в точке $x$ изоморфно прямой сумме
\begin{equation}\label{ceq2_7}
    T_x\mathfrak{M} \cong T_z\mathfrak{N} \oplus T_w F,
\end{equation}
а отображение, касательное к \eqref{ceq2_1},
\begin{equation*}
    T_x\pi : T_z\mathfrak{N} \oplus T_wF \to T_z\mathfrak{N}
\end{equation*}
становится проекцией на первое слагаемое. Очевидно, $T_x \left( \left. \pi \right|_{J_k} \right) = \left. (T_x \pi) \right|_{T_xJ_k}$, и отображение \eqref{ceq2_5} регулярно в точке $x$ тогда и только тогда, когда в представлении \eqref{ceq2_7}
\begin{equation*}
    T_{(z,w)}J_k + T_wF = T_x\mathfrak{M}.
\end{equation*}
Иначе говоря, регулярность $\pi_k$ в точке $x \in J_k$ означает, что в этой точке подмногообразие $J_k$ трансверсально в $\mathfrak{M}$ слою $\pi^{-1} {\times} (\pi(x)) = F_{\pi(x)}.$

Отметим один факт из линейной алгебры. Пусть $V_1, V_2, V$ -- векторные пространства, $\ell_i : V \to V_i$ -- эпиморфизмы. Следующие условия равносильны:

а) ограничение $\ell_1$ на ядро $\ell_2$ есть эпиморфизм;

б) ограничение $\ell_2$ на ядро $\ell_1$ есть эпиморфизм.

Действительно, оба свойства имеют место в том и только в том случае, когда $\ker \ell_1 + \ker \ell_2 = V$.

\begin{lemma}\label{lem324}
Пусть $X, Y, V$ -- векторные пространства, $E: X {\times} Y \to V$ -- дифференцируемое отображение, $v \in V$ -- его регулярное значение, $p_v$ -- ограничение на $E^{-1}(v)$ естественной проекции $p: X {\times} Y \to X$. Точка $(x, y) \in E^{-1}(v)$ является критической для отображения $p_v$ тогда и только тогда, когда
\begin{equation}\label{ceq2_8}
    \rang \left[ \left. E'(x, y) \right|_Y \right] < \dim V.
\end{equation}
\end{lemma}

\begin{proof}
Касательное пространство в точке $(x, y)$ к $E^{-1}(v)$ есть ядро $E'(x, y)$, пространство $Y$ -- ядро проекции $p$, отображение $p'_v(x, y)$ -- ограничение $p$ на касательное пространство к $E^{-1}(v)$. Поэтому точка $(x, y)$ -- критическая для $p$, если
\begin{equation*}
    \rang \left[ \left. p \right|_{\ker E'(x, y)} \right] < \dim X.
\end{equation*}
В силу сказанного выше это неравенство равносильно \eqref{ceq2_8}.
\end{proof}

Возвращаясь к построению обобщенной границы, обозначим
\begin{equation*}
    \frac{\partial J}{\partial z}: T_z\mathfrak{N} \to T_k \bbR^n = \bbR^n, \quad \frac{\partial J}{\partial w}: T_wF \to T_k \bbR^n = \bbR^n
\end{equation*}
ограничения оператора $T_xJ$ на слагаемые \eqref{ceq2_7}.

\begin{theorem}\label{the325}
Образ точки $x \in J_k$ принадлежит обобщенной границе ОВД тогда и только тогда, когда в этой точке
\begin{equation}\label{ceq2_9}
    \rang \frac{\partial J}{\partial w} < n.
\end{equation}
\end{theorem}
\noindent Напомним, что $n$ -- размерность пространства, в которое $J$ отображает $\mathfrak{M}$.

\begin{proof}
Если $x$ -- регулярная точка интегрального отображения, то утверждение вытекает из леммы \ref{lem324}. Образ критической точки $J$, как отмечалось, всегда принадлежит обобщенной границе, но в такой точке $\rang T_xJ < n$, и тем более выполнено \eqref{ceq2_9}.
\end{proof}

\begin{remark}\label{rem321}
Пусть в окрестности точки $x$ многообразие $J_k$ задано явными уравнениями
\begin{equation*}
\begin{array}{ll}
    z_i = z_i(t_1, \ldots, t_\ell), & i = 1, \ldots, s, \\
    w_j = w_j(t_1, \ldots, t_\ell), & j = 1, \ldots, m,
\end{array}
\end{equation*}
где $z_1, \ldots, z_s$ -- локальные координаты на $\mathfrak{N}$, $w_1, \ldots, w_m$ -- локальные координаты на $F$, а точка $(t_1, \ldots, t_\ell)$ пробегает открытое множество в $\bbR^ \ell$. Согласно \eqref{ceq2_6} следует считать, что ${\ell \gs s}$. Непосредственно проверяется, что условие \eqref{ceq2_9} равносильно, как и следовало ожидать, неравенству
\begin{equation}\label{ceq2_10}
    \rang \frac{\partial (z_1, \ldots, z_s)}{\partial (t_1, \ldots, t_\ell)} < s.
\end{equation}
\end{remark}

В конкретных задачах условия \eqref{ceq2_9} или \eqref{ceq2_10} записываются некоторой системой уравнений (которую в первом случае необходимо дополнить системой \eqref{ceq2_4}), описывающей множество критических точек отображения \eqref{ceq2_5}. Образ этого множества есть обобщенная граница ОВД.

\subsection{Примеры перестроек областей возможности движения}\label{ssec33}

Рассмотрим причины изменения типа области возможности движения. Первой причиной является перестройка самих интегральных многообразий $J_k$ в тех случаях, когда точка $k$ в $\bbR^n$ пересекает бифуркационное множество $\Sigma$ отображения $J$. Изменения в ОВД определяются возникновением образа множества критических точек $J$. Вторая причина -- существование различных типов проекций диффеоморфных между собой интегральных многообразий -- более трудна для изучения, так как связана с глобальными свойствами множества особенностей гладкого отображения. Его исследованию в конкретных задачах посвящены следующие главы. Пока же разберем простые случаи, когда подобных трудностей не возникает.
\begin{example}\label{exa331}
(Система Лиувилля). Ограничимся случаем двух степеней свободы и рассмотрим систему с конфигурационным пространством $\bbR^2(x_1, x_2)$, определяемую функцией Лагранжа \cite{bib07}
\end{example}
\begin{equation*}
    \ds L=\frac{1}{2}[\lambda_1(x_1) + \lambda_2(x_2)][\alpha_1(x_1) \dot{x}_1^2 + \alpha_2(x_2) \dot{x}_2^2] - \frac{\varphi_1(x_1) + \varphi_2(x_2)}{\lambda_1(x_1) + \lambda_2(x_2)},
\end{equation*}
где $\lambda_i, \alpha_i, \varphi_i$ -- произвольные функции, подчиненные условиям\footnote[1]{В системах, конфигурационное пространство которых есть двумерное многообразие, условия \eqref{ceq3_1} могут нарушаться. При этом возникают так называемые естественные границы изменения переменных \cite{bib01}. Подобные случаи здесь для простоты исключены.}
\begin{equation}\label{ceq3_1}
    \lambda_1(x_1) + \lambda_2(x_2) \neq 0, \qquad \alpha_1(x_1) > 0, \qquad \alpha_2(x_2) > 0.
\end{equation}
Существует два независимых первых интеграла:
\begin{equation}\label{ceq3_2}
\begin{array}{c}
    \ds H = \frac{1}{2}(\lambda_1 + \lambda_2)(\alpha_1 \dot{x}_1^2 + \alpha_2 \dot{x}_2^2) + \frac{\varphi_1 + \varphi_2}{\lambda_1 + \lambda_2}, \\
    \ds K = \frac{1}{2}(\lambda_1 + \lambda_2)(\lambda_2 \alpha_1 \dot{x}_1^2 - \lambda_1 \alpha_2 \dot{x}_2^2) + \frac{\varphi_1 \lambda_2 - \varphi_2 \lambda_1}{\lambda_1 + \lambda_2}.
\end{array}
\end{equation}
Это соответствует случаю \eqref{ceq2_2}:
\begin{equation*}
\begin{array}{c}
    \mathcal{M} = \bbR^2(x_1, x_2), \qquad \tm = \bbR^4(x_1, x_2, \dot{x}_1, \dot{x}_2),\\[3mm]
    p_\mathcal{M}(x_1, x_2, \dot{x}_1, \dot{x}_2) = (x_1, x_2),\\[3mm]
    J = H {\times} K :\tm \to \bbR^2(h, k), \qquad F = p_\mathcal{M}^{-1}(x_1, x_2) = \bbR^2(\dot{x}_1, \dot{x}_2).
\end{array}
\end{equation*}
Для большинства $(h, k)$ $\dim J_{h,k} = 2 = \dim \mathcal{M}$, так что выполнено условие \eqref{ceq2_6}.

Найдем области возможности движения. Решая систему $H=h, K=k$ относительно $\dot{x}_i$, имеем
\begin{equation}\label{ceq3_3}
    \ds \frac{1}{2}[\lambda_1(x_1) + \lambda_2(x_2)]\alpha_i(x_i) \dot{x}_i^2 + \Lambda_i(x_i; h, k) = 0,
\end{equation}
где $\Lambda_i(x_i, h, k) = \varphi_i(x_i) - h \lambda_i(x_i) + (-1)^i k$, $i = 1, 2$.
Таким образом, ОВД определяется неравенствами
$$
\Lambda_1(x_1; h, k) \ls 0, \qquad \Lambda_2(x_2; h, k) \ls 0.
$$
Положим
\begin{equation}\label{ceq3_4}
    U_i = \{x_i:\Lambda_i(x_i; h, k)\ls 0\} \subset \bbR, \qquad i = 1, 2.
\end{equation}
Тогда
\begin{equation}\label{ceq3_5}
    U_{h,k} = U_1 {\times} U_2.
\end{equation}
Ее обобщенная граница согласно \eqref{ceq2_9} определяется условием
\begin{equation*}
    \ds \rang \frac{\partial (H, K)}{\partial (\dot{x}_1, \dot{x}_2)} < 2,
\end{equation*}
или в явном виде
\begin{equation*}
    [\lambda_1(x_1) + \lambda_2(x_2)]\alpha_1(x_1) \alpha_2(x_2) \det \left(
    \begin{array}{cc}
    \dot{x}_1 & \dot{x}_2 \\
    \lambda_2(x_2) \dot{x}_1 & -\lambda_1(x_1) \dot{x}_2
    \end{array}
    \right)=0.
\end{equation*}
Это условие выполняется в двух случаях: при $\dot{x}_1=0$ по формуле \eqref{ceq3_3} имеем $\Lambda_1(x_1; h, k) = 0$, при $\dot{x}_2 = 0$ -- соответственно $\Lambda_2(x_2; h, k) = 0$. Если $\Lambda_i$ как функции $x_i$ не имеют кратных корней, то обобщенная граница $\hat{\partial}U_{h,k}$ совпадает с топологической $\partial U_{h,k}$.

Из равенств \eqref{ceq3_4}, \eqref{ceq3_5} следует, что тип ОВД может изменяться лишь при переходе через такие значения $h, k$, когда хотя бы одно из $\Lambda_i$ имеет кратный корень. Непосредственно проверяется, что в этом и только этом случае градиенты функций \eqref{ceq3_2} линейно зависимы, т.е. $(h, k) \in \Sigma$, так что перестройки ОВД соответствуют точкам бифуркационного множества.

Если множества \eqref{ceq3_4} ограничены, то каждая связная компонента ОВД при $(h, k) \notin \Sigma$ есть прямоугольник, внутри которого четыре допустимые скорости, на сторонах -- две, а в вершинах -- одна (нулевая). Траектории в прямоугольнике подобны фигурам Лиссажу. Компонента интегрального многообразия -- двумерный тор с условно-периодическими движениями. Несложно разобрать и все возможные критические случаи, опираясь, например, на полный анализ одномерных систем вида \eqref{ceq3_3}, выполненный в \cite{bib01}. Один из вариантов (наиболее типичный) показан на рис.~\ref{fig_3_1}. Здесь $U_1 = [a_1, b_1]$, $U_2 = [a_2, b_2]$, а $\Lambda_1$ имеет кратный корень $x_1^* \in (a_1, b_1)$. Цифры на рис.~\ref{fig_3_1},\,{\it а} обозначают число допустимых скоростей, сплошные линии -- обобщенную границу (она шире топологической). На рис.~\ref{fig_3_1},\,{\it б} приведены проекции фазовых траекторий. Интегральное многообразие представляет собой произведение <<восьмерки>> на окружность (рис.~\ref{fig_3_2}). Его средняя линия есть $\alpha$- и $\omega$-предельный цикл для остальных фазовых траекторий.

\begin{figure}[ht]
\def\fs{0.4}
\begin{minipage}[h]{\fs\linewidth}
\center{\includegraphics[width=\linewidth]{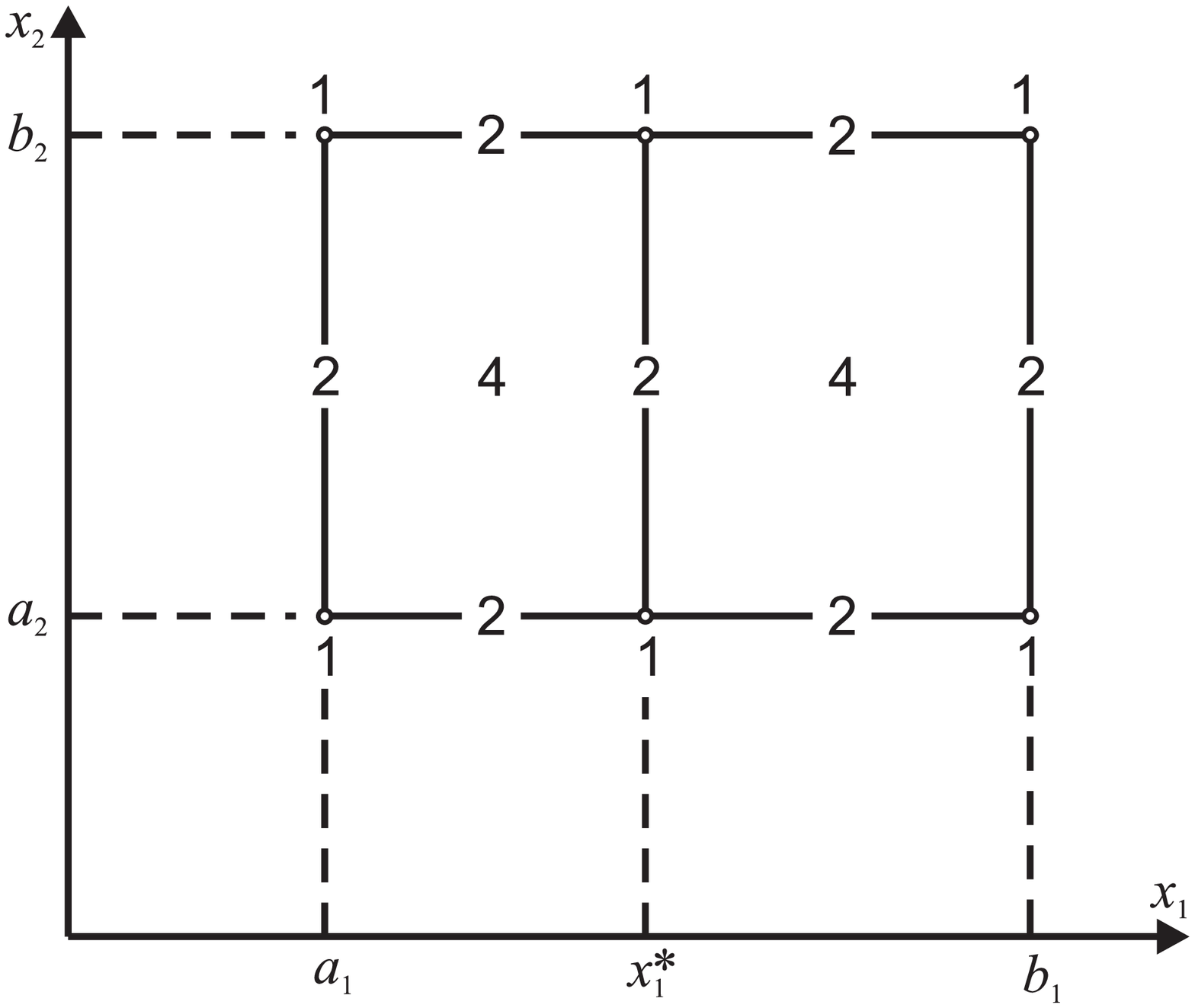} \\ а)}
\end{minipage}
\hfill
\begin{minipage}[h]{\fs\linewidth}
\center{\includegraphics[width=\linewidth]{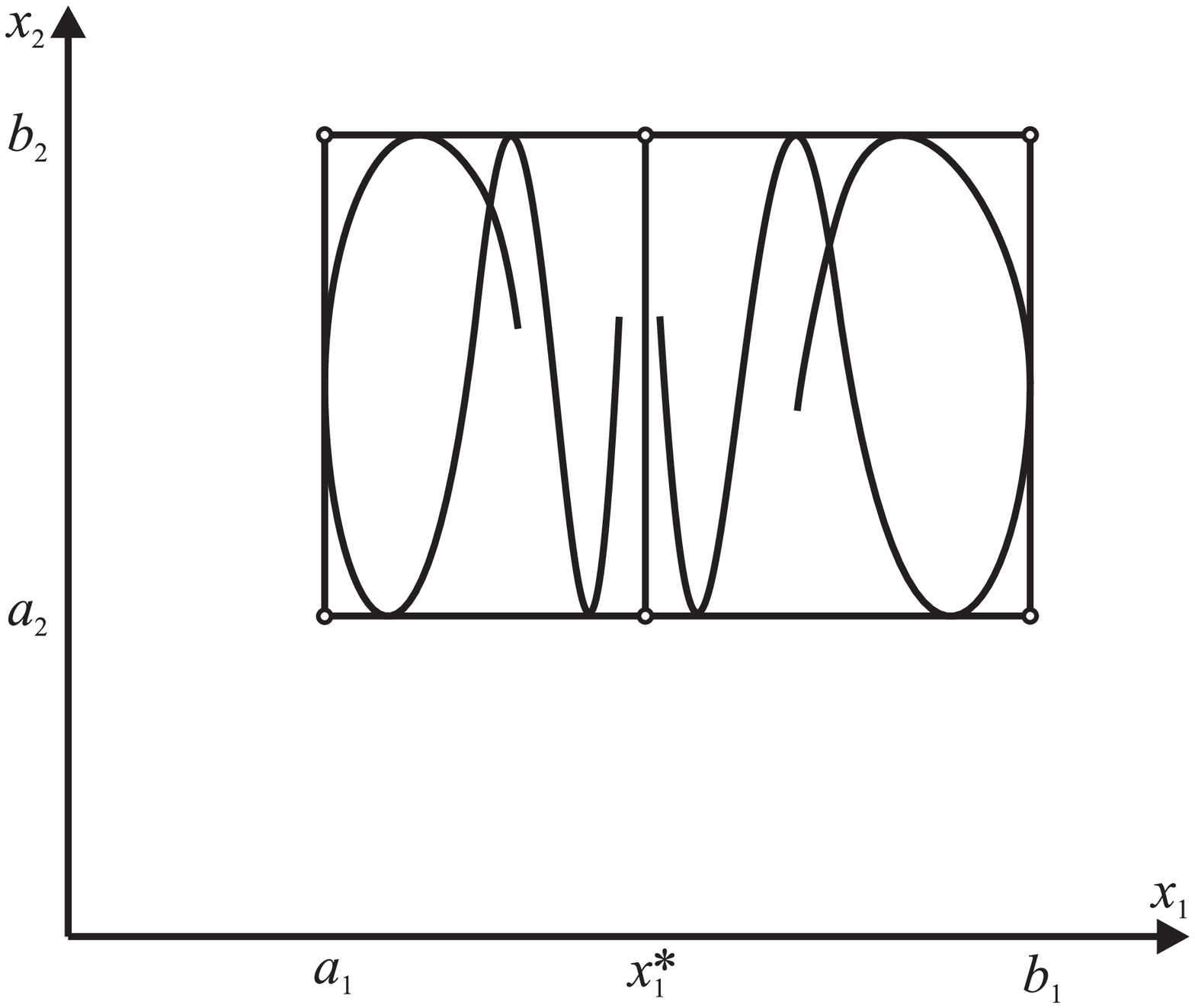} \\ б)}
\end{minipage}
\caption{}\label{fig_3_1}
\end{figure}

\begin{figure}[ht]
\def\fs{0.4}
\centering
\includegraphics[width=\fs\linewidth,keepaspectratio]{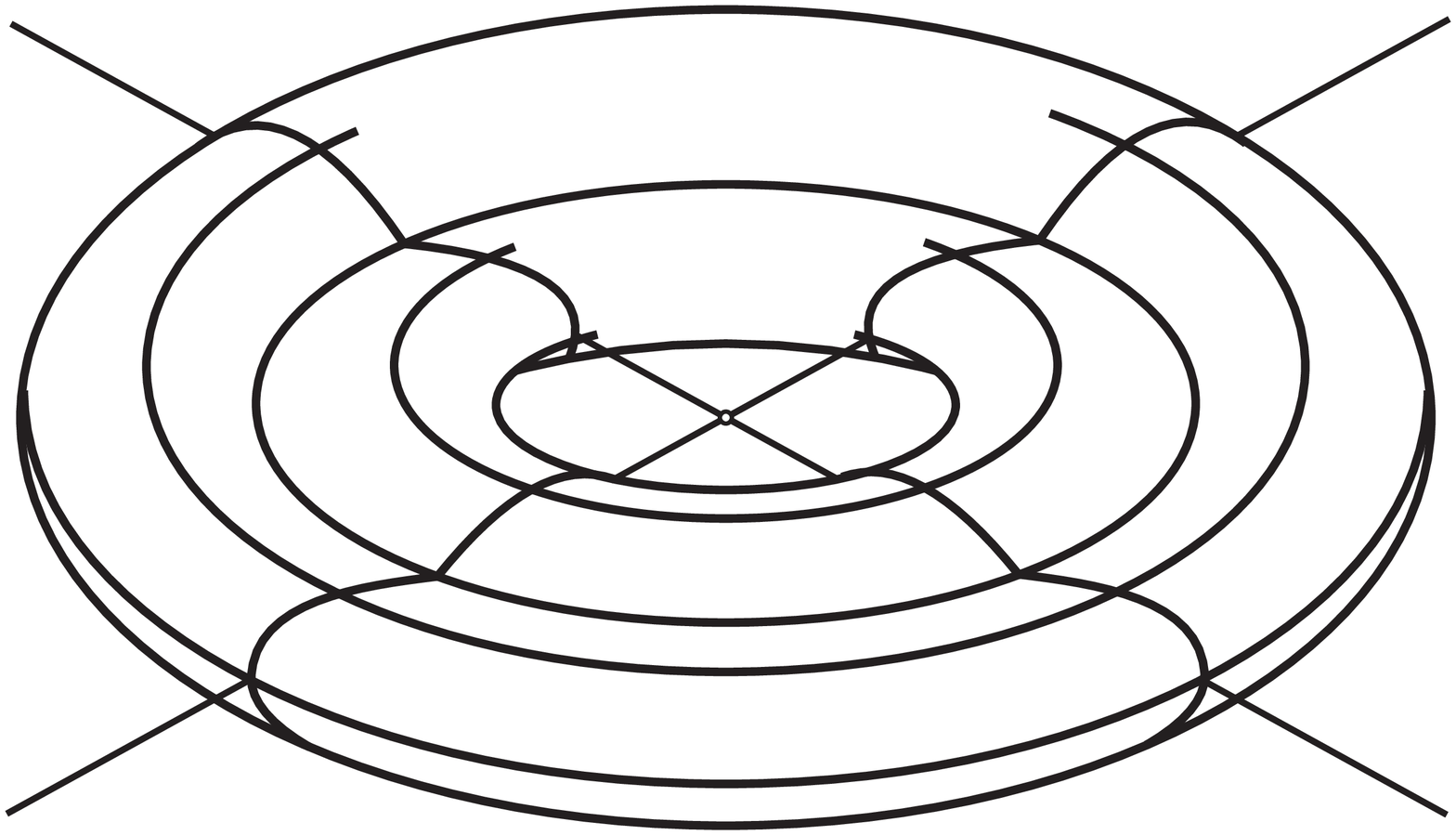}
\caption{}\label{fig_3_2}
\end{figure}

Интересные примеры систем Лиувилля на сфере дают геодезический поток на эллипсоиде \cite{bib22}, случай Эйлера\,--\,Пуансо и случай Клебша с нулевой постоянной площадей \cite{bib56,bib60}.

\begin{example}\label{exa332}
Рассмотрим механическую систему с гироскопическими силами, определенную четверкой \eqref{aeq2_1}, имеющую однопараметрическую группу симметрий \eqref{aeq3_1}, которая порождает глобальный интеграл площадей. Перенесем в эту ситуацию некоторые результаты С.\,Смейла \cite{bib41}, полученные для натуральных систем. Экономный вариант теории Смейла изложен в \cite{bib48,bib52}.
\end{example}

Данные условия соответствуют ситуации (Б) предыдущего параграфа, поэтому примем обозначения \eqref{ceq2_3}.

Как отмечалось, факторизация по $\Psi$ пространства $\tm$ устанавливает изоморфизм $\is_x : T_x\mathcal{M} \to F_z$ для всех $z \in \mathfrak{N}$ и $x \in p^{-1}(z)$, причем если $x_1 = \psi^\tau(x_2)$, то коммутативна диаграмма
$$
\begin{CD}
T_{x_1}\mathcal{M} @>T\psi^\tau>> T_{x_2}\mathcal{M}\\
@V\is_{x_1}VV @VV\is_{x_2}V \\
F_z @>\id{F_z}>>F_z
\end{CD}
$$
Поэтому соотношение
\begin{equation}\label{ceq3_6}
    \langle w_1, w_2 \rangle_z = \langle \is_x^{-1}(w_1), \is_x^{-1}(w_2)\rangle_x
\end{equation}
корректно определяет скалярное произведение на $F_z$, индуцированное $\Psi$-со\-хран\-ной римановой метрикой на $\mathcal{M}$.

Заметим, что порождающее векторное поле \eqref{aeq3_3} сохраняется группой $\Psi$:
\begin{equation*}
    \ds T\psi^t(v(x)) = T\psi^t \left. \frac{d}{d\tau} \right|_{\tau=0}\psi^\tau(x) = \left. \frac{d}{d\tau} \right|_{\tau=0}\psi^{t+\tau}(x) = v(\psi^\tau(x)),
\end{equation*}
и, следовательно, для любого $z \in \mathfrak{N}$ корректно определен вектор $v(z) \in F_z$, удовлетворяющий равенству
\begin{equation}\label{ceq3_7}
    \is_x(v(x)) = v(z).
\end{equation}
Обозначим через $V_z \subset F_z$ прямую\footnote[1]{Из наличия главного расслоения $p:\mathcal{M} \to \mathfrak{N}$ следует, что $v(x)$, а значит, и $v(z)$ нигде в нуль не обращаются.}, натянутую на $v(z)$, а через $N_z$ -- ортогональное дополнение к $V_z$ в структуре \eqref{ceq3_6}:
\begin{equation}\label{ceq3_8}
    F_z = N_z \oplus V_z.
\end{equation}

Из равенств \eqref{aeq3_9}, \eqref{aeq3_21} вытекает $\Psi$-сохранность функций $\Pi$ и $f$ на $\mathcal{M}$. Следовательно, они индуцируют гладкие функции на $\mathfrak{N}$, для которых оставим прежние обозначения. То же относится к функциям \eqref{aeq2_3} и \eqref{aeq3_6} (см. \eqref{aeq3_10}, \eqref{aeq3_20}), порождающим первые интегралы фактор-системы на $\mathfrak{M}$. Запишем их согласно \eqref{ceq3_6}, \eqref{ceq3_7} в виде
\begin{eqnarray}
& &    \ds H(z,w) = \frac{1}{2}\langle w, w \rangle_x + \Pi(z), \label{ceq3_9}\\
& &    G(z,w) = \langle v(z), w \rangle_z + f(z),\label{ceq3_10}
\end{eqnarray}
где $z \in \mathfrak{N}$, $w \in F_z$.

Интегральное отображение $J = H {\times} G : \mathfrak{M} \to \bbR^2$. Если $\dim(\mathcal{M}) = m$, то $\dim(\mathfrak{N}) = m - 1$, и размерность интегрального многообразия в общем случае $\dim J_{h,g} = \dim \mathfrak{M} - 2 = 2m - 3$.
Условие \eqref{ceq2_6} выполняется при $m \gs 2$. Это ограничение естественно, так как оно означает лишь, что $\dim \mathcal{M} > \dim \Psi$.

Фиксируя $z \in \mathfrak{N}$, $g \in \bbR$, обозначаем $G_g(z) = G^{-1}(g) \cap F_z$. Из записи \eqref{ceq3_10} следует, что $G_{f(z)}(z) = N_z$, а при произвольном $g$ множество $G_g(z)$ есть гиперплоскость в $F_z$, параллельная $N_z$. Поэтому существует единственный вектор $v_g(z)$, принадлежащий $G_g(z)$ и ортогональный $N_z$. Легко видеть, что
\begin{equation}\label{ceq3_11}
    v_g(z) = [g - f(z)]\|v(z)\|^{-2}v(z)
\end{equation}
и, в частности, соответствие $z \mapsto v_g(z)$ есть гладкое сечение расслоения $\pi: \mathfrak{M} \to \mathfrak{N}$ (конечно, вектор \eqref{ceq3_11} получен факторизацией из \eqref{aeq5_8}).

Эффективным потенциалом назовем функцию $\Pi_g$ на $\mathfrak{N}$, определенную равенством $\Pi_g = H \circ v_g$ (см. $\S$~\ref{ssec15}). Из \eqref{ceq3_9}, \eqref{ceq3_11} находим явное выражение
\begin{equation}\label{ceq3_12}
    \ds \Pi_g(z) = \Pi(z) + \frac{1}{2} [g - f(z)]\|v(z)\|^{-2}.
\end{equation}
Аналогично \cite{bib41} устанавливаем следующие факты.

\begin{propos}\label{pro333}
Эффективный потенциал обладает свойствами:

а{\rm )} $\Pi_g(z) = \min\limits_{w \in G_g(z)}H(z,w)${\rm ;}

б{\rm )} область возможности движения $U_{h,g} = \Pi_g^{-1}(-\infty, h]${\rm ;}

в{\rm )} множество критических точек функции $\left. H \right|_{G^{-1}(g)}$ совпадает с $v_g(\delta)$, где $\delta \subset \mathfrak{N}$ -- множество критических точек эффективного потенциала{\rm ;}

г) бифуркационное множество $\Sigma = \{(h, g): h \in \Pi_g(\delta)\}$.
\end{propos}

\begin{proof}
Свойство (а) следует из определений  и положительной определенности скалярного произведения на $F_z$.

Заметим, что $\max\limits_{w \in G_g(z)}H(z,w) = +\infty$, так что разрешимость уравнений $G = g, H = h$ относительно $w$ равносильна в силу (а) неравенству $\Pi_g(z) \ls h$, чем доказано (б).

Пусть $w \in G_g(z)$. Тогда, очевидно, $w = w^0 + v_g(z)$, где $w^0 \in N_z$. Из ортогональности слагаемых в \eqref{ceq3_8} и записи \eqref{ceq3_12} получаем
\begin{equation}\label{ceq3_13}
    \ds H(z, w) = \frac{1}{2} \|w^0\|^2 + \Pi_g(z).
\end{equation}

Обозначим через $\Gamma_z: F_z \to F_z^*$ оператор <<опускания индексов>> в структуре \eqref{ceq3_6}. Тогда
\begin{equation*}
    \ds \frac{\partial}{\partial w^0} \left( \left. H \right|_{G^{-1}(g)}\right) = \Gamma_z w^0, \qquad \frac{\partial}{\partial z} \left( \left. H \right|_{G^{-1}(g)}\right) = \frac{\partial \Pi_g}{\partial z} + \frac{1}{2} \frac{\partial \Gamma_z}{\partial z}w^0 \cdot w^0,
\end{equation*}
и равенство $d\left( \left. H \right|_{G^{-1}(g)}\right) = 0$ равносильно тому, что $w^0 = 0$, $z \in \delta$. Следовательно, $w = v_g(z) \in v_g(\delta)$ -- условие критичности функции $H_{G^{-1}(g)}$. Это соответствует утверждению (в).

Для доказательства (г) вспомним предположение \eqref{ceq1_5}. Утверждение следует теперь из регулярности функции $G$, леммы \ref{lem311} и свойства (в).
\end{proof}

Выясним, что представляет собой обобщенная граница ОВД. Для этого применим теорему~\ref{the325}. Так как $F_z$ здесь -- векторное пространство, естественно отождествить $T_w F_z$ с самим $F_z$. Условие \eqref{ceq2_9} означает тогда, что
\begin{equation*}
    \ds \frac{\partial H}{\partial w} (z, w) = \Gamma_z w  \quad \textrm{и} \quad \frac{\partial G}{\partial w} (z, w) = \Gamma_z v(z)
\end{equation*}
линейно зависимы как элементы $F_z^*$. Но $\Gamma_z$ -- изоморфизм. Следовательно, точка $(z, w) \in J_{h,g}$ проектируется в обобщенную границу ОВД тогда и только тогда, когда $w \in V_z$. С другой стороны, $G^{-1}(g) \cap V_z = \{v_g(z)\}$, и из определения эффективного потенциала находим
\begin{equation}\label{ceq3_14}
    \hat{\partial}U_{h,g} = \{z: \Pi_g(z) = h\}.
\end{equation}

Свойство (г) предложения \ref{pro333} можно сформулировать следующим образом: точка $(h, g) \in \Sigma$ тогда и только тогда, когда на обобщенной границе соответствующей ОВД есть критическая точка эффективного потенциала. Из свойства (б) и теоремы Морса \cite{bib33} вытекает теперь, что перестройки ОВД, как и в примере \ref{exa331}, происходят лишь при пересечении множества $\Sigma$. В точках $(h, g) \notin \Sigma$ обобщенная граница \eqref{ceq3_14} совпадает с топологической границей ОВД.

Топологический тип интегрального многообразия $J_{h,g}$ устанавливается точно так же, как и в случае натуральных систем \cite{bib41}. На фиксированном уровне $G = g$ воспользуемся представлением \eqref{ceq3_13} и запишем уравнение $J_{h,g}$ в виде
\begin{equation*}
    \|w^0\|^2 = 2[h - \Pi_g(z)].
\end{equation*}
Следовательно, $J_{h,g}$ есть расслоение $(m-2)$-мерных сфер над $U_{h,g}$, стянутых в точку над $\hat{\partial}U_{h,g}$.

Таким образом, при наличии лишь интегралов энергии и площадей гироскопические силы не вносят ничего принципиально нового в топологию изучаемой системы.

\subsection{Свойства интегрируемых задач динамики твердого тела}\label{ssec34}

Начиная с настоящего момента, ограничимся практически наиболее важным случаем, когда твердое тело помещено в поле осесимметричных потенциальных сил \eqref{beq4_8}, а гироскопические силы заданы 2-формой $\varkappa$ в виде \eqref{beq2_11}, удовлетворяющей условию \eqref{beq4_16}. По теореме \ref{the241} в дополнение к интегралу энергии существует интеграл площадей, соответствующий действию группы $\Psi = S^1$ с левыми сдвигами \eqref{beq4_2} на конфигурационном пространстве $\mathcal{M} = SO(3)$.

Из \eqref{aeq3_10} и предложения \ref{pro134} следует, что определяющее динамику гамильтоново векторное поле на $\tm$ сохраняется действиями группы. Следовательно, имеет место ситуация (Б) $\S$~\ref{ssec32}. Выясним, что представляют собой объекты \eqref{ceq3_2}.

Отображение факторизации $\mathcal{M}$ по действию $\Psi$ есть \eqref{beq4_6}. При этом, как указано, $\mathcal{M} / \Psi = S^2$ -- сфера Пуассона.

Воспользуемся изоморфизмом \eqref{beq1_15} $\tm = \mathcal{M} {\times} \bbRR$. Компоненты внутренней угловой скорости по лемме \ref{rem321} сохраняются левыми сдвигами, поэтому действие $\Psi$ на $\mathcal{M} {\times} \bbRR$ совпадает с действием $\Psi$ на $\mathcal{M}$ и тривиально на $\bbRR$. Таким образом, фазовое пространство профакторизованной системы будет
\begin{equation}\label{ceq4_1}
    \mathfrak{M} = \tm / \Psi = (\mathcal{M} / \Psi) {\times} \bbRR = S^2 {\times} \bbRR.
\end{equation}
Многообразие конфигурационных переменных
\begin{equation}\label{ceq4_2}
    \mathfrak{N} = S^2,
\end{equation}
а проекция $\pi = (p \circ p_\mathcal{M})/ \Psi$ определена естественным образом:
\begin{equation}\label{ceq4_3}
    \pi: \mathfrak{M} \to \mathfrak{N}, \qquad \pi({\boldsymbol \nu}, {\boldsymbol\omega}) = {\boldsymbol\nu}.
\end{equation}

Динамическую систему $X$ на $\mathfrak{M}$ определяют уравнения Эйлера\,--\,Пуассона (замкнутая система, отделяющаяся от \eqref{beq3_11}, \eqref{beq3_15} в предположениях \eqref{beq4_8}, \eqref{beq4_15}):
\begin{equation}\label{ceq4_4}
    {\bf A} \astup{\omega} + \mbs{\omega} {\times} ({\bf A}\mbs{\omega} + \mbs{\varkappa}) = \mbs{\nu} {\times} \grad \Pi,  \qquad \astup{\nu} = \mbs{\nu} {\times} \mbs{\omega}.
\end{equation}
Ее первые интегралы, индуцированные на $\mathfrak{M}$ функциями \eqref{beq3_12}, \eqref{beq4_17}, обозначим, как и в примере \ref{exa332}:
\begin{eqnarray}
& &    H(\mbs{\nu}, \mbs{\omega}) = \frac{1}{2}(A_1 \omega_1^2 + A_2 \omega_2^2 + A_3 \omega_3^2) + \Pi(\nu_1, \nu_2, \nu_3),\label{ceq4_5}\\
& &    G(\mbs{\nu}, \mbs{\omega}) = A_1 \omega_1 \nu_1 + A_2 \omega_2 \nu_2 + A_3 \omega_3 \nu_3 + f(\nu_1, \nu_2, \nu_3).\label{ceq4_6}
\end{eqnarray}

При условии отсутствия гироскопических сил бифуркации интегралов \eqref{ceq4_5}, \eqref{ceq4_6} и соответствующие интегральные многообразия подробно изучены как для классических решений \cite{bib23,bib67}, так и в общем случае для линейного поля сил \cite{bib51,bib52}. Обобщение имеющихся результатов на гироскопический случай -- важная задача, представляющая значительный самостоятельный интерес в механике, выходящая, однако, за рамки настоящей работы.

Далее будут изучены некоторые случаи наличия дополнительного интеграла вида \eqref{beq4_18}
\begin{equation}\label{ceq4_7}
    K = K(\mbs{\nu}, \mbs{\omega}).
\end{equation}
При этом исходная динамическая система на $\tm$ является вполне интегрируемой. Почти все ее непустые интегральные многообразия состоят из конечного числа связных компонент, диффеоморфных трехмерному тору
$$
{\bf T}^3 = \{(\varphi_1, \varphi_2, \varphi_3) \mod 2 \pi \},
$$
несущему условно-периодические движения \cite{bib05}
\begin{equation}\label{ceq4_8}
    \dot{\mbs{\varphi}} = \cons.
\end{equation}
В критических случаях интегральные многообразия будут, вообще говоря, сложными трехмерными поверхностями с самопересечениями и нарушениями гладкости, изучение которых затруднительно как с аналитической, так и с геометрической точки зрения. Заметим, что и в классическом подходе, и в современных исследованиях основную роль играет все же система \eqref{ceq4_4} с фазовым пространством \eqref{ceq4_1}. По теореме Сарда \cite{bib46} почти все ее непустые интегральные многообразия
\begin{equation}\label{ceq4_9}
    J_{h,k,g} = \{x \in \mathfrak{M}: H(x) = h, K(x) = k, G(x) = g\}
\end{equation}
($h, k, g$ -- константы) являются гладкими двумерными поверхностями. Поэтому исследование фазовой топологии системы \eqref{ceq4_4} -- гораздо более благодарная работа в смысле наглядности результатов.

Поставим задачу изучить фазовую топологию системы \eqref{ceq4_4} в некоторых весьма важных классических случаях наличия интеграла \eqref{ceq4_7}. Введем интегральное отображение
\begin{equation}\label{ceq4_10}
    J = H {\times} K {\times} G: \mathfrak{M} \to \bbRR
\end{equation}
и обозначим через $\Sigma \subset \bbRR$ его бифуркационное множество.

\begin{propos}\label{pro341}
Пусть на связной компоненте $E$ многообразия \eqref{ceq4_9} нет критических точек отображения \eqref{ceq4_10}. Тогда $E$ диффеоморфна двумерному тору, несущему условно-периодические движения.
\end{propos}

\begin{proof}
Всякое множество \eqref{ceq4_9} получено из интегрального многообразия системы на $\tm$ факторизацией по группе $\Psi = S^1$. В нашем случае отображение факторизации есть
\begin{equation*}
    p {\times} \iid: \tm = \mathcal{M} {\times} \bbRR \to S^2 {\times} \bbRR = \mathfrak{M},
\end{equation*}
где $p$ -- отображение \eqref{beq4_6}. В силу $\Psi$-сохранности интегрального отображения исходной системы множество $I = (p {\times} \iid )^{-1}(E)$ является связной компонентой интегрального многообразия, и на $I$ интегралы $H(Q, \omega), K(Q, \omega), G(Q, \omega)$ независимы. Тогда по сказанному ранее $I = {\bf T}^3$, и существует изоморфизм, переводящий фазовый поток на $I$ в поток \eqref{ceq4_8} Вследствие $\Psi$-сохранности фазового потока на $\tm$ индуцированное действие $\Psi$ на ${\bf T}^3$ сохраняет условно-периодический характер движения. Тогда $E = I/ \Psi = {\bf T}^3 / S^1 = {\bf T}^2$, и траектории на нем условно-периодические.
\end{proof}

Последнее утверждение не совсем строго. Оно будет доказано в более общей ситуации. Здесь же просто продемонстрирована геометрия сформулированного предложения.

Для точек $(h, k, g) \in \bbRR \subset \Sigma$ любая связная компонента $J_{h, k, g}$ удовлетворяет условиям предложения \ref{pro341}.
Таким образом, для полного исследования фазовой топологии необходимо:

$1^\circ$) построить бифуркационное множество $\Sigma$;

$2^\circ$) в каждой области множества $\bbRR \backslash \Sigma$ указать количество связных компонент $J_{h, k, g}$;

$3^\circ$) для точек $\Sigma$ указать критическую интегральную поверхность и устройство фазовых траекторий.

В реализации п.~$1^\circ$ помогает следующий прием. Зафиксируем постоянную $g \in \bbR$ и рассмотрим отображение
\begin{equation}\label{ceq4_11}
    J[g] = \left. H {\times} K \right|_{G^{-1}(g)}: G^{-1}(g) \to \bbR^2.
\end{equation}

\begin{lemma}\label{lem342}
Пусть $\Sigma_g \subset \bbR^2$ бифуркационное множество отображения \eqref{ceq4_11}. Тогда
\begin{equation*}
    \Sigma = \bigcup \limits_{g \in \bbR} \Sigma_g {\times} \{g\}.
\end{equation*}
\end{lemma}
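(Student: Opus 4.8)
The plan is to read Lemma \ref{lem342} off from the general fibration principle of Lemma \ref{lem311}. I would split the target of \eqref{ceq4_10} as $\bbRR = B_1 \times B_2$ with $B_1 = \bbR^2(h,k)$ and $B_2 = \bbR(g)$, and correspondingly factor $J = f_1 \times f_2$ with $f_1 = H \times K$ and $f_2 = G$. Under this identification the auxiliary map $f[g] = f_1|_{f_2^{-1}(g)}$ of Lemma \ref{lem311} is literally the map $J[g]$ of \eqref{ceq4_11}, so its bifurcation set is precisely $\Sigma_g$; the claimed identity $\Sigma = \bigcup_{g \in \bbR} \Sigma_g \times \{g\}$ is then exactly the conclusion of Lemma \ref{lem311}.

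The sole hypothesis of Lemma \ref{lem311} requiring verification is that the second factor $G$ be a submersion. I would check this fiberwise over the projection \eqref{ceq4_3}. Above a point $\mbs{\nu} \in \mathfrak{N} = S^2$ the fiber of $\pi$ is the linear space $\{\mbs{\nu}\} \times \bbRR(\mbs{\omega})$, and by \eqref{ceq4_6} the restriction of $G$ to it is the affine function $\mbs{\omega} \mapsto A_1\omega_1\nu_1 + A_2\omega_2\nu_2 + A_3\omega_3\nu_3 + f(\mbs{\nu})$, whose $\mbs{\omega}$-differential equals $(A_1\nu_1, A_2\nu_2, A_3\nu_3)$. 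Because the moments $A_i$ are strictly positive and $|\mbs{\nu}| = 1$ on $S^2$, this vector never vanishes; hence $dG \neq 0$ at every point and $G$ is a submersion. This is, of course, just the regularity of $G/\Psi$ already recorded in \S~\ref{ssec15}, which simultaneously guarantees that each level $G^{-1}(g)$ is a codimension-one submanifold on which $J[g]$ is defined.

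With submersivity secured, Lemma \ref{lem311} applies without modification and gives the assertion. I expect no substantive obstacle: the statement is really the bookkeeping fact that the bifurcation diagram of the three-parameter map $J$ is fibered over the $g$-axis by the bifurcation diagrams $\Sigma_g$ of the two-parameter restrictions. The only point worth a line of care is the nonvanishing of $(A_1\nu_1, A_2\nu_2, A_3\nu_3)$ over all of $S^2$, i.e.\ that the positivity of the inertia tensor is what makes $G$ regular --- without it the decomposition could fail at points where the fiberwise linear part of $G$ degenerates.
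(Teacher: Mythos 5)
Your proposal is correct and follows exactly the paper's own route: the paper disposes of Lemma \ref{lem342} in one sentence by invoking Lemma \ref{lem311} together with the regularity of the function \eqref{ceq4_6} on $\mathfrak{M}$, which is precisely your decomposition $J = (H \times K) \times G$ plus the fiberwise computation showing $dG \neq 0$. Your explicit verification that the $\mbs{\omega}$-differential $(A_1\nu_1, A_2\nu_2, A_3\nu_3)$ never vanishes on $S^2$ is just the detail the paper leaves implicit.
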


Это является следствием леммы \ref{lem311} и регулярности функции \eqref{ceq4_6} на $\mathfrak{M}$. Теперь при выяснении типов $J_{h,k,g}$ (п. $2^\circ$) удобно считать $g$ параметром и применять теорему \ref{the312}, утверждение которой в данном случае очевидно.

Критические поверхности (п.~$3^\circ$) устанавливаются в результате геометрического анализа задачи. Области возможности движения здесь -- это образы многообразий \eqref{ceq4_9} при отображении \eqref{ceq4_3}. Образы фазовых траекторий на многообразии \eqref{ceq4_2} -- траектории <<в теле>> постоянного в пространстве направляющего вектора $\mbs{\nu}_0$ оси симметрии силового поля. Поэтому геометрический анализ проясняет движение тела с точностью до вращений вокруг вертикали. Отметим, что условие \eqref{ceq2_9}, определяющее обобщенную границу ОВД, принимает вид
\begin{equation}\label{ceq4_12}
    \ds \det \frac{\partial (H, K, G)}{\partial (\omega_1, \omega_2, \omega_3)} = 0.
\end{equation}

Рассмотрим достаточно распространенную ситуацию, когда от уравнений
\begin{equation}\label{ceq4_13}
    H(\mbs{\nu}, \mbs{\omega}) = h, \quad K(\mbs{\nu}, \mbs{\omega}) = k, \quad G(\mbs{\nu}, \mbs{\omega}) = g,
\end{equation}
определяющих интегральное многообразие в $S^2 {\times} \bbRR$, можно в некоторой области перейти к одному уравнению вида
\begin{equation}\label{ceq4_14}
    \Phi(x_1, x_2, z; c) = 0,
\end{equation}
в котором через $c$ обозначен набор констант $(h, k, g)$, переменные $(x_1, x_2)$ -- локальные координаты на сфере Пуассона, $z$ -- вспомогательная переменная, такая, что имеются зависимости $\omega_i = \omega_i(x_1, x_2, z; c)$, тождественно удовлетворяющие \eqref{ceq4_13}.

Будем считать, что интегральное многообразие гладкое ($c$ не принадлежит бифуркационному множеству). Тогда в рассматриваемой области $U$ пространства $\bbRR = \bbR^2(x_1, x_2) {\times} \bbR(z)$ уравнение \eqref{ceq4_14} определяет гладкую регулярную поверхность $P$, зависящую от $c$.
В таком случае отображению \eqref{ceq4_3} соответствует проекция $\bbRR \to \bbR^2(x_1, x_2)$. Обозначим ее также через $\pi$. В критических точках ограничения $\pi$ на $P$ касательная плоскость к $P$ параллельна оси~$z$:
\begin{equation}\label{ceq4_15}
    \Phi'_z(x_1, x_2, z; c) = 0
\end{equation}
(тривиальный частный случай леммы \ref{lem324}).

Сделаем предположение, выполненное в типичном случае \cite{bib14}: пусть решение системы \eqref{ceq4_14}, \eqref{ceq4_15} есть гладкая кривая $\gamma$ в $\bbRR$ без особых точек. Для этого нужно, чтобы в области $U$ ранг соответствующей матрицы Якоби равнялся двум.

Образ кривой $\gamma$ на плоскости $\bbR^2(x_1, x_2)$ представляет собой исследуемый участок обобщенной границы ОВД. Пусть на этом участке есть особая точка (точка возврата). Это отвечает особенности типа сборки на поверхности $P$ по отношению к проекции $\pi$. В точке сборки ненулевой касательный вектор к $\gamma$ должен быть параллелен оси $z$. Дифференцируя \eqref{ceq4_15} вдоль такого вектора, получаем
\begin{equation}\label{ceq4_16}
    \Phi''_{zz}(x_1, x_2, z; c) = 0.
\end{equation}
Отметим, что при этом в силу сделанного предположения
\begin{equation}\label{ceq4_17}
    \det \left( \begin{array}{cc}
    \Phi'_{x_1} & \Phi'_{x_2} \\
    \Phi''_{z x_1} & \Phi''_{z x_2}
    \end{array} \right) \neq 0.
\end{equation}

Качественные перестройки ОВД связаны с изменением структуры множества особых точек обобщенной границы. При фиксированном $c$ это множество находится во взаимно-одно\-знач\-ном соответствии с множеством $Q_c$ решений системы \eqref{ceq4_14} -- \eqref{ceq4_16}. Выясним, при каких $c$ происходят бифуркации $Q_c$. Объединение множеств $Q_c$ можно представить следующим образом:
\begin{equation*}
    \bigcup \limits_c Q_c = \widetilde{\Phi}^{-1}(0),
\end{equation*}
где $\widetilde{\Phi} = \Phi {\times} \Phi'_z {\times} \Phi''_{zz} : \bbRR(c) {\times} \bbRR(x_1, x_2, z) \to \bbRR$. Пусть $p_0$ -- ограничение естественной проекции
\begin{equation*}
    p:\bbRR(c) {\times} \bbRR(x_1, x_2, z) \to \bbRR(c)
\end{equation*}
на множество $\widetilde{\Phi}^{-1}(0)$. Тогда $Q_c=p_0^{-1}(c)$, и структура $Q_c$ меняется при переходе через критические значения отображения $p_0$. По лемме \ref{lem324} в критических точках $p_0$ выполнено условие
\begin{equation*}
    \ds \det \frac{\partial (\Phi, \Phi'_z, \Phi''_{zz})}{\partial(x_1, x_2,z)} = 0.
\end{equation*}
С учетом \eqref{ceq4_17} приходим к соотношению
\begin{equation}\label{ceq4_18}
    \Phi'''_{zzz}(x_1, x_2, z; c) = 0.
\end{equation}

Система \eqref{ceq4_14} -- \eqref{ceq4_16}, \eqref{ceq4_18} равносильна одному скалярному ограничению на параметры $h, k, g$. Соответствующая поверхность в $\bbRR(h, k, g)$ служит разделяющей для классификации областей возможности движения. Указанный метод ее построения будет использован в последующих главах.

В динамике твердого тела встречаются задачи, в которых функция \eqref{ceq4_7} является не общим, а частным интегралом уравнений \eqref{ceq4_4}, т.е. когда производная $\dot{K}$ в силу \eqref{ceq4_4} обращается в нуль лишь на фиксированном множестве \eqref{ceq4_9}. Верно ли в этом случае предложение~\ref{pro341}? Один из возможных ответов дает теорема В.В.\,Козлова \cite{bib27}, касающаяся движения тела в поле одних лишь потенциальных сил. Докажем несколько более общее утверждение, применимое и к гироскопическому случаю. Его формулировке предпошлем некоторые рассуждения.

Вновь зафиксируем постоянную площадей $g \in \bbR$ и рассмотрим гладкое четырехмерное подмногообразие $G_g = G^{-1}(g)$ в $\mathfrak{M}$. С точностью до диффеоморфизма (см. замечание в $\S$~\ref{ssec15}) оно является фактор-многообразием уровня момента исходной системы на $\tm$ по действию группы $\Psi$. Согласно \eqref{aeq5_16}, ограничение $X_g = \left. X \right|_{G_g}$ есть гамильтоново векторное поле с гамильтонианом $H_g = \left. H \right|_{G_g}$ в некоторой, специальным образом определенной, симплектической структуре $\sigma_g$ на $G_g$. Обозначим через $\{ , \}$ скобку Пуассона функций на $G_g$ в структуре $\sigma_g$. Для любой функции $K$ на $\mathfrak{M}$ имеем
\begin{equation}\label{ceq4_19}
    \left. \dot{K} \right|_{G_g} = \left. (XK) \right|_{G_g} = X_g \left( \left. K \right|_{G_g} \right) = X_g K_g = \{H_g, K_g\}.
\end{equation}

Будем считать, что все рассматриваемые объекты заданы в открытой окрестности многообразия $\mathfrak{M}$ в пространстве $\bbR^6(\mbs{\nu}, \mbs{\omega})$. Рассматривая $\Gamma(\mbs{\nu}, \mbs{\omega}) = |\mbs{\nu}|^2$ как функцию на $\bbR^6$, определим многообразие $G_g$ в $\bbR^6$ уравнениями
\begin{equation}\label{ceq4_20}
    G(\mbs{\nu}, \mbs{\omega}) = g, \qquad \Gamma(\mbs{\nu}, \mbs{\omega}) = 1.
\end{equation}
Легко видеть, что в точках множества \eqref{ceq4_20} (и даже в точках $\mathfrak{M}$) $G$ и $\Gamma$ независимы как функции на $\bbR^6$. Обозначая через $\ds \nabla = (\partial/\partial \mbs{\nu}, \partial/\partial \mbs{\omega})$ оператор градиента в $\bbR^6$, получаем, что линейно независимы векторы $\nabla G$ и $\nabla \Gamma$.

Касательное пространство к $G_g$ в точке $x = (\mbs{\nu}, \mbs{\omega})$ задается в $\bbR^6$ уравнениями
\begin{equation}\label{ceq4_21}
    dG(x) = 0, \qquad d\Gamma(x) = 0.
\end{equation}

Пусть $\Phi$ -- некоторая функция на $\bbR^6$, $\Phi_g = \left. \Phi \right|_{G_g}$. Тогда для точки $x \in G_g$ в силу \eqref{ceq4_21} имеем
\begin{equation}\label{ceq4_22}
    d\Phi_g(x) = \left. d\Phi(x) \right|_{T_xG_g} = \left. d\Phi(x) \right|_{\ker dG(x) \cap \ker d\Gamma (x)}.
\end{equation}

\begin{lemma}\label{lem343}
Для точки $x \in G_g$ следующие условия равносильны:

а{\rm )} $d\Phi(x) = 0${\rm ;}

б{\rm )} векторы $\nabla \Phi(x), \nabla G(x), \nabla \Gamma(x)$ компланарны.
\end{lemma}

Это очевидное следствие \eqref{ceq4_22} и леммы об аннуляторе.

\begin{theorem}[\cite{bib27}]\label{the344} Пусть $(N^{2n}, \sigma^2)$ -- симплектическое многообразие с заданными на нем гладкими функциями $\Phi_1,\ldots,\Phi_n$. Пусть $X$ -- гамильтоново векторное поле с гамильтонианом $\Phi_1$, $E$ -- связная компонента множества
$$
\{x \in N: {\Phi_1(x) = 0}, \ldots, {\Phi_n(x) = 0}\}.
$$
Предположим, что в точках $E$

{\rm 1)} формы $d\Phi_1, \ldots, d\Phi_n$ линейно независимы{\rm ;}

{\rm 2)} скобки Пуассона $\{ \Phi_i, \Phi_j\} = 0, \qquad i, j = 1, \ldots, n${\rm ;}

{\rm 3)} $d\{ \Phi_i, \Phi_j\} = 0, \qquad i, j = 1, \ldots, n$.

В этом случае: $E$ -- гладкое $n$-мерное многообразие, инвариантное относительно фазового потока поля $X$; решения динамической системы $X|_E$ находятся с помощью квадратур. Если $E$ компактно, то $E$ диффеоморфно \mbox{$n$-м}ерному тору. Фазовый поток поля $X$ определяет на $E$ условно-пе\-ри\-оди\-чес\-кое движение.
\end{theorem}

\begin{theorem}\label{the345}
Пусть $E$ --связная компонента множества
$$
\{x \in \bbR^6: H(x) = h, \, G(x) = g, \, K(x) = k, \, \Gamma(x) = 1 \},
$$
где $H, G$ -- интегралы энергии и площадей, $K$ -- некоторая функция переменных Эйлера\,--\,Пуассона. Предположим, что в точках множества $E$

{\rm 1)} функции $H, G, K, \Gamma$ независимы;

{\rm 2)} производная $\dot{K}$ в силу уравнений Эйлера\,--\,Пу\-ас\-сона обращается в нуль;

{\rm 3)} векторы $\nabla \dot K, \nabla G, \nabla \Gamma$ компланарны.

Тогда $E$ -- гладкое интегральное многообразие уравнений Эйлера\,--\,Пу\-ас\-сона, и существует диффеоморфизм $E$ на двумерный тор ${\bf T}^2$, переводящий фазовые траектории в условно-периодические движения.
\end{theorem}

\begin{proof}
В силу независимости $H, G, K, \Gamma$ множество $E$ -- гладкое двумерное многообразие. Поскольку $\dot{K} = 0$ на $E$, то траектория, пересекающая $E$, целиком в нем содержится. Таким образом, $E$ инвариантно относительно фазового потока уравнений Эйлера\,--\,Пуассона.
В частности, $E$ -- интегральное многообразие системы $X_g$ на $G_g$, определяемое как $\{x \in G_g: H_g(x) = h, K_g(x) = k\}$. Из предположения 2 и равенства \eqref{ceq4_19} следует, что $\{H_g, K_g\} = 0$ на множестве $E$, а из предположения 3 и леммы \ref{lem343} имеем $d\{H_g, K_g\} (x) = 0$ для всех $x \in E$. Доказательство завершается применением теоремы \ref{the344} к симплектическому многообразию $(G_g, \sigma_g)$ и функциям $H_g, K_g$.
\end{proof}

Подчеркнем, что данное доказательство оперирует лишь с глобальными объектами. Ограничиться какой бы то ни было системой локальных координат (например, углами Эйлера) нельзя -- соответствующие уравнения Гамильтона будут иметь неустранимые особенности. Это связано с тем, что, как показано в $\S$~\ref{ssec24}, симплектическая форма $\sigma_g$, вообще говоря, неточна. Следовательно, при отличной от нуля постоянной площадей не существует глобального симплектического диффеоморфизма $TS^2$ на кокасательное расслоение $T^*S^2$, снабженное канонической симплектической структурой.

Если $K$ не частный, а общий интеграл (т.е. $\dot{K} \equiv 0$ во всем фазовом пространстве), из теоремы \ref{the345} вытекает предложение \ref{pro341}, которое теперь, таким образом, строго доказано.

\clearpage

\section{Топологический анализ задачи\\  о движении гиростата по инерции}\label{sec4}

Вплоть до настоящего времени знаменитое решение Эйлера задачи о движении твердого тела по инерции служит объектом приложения многих математических теорий. Результаты таких приложений часто имеют весьма общий характер и позволяют выявить новые свойства широкого круга задач механики (см., например, \cite{bib05,bib26}).

Здесь приводится топологический анализ решения Жуковского и исследование областей возможности движения в случае Эйлера. Последнее раскрывает <<внутреннюю>> геометрию движения тела. Представим себе аппарат, движущийся по инерции с относительно небольшой скоростью. Какие траектории будут выписывать на его экране звезды? В этой ситуации можно считать, что вектор $\overrightarrow{OS}$, направленный из центра масс аппарата на звезду, неподвижен в пространстве. Свяжем с этим вектором систему уравнений Эйлера\,--\,Пуассона. Тогда изучение траекторий <<в теле>> вектора $\overrightarrow{OS}$ в терминологии главы \ref{sec3} является задачей геометрического анализа рассматриваемой системы. Ее решение, интересное и само по себе, может послужить основой для изучения близких случаев, вообще говоря, неинтегрируемых.

\subsection{Бифуркационное множество и интегральные многообразия}\label{ssec41}

Полагая в системе уравнений \eqref{beq4_29}
\begin{equation}\label{deq1_1}
    \Pi = 0,
\end{equation}
получим уравнения движения гиростата по инерции.

Введем вектор кинетического момента, имеющий в подвижном пространстве вид
\begin{equation}\label{deq1_2}
    {\bf M} = (A_1 \omega_1 + \lambda_1, A_2 \omega_2 + \lambda_2, A_3 \omega_3 + \lambda_3)^T.
\end{equation}
Уравнения Эйлера с учетом \eqref{beq1_22}, \eqref{beq1_23}, \eqref{deq1_1} дают $\dot{{\bf M}} = 0$, и, следовательно, вектор ${\bf M}^0 \in \bbRRR$, соответствующий вектору \eqref{deq1_2}, постоянен. В частности, в дополнение к интегралам энергии
\begin{equation}\label{deq1_3}
    \ds H = \frac{1}{2}(A_1 \omega_1^2 + A_2 \omega_2^2 + A_3 \omega_3^2)
\end{equation}
и площадей \eqref{beq4_27} существует интеграл
\begin{equation}\label{deq1_4}
    K = (A_1 \omega_1 + \lambda_1)^2 + (A_2 \omega_2 + \lambda_2)^2 + (A_3 \omega_3 + \lambda_3)^2.
\end{equation}

Уравнения, определяющие интегральное многообразие \eqref{ceq4_9} в пространстве $\mathfrak{M} = S^2 {\times} \bbRR$, запишем в векторной форме:
\begin{equation}\label{deq1_5}
    {\bf A}\mbs{\omega} \cdot \mbs{\omega} = 2h, \qquad ({\bf A}\mbs{\omega} + \mbs{\lambda}) \cdot ({\bf A}\mbs{\omega} + \mbs{\lambda}) = k,
\end{equation}
\begin{equation}\label{deq1_6}
    ({\bf A}\mbs{\omega} + \mbs{\lambda}) \cdot \mbs{\nu} = g.
\end{equation}

Для нахождения критических точек интегрального отображения \eqref{ceq4_10} составим функцию с неопределенными множителями $\ell_1 {\bf A}\mbs{\omega} \cdot \mbs{\omega} + \ell_2 {\bf M} \cdot {\bf M} + 2 \ell_3 {\bf M} \cdot \mbs{\nu} + \ell_4 \mbs{\nu} \cdot \mbs{\nu}$ и приравняем нулю ее градиент по переменным Эйлера\,--\,Пуассона. В результате получим
\begin{equation}\label{deq1_7}
    \ell_1 \mbs{\omega} + \ell_2 {\bf M} + \ell_3 \mbs{\nu} = 0, \qquad \ell_3 {\bf M} + \ell_4 \mbs{\nu} = 0.
\end{equation}
Ненулевой набор $(\ell_1, \ldots, \ell_4)$, удовлетворяющий \eqref{deq1_7}, существует, если выполнено одно из двух условий:
\begin{equation}\label{deq1_8}
    {\bf M} {\times} \mbs{\omega} = 0
\end{equation}
или
\begin{equation}\label{deq1_9}
    {\bf M} {\times} \mbs{\nu} = 0.
\end{equation}

Обозначим
\begin{equation}\label{deq1_10}
    a = 1/A_1, \qquad b = 1/A_2, \qquad c = 1/A_3.
\end{equation}
Условимся рассматривать несимметричное твердое тело и положим без ограничения общности
\begin{equation}\label{deq1_11}
    a > b > c.
\end{equation}
Пусть
\begin{equation}\label{deq1_12}
    \lambda_1 \lambda_2 \lambda_3 \neq 0.
\end{equation}
При условии \eqref{deq1_8} введем параметр $\sigma$, полагая $\mbs{\omega} = {\bf M} \sigma$. Тогда постоянные интегралов связаны соотношениями
\begin{equation}\label{deq1_13}
\begin{array}{c}
  \ds h = \frac{\sigma^2}{2}\left(\frac{a \lambda_1^2}{(a-\sigma)^2}+\frac{b \lambda_2^2}{(b-\sigma)^2}+\frac{c \lambda_3^2}{(c-\sigma)^2}\right) ,\\
  \ds k = \frac{a^2 \lambda_1^2}{(a-\sigma)^2}+\frac{b^2 \lambda_2^2}{(b-\sigma)^2}+\frac{c^2 \lambda_3^2}{(c-\sigma)^2},
\end{array}
\end{equation}
\begin{equation}\label{deq1_14}
    g^2 \ls k.
\end{equation}

Кривая \eqref{deq1_13} непрерывна при $\sigma = \infty$, имеет разрывы с параболической асимптотикой при $\sigma \in \{a, b, c\}$, имеет ровно две точки возврата $A_*, A^*$ при некоторых вполне определенных значениях $\sigma_* \in \{c, b\}, \: \sigma^* \in \{b, a\}$. Пусть $h = F_i(k) \: (i = 1, \ldots, 6)$ -- однозначные ветви этой кривой, пронумерованные снизу вверх при достаточно больших $k$. Тогда в случае \eqref{deq1_9} имеем
\begin{equation}\label{deq1_15}
    k = g^2, \qquad F_1(k) \ls h \ls F_6(k).
\end{equation}

Объединение множества \eqref{deq1_13}, \eqref{deq1_14} и множества \eqref{deq1_15} в $\bbRR(h,k,g)$ дает бифуркационное множество $\Sigma$. Его сечения $\Sigma_g$ плоскостями $g = \cons$ показаны на
рис.~\ref{fig_4_1} ({\it а}: $g = 0$, {\it б}: $g \neq 0$). Одинаковыми римскими цифрами обозначены области, эквивалентные в смысле определения \ref{def311}. Роль параметра играет постоянная площадей $g$ (см. лемму \ref{lem342}).

\begin{figure}[ht]
\centering
\def\fs{0.4}
\begin{minipage}[h]{\fs\linewidth}
\center{\includegraphics[width=1\linewidth]{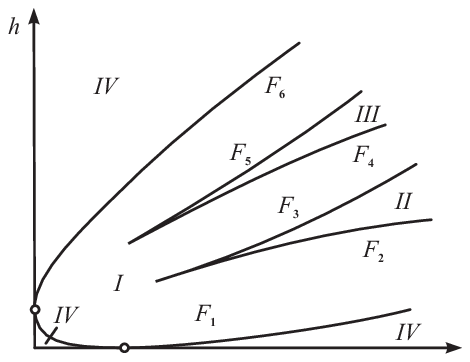} \\ а)}
\end{minipage}
\hspace{5mm}
\begin{minipage}[h]{\fs\linewidth}
\center{\includegraphics[width=1\linewidth]{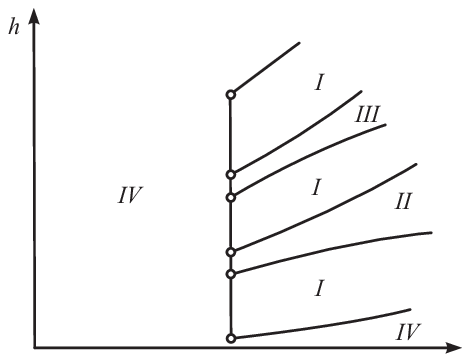} \\ б)}
\end{minipage}
\caption{}\label{fig_4_1}
\end{figure}

Переходя к анализу интегральных многообразий, примем некоторые обозначения, сохраняющие силу в дальнейшем.

\begin{soglash}\label{sog411}
Если $U$ -- гладкое связное многообразие, $n$ -- натуральное число, то $nU$ будет означать многообразие, состоящее из $n$ связных компонент, диффеоморфных $U$. Символами $\{ \cdot \}, \; S^1 \dot{\cup} S^1, \; S^1 \ddot{\cup} S^1$ обозначим соответственно точку, букет двух окружностей (<<восьмерку>>) и множество, состоящее из двух окружностей, пересечение которых -- две точки. Вообще, если $U, V, W$ -- многообразия, то $U{\stackrel{W}{\cup}}V$ означает склейку $U$ и $V$ по подмногообразию, диффеоморфному $W$.
\end{soglash}

Обратимся к системе \eqref{deq1_5}, \eqref{deq1_6}.

Уравнения \eqref{deq1_5} задают в $\bbRR$ линию пересечения эллипсоидов с соответственно параллельными главными осями. Обозначим ее через $P_{h,k}$. Очевидно, это есть не что иное, как <<полодия>> -- годограф $\mbs{\omega}$ в $\bbRR$. Введем множества
\begin{equation*}
    X_i = \{(k, h): h = F_i(k)\} \backslash \{A_*,A^*\} \; (i=1, \ldots 6).
\end{equation*}
Виды $P_{h,k}$ указаны в табл.~\ref{tab41}. Через $S_\Lambda^1$ обозначена топологическая окружность с угловой точкой.

\vspace{-2mm}

\begin{center}
\tabcolsep=1mm
\begin{tabular}{|l|c|c|c|c|c|c|c|}
  \multicolumn{8}{r}{\footnotesize Таблица \myt\label{tab41}} \\
  \hline
  \begin{tabular}{l} Случай \end{tabular} & \hbox to 8mm {\hfil 1 \hfil} & 2 & 3 & 4 & 5 & 6 & 7 \\
  \hline
\begin{tabular}{l}
  Область пара- \\
  метров $k,h$
\end{tabular}
  & $I$ & $II \cup III$ & $IV$ & $X_1 \cup X_6$ & $X_2 \cup X_5$ & $X_3 \cup X_4$ & $\{A_*, A^* \}$ \\
  \hline
  \begin{tabular}{l}\rule[-5pt]{0pt}{16pt}$P_{h,k}$\end{tabular} & $S^1$ & $2S^1$ & ${\varnothing}$ & $\{ \cdot \}$ & $\{ \cdot \} \cup S^1$ & $S^1 \dot{\cup}  S^1$& $S_\Lambda^1$ \\
  \hline
\end{tabular}
\end{center}

Зависимость $\mbs{\omega}(t)$ -- периодическая на окружностях в случаях 1, 2, 5; $\mbs{\omega(t)} = \cons$ для изолированных точек в случаях 4, 5, в центре <<восьмерки>> 6 и в угловой точке 7. Остальные траектории в случаях 6,7 -- двоякоасимптотические к неподвижной точке.

Пусть $\mbs{\omega} \in \bbRR$ фиксировано. Множество точек сферы Пуассона, удовлетворяющих \eqref{deq1_6} обозначим $S_g(\mbs{\omega})$. При этом $S_g(\mbs{\omega}) = {\varnothing}, \{ \cdot \}, S^1$, если $k < g^2, k = g^2, k > g^2$ соответственно. Интегральные многообразия находим по формуле $J_{h,k,g} = \bigcup \limits_{\mbs{\omega} \in P_{h,k}} S_g(\mbs{\omega})$. Таким образом,
\begin{equation}\label{deq1_16}
    J_{h,k,g} = \left\{ \begin{array}{lc}
                          P_{h,k} {\times} S^1, & k > g^2, \\
                          P_{h,k}, & k = g^2, \\
                          {\varnothing}, & k < g^2.
                        \end{array}
    \right.
\end{equation}
В частности, в области $I$ (рис.~\ref{fig_4_1}) $J_{h,k,g} = {\bf T}^2$, в областях $II, III$ имеем $J_{h,k,g} = 2{\bf T}^2$, и траектории на торах условно-периодические. Критические поверхности так же легко определить по формуле \eqref{deq1_16} и табл. \ref{tab41}.

Перейдем от \eqref{deq1_12} к другому крайнему случаю
\begin{equation}\label{deq1_17}
    \lambda_1 = \lambda_2 = \lambda_3 = 0
\end{equation}
(в промежуточных вариантах результаты устанавливаются соответствующими предельными переходами). Исключим из дальнейшего рассмотрения тривиальную возможность $h=k=g=0$, когда интегральное многообразие есть сфера Пуассона, заполненная неподвижными точками. В остальном непустые интегральные многообразия возможны лишь при
\begin{equation}\label{deq1_18}
    k>0.
\end{equation}
Кривая \eqref{deq1_13} вырождается в объединение лучей $2h = \alpha k$, $k \gs 0$, $\alpha \in \{a, b, c\}$. Удобно ввести новые параметры
\begin{equation}\label{deq1_19}
    \ds \varepsilon = \frac{g^2}{k}, \qquad \sigma = \frac{2h}{k}.
\end{equation}
Образ бифуркационного множества на плоскости $(\varepsilon, \sigma)$ состоит из точек
\begin{equation}\label{deq1_20}
    \varepsilon = 0, \qquad \sigma \in \{a, b, c\},
\end{equation}
отвечающих $\Sigma_0 \backslash \{0\}$, и отрезков
\begin{equation}\label{deq1_21}
    0 < \varepsilon \ls 1, \qquad \sigma \in \{a, b, c\},
\end{equation}
\begin{equation}\label{deq1_22}
    \varepsilon = 1, \qquad c \ls \sigma \ls a,
\end{equation}
в которые переходит множество $\Sigma_g$ при $g \neq 0$. Область $I$ исчезла. В областях $II - IV$ вид интегральных многообразий не изменился по теореме \ref{the312}. Непустые полодии имеют вид $2\{ \cdot \}, 2S^1, S^1 \ddot{\cup} S^1$ при $\sigma \in \{c, a\}, (c, b) \cup (b, a), \{b\}$ соответственно. Критические поверхности указаны в табл.~\ref{tab42}.

\vspace{-2mm}

\begin{center}
\begin{tabular}{|l|c|c|c|c|c|}
  \multicolumn{6}{r}{\footnotesize Таблица \myt\label{tab42}} \\
  \hline
   \begin{tabular}{l}Случай\end{tabular} & \multicolumn{2}{|c|}{$0 \ls \varepsilon < 1$} & \multicolumn{3}{|c|}{$\varepsilon = 1$} \\
  \cline{2-6}
   & 1 & 2& 3 & 4 & 5 \\
  \hline
 \begin{tabular}{l}
 Множество \\
  значений $\sigma$
\end{tabular}
& $\{a,c\}$ & $\{ b \}$ & $\{a,c\}$ & $(c,b) \cup (b,a)$ & $\{ b \}$ \\
  \hline
   \begin{tabular}{l}\rule[-5pt]{0pt}{16pt}$J_{h,k,g}$\end{tabular} & $2S^1$ & $(S^1 \ddot{\cup} S^1){\times} S^1$ & $2\{ \cdot \}$ & $2S^1$ & $S^1 \ddot{\cup} S^1$ \\
  \hline
\end{tabular}
\end{center}
Характер фазовых траекторий устанавливается без труда. В случаях 1, 4 это периодические решения, в случае 3 -- неподвижные точки (устойчивые равномерные вращения тела). В ситуации 5 имеем две гиперболические неподвижные точки (вращения вокруг средней оси инерции) и четыре отрезка сепаратрисы. В случае 2 проекция фазовой траектории на второй сомножитель периодична, а на первом возникает отображение последования с двумя гиперболическими точками.

\subsection{Вывод уравнений обобщенных границ}\label{ssec42}

Начиная с этого момента ограничимся классическим решением Эйлера \eqref{deq1_17}, в котором геометрический анализ проводится до конца аналитическим путем. Численное исследование общего случая выявило лишь один существенно новый вид области возможности движения. Остальные изменения типов ОВД связаны в основном с нарушением симметрии относительно главных сечений сферы Пуассона.

Напомним, что областью возможности движения мы называем образ интегрального многообразия $J_{h, k, g} \subset \mathfrak{M}$ на сфере Пуассона при отображении $\pi: \mathfrak{M} = S^2 {\times} \bbRR \to S^2$ проектирования на первый сомножитель.
Таким образом, точка $\mbs{\nu}$ сферы $S^2$ принадлежит ОВД тогда и только тогда, когда система \eqref{deq1_5}, \eqref{deq1_6} имеет хотя бы одно решение $\mbs{\omega} \in \bbRR$ (предложение \ref{pro322}). Упростим эту систему, <<параметризуя>> решение уравнения \eqref{deq1_6} при $\lambda = 0$.

Рассмотрим касательное расслоение $TS^2$ сферы Пуассона как множества точек $(\mbs{\nu}, \mbs{\xi})$ в $S^2 {\times} \bbRR$, заданное соотношением
\begin{equation}\label{deq2_1}
    \mbs{\nu} \cdot \mbs{\xi} =0.
\end{equation}
Отображение $TS^2$ в $\mathfrak{M}$, определенное на слоях формулой Г.В.\,Колосова \cite{bib29}
\begin{equation}\label{deq2_2}
    \ds \mbs{\omega} = \frac{1}{{\bf A} \mbs{\nu} \cdot \mbs{\nu}}(\mbs{\xi} {\times} {\bf A} \mbs{\nu} + g \mbs{\nu}),
\end{equation}
есть диффеоморфизм $TS^2$ на подмногообразие $G_g = G^{-1}(g) \subset \mathfrak{M}$ (взаимная однозначность на множествах \eqref{deq1_6} и \eqref{deq2_1} проверяется непосредственно, обратное отображение имеет вид $\mbs{\xi} = \mbs{\nu} {\times} \mbs{\omega}$ и поэтому дифференцируемо).

Условимся многообразия $G_g$ и $TS^2$ впредь не различать, считая \eqref{deq2_2} просто заменой переменных, содержащей параметр $g$. Интегральные многообразия $J_{h,k,g}$, следовательно, вложены в $TS^2$, проекция $\pi$ (точнее, ее ограничение на $G_g$) отождествляется с канонической проекцией $p_{S^2}:TS^2 \to S^2$.

Подстановка формулы \eqref{deq2_2} в \eqref{deq1_3}, \eqref{deq1_4} с учетом обозначения \eqref{deq1_10} приводит к выражению функций, индуцированных на многообразии $TS^2$ первыми интегралами $H$ и $K$:
\begin{equation}\label{deq2_3}
\begin{array}{c}
    \ds H_g(\mbs{\nu}, \mbs{\xi}) = \frac{a \xi_1^2 + b \xi_2^2 + c \xi_3^2 + abcg^2}{2 (bc \nu_1^2 + ca \nu_2^2 + ab \nu_3^2)},\\[3mm]
    \ds K_g(\mbs{\nu}, \mbs{\xi}) = \frac{1}{(bc \nu_1^2 + ca \nu_2^2 + ab \nu_3^2)} \left\{a^2 \xi_1^2 + b^2 \xi_2^2 + c^2 \xi_3^2 - \right. \\[3mm]
    - (a \nu_1 \xi_1 + b \nu_2 \xi_2 + c \nu_3 \xi_3)^2 + 2g \left[ a^2(b-c) \nu_2 \nu_3 \xi_1 + b^2(c-a) \nu_3 \nu_1 \xi_2 + \right. \\[3mm]
    \left. \left. c^2(a-b) \nu_1 \nu_2 \xi_3 \right] + g^2(b^2 c^2 \nu_1^2 + c^2 a^2 \nu_2^2 + a^2 b^2 \nu_3^2) \right\}.
\end{array}
\end{equation}
Система \eqref{deq1_5}, \eqref{deq1_6} равносильна двум уравнениям на $TS^2$:
\begin{equation}\label{deq2_4}
    H_g(\mbs{\nu}, \mbs{\xi}) = h, \qquad K_g(\mbs{\nu}, \mbs{\xi}) =k.
\end{equation}

Исключим избыточные уравнения. Для этого введем эллиптические локальные координаты $\lambda, \mu$, удовлетворяя \eqref{beq4_5}:
\begin{equation}\label{deq2_5}
    \ds \nu_1 = \sqrt{\frac{(a-\lambda)(a-\mu)}{(a-b)(a-c)}}, \quad \nu_2 = \sqrt{\frac{(\lambda-b)(b-\mu)}{(a-b)(b-c)}}, \quad \nu_3 = \sqrt{\frac{(\lambda-c)(\mu - c)}{(a-c)(b-c)}}.
\end{equation}
В зависимости от выбора знаков радикала, формулы \eqref{deq2_5} осуществляют гомеоморфизм прямоугольника
\begin{equation}\label{deq2_6}
    c \ls \mu \ls b \ls \lambda \ls a
\end{equation}
на соответствующий октант сферы. Здесь учтены неравенства \eqref{deq1_11}. На рис.~\ref{fig_4_2} показаны образы прямых $\lambda = \cons, \mu = \cons$. Множество \eqref{deq2_6} в этой главе называется основным прямоугольником.
Дифференцируя \eqref{deq2_5}, найдем выражение для компонент касательного вектора ${\mbs{\xi} \in T_\mbs{\nu}S^2}$:
\begin{equation}\label{deq2_7}
\begin{array}{c}
  \ds \xi_1 = -\frac{1}{2\sqrt{(a-b)(a-c)}} \frac{(a-\mu)\dot{\lambda}+(a-\lambda)\dot{\mu}}{\sqrt{(a-\lambda)(a-\mu)}}, \\
  \ds \xi_2 = -\frac{1}{2\sqrt{(a-b)(b-c)}} \frac{(b-\mu)\dot{\lambda}+(\lambda-b)\dot{\mu}}{\sqrt{(\lambda-b)(b-\mu)}}, \\
  \ds \xi_3 = -\frac{1}{2\sqrt{(a-c)(b-c)}} \frac{(\mu-c)\dot{\lambda}+(\lambda-c)\dot{\mu}}{\sqrt{(\lambda-c)(\mu-c)}}.
\end{array}
\end{equation}
Тем самым удовлетворено соотношение \eqref{deq2_1}.

\begin{figure}[ht]
\centering
\includegraphics[width=0.3\linewidth,keepaspectratio]{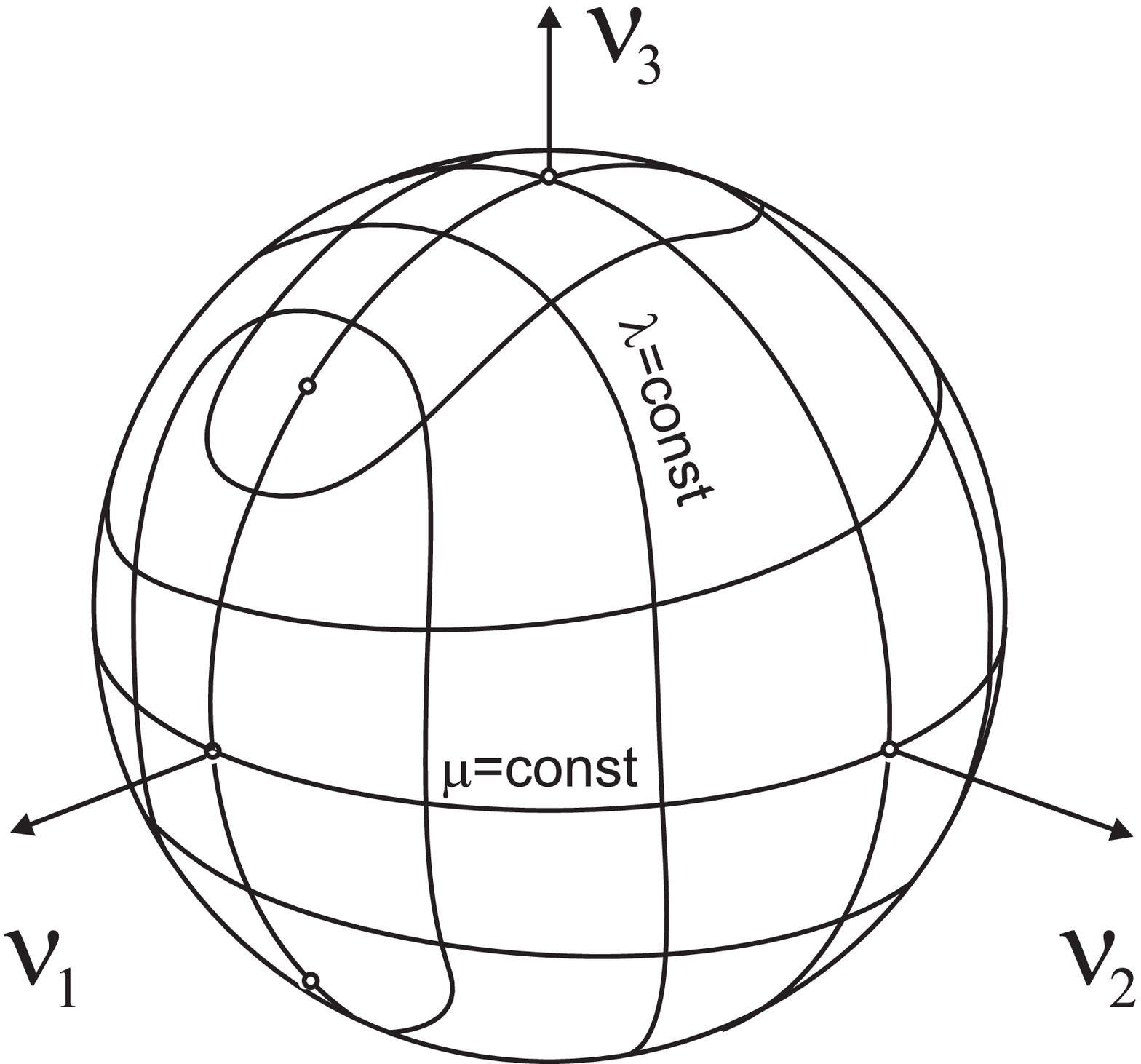}
\caption{}\label{fig_4_2}
\end{figure}

Из выражений \eqref{deq2_3} видно, что область возможности движений симметрична относительно координатных сечений сферы Пуассона (система \eqref{deq2_4} допускает одновременную замену знака у величин $\nu_\ell, \xi_m, \xi_n$, где все $\ell,m,n$ различны). Поэтому при исследовании обобщенных границ достаточно ограничиться первым октантом и считать все радикалы в \eqref{deq2_5}, \eqref{deq2_7} неотрицательными.

При подстановке \eqref{deq2_5}, \eqref{deq2_7} система \eqref{deq2_4} примет вид
\begin{equation}\label{deq2_8}
\begin{array}{c}
  \ds u^2+v^2 = \frac{\lambda \mu \sigma - abc \varepsilon}{\lambda - \mu}, \\
  \ds \frac{1}{\mu} \left[ u - \frac{\sqrt{-\varepsilon \lambda \varphi(\mu)}}{\lambda - \mu}\right]^2 + \frac{1}{\lambda} \left[v + \frac{\sqrt{\varepsilon \mu \varphi(\lambda)}}{\lambda - \mu}\right]^2 = \frac{\lambda \mu}{\lambda - \mu}(1-\varepsilon).
\end{array}
\end{equation}
Здесь учтены неравенство \eqref{deq1_18}, соотношения \eqref{deq1_19} и введены обозначения
\begin{equation}\label{deq2_9}
\begin{array}{c}
  \varphi(z) = (a-z)(b-z)(c-z), \quad
  \ds u = \frac{1}{2}\sqrt{\frac{\lambda}{k \varphi(\lambda)}} \dot{\lambda}, \quad v = \frac{1}{2}\sqrt{-\frac{\mu}{k \varphi(\mu)}} \dot{\mu}.
\end{array}
\end{equation}

Условимся областью возможности движения называть также ее образ в основном прямоугольнике \eqref{deq2_6}. Тогда можно сказать, что точка $(\lambda, \mu)$ принадлежит ОВД, если система \eqref{deq2_8} имеет (вещественное) решение относительно $u, v$. Условно вектор $(u, v)$, удовлетворяющий \eqref{deq2_9}, будем называть допустимой скоростью\footnote[1]{Настоящая допустимая скорость здесь -- вектор \eqref{deq2_2}, связанный с $(u, v)$ посредством \eqref{deq2_7}, \eqref{deq2_9}. Вектор \eqref{deq2_7} -- это скорость вдоль траектории $\mbs{\nu}(t)$, а $(\dot{\lambda}, \dot{\mu})$ -- обобщенная скорость, соответствующая координатам $(\lambda, \nu)$.} в точке $(\lambda, \mu)$.

Из системы \eqref{deq2_8} видно, что ОВД полностью определена двумя параметрами \eqref{deq1_19}, поэтому обозначим ее через $U_{\varepsilon, \sigma}$.

Запишем в новых переменных уравнения обобщенных границ ОВД. В результате подстановки \eqref{deq1_2} -- \eqref{deq1_4} условие \eqref{ceq4_12} принимает вид $(\mbs{\omega}{\times}{\bf M}) \cdot \mbs{\nu} = 0$ или в развернутой форме
\begin{equation*}
    (A_2 - A_3) \omega_2 \omega_3 \nu_1 + (A_3 - A_1)\omega_3 \omega_1 \nu_2 + (A_1 - A_2) \omega_1 \omega_2 \nu_3 = 0.
\end{equation*}
Вводя в это равенство выражения \eqref{deq2_2}, \eqref{deq2_5}, \eqref{deq2_7}, \eqref{deq2_9}, получаем
\begin{equation}\label{deq2_10}
    \ds \frac{(\lambda - \mu)\sqrt{\lambda \mu}}{abc} \det \left(
    \begin{array}{cc}
    u & v \\
    \ds \frac{1}{\mu} \left[ u - \frac{\sqrt{-\varepsilon \lambda \varphi(\mu)}}{\lambda - \mu}\right] & \ds \frac{1}{\lambda} \left[v + \frac{\sqrt{\varepsilon \mu \varphi(\lambda)}}{\lambda - \mu}\right]
    \end{array}
    \right) = 0.
\end{equation}
Геометрически это условие означает, что окружность и эллипс, определяемые в плоскости $(u, v)$ уравнениями \eqref{deq2_8}, имеют в некоторой точке общую касательную. Изучим некоторые особые случаи \eqref{deq2_10}.

Условие \eqref{deq2_10}, очевидно, выполнено, если
\begin{equation}\label{deq2_11}
    \lambda = \mu = b.
\end{equation}
При этом, однако, теряют смысл уравнения \eqref{deq2_8}, и исследование точек сферы
\begin{equation}\label{deq2_12}
    \ds \nu_1^2 = \frac{a-b}{a-c}, \qquad \nu_2 = 0, \qquad \nu_3^2 = \frac{b-c}{a-c},
\end{equation}
соответствующих \eqref{deq2_11}, необходимо проводить в переменных $\mbs{\nu}, \mbs{\xi}$. Непосредственно проверяется, что точки \eqref{deq2_12} лежат на обобщенной границе ОВД, если
\begin{equation}\label{deq2_13}
    \sigma = b + (a + c - 2b)\varepsilon - 2\sqrt{(a-b)(b-c)\varepsilon (1 - \varepsilon)}.
\end{equation}
Особым является лишь случай
\begin{equation}\label{deq2_14}
    \varepsilon = 0, \qquad \sigma = b,
\end{equation}
когда уравнения \eqref{deq2_4} определяют в касательной плоскости сферы в точках \eqref{deq2_12} две совпадающие окружности. При переходе через остальные значения $(\varepsilon, \sigma)$, принадлежащие кривой \eqref{deq2_13}, никакой перестройки обобщенной границы не происходит, так как точки \eqref{deq2_12} лежат на сфере <<внутри ребра>> (см. рис.~\ref{fig_4_2}).

Пусть теперь в определителе \eqref{deq2_10} вторая строка нулевая:
\begin{equation}\label{deq2_15}
    \ds u = \frac{\sqrt{-\varepsilon \lambda \varphi(\mu)}}{\lambda - \mu}, \qquad v = -\frac{\sqrt{\varepsilon \mu \varphi(\lambda)}}{\lambda - \mu}.
\end{equation}
Это произойдет, если в пространстве $\bbRR$ второй эллипсоид \eqref{deq1_5} и плоскость \eqref{deq1_6} имеют единственную общую точку. В этой точке выполнено условие \eqref{deq1_9}, и, следовательно, она является критической для интегрального отображения. Соответствующие ей критические значения в параметрах $\varepsilon, \sigma$ имеют вид \eqref{deq1_22}. В этом же убеждаемся непосредственно, подставляя \eqref{deq2_15} во второе уравнение \eqref{deq2_8}. Образ критического множества, соответствующего значению \eqref{deq1_6}, получим, подставив \eqref{deq2_15} в первое уравнение \eqref{deq2_8}:
\begin{equation}\label{deq2_16}
    \lambda + \mu = a + b + c -\sigma.
\end{equation}

Множество \eqref{deq2_16} назовем критической прямой (на сфере это будет, вообще говоря, пара замкнутых кривых).
В случае, если равенства \eqref{deq2_15} не выполняются, строки определителя \eqref{deq2_10} должны быть пропорциональны:
\begin{equation}\label{deq2_17}
    \ds u = \frac{\tau}{\mu} \left[ u - \frac{\sqrt{-\varepsilon \lambda \varphi(\mu)}}{\lambda - \mu}\right], \qquad v= \frac{\tau}{\lambda} \left[v + \frac{\sqrt{\varepsilon \mu \varphi(\lambda)}}{\lambda - \mu}\right].
\end{equation}
Отсюда следует, что в точках $(\lambda, \mu)$ обобщенной границы допустимая скорость, реализующая условие \eqref{deq2_10}, такова:
\begin{equation}\label{deq2_18}
    \ds u = \frac{\tau \sqrt{-\varepsilon \lambda \varphi(\mu)}}{(\lambda - \mu)(\tau - \mu)}, \qquad v = \frac{\tau \sqrt{\varepsilon \mu \varphi(\lambda)}}{(\lambda - \mu)(\lambda - \tau)}.
\end{equation}
Очевидно, имеется особенность, если $\lambda = \tau$ или $\mu = \tau$. Тогда по формуле \eqref{deq2_17} $\varphi(\tau) = 0$, так что, согласно \eqref{deq2_9} $\tau \in \{a, b, c\}$.

Пусть $\theta$ -- перестановка множества из двух элементов, тождественная при ${\lambda = \tau}$ и являющаяся транспозицией при $\mu = \tau$.
Обозначим
\begin{equation}\label{deq2_19}
    \{z, \overline{z} \} = \theta \{\lambda, \mu \}, \quad \{\psi, \overline{\psi} \} = \theta \{\varphi, -\varphi \}, \quad \{w, \overline{w} \} = \theta \{u, v \},
\end{equation}
так что в любом случае $z = \tau \in \{a, b, c\}$. Пусть, кроме того,
\begin{equation}\label{deq2_20}
    \alpha, \beta, \gamma
\end{equation}
это величины $a, b, c$, расставленные в таком порядке, что $\tau = \alpha$. Используя формулы \eqref{deq2_8}, \eqref{deq2_17}, находим
\begin{equation}\label{deq2_21}
    \ds z = \alpha, \qquad \overline{z} = \alpha + \varepsilon\frac{(\beta - \alpha)(\gamma - \alpha)}{\sigma - \alpha},
\end{equation}
\begin{equation}\label{deq2_22}
    \ds w=\frac{\alpha\sqrt{\varepsilon \alpha \overline{\psi}(\overline{z})}}{(\alpha - \overline{z})^2}, \qquad  \overline{w}^2=\frac{\alpha^2\overline{z}\Delta_\alpha}{\varepsilon |\alpha-\overline{z}|(\beta - \alpha)(\gamma - \alpha)}.
\end{equation}
Здесь
\begin{equation}\label{deq2_23}
    \Delta_\alpha = (\sigma - \alpha)^2 - \varepsilon(\beta+\gamma-2\alpha)(\sigma-\alpha)+\varepsilon(\beta-\alpha)(\gamma - \alpha).
\end{equation}

Поскольку должно быть $\lambda \gs \mu$, знак перестановки $\theta$ определяется по правилу
\begin{equation}\label{deq2_24}
    \ds \sgn \theta = \sgn \frac{(\beta + \alpha)(\gamma - \alpha)}{\alpha - \sigma}.
\end{equation}
При этом в силу \eqref{deq2_22} точка \eqref{deq2_21} принадлежит обобщенной границе ОВД лишь в случае
\begin{equation}\label{deq2_25}
    (\beta - \alpha)(\gamma - \alpha)\Delta_\alpha \gs 0.
\end{equation}

К исследованию поведения обобщенной границы в окрестности точек вида \eqref{deq2_21} вернемся позже.

Далее обозначим
\begin{equation}\label{deq2_26}
    x = (\lambda - \tau)(\tau - \mu), \qquad y = (\lambda - \tau) - (\tau - \mu).
\end{equation}

Предполагая $x \neq 0$, подставляем выражения \eqref{deq2_18} в \eqref{deq2_8}. После упрощений получаем
\begin{equation}\label{deq2_27}
    \frac{\varepsilon \varphi(\tau)}{x} = \tau - \sigma, \qquad \frac{\varepsilon [\varphi'(\tau)x - \varphi(\tau) y]}{x^2} = 1.
\end{equation}

Допустим, что
\begin{equation}\label{deq2_28}
    \tau = \sigma.
\end{equation}
Тогда $\varphi(\sigma) = 0$, и, следовательно, $(\varepsilon, \sigma)$ -- одно из критических значений \eqref{deq1_20}, \eqref{deq1_21}. Второе уравнение \eqref{deq2_27} принимает вид
\begin{equation}\label{deq2_29}
    (\lambda - \sigma)(\sigma - \nu) = \varepsilon \varphi'(\sigma).
\end{equation}
Это образ критических точек интегрального отображения, в которых выполнено условие \eqref{deq1_8}. Множество \eqref{deq2_29} назовем критической гиперболой. При $\varepsilon = 0$ критическая гипербола распадается на пару координатных прямых.

Пусть
\begin{equation}\label{deq2_30}
    \tau \neq \sigma.
\end{equation}
Из соотношений \eqref{deq2_27} находим
\begin{equation}\label{deq2_31}
    \ds x(\tau) = \varepsilon \frac{\varphi(\tau)}{\tau - \sigma}, \qquad y(\tau) = \varepsilon \frac{(\tau - \sigma)\varphi'(\tau) - \varphi(\tau)}{(\tau - \sigma)^2}.
\end{equation}
Интересно отметить тождество
\begin{equation}\label{deq2_32}
    x'(\tau) = y(\tau),
\end{equation}
из которого следует, что полная производная $x$ по $\tau$ совпадает с производной по явно входящему $\tau$ в формуле \eqref{deq2_26}.

Обращая замену \eqref{deq2_26}, получаем параметрические уравнения той части обобщенной границы, прообраз которой не содержит критических точек интегрального отображения:
\begin{equation}\label{deq2_33}
\begin{array}{c}
  \ds \lambda = \frac{1}{2}\left[2\tau + y(\tau) + \sqrt{y^2(\tau)+4x(\tau)} \right],\quad
  \ds \mu = \frac{1}{2}\left[2\tau + y(\tau) - \sqrt{y^2(\tau)+4x(\tau)} \right].
\end{array}
\end{equation}
Заметим, что, предположив здесь
\begin{equation}\label{deq2_34}
    x(\tau)=0,
\end{equation}
приходим к равенствам \eqref{deq2_21}, подчиненным правилу \eqref{deq2_24}. Поэтому уравнения \eqref{deq2_31}, \eqref{deq2_33} остаются справедливыми и в ситуации \eqref{deq2_34}. Следовательно, они могут быть использованы для более детального изучения этого случая.

\subsection{Особые точки обобщенных границ и разделяющие кривые}\label{ssec43}

Две области возможности движения будем считать эквивалентными, если существует диффеоморфизм сферы в себя, переводящий одну область на другую с сохранением структуры множества допустимых скоростей в каждой точке. Интуитивно ясно, что такой диффеоморфизм можно <<достроить>> до диффеоморфизма интегральных многообразий, сохраняющего критическое множество проекции $\pi$. Очевидно, эквивалентные ОВД имеют диффеоморфные обобщенные границы. Исследовав перестройки обобщенных границ, получим тем самым и классификацию областей возможности движения.

Множество в пространстве параметров $(\varepsilon, \sigma)$, при переходе через которое меняется дифференцируемый тип обобщенной границы, назовем разделяющим. По общим соображениям разделяющее множество должно состоять из некоторого числа кривых, которые также будем называть разделяющими. В их число, конечно, входят кривые, образующие бифуркационное множество интегрального отображения (в нашем случае это точки \eqref{deq1_20} и отрезки \eqref{deq1_21}, \eqref{deq1_22}), так как на них меняется тип самих интегральных многообразий. Соответствующие перестройки ОВД связаны с появлением в составе обобщенных границ критической прямой \eqref{deq2_16} или критической гиперболы \eqref{deq2_29}.

Разделяющие кривые второго типа отвечают различным проекциям на сферу диффеоморфных между собой интегральных многообразий. Найдем их, изучив эволюцию особых точек кривой \eqref{deq2_33}.

Учитывая тождество \eqref{deq2_32}, продифференцируем уравнение \eqref{deq2_33} по $\tau$:
\begin{eqnarray}
&  \ds \lambda' = \frac{(\lambda - \tau)(y' + 2)}{\sqrt{y^2+4x}}, \qquad \mu' = \frac{(\tau - \mu)(y' + 2)}{\sqrt{y^2+4x}},\label{deq3_1}\\
& \begin{array}{l}
  \lambda'' = \left\{2(y'+2)^2 x + \left[ (\lambda - \tau)y'' - (y' + 2)\right] (y^2 + 4x \right\} (y^2 + 4x)^{-3/2}, \\
  \mu'' = \left\{-2(y'+2)^2 x + \left[ (\tau - \mu)y'' + (y' + 2)\right] (y^2 + 4x \right\} (y^2 + 4x)^{-3/2}.
\end{array}\label{deq3_2}
\end{eqnarray}
Согласно \eqref{deq3_1} кривая \eqref{deq2_33} имеет экстремумы при условии \eqref{deq2_34}, что выполняется лишь в точках вида \eqref{deq2_21}. Таким образом, в экстремальных точках кривая \eqref{deq2_33} касается прямых, содержащих стороны основного прямоугольника. Характер экстремума определяется значениями \eqref{deq3_2}:
\begin{equation}\label{deq3_3}
\begin{array}{lll}
  \lambda = \tau \Rightarrow \lambda' = 0, & \mu' = y' + 2, & \ds \lambda'' = -\frac{y' + 2}{\lambda - \mu}, \\
  \mu = \tau \Rightarrow \mu' = 0, & \lambda' = y' + 2, & \ds \mu'' = \frac{y' + 2}{\lambda - \mu}.
\end{array}
\end{equation}
Ключевую роль играет знак выражения
\begin{equation}\label{deq3_4}
\begin{array}{ll}
  \ds y'(\tau) + 2 & \ds = \frac{\varepsilon}{(\tau - \sigma)^3} [2(\tau - \sigma)^3 + \varepsilon \varphi''(\tau)(\tau - \sigma)^2 -\\
  &- 2\varepsilon \varphi'(\tau)(\tau - \sigma) + 2 \varepsilon \varphi(\tau) ] = \\
  &\ds = \frac{2\varepsilon}{(\tau - \sigma)^3}\left[(1-\varepsilon)(\tau - \sigma)^3 + \varepsilon \varphi(\sigma) \right].
\end{array}
\end{equation}
Напомним обозначения \eqref{deq2_20}, \eqref{deq2_23}. Полагая в \eqref{deq3_4} $\tau = \alpha$, находим
\begin{equation}\label{deq3_5}
    \ds y'(\alpha) + 2 = \frac{2 \varepsilon}{(\sigma - \alpha)^2}\Delta_\alpha.
\end{equation}

Из формул \eqref{deq3_3}, \eqref{deq3_5} получаем информацию об экстремальных точках кривой \eqref{deq2_33}, отраженную в табл.~\ref{tab43}.

\vspace{-2mm}

\begin{center}
\tabcolsep=1mm
\begin{tabular}{|c|c|c|c|c|c|c|c|c|}
\multicolumn{1}{c}{} & \multicolumn{1}{c}{\rule{12mm}{0pt}} & \multicolumn{1}{c}{\rule{12mm}{0pt}} & \multicolumn{1}{c}{\rule{12mm}{0pt}} & \multicolumn{1}{c}{\rule{12mm}{0pt}} & \multicolumn{1}{c}{\rule{12mm}{0pt}} & \multicolumn{1}{c}{\rule{12mm}{0pt}} & \multicolumn{1}{c}{\rule{12mm}{0pt}} & \multicolumn{1}{c}{\rule{12mm}{0pt}}\\
  \multicolumn{9}{r}{\footnotesize Таблица \myt\label{tab43}} \\
  \hline
  {Случай} & {1} & {2} & {3} & {4} & {5} & {6} & {7} & {8} \\
  \hline
  $\tau$ & \multicolumn{2}{c|}{$a$} & \multicolumn{2}{c|}{$b$} & \multicolumn{2}{c|}{$c$} & \multicolumn{2}{c|}{$d$} \\
  \hline
  $\sigma \in$ & \multicolumn{2}{c|}{$[c, a)$} & \multicolumn{2}{c|}{$[c, b)$} & \multicolumn{2}{c|}{$(b, a]$} & \multicolumn{2}{c|}{$(c, a]$} \\
  \hline
  $x'$ & \multicolumn{2}{c|}{$\mbs{-}$} & \multicolumn{2}{c|}{$\mbs{+}$} & \multicolumn{2}{c|}{$\mbs{-}$} & \multicolumn{2}{c|}{$\mbs{+}$} \\
  \hline
  $\lambda$ & \multicolumn{2}{c|}{$a$} & \multicolumn{2}{c|}
  {\footnotesize $\ds b + \varepsilon \frac{(a-b)(b-c)}{b-\sigma}$}
   & \multicolumn{2}{c|}{$b$} & \multicolumn{2}{c|}
   {\footnotesize \rule[-12pt]{0pt}{11mm}  $\ds c + \varepsilon \frac{(a-b)(b-c)}{\sigma-c}$} \\
  \hline
  $\mu$ & \multicolumn{2}{c|}
  {\footnotesize\rule[-12pt]{0pt}{11mm}$\ds a - \varepsilon \frac{(a-b)(a-c)}{a-\sigma}$}
  & \multicolumn{2}{c|}{$b$} & \multicolumn{2}{c|}{\footnotesize$\ds b - \varepsilon \frac{(a-b)(b-c)}{\sigma-b}$} & \multicolumn{2}{c|}{$c$} \\
  \hline
  $\Delta_\tau$ & $\mbs{+}$& $\mbs{-}$ & $\mbs{+}$ & $\mbs{-}$ & $\mbs{+}$ & $\mbs{-}$ & $\mbs{+}$ & $\mbs{-}$ \\
  \hline
  $\lambda'$ & {0} & {0} & $\mbs{+}$ & $\mbs{-}$ & {0} & {0} & $\mbs{+}$ & $\mbs{-}$ \\
  \hline
  $\mu'$ & $\mbs{+}$ & $\mbs{-}$ & {0} & {0} & $\mbs{+}$ & $\mbs{-}$ & {0} & {0} \\
  \hline
  $\lambda''$ & $\mbs{-}$ & $\mbs{+}$ & {$\mbs{\pm}$} & {$\mbs{\pm}$} & $\mbs{-}$ & $\mbs{+}$ & {$\mbs{\pm}$} & {$\mbs{\pm}$} \\
  \hline
  $\mu'$ & {$\mbs{\pm}$} & {$\mbs{\pm}$} & $\mbs{+}$ & $\mbs{-}$ & {$\mbs{\pm}$} & {$\mbs{\pm}$} & $\mbs{+}$ & $\mbs{-}$ \\
  \hline
\end{tabular}
\end{center}

В случаях 2, 3, 5, 8 пересечение кривой \eqref{deq2_33} в окрестности экстремума с основным прямоугольником может быть лишь изолированной точкой. Обращаясь к условию \eqref{deq2_25}, видим, что такая точка -- постороннее решение и поэтому не включается в обобщенную границу ОВД.

Анализируя возможности 1, 4, 6, 7 и учитывая неравенства \eqref{deq2_6}, приходим к следующему результату: кривая \eqref{deq2_33} имеет точку касания со стороной $\lambda = a$ прямоугольника \eqref{deq2_6} -- в области, ограниченной отрезками\footnote[1]{Напомним, что в целом область изменения параметров $\varepsilon, \sigma$ есть множество $[0, 1] {\times} [c, a]$.}
\begin{eqnarray}
& &    \sigma = a - \varepsilon(a - c), \label{deq3_6}\\
& &    \sigma = a - \varepsilon(a - b),\label{deq3_7}
\end{eqnarray}
и кривой (рис.~\ref{fig_4_3},\,{\it а});
\begin{equation}\label{deq3_8}
    \Delta_a = 0;
\end{equation}
точку касания со стороной $\mu = b$ -- в области, заключенной между отрезком
\begin{equation}\label{deq3_9}
    \sigma = b - \varepsilon(b - c),
\end{equation}
и частью кривой
\begin{equation}\label{deq3_10}
    \Delta_b = 0,
\end{equation}
находящейся в пределах $c \ls \sigma \ls b$ (рис.~\ref{fig_4_3},\,{\it б});
точку касания со стороной $\lambda = b$ -- в области, лежащей между отрезком
\begin{equation}\label{deq3_11}
    \sigma = b + \varepsilon(a - b),
\end{equation}
и частью кривой \eqref{deq3_10} в пределах $b \ls \sigma \ls a$ (рис.~\ref{fig_4_3},\,{\it в});
точку касания со стороной $\mu = c$ -- в области, ограниченной отрезками
\begin{eqnarray}
& &     \sigma = c + \varepsilon (b - c), \label{deq3_12}\\
& &    \sigma = c + \varepsilon (a - c),\label{deq3_13}
\end{eqnarray}
и кривой (рис.~\ref{fig_4_3},\,{\it г})
\begin{equation}\label{deq3_14}
    \Delta_c = 0.
\end{equation}

\begin{figure}[!ht]
\def\fs{0.3}
\centering
\begin{tabular}{c}
\begin{minipage}[h]{\fs\linewidth}
\center{\includegraphics[width=1\linewidth]{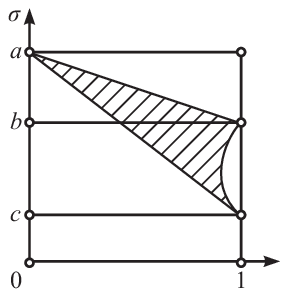} \\ а)}
\end{minipage}
\hfill
\begin{minipage}[h]{\fs\linewidth}
\center{\includegraphics[width=1\linewidth]{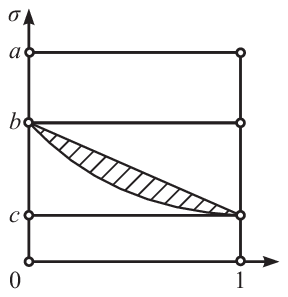} \\ б)}
\end{minipage}
   \\
\begin{minipage}[h]{\fs\linewidth}
\center{\includegraphics[width=1\linewidth]{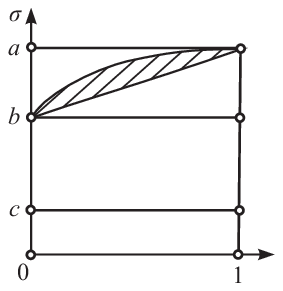} \\ в)}
\end{minipage}
\hfill
\begin{minipage}[h]{\fs\linewidth}
\center{\includegraphics[width=1\linewidth]{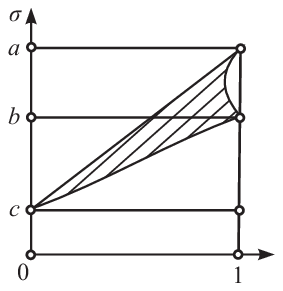} \\ г)}
\end{minipage}
  \\
\begin{minipage}[h]{\fs\linewidth}
\center{\includegraphics[width=1\linewidth]{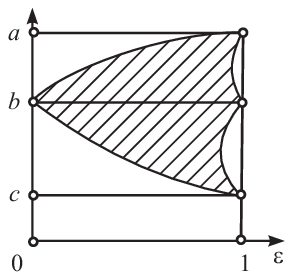} \\ д)}
\end{minipage}
  \\
\end{tabular}
\caption{}\label{fig_4_3}
\end{figure}

Обратимся к уравнению
\begin{equation}\label{deq3_15}
    y'+2=0.
\end{equation}
Из представления \eqref{deq3_4} следует, что вне бифуркационного множества ($\varepsilon \neq 1$, $\varphi(\sigma) \neq 0$) оно имеет единственный простой корень $\tau_0 \neq \sigma$. Согласно \eqref{deq3_1} кривая \eqref{deq2_33} имеет при $\tau = \tau_0$ точку возврата. Выясним, когда она принадлежит прямоугольнику \eqref{deq2_6}.

Разделяющий случай -- точка возврата находится на границе прямоугольника. Нетрудно убедиться, что для кривой \eqref{deq2_33} в обозначениях \eqref{deq2_19} справедливо тождество
\begin{equation}\label{deq3_16}
    (\tau - \sigma)[y'(\tau) + 2] x^2(\tau) = \varepsilon^2[(\tau - z)^3 \overline{\psi}(\overline{z}) - (\overline{z} - \tau)^2 \psi(z)].
\end{equation}
Поэтому, если предположить, что в точке возврата $z(\tau_0) = \alpha \in \{a, b, c\}$, то $\psi(z) = 0$ и, следовательно, $z(\tau_0) = \tau_0$. Таким образом, точка возврата совпадает с экстремальной точкой \eqref{deq2_21}, а левая часть уравнения \eqref{deq3_15} принимает значение \eqref{deq3_5}. Итак, разделяющее множество в данном случае -- это уже отмеченные кривые \eqref{deq3_8}, \eqref{deq3_10}, \eqref{deq3_14}. Точка возврата находится внутри основного прямоугольника, если $(\varepsilon, \sigma)$ принадлежит области, заштрихованной на рис.~\ref{fig_4_3},\,{\it д}.

Равенства \eqref{deq3_1} показывают, что других особых точек, кроме \eqref{deq3_3}, \eqref{deq3_15}, кривая \eqref{deq2_33} внутри прямоугольника не имеет. Следовательно, разделяющее множество есть объединение бифуркационного множества $\Sigma$, кривых $\Delta_\alpha = 0$, где  $\alpha = \{a, b, c\}$, и отрезков \eqref{deq3_6}, \eqref{deq3_7}, \eqref{deq3_9}, \eqref{deq3_11}-\eqref{deq3_13}.

Предположим, что кривая \eqref{deq2_33} построена. Совместно с критическими множествами \eqref{deq2_16} и \eqref{deq2_29} (если таковые определены) она разбивает сферу Пуассона на открытые связные подобласти. Для полного описания ОВД необходимо установить правила, по которым определяется число допустимых скоростей в каждой такой подобласти.
С этой целью изучим решения уравнений \eqref{deq2_8} на обобщенной границе $\varepsilon \neq 0$.

Отметим сразу, что на критической прямой \eqref{deq2_16}, существующей при $\varepsilon = 1$, допустимая скорость одна -- вектор \eqref{deq2_15}.
На остальных участках обобщенной границы окружность и эллипс \eqref{deq2_8} имеют точку касания \eqref{deq2_18}. При $\lambda = \tau$ или $\mu = \tau$, раскрывая неопределенность в правой части соответствующего равенства \eqref{deq2_18} с помощью формул \eqref{deq3_3}, \eqref{deq3_5}, приходим к выражениям \eqref{deq2_22} в обозначениях \eqref{deq2_19}. Таким образом, и в экстремальном случае $x(\tau) = 0$ формулы \eqref{deq2_18} сохраняют естественный смысл.

Если $(u, v)$ -- вектор \eqref{deq2_18}, то остальные решения \eqref{deq2_8} запишем так:
\begin{equation}\label{deq3_17}
\begin{array}{l}
  \ds u_{1,2} = \frac{[\tau(\lambda + \mu) - 2 \lambda \mu]u \mp 2v\sqrt{\lambda \mu(\lambda - \tau)(\tau - \mu)}}{\tau(\lambda - \mu)},\\
  \ds v_{1,2} = \frac{\mp 2u\sqrt{\lambda \mu(\lambda - \tau)(\tau - \mu)} - [\tau(\lambda + \mu) - 2 \lambda \mu]v}{\tau(\lambda - \mu)}.
\end{array}
\end{equation}
Они вещественны лишь при условии $x(\tau) \gs 0$. Если $x(\tau) = 0$, то $(u_1, v_1) = (u_2, v_2)$. В точке возврата \eqref{deq3_15} воспользуемся тождеством \eqref{deq3_16}. Получим, что один из векторов \eqref{deq3_17} совпадает с $(u, v)$. Следовательно, в точках экстремума и возврата -- две допустимые скорости.
Если одновременно $x(\tau) = 0$ и $y'(\tau) + 2 = 0$ (точка возврата попала на границу основного прямоугольника), то уравнения  \eqref{deq2_8} имеют одно решение \eqref{deq2_28} кратности четыре.

Анализируя выражения \eqref{deq3_17} при условии \eqref{deq2_28}, учтем, что $\varphi'(a) < 0$, $\varphi'(b) < 0$, $\varphi'(c) < 0$. Поэтому на критической гиперболе \eqref{deq2_29} имеется одна допустимая скорость, если $\sigma = c$ или $\sigma = a$, и три, если $\sigma = b$. Исключение составляет лишь случай $\sigma = b$, $\varepsilon = 1$, в котором единственная общая точка прямоугольника \eqref{deq2_6} и гиперболы -- вершина $(\lambda, \mu) = (a, c)$, и в ней единственной допустимой скоростью является нулевая.

Таким образом, число допустимых скоростей на обобщенной границе установлено во всех ситуациях.

Участок кривой \eqref{deq2_33}, не содержащий особых точек, по общей классификации Уитни отображений двумерных многообразий \cite{bib14} может быть лишь образом <<складки>>.\footnote[1]{Точка возврата обобщенной границы является образом <<сборки>>, а точка экстремума на сфере дает трансверсальное пересечение образов <<складок>>.} Поэтому при переходе через такой участок число допустимых скоростей (прообразов при проектировании интегрального многообразия на сферу) меняется на два. Конечно, этот факт нетрудно установить и аналитически, исследуя \eqref{deq2_8} в окрестности неособой точки \eqref{deq2_33}.

Заметим теперь, что в вершинах основного прямоугольника, соответствующих пересечениям сферы Пуассона с координатными осями, число допустимых скоростей устанавливается явно ($\varphi(\lambda) = \varphi(\mu) = 0$, и уравнения \eqref{deq2_8} разрешимы) и полностью определяется расположением $(\varepsilon, \sigma)$ относительно отрезков \eqref{deq3_6}, \eqref{deq3_7}, \eqref{deq3_9}, \eqref{deq3_11}-\eqref{deq3_13}. А поскольку это число нигде, кроме точек \eqref{deq2_12} в случае \eqref{deq2_14}, не превосходит четыре, получаем очевидный алгоритм его вычисления в любой точке сферы.

\subsection{Классификация областей возможности движения}\label{ssec44}

Рассмотрим случай, когда постоянная площадей равна нулю. На плоскости $\varepsilon, \sigma$ -- это отрезок $\{\varepsilon = 0, c \ls \sigma \ls a\}$. При этом уравнения \eqref{deq2_8} разрешимы в явном виде:
\begin{equation*}
    \ds u^2 = \frac{\lambda \mu ^ 2}{(\lambda - \mu)^2}(\lambda - \sigma), \qquad v^2 = \frac{\lambda^2 \mu}{(\lambda - \mu)^2}(\sigma - \mu).
\end{equation*}
Отсюда следует, что области возможности движения $U_{0, \sigma}$ есть пересечение основного многоугольника\eqref{deq2_6} с множеством $\{(\lambda, \mu): \lambda \gs \sigma \gs \mu \}$.
Обобщенная граница состоит из отрезков прямых $\lambda = \sigma$ или $\mu = \sigma$. Внутри ОВД существует четыре допустимых скорости, на обобщенной границе -- две. Исключение составляет отмеченный выше случай \eqref{deq2_14}, когда в вершине $\lambda = \mu = b$ имеется целая окружность допустимых скоростей. Множества $U_{0, \sigma}$ таковы: кольцо со средней линией $\nu_3 = 0$ при $c < \sigma < b$ (при $\sigma = 0$ оно вырождается в окружность $\nu_3 = 0$); кольцо со средней линией $\nu_1 = 0$ при $b < \sigma < a$ (при $\sigma = a$ происходит вырождение в окружность $\nu_1 = 0$). В этих случаях обобщенная граница совпадает с топологической. Множество $U_{0, b}$ -- вся сфера, обобщенная граница --  $\nu_2 = 0$. Типичные траектории вектора $\nu$ в областях $U_{0, \sigma}$ топологически устроены так же, как геодезические на трехосном эллипсоиде \cite{bib22}.
Перестройки ОВД происходят лишь в точках \eqref{deq1_17} бифуркационного множества. Это, естественно, поскольку при $g = 0$ система, описывающая движение $\mbs{\nu}$ принадлежит типу Лиувилля (см. пример \ref{exa331}).

Связь с интегральными многообразиями весьма проста: в случаях $\sigma = a$ и $\sigma = b$ обе окружности проектируются в соответствующее координатное сечение $S^2$; при $c < \sigma < b$ и $b < \sigma < a$ оба интегральных тора отображаются на одну и ту же полосу с различными направлениями движения по ней. Интересен случай $\sigma = b$: имеются две периодические траектории -- движения по сечению $\nu_2 = 0$. Они служат $\alpha$- и $\omega$-предельными циклами для всех остальных траекторий, каждая из которых через равные промежутки времени пересекает пару диаметрально противоположных точек \eqref{deq2_12}. Пучки траекторий, содержащих фиксированную пару, образуют в фазовом пространстве двумерные торы. Эти два тора пересекаются по двум предельным циклам, что и дает множество $(S^1 \ddot{\cup} S^1) {\times} S^1$.

Переходя к случаю $\varepsilon \neq 0$, отметим некоторые свойства кривой \eqref{deq2_33}. Воспользовавшись выражениями \eqref{deq2_31}, запишем агрегаты, входящие в правые части \eqref{deq2_33}, в виде функций от $\tau, \sigma$:
\begin{equation}\label{deq4_1}
  \ds 2\tau + y(\tau) = 2\sigma + \frac{1}{(\tau - \sigma)^2} \bigl[ 2(1-\varepsilon) (\tau - \sigma)^3 + \frac{\varepsilon}{2} \varphi''(\sigma)(\tau - \sigma)^2 - \varepsilon \varphi(\sigma)\bigr],
\end{equation}
\begin{equation}\label{deq4_2}
\begin{array}{l}
  \ds y^2(\tau) + 4x(\tau) = \frac{\varepsilon}{(\tau - \sigma)^4}\bigl\{-4(1-\varepsilon)(\tau - \sigma)^6 + 2(1-\varepsilon) \varphi''(\sigma)(\tau - \sigma)^5 + \bigr. \\[3mm]
  \qquad +\ds \bigl[4 \varphi'(\sigma) + \frac{\varepsilon}{4}\varphi''^2(\sigma) \bigr](\tau - \sigma)^4+ 4(1 + \varepsilon) \varphi(\sigma)(\tau - \sigma)^3 -\\[3mm]
  \qquad \bigl. -\varepsilon \varphi(\sigma) \varphi''(\sigma)(\tau - \sigma)^2 + \varepsilon \varphi^2(\sigma) \bigr\}.
\end{array}
\end{equation}

Пусть $T_{\varepsilon,\sigma} \subset \bbR$ -- множество значений $\tau$, при которых величина \eqref{deq4_2} неотрицательна.

При $\varepsilon < 1$ имеем $\lim \limits_{\tau \to \infty}(y^2 + 4x) = -\infty$. Поэтому промежуток $T_{\varepsilon, \sigma}$ ограничен. Его верхней и нижней границей являются наибольший $\tau^*$ и наименьший $\tau_*$ корни уравнения $y^2 + 4x = 0$. Из \eqref{deq2_33} следует, что $\lambda(\tau^*) = \mu(\tau^*), \;  \lambda(\tau_*) = \mu(\tau_*)$. Таким образом, кривая обрывается на биссектрисе $\lambda = \mu$.

В случае $\varepsilon = 1$
\begin{eqnarray}
& & \begin{array}{c}
  \ds \lim \limits_{\tau \to \infty}(y^2 + 4x) = 4\varphi'(\sigma) + \frac{1}{4} \varphi''^2(\sigma) = a^2 + b^2 + c^2 -\\
  -2(bc + ca + ab) + 2(a+b+c)\sigma - 3\sigma ^2,
\end{array}\label{deq4_3}\\
& &  \ds \lim \limits_{\tau \to \infty}(2\tau + y) = 2\sigma + \frac{1}{2}\varphi''(\sigma) = a + b + c - \sigma.\label{deq4_4}
\end{eqnarray}
Поскольку значение \eqref{deq4_3} положительно на всем отрезке $\sigma \in [c, a]$, существует
\begin{equation}\label{deq4_5}
    \lim \limits_{\tau \to \infty}(\lambda(\tau), \mu(\tau)) = (\lambda_\infty, \mu_\infty).
\end{equation}
Кроме того, из \eqref{deq4_4}
$$
\lambda_\infty + \mu_\infty = a + b + c - \sigma,
$$
а из \eqref{deq3_1}
$$
\ds \lim \limits_{\tau \to \infty} \frac{d\lambda}{d\tau} = \lim \limits_{\tau \to \infty} \frac{\lambda - \tau}{\tau - \mu} = -1,
$$
так что в точке \eqref{deq4_5} кривая \eqref{deq2_33} касается критической прямой \eqref{deq2_16}.

Изучим значения $\tau$, близкие к $\sigma$. Если $\varphi(\sigma) \neq 0$, то найдется проколотая окрестность $\sigma$, лежащая в $T_{\varepsilon, \sigma}$. При $\tau \to \sigma$, используя выражения \eqref{deq4_1}, \eqref{deq4_2}, получаем
\begin{equation*}
    \begin{array}{c}
       \ds \lambda = \sigma - \frac{\varepsilon}{4} \Bigl[ \frac{2 \varphi(\sigma)}{(\tau - \sigma)^2} - \varphi''(\sigma) \Bigr][1 - \sgn \varphi(\sigma)]+o(1)], \\
       \ds \lambda = \sigma - \frac{\varepsilon}{4} \Bigl[ \frac{2 \varphi(\sigma)}{(\tau - \sigma)^2} - \varphi''(\sigma) \Bigr][1 + \sgn \varphi(\sigma)]+o(1)].
     \end{array}
\end{equation*}
В результате имеем
\begin{equation}\label{deq4_6}
\begin{array}{l}
  \varphi(\sigma) > 0 \Rightarrow \lim \limits_{\tau \to \infty}(\lambda(\tau), \mu(\tau)) = (\sigma, -\infty), \\
  \varphi(\sigma) < 0 \Rightarrow \lim \limits_{\tau \to \infty}(\lambda(\tau), \mu(\tau)) = (+\infty, \sigma).
\end{array}
\end{equation}

Допустим, что $\varphi(\sigma) = 0$, т.е. $\sigma \in \{a, b, c\}$. Пусть $\{\alpha, \beta, \gamma\}$ -- перестановка чисел $a, b, c$, такая, что
\begin{equation}\label{deq4_7}
    \sigma = \alpha, \beta > \gamma.
\end{equation}
Выражение \eqref{deq4_2} принимает вид
\begin{equation}\label{deq4_8}
\begin{array}{c}
  y^2(\tau) + 4x(\tau) =\varepsilon\{-4(1-\varepsilon)(\tau-\alpha)^2+4(1-\varepsilon)(\beta+\gamma-2\alpha)(\tau-\alpha)+\\
  +\varepsilon(\beta+\gamma-2\alpha)^2-4(\beta-\alpha)(\gamma-\alpha)\}.
\end{array}
\end{equation}
Это выражение является квадратным трехчленом с отрицательным старшим коэффициентом. При $\tau = \beta, \gamma$ оно равно $\varepsilon(\beta-\gamma)^2$. Таким образом, отрезок $[\gamma, \beta]$ содержится в $\mathop{\rm int}\nolimits T_{\varepsilon, \alpha}$. В частности, существуют экстремальные точки, соответствующие $\beta, \gamma$.
Значение $\tau = \alpha$ лежит в $T_{\varepsilon, \alpha}$ только в том случае, если
\begin{equation}\label{deq4_9}
    \ds \varepsilon \gs \frac{4(\beta-\alpha)(\gamma-\alpha)}{\varepsilon(\beta+\gamma-2\alpha)^2}.
\end{equation}
Однако при этом точки экстремума нет, так как нарушено условие \eqref{deq2_30}. Переходя к пределу в \eqref{deq2_31}, находим
\begin{equation}\label{deq4_10}
\begin{array}{c}
  [\lambda(\alpha) - \alpha][\alpha - \mu(\alpha)] = \varepsilon \varphi'(\alpha), \\
  \lambda(\alpha) + \mu(\alpha) - 2\alpha = \varepsilon \varphi''(\alpha).
\end{array}
\end{equation}
Равенства \eqref{deq4_10} означают, что при $\tau = \alpha$ кривая \eqref{deq2_33} касается критической гиперболы \eqref{deq2_29}.

Приступим к окончательному построению обобщенных границ и классификации ОВД.

Рассмотрим значения $\varepsilon = 1$, $\sigma \in \{a, b, c\}$. В обозначениях \eqref{deq4_7} из \eqref{deq4_1}, \eqref{deq4_8} получаем
\begin{equation*}
    2\tau + y(\tau) = \beta + \gamma, \qquad y^2(\tau) + 4x(\tau) = (\beta - \gamma)^2,
\end{equation*}
так что кривая \eqref{deq2_33} вырождается в точку
\begin{equation}\label{deq4_11}
    (\lambda(\tau), \mu(\tau)) \equiv (\beta, \gamma).
\end{equation}
Обобщенная граница состоит из критической прямой $\lambda + \mu = \beta + \gamma$, критической гиперболы $(\lambda - \alpha)(\alpha - \mu) = (\beta - \alpha)(\alpha - \gamma)$ и <<точки>> \eqref{deq4_11}. В связи с тем, что вне обобщенной границы допустимых скоростей нет, она полностью исчерпывает ОВД.
На рис.~\ref{fig_4_4},\,{\it а-в} представлены области возможности движения при $\varepsilon  = 1$ в случаях: а) $\sigma = c$; б) $\sigma = b$; в) $\sigma = c$.

\begin{figure}[!ht]
\def\fs{0.4}
\centering
\begin{tabular}{c}
\begin{minipage}[h]{\fs\linewidth}
\center{\includegraphics[width=1\linewidth]{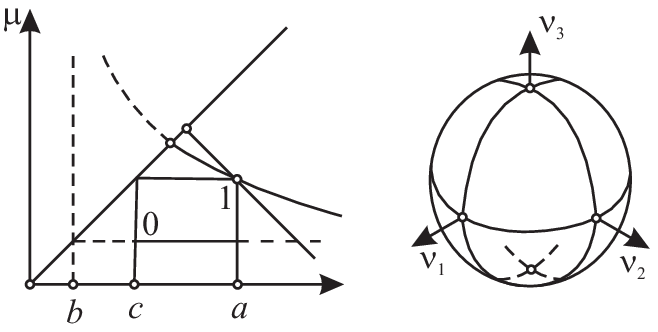} \\а)}
\end{minipage}
\hfill
\begin{minipage}[h]{\fs\linewidth}
\center{\includegraphics[width=1\linewidth]{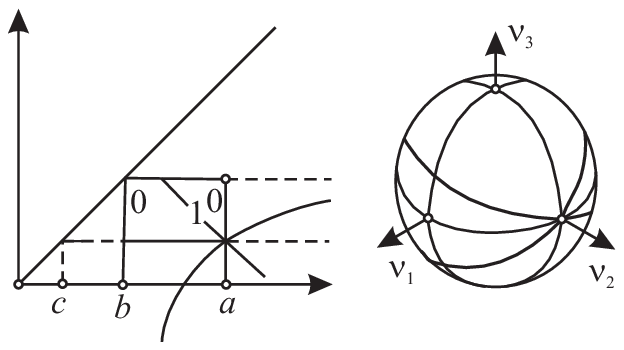} \\б)}
\end{minipage}
\\
\begin{minipage}[h]{\fs\linewidth}
\center{\includegraphics[width=1\linewidth]{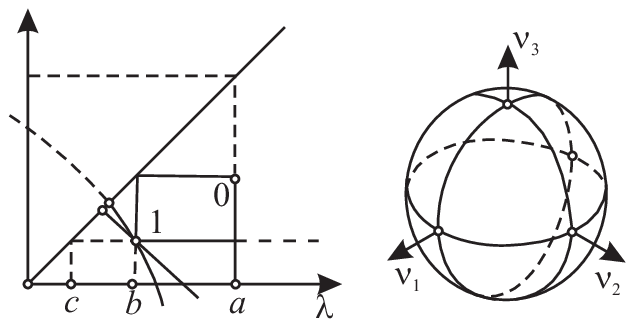} \\в)}
\end{minipage}
\end{tabular}
\caption{}\label{fig_4_4}
\end{figure}

Пусть $\varepsilon  = 1$, $c < \sigma < b$. При этом критическая прямая пересекает стороны $\lambda = a$ и $\mu = b$ основного прямоугольника. Критическая гипербола исчезла. <<Точка>> \eqref{deq4_11} стала настоящей кривой. Опишем ее свойства, вытекающие из \eqref{tab43} и результатов данного параграфа. Область определения $T_{1, \sigma} = \bbR \backslash \{\sigma\}$. Кривая начинается при $\tau = -\infty$ в точке \eqref{deq4_5} на критической прямой, при $\tau = c$ касается прямой $\mu = c$ сверху, уходит в бесконечность с асимптотой $\mu = \sigma$, когда $\tau \to \sigma$ (см. \eqref{deq4_6}). При $\tau = b$ происходит касание с прямой $\mu = b$ (снизу), а при $\tau = a$ -- с прямой $\lambda = a$ (справа). Последняя экстремальная точка принадлежит прямоугольнику, но, как отмечалось, является посторонним решением (касание внешнее). В результате получаем картину, изображенную на рис.~\ref{fig_4_5},\,{\it а}.

\begin{figure}[ht]
\def\fs{0.4}
\centering
\begin{tabular}{c}
\begin{minipage}[h]{\fs\linewidth}
\center{\includegraphics[width=1\linewidth]{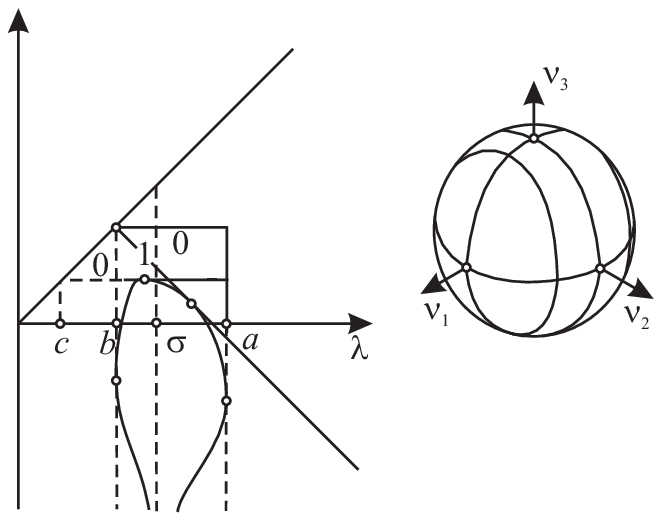} \\а)}
\end{minipage}
\\
\begin{minipage}[h]{\fs\linewidth}
\center{\includegraphics[width=1\linewidth]{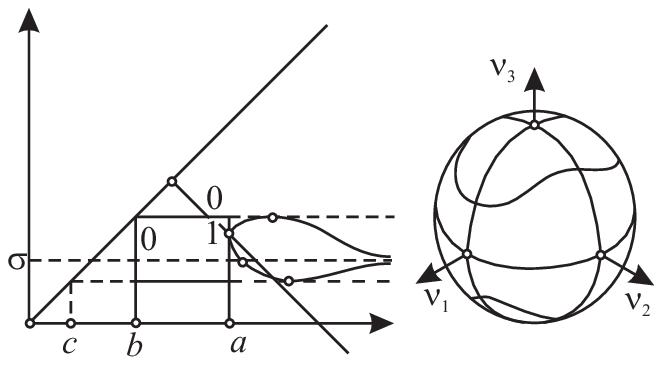} \\б)}
\end{minipage}
\end{tabular}
\caption{}\label{fig_4_5}
\end{figure}

Случай $\varepsilon = 1$, $b < \sigma < a$ разбирается аналогично. Результат приведен на рис.~\ref{fig_4_5},\,{\it б}.
Область возможности движения вновь совпадает со своей обобщенной границей. Вспоминая структуру интегральных многообразий (табл. \ref{tab42}), видим, что они проектируются на сферу гомеоморфно.

Отметим непрерывную зависимость кривых от параметра $\sigma$. Когда он стремится к значениям $a, b, c$, геометрический образ кривой \eqref{deq2_33} <<складывается вдвое>>, приближаясь к части критической гиперболы.

Пусть $\varepsilon<1$, а $\sigma=\alpha \in \{a, b,c\}$. В этом случае критическая гипербола имеет вид \eqref{deq2_29} и пересекает внутренность основного прямоугольника. Кривая \eqref{deq2_33} заменяет собой отрезок исчезнувшей критической прямой и при $\tau = \alpha$ касается гиперболы в силу \eqref{deq4_10}. Область определения $T_{\varepsilon, \alpha} = [\tau_*, \tau^*]$.

Поясним случай $\alpha = c$. Если $\varepsilon$ близко к единице, то неравенство \eqref{deq4_9} строгое и $\tau_* < c$. Кривая \eqref{deq2_33} начинается на прямой $\lambda = \mu$ ($\tau = \tau_*$), касается критической гиперболы $(\tau = c)$, прямых $\mu = b$ ($\tau = b$), $\lambda = a$ ($\tau = a$) и заканчивается на прямой $\lambda = \mu$ ($\tau = \tau^*$). Эта ситуация показана на рис.~\ref{fig_4_6},\,{\it а}. При обращении \eqref{deq4_9} в равенство произойдет слияние $\tau_* = c$, точка касания с критической гиперболой попадет на прямую $\lambda = \mu$ и при дальнейшем уменьшении $\varepsilon$ исчезнет. Случай $\alpha = a$ аналогичен (рис.~\ref{fig_4_6},\,{\it б}).

\begin{figure}[ht]
\def\fs{0.4}
\centering
\begin{tabular}{c}
\begin{minipage}[h]{\fs\linewidth}
\center{\includegraphics[width=1\linewidth]{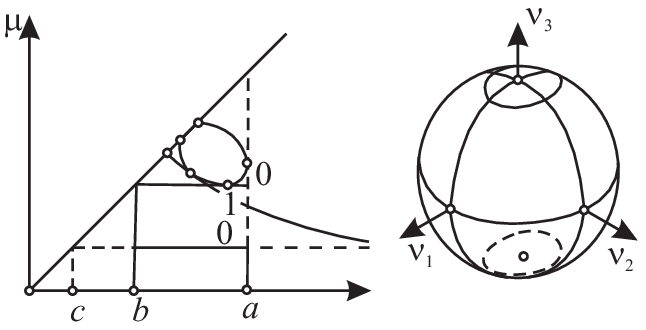} \\а)}
\end{minipage}
\hfill
\begin{minipage}[h]{\fs\linewidth}
\center{\includegraphics[width=1\linewidth]{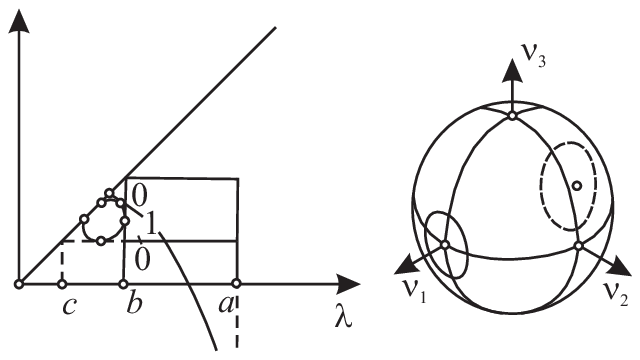} \\б)}
\end{minipage}
\end{tabular}
\caption{}\label{fig_4_6}
\end{figure}

Положим $\alpha = b$. Неравенство \eqref{deq4_9} выполнено для всех допустимых $\varepsilon$ и $\tau_* < c$, $a< \tau^*$. Вид ОВД зависит от точки $(\varepsilon, b)$ относительно разделяющих отрезков \eqref{deq3_6}, \eqref{deq3_13}. На рис.~\ref{fig_4_7},\,{\it а-в} приведены области возможности движения при $\sigma=b$ в предположении $a + c > 2b$ для случаев: а)~$\ds \frac{a-b}{a-c} < \varepsilon < 1$; б)~$\ds \frac{b-c}{a-c} < \varepsilon< \frac{a-b}{a-c}$; в)~$\ds 0 < \varepsilon < \frac{b-c}{a-c}$. Распределение допустимых скоростей определяется по правилам предыдущего параграфа с учетом того, что в вершине $(a, c)$ четыре скорости, а на отрезках от $\tau_*$ и $\tau^*$ до ближайших экстремальных точек ${x(\tau) < 0}$. Связь с интегральным многообразием $(S^1 \ddot{\cup} S^1) {\times} S^1$ легче всего увидеть на рис.~\ref{fig_4_7},\,{\it а}. Критическая гипербола дает на сфере две периодические траектории, а две полосы, охватывающие сферу суть образы торов, пересекающихся по этим циклам. Поскольку всякая фазовая  траектория стремится к периодическим, ясно, как устроен ее образ на сфере.

\begin{figure}[ht]
\def\fs{0.4}
\centering
\begin{tabular}{c}
\begin{minipage}[h]{\fs\linewidth}
\center{\includegraphics[width=1\linewidth]{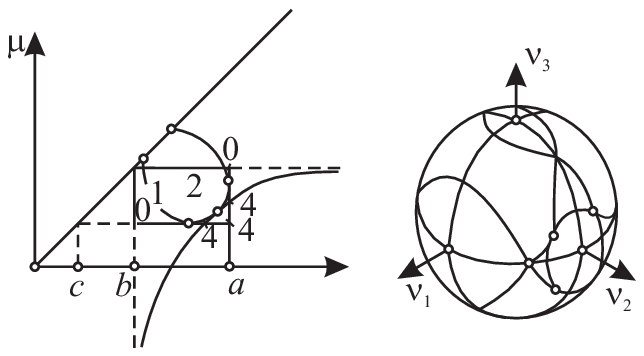} \\а)}
\end{minipage}
\hfill
\begin{minipage}[h]{\fs\linewidth}
\center{\includegraphics[width=1\linewidth]{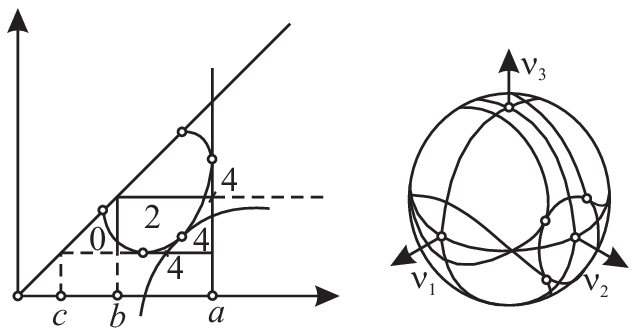} \\б)}
\end{minipage}
\\
\begin{minipage}[h]{\fs\linewidth}
\center{\includegraphics[width=1\linewidth]{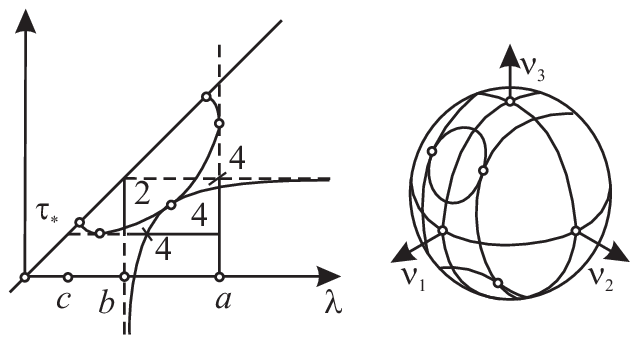} \\в)}
\end{minipage}
\end{tabular}
\caption{}\label{fig_4_7}
\end{figure}

Рассмотрим некритические случаи. На рис.~\ref{fig_4_8} обозначены открытые области, на которые множество $[0, 1] {\times} [c, a] \subset \bbR^2(\varepsilon, \sigma)$ разбивается разделяющими кривыми \eqref{deq3_6} -- \eqref{deq3_14}, если $a + c > 2b$. При  $a + c = 2b$ исчезает область~10. Если  $a + c < 2b$, то область, аналогичная области~10, возникает ниже прямой $\sigma = b$. Выясним, что происходит при отходе от бифуркационного множества в области 1, 5, 14, 18.

\begin{figure}[ht]
\centering
\center{\includegraphics[width=0.35\linewidth]{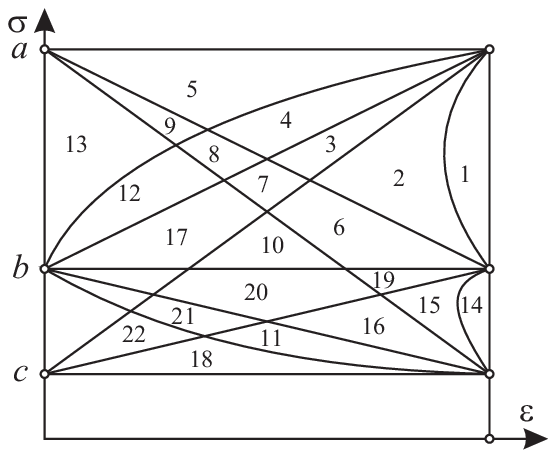}}
\caption{}\label{fig_4_8}
\end{figure}

Исследуем переход из точки $\varepsilon = 1$, $b < \sigma < a$ в область 1. При этом исчезает критическая прямая. Область изменения $\tau$ в \eqref{deq2_33} становится ограниченной: $T_{\varepsilon, \sigma} =[\tau_*, \tau^*] \backslash \{\sigma\}$. Если $\varepsilon$ достаточно близко к единице, то $[c, a] \subset [\tau_*, \tau^*]$. При некотором $\tau = \tau_0$ существует точка возврата, причем так как $\varphi(\sigma) > 0$, то $\tau_0 < \sigma$ (см. \eqref{deq3_4}). Поскольку точка возврата лежит здесь вне основного прямоугольника, то на самом деле $\tau_0 < c$. При $\tau \to \sigma$ имеем первый случай \eqref{deq4_6}. Характер экстремальных точек такой же, как и на рис.~\ref{fig_4_5},\,{\it б} (точка $(\varepsilon, \sigma)$ разделяющих отрезков не пересекает). Кривая \eqref{deq2_33} принимает вид, представленный на рис.~\ref{fig_4_9},\,{\it а}. Образно говоря, произошло раздвоение ограниченного участка критической прямой.

\begin{figure}[ht]
\centering
\begin{tabular}{cc}
\includegraphics[width=0.35\linewidth]{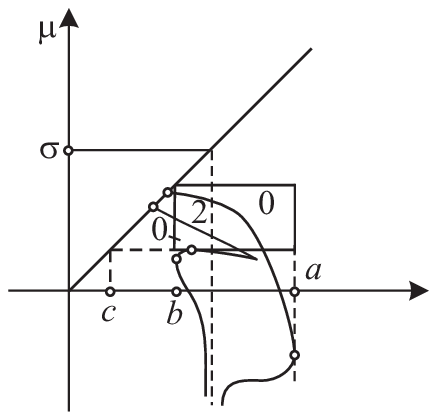} & \includegraphics[width=0.35\linewidth]{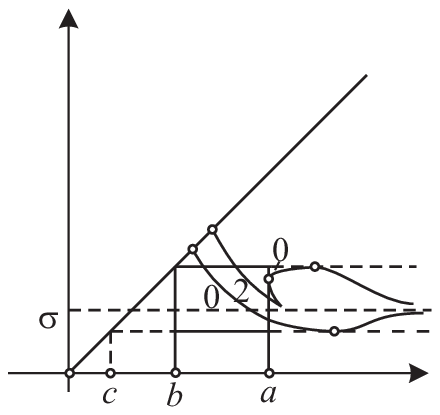}
\\
а) & б) \\
\includegraphics[width=0.35\linewidth]{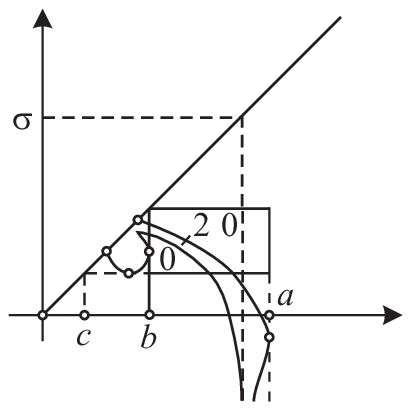} &
\includegraphics[width=0.35\linewidth]{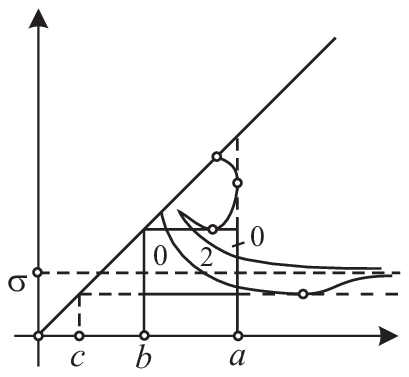}
\\
в) & г) \\
\end{tabular}
\caption{}\label{fig_4_9}
\end{figure}

Аналогичное явление происходит при переходе из точки $\varepsilon = 1$, $c < \sigma < b$ в область 14. Кривые рис.~\ref{fig_4_5},\,{\it а} трансформируются в кривую, показанную на рис.~\ref{fig_4_9},\,{\it б}.

Рассмотрим переход из точки $0 < \varepsilon <1$, $\sigma = a$ в область 5. Исчезает критическая гипербола. Точка касания кривой \eqref{deq2_33} и гиперболы при $\tau = a$ превращается в точку касания с прямой $\lambda = a$, возникают вертикальная асимптота при $\tau \to \sigma$ и точка возврата на участке $\tau < \sigma$. Последняя лежит вне прямоугольника \eqref{deq2_6}. Общий вид кривой \eqref{deq2_33} приведен на рис.~\ref{fig_4_9},\,{\it б}. Он получается из кривых на рис.~\ref{fig_4_6},\,{\it б} раздвоением неограниченного участка гиперболы.

Переход из точки $0 < \varepsilon <1$, $\sigma = c$ в область 18 сопровождается раздвоением неограниченного участка критической гиперболы на рис.~\ref{fig_4_6},\,{\it а}. Результат показан на рис.~\ref{fig_4_9},\,{\it г}.

\begin{figure}[ht]
\def\fs{0.15}
\def\mps{1}
\centering
\begin{minipage}[h]{\fs\linewidth}
\center{\includegraphics[width=\mps\linewidth]{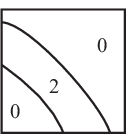} \\ а)}
\end{minipage}
\hfill
\begin{minipage}[h]{\fs\linewidth}
\center{\includegraphics[width=\mps\linewidth]{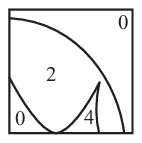} \\ б)}
\end{minipage}
\begin{minipage}[h]{\fs\linewidth}
\center{\includegraphics[width=\mps\linewidth]{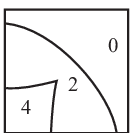} \\ в)}
\end{minipage}
\hfill
\begin{minipage}[h]{\fs\linewidth}
\center{\includegraphics[width=\mps\linewidth]{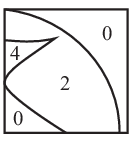} \\ г)}
\end{minipage}
\hfill
\begin{minipage}[h]{\fs\linewidth}
\center{\includegraphics[width=\mps\linewidth]{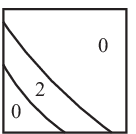} \\ д)}
\end{minipage}
\\
\begin{minipage}[h]{\fs\linewidth}
\center{\includegraphics[width=\mps\linewidth]{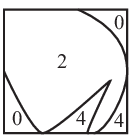} \\ е)}
\end{minipage}
\hfill
\begin{minipage}[h]{\fs\linewidth}
\center{\includegraphics[width=\mps\linewidth]{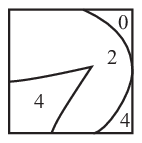} \\ ж)}
\end{minipage}
\begin{minipage}[h]{\fs\linewidth}
\center{\includegraphics[width=\mps\linewidth]{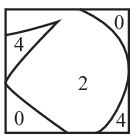} \\ з)}
\end{minipage}
\hfill
\begin{minipage}[h]{\fs\linewidth}
\center{\includegraphics[width=\mps\linewidth]{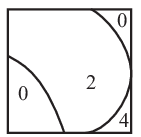} \\ и)}
\end{minipage}
\hfill
\begin{minipage}[h]{\fs\linewidth}
\center{\includegraphics[width=\mps\linewidth]{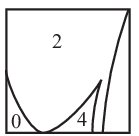} \\ к)}
\end{minipage}
\\
\begin{minipage}[h]{\fs\linewidth}
\center{\includegraphics[width=\mps\linewidth]{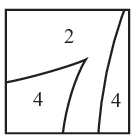} \\ л)}
\end{minipage}
\hfill
\begin{minipage}[h]{\fs\linewidth}
\center{\includegraphics[width=\mps\linewidth]{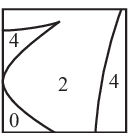} \\ м)}
\end{minipage}
\begin{minipage}[h]{\fs\linewidth}
\center{\includegraphics[width=\mps\linewidth]{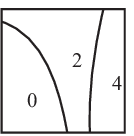} \\ н)}
\end{minipage}
\hfill
\begin{minipage}[h]{\fs\linewidth}
\center{\includegraphics[width=\mps\linewidth]{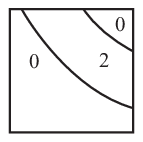} \\ о)}
\end{minipage}
\hfill
\begin{minipage}[h]{\fs\linewidth}
\center{\includegraphics[width=\mps\linewidth]{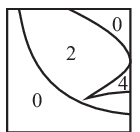} \\ п)}
\end{minipage}
\\
\begin{minipage}[h]{\fs\linewidth}
\center{\includegraphics[width=\mps\linewidth]{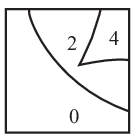} \\ р)}
\end{minipage}
\hfill
\begin{minipage}[h]{\fs\linewidth}
\center{\includegraphics[width=\mps\linewidth]{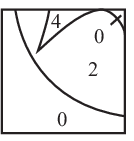} \\ с)}
\end{minipage}
\begin{minipage}[h]{\fs\linewidth}
\center{\includegraphics[width=\mps\linewidth]{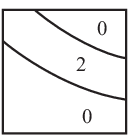} \\ т)}
\end{minipage}
\hfill
\begin{minipage}[h]{\fs\linewidth}
\center{\includegraphics[width=\mps\linewidth]{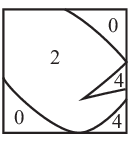} \\ у)}
\end{minipage}
\hfill
\begin{minipage}[h]{\fs\linewidth}
\center{\includegraphics[width=\mps\linewidth]{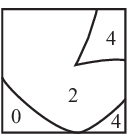} \\ ф)}
\end{minipage}
\\
\begin{minipage}[h]{\fs\linewidth}
\center{\includegraphics[width=\mps\linewidth]{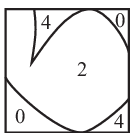} \\ х)}
\end{minipage}
\hfill
\begin{minipage}[h]{\fs\linewidth}
\center{\includegraphics[width=\mps\linewidth]{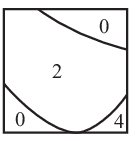} \\ ц)}
\end{minipage}
\begin{minipage}[h]{\fs\linewidth}
\center{\includegraphics[width=\mps\linewidth]{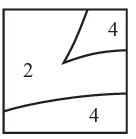} \\ ч)}
\end{minipage}
\hfill
\begin{minipage}[h]{\fs\linewidth}
\center{\includegraphics[width=\mps\linewidth]{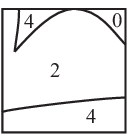} \\ ш)}
\end{minipage}
\hfill
\begin{minipage}[h]{\fs\linewidth}
\center{\includegraphics[width=\mps\linewidth]{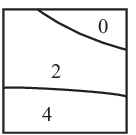} \\ щ)}
\end{minipage}
\caption{}\label{fig_4_10}
\end{figure}

Дальнейшие перестройки областей возможности движения полностью определяются по смыслу разделяющих кривых. Окончательный итог для областей 1 -- 25 указан в основном прямоугольнике на рисунке~\ref{fig_4_10},\,{\it а-щ} соответственно. При переносе ОВД на сферу следует учесть особенность равенств \eqref{deq2_7} на главных сечениях (кривая, трансверсальная стороне прямоугольника, будет ортогональна главному сечению сферы, а кривая, имеющая простое касание со стороной прямоугольника, образует с главным сечением острый угол). Назовем компонентой ОВД ее часть, отвечающую одной связной компоненте интегрального многообразия -- двумерному тору. Для большей наглядности возникающие здесь компоненты ОВД указаны отдельно на рис.~\ref{fig_4_11},\,{\it а-б}. Они устойчивы не только по отношению к параметрам $(\varepsilon, \sigma)$, но и в смысле устойчивости гладкого отображения. На разделяющих кривых реализуются промежуточные, неустойчивые виды ОВД. Во всех этих областях траектории представляют собой образ условно-периодической обмотки тора.

\begin{figure}[ht]
\def\fs{0.12}
\centering
\begin{tabular}{ccc}
\includegraphics[width=\fs\linewidth]{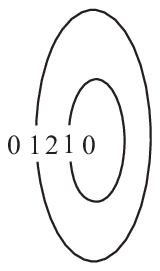}
&
\includegraphics[width=\fs\linewidth]{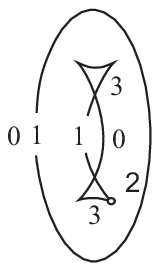}
&
\includegraphics[width=\fs\linewidth]{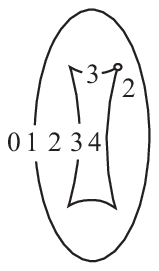}
\\
а) & б) & в)
\end{tabular}
\caption{}\label{fig_4_11}
\end{figure}

Области возможности движения на сфере, за исключением случая $\sigma = b$, всегда состоят из двух центрально-симметричных друг другу компонент одного из указанных видов. При этом в областях 8 -- 13, 19 -- 25 эти компоненты частично перекрываются, что объясняет наличие четырех допустимых скоростей и вне участков, заключенных внутри треугольных <<хвостиков>>. Случаи 1 -- 25 отличаются типом компонент, ориентацией их относительно сечений сферы координатными плоскостями и расположением участков суперпозиций компонент. Отметим, что такие участки всегда содержат пересечения сферы с одной из координатных осей.

\clearpage

\subsection{Комментарий к главе 4}\label{ssec45}

Обобщение случая Эйлера, открытое Н.Е.\,Жуковским \cite{bib20}, сравнительно мало изучено. В.Н.\,Ру\-ба\-нов\-ский полностью исследовал равномерные вращения свободного гиростата \cite{bib40}, их бифуркации и устойчивость. В работе \cite{bib72} предложен новый способ способ сведения задач к квадратурам, послуживший основой для классификации <<полодий>> -- подвижных годографов вектора угловой скорости тела-носителя.

Фазовая топология случая Эйлера\,--\,Пуансо изучалась в работах \cite{bib56,bib68} с несколько иных позиций, чем в настоящей главе. А именно, в шестимерном фазовом пространстве $SO(3) {\times} \bbR^3$ фиксировались постоянная интеграла энергии и три проекции вектора кинетического момента на неподвижные оси.

В результате получены интегральные поверхности следующих типов: $2S^1$, $(S^1 \ddot{\cup} S^1) {\times} S^1$, $2{\bf T}^2$. При нашем подходе классификация аналогична, если равна нулю постоянная площадей. Это совпадение не случайно: поверхность кинетического момента в $SO(3) {\times} \bbRR$ и поверхность $\{K = k, G = 0\}$ в $\mathfrak{M} = S^2 {\times} \bbRR$ диффеоморфны между собой. По природе своей, однако, эти многообразия совершенно различны: в первом случае имеем левоинвариантное сечение касательного расслоения группы $SO(3)$, во втором же -- расслоение касательных векторов к двумерной сфере.

Укажем тесно примыкающий к $\S$ \ref{ssec44} цикл работ \cite{bib60,bib36,bib37}, посвященный геометрическому и топологическому анализу интегрируемой задачи Клебша, интерпретируемой как задача о движении твердого тела вокруг центра масс под действием линейного силового поля (в качестве потенциальной функции берутся квадратичные члены разложения ньютоновского потенциала). Методика исследования вплоть до замены переменных, совпадает с используемой в настоящей работе. Набор областей возможности движения в задаче Клебша, естественно, богаче. Однако, новых типов бифуркаций интегральных многообразий, кроме некоторых особых случаев, не возникает.

В заключение отметим, что по теореме А.Н.\,Колмогорова \cite{bib04} при воздействии на тело малых потенциальных и гироскопических сил большинство траекторий в фазовом пространстве уравнений Эйлера\,--\,Пуассона будут условно-периодическим образом заполнять двумерные торы, близкие к интегральным случаям Эйлера. Поскольку в свою очередь большинство последних отображается на сферу устойчиво относительно малых деформаций, то в задачах, близких к изученной, значительная часть траекторий в теле неподвижных в пространстве векторов будет заполнять области, эквивалентные изображенным на рис.~\ref{fig_4_11}. При желании этому замечанию можно придать достаточную строгость.

\clearpage

\section{Фазовая топология\\  решения Чаплыгина -- Сретенского}\label{sec5}

\subsection{Равномерные вращения}\label{ssec51}

Предположим, что гиростат, движение которого описывается системой \eqref{beq4_29}, подчинен следующим условиям: моменты инерции удовлетворяют соотношениям $A_1 = A_2 = 4A_3$, центр масс находится в экваториальной плоскости эллипсоида инерции, гиростатический момент направлен по оси динамической симметрии $(\lambda_1 = \lambda_2 = 0)$. Поместим гиростат в поле силы тяжести. Выберем подвижные оси так, чтобы центр масс лежал на первой из них. Полагая вектор $\mbs{\nu}_0$ направленным вертикально вниз, запишем потенциальную энергию системы в виде $\Pi(\mbs{\nu}) = -\gamma \nu_1$, где $\gamma > 0$ есть произведение веса гиростата на расстояние от центра масс до неподвижной точки.

Назначим $\sqrt{\gamma / A_3}$ единицей измерения угловой скорости. Обратная величина $\sqrt{A_3 / \gamma}$ становится тем самым единицей измерения времени. Полагая $\lambda = \lambda_3 / \sqrt{\gamma A_3}$, перепишем уравнения \eqref{beq4_29} в безразмерных переменных:
\begin{eqnarray}
&    4 \dot{\omega}_1 = 3 \omega_2 \omega_3 - \lambda \omega_2, \quad 4\dot{\omega}_2 = -3 \omega_3 \omega_1 + \lambda \omega_1 - \omega_3, \quad \dot{\omega_3} = \nu_2,\label{eeq1_1}\\
&    \dot{\nu_1} = \nu_2 \omega_3 - \nu_3 \omega_2, \quad  \dot{\nu_2} = \nu_3 \omega_1 - \nu_1 \omega_3, \quad  \dot{\nu_3} = \nu_1 \omega_2 - \nu_2 \omega_1.\label{eeq1_2}
\end{eqnarray}
Интеграл энергии \eqref{beq3_12} с точностью до постоянного множителя $\gamma$, имеет вид
\begin{equation}\label{eeq1_3}
    \ds H = 2(\omega_1^2 + \omega_2^2) + \frac{1}{2}\omega_3^2 - \nu_1.
\end{equation}
Будем рассматривать лишь движения с нулевой постоянной интеграла площадей \eqref{beq4_27}
\begin{equation}\label{eeq1_4}
    4(\omega_1 \nu_1 + \omega_2 \nu_2) + (\omega_3 + \lambda) \nu_3 = 0.
\end{equation}
На множестве \eqref{eeq1_4} система \eqref{eeq1_1}, \eqref{eeq1_2} обладает дополнительным, в терминологии С.А.\,Чап\-лы\-ги\-на -- частным, интегралом
\begin{equation}\label{eeq1_5}
    K = 2(\omega_3 - \lambda)(\omega_1^2 + \omega_2^2) + 2 \omega_1 \nu_3.
\end{equation}
Он впервые отмечен в работе Л.Н.\,Сретенского \cite{bib45}.

Изучим неподвижные точки уравнений Эйлера\,--\,Пуассона, отвечающие равномерным вращениям тела-носителя (в частности, положениям равновесия).

Пусть $(\mbs{\nu}, \mbs{\omega})$ -- стационарное решение. Тогда
\begin{eqnarray}
& &    (3 \omega_3 - \lambda)\omega_2 = 0,\label{eeq1_6}\\
& &    (3 \omega_3 - \lambda)\omega_1 + \nu_3 = 0,\label{eeq1_7}\\
& &    \nu_2 = 0,\label{eeq1_8}\\
& &    \nu_3 \omega_2 - \nu_1 \omega_2 = 0,\label{eeq1_9}\\
& &    \nu_3 \omega_1 - \nu_1 \omega_3 = 0.\label{eeq1_10}
\end{eqnarray}
При всех значениях $\lambda$ эти условия выполнены в точках
\begin{equation}\label{eeq1_11}
    \omega_1 = \omega_2 = \omega_3 = 0, \quad \nu_1 = \pm 1, \quad \nu_2 = \nu_3 = 0.
\end{equation}
Им соответствуют постоянные интегралов \eqref{eeq1_3}, \eqref{eeq1_5}
\begin{equation}\label{eeq1_12}
    h = \mp 1, \qquad k = 0.
\end{equation}

Допуская, что $3 \omega_3 - \lambda = 0$, по формулам \eqref{eeq1_7} -- \eqref{eeq1_10}  находим $\nu_2 = \nu_3 = 0$, $\omega_2 = \omega_3 = 0$. Используя \eqref{eeq1_4}, приходим к \eqref{eeq1_11}. Пусть $3 \omega_3 - \lambda \neq 0$. Из равенств \eqref{eeq1_6}, \eqref{eeq1_7} следует
\begin{eqnarray}
&    \omega_2 = 0,\label{eeq1_13}\\
&    \ds \omega_1 = -\frac{\nu_3}{3 \omega_3 - \lambda}.\label{eeq1_14}
\end{eqnarray}
Случай $\nu_3 = 0$ вновь приводит к равенствам \eqref{eeq1_11}. Полагая $\nu_3 \neq 0$, из \eqref{eeq1_4}, \eqref{eeq1_3}, \eqref{eeq1_14} получим
\begin{equation}\label{eeq1_15}
    \ds \nu_1 = \frac{1}{4}(3 \omega_3 - \lambda)(\omega_3 + \lambda).
\end{equation}
Теперь из \eqref{eeq1_10} вытекает
\begin{equation}\label{eeq1_16}
    \ds \pm \nu_3 = \frac{1}{2}\sqrt{-\omega_3(\omega_3 + \lambda)(3 \omega_3 - \lambda)^2}.
\end{equation}
Условие единичности вектора $\mbs{\nu}$ с учетом \eqref{eeq1_8} определяет величину $\omega_3$:
\begin{equation}\label{eeq1_17}
    (\omega_3 + \lambda)(3 \omega_3 - \lambda)^3 + 16 = 0.
\end{equation}
При этом согласно \eqref{eeq1_16} должно быть
\begin{equation}\label{eeq1_18}
    \omega_3(\omega_3 + \lambda) \ls 0.
\end{equation}
Легко видеть, что при $\lambda^2 < 4/3$ уравнение \eqref{eeq1_17} корней не имеет. В случае ${\lambda^2 = 4/3}$ корень один: $\omega_3 = -2\lambda / 3$. Если же
\begin{equation}\label{eeq1_19}
    4/3 < \lambda^2 < 4,
\end{equation}
то оба корня \eqref{eeq1_17} удовлетворяют строгому неравенству \eqref{eeq1_18}. При $\lambda^2 = 4$ один из корней нулевой, и соответствующая стационарная точка совпадает с точкой \eqref{eeq1_11}, взятой с нижним знаком. Наконец, когда $\lambda^2 > 4$, условие \eqref{eeq1_18} выполнено лишь для одного корня уравнения \eqref{eeq1_17}.
Значения постоянных интегралов $H$ и $K$ в стационарной точке \eqref{eeq1_8}, \eqref{eeq1_13} -- \eqref{eeq1_16} таковы:
\begin{equation}\label{eeq1_20}
    \ds h = \frac{1}{4}(\lambda^2 - 4\omega_3 \lambda - 3 \omega_3^2), \qquad k = \omega_3^2(\omega_3 + \lambda).
\end{equation}

Очевидно, решения \eqref{eeq1_11} -- это положения равновесия тела (центр масс лежит на вертикали, проходящей через точку закрепления), а \eqref{eeq1_8}, \eqref{eeq1_13} -- \eqref{eeq1_16} суть равномерное вращение, при котором гиростатический момент лежит в вертикальной плоскости, содержащей центр масс и неподвижную точку.

\subsection{Бифуркационное множество\\  и его связь с разделением переменных}\label{ssec52}

Интегральное отображение определим согласно \eqref{ceq4_11}, где функции $H$ и $K$ имеют вид \eqref{eeq1_3}, \eqref{eeq1_5}, а постоянная $g = 0$.

Исследуем отдельно на критичность точки, в которых $\omega_1^2 + \omega_2^2 = 0$. Полагая вдобавок $\nu_3 = 0$, видим, что обращается в нуль градиент по переменным Эйлера\,--\,Пуассона функции $K$. Таким образом, значения
\begin{equation}\label{eeq2_1}
    k = 0, \quad h = -1
\end{equation}
всегда принадлежат бифуркационному множеству $\Sigma_0$. Луч $\eqref{eeq2_1}$ содержит и точки \eqref{eeq1_12}, в которых зависимы интегралы энергии и площадей.

Для нахождения остальных критических точек введем новые переменные
\begin{equation}\label{eeq2_2}
    x, y, \alpha, \beta,
\end{equation}
полагая
\begin{equation}\label{eeq2_3}
\begin{array}{c}
    \ds \omega_1 = \frac{1}{2}\sqrt{-xy} \sin \frac{\alpha + \beta}{2}, \quad \omega_2 = \frac{1}{2}\sqrt{-xy} \cos \frac{\alpha + \beta}{2}, \quad \omega_3 = x + y - \lambda, \\
    \ds \nu_1 = \frac{x \cos \alpha - y \cos \beta}{x - y}, \quad \nu_2 = \frac{-x \sin \alpha + y \sin \beta}{x - y}, \\
    \ds \nu_3 = \frac{2\sqrt{-xy}}{x - y} \sin \frac{\alpha - \beta}{2}.
\end{array}
\end{equation}
Тем самым удовлетворены соотношения \eqref{beq4_5} и \eqref{eeq1_4}. Областью изменения новых переменных назначим
\begin{equation}\label{eeq2_4}
    \ds x \gs 0, \quad y \ls 0, \quad 0 \ls \frac{\alpha - \beta}{2} < 2\pi, \quad 0 \ls \frac{\alpha + \beta}{2} < 2\pi.
\end{equation}
Замена \eqref{eeq2_3} теряет смысл, если
\begin{equation}\label{eeq2_5}
    x = y = 0.
\end{equation}
В соответствующих точках поверхности $G^{-1}(0) \subset \mathfrak{M}$ имеем
\begin{equation}\label{eeq2_6}
    \omega_1 = \omega_2 = 0, \quad \omega_3 = -\lambda.
\end{equation}
Образ этого множества при интегральном отображении представляет собой отрезок
\begin{equation}\label{eeq2_7}
    \ds k = 0, \quad  -1 + \frac{\lambda^2}{2} \ls h \ls 1 + \frac{\lambda^2}{2},
\end{equation}
принадлежащий уже отмеченному лучу \eqref{eeq2_1}, так что равенствами \eqref{eeq2_3} можно пользоваться без дальнейших оговорок.
Функции \eqref{eeq1_2}, \eqref{eeq1_5} примут вид
\begin{equation}\label{eeq2_8}
\begin{array}{c}
  \ds H = \frac{1}{2}(x^2 + xy + y^2) - \frac{x \cos \alpha - y \cos \beta}{x - y} - \lambda (x + y) + \frac{1}{2} \lambda^2, \\
  \ds K = \frac{1}{2}xy(x + y) + \frac{x \cos \alpha - y \cos \beta}{x - y} - \lambda xy.
\end{array}
\end{equation}

Отметим тождества, вытекающие из приведенных выражений:
\begin{equation}\label{eeq2_9}
\begin{array}{l}
    \ds x \cos \alpha = \frac{1}{2}x(x - \lambda)^2 - xH - K,\quad
    \ds y \cos \beta = \frac{1}{2}y(y - \lambda)^2 - yH - K.
\end{array}
\end{equation}

Частные производные функции \eqref{eeq2_8} связаны соотношениями
$$
(K_x, K_\alpha) = -y(H_x, H_\alpha), \qquad (K_y, K_\beta) = -x(H_y, H_\beta).
$$
Отсюда следует, что всякая точка, критическая для $H$, является критической для $K$, поэтому для исследования зависимости $H$ и $K$ достаточно ввести неопределенный множитель $\sigma$, потребовав $\grad_4K = \sigma \grad_4H$, или в развернутом виде
\begin{equation}\label{eeq2_10}
\begin{array}{c}
    (y - \sigma)H_x = 0, \qquad (y - \sigma)H_\alpha = 0, \\
    (x - \sigma)H_y = 0, \qquad (x - \sigma)H_\beta = 0.
\end{array}
\end{equation}

Пусть сначала $x = \sigma$. Система \eqref{eeq2_10} с учетом \eqref{eeq2_8} дает
\begin{equation}\label{eeq2_11}
\begin{array}{c}
    \cos \alpha = \pm 1, \\
    \ds x^2 - \frac{1}{2}xy - \frac{1}{2}y^2 + \frac{y}{x - y}(\cos \alpha - \cos \beta) - \lambda (x - y) = 0.
\end{array}
\end{equation}
Исключая с помощью \eqref{eeq2_11} $y$ и $\beta$ из выражений \eqref{eeq2_8}, находим соответствующие критические значения интегрального отображения:
\begin{equation}\label{eeq2_12}
\begin{array}{c}
  \ds h = \frac{3}{2} \sigma^2 - 2 \lambda \sigma + \frac{1}{2}\lambda^2 \mp 1, \qquad
  k = -\sigma^3 + \lambda \sigma^2.
\end{array}
\end{equation}
При этом, согласно \eqref{eeq2_4}, следует считать $\sigma \gs 0$.

Полагая $y = \sigma$, в силу очевидной симметрии имеем
\begin{equation}\label{eeq2_13}
\begin{array}{c}
    \cos \beta = \pm 1, \\
    \ds y^2 - \frac{1}{2}xy - \frac{1}{2}x^2 + \frac{x}{x - y}(\cos \alpha - \cos \beta) + \lambda (x - y) = 0,
\end{array}
\end{equation}
что снова приводит к \eqref{eeq2_12}, но уже при $\sigma \ls 0$.

В соответствии с возможным выбором знака удобно разбить множество \eqref{eeq2_12} на две бифуркационные кривые, обозначая второй параметр через $\sigma_*$:
\begin{equation}\label{eeq2_14}
    \ds h = \frac{3}{2}\sigma^2 - 2\lambda \sigma + \frac{1}{2}\lambda^2 - 1, \quad k = -\sigma^3 + \lambda \sigma^2,
\end{equation}
\begin{equation}\label{eeq2_15}
    \ds h = \frac{3}{2}\sigma_*^2 - 2\lambda \sigma_* + \frac{1}{2}\lambda^2 + 1, \quad k = -\sigma_*^3 + \lambda \sigma_*^2.
\end{equation}
(кривые $\Bif$ и $\Bif_*$ соответственно). Они, очевидно, конгруэнтны, поэтому достаточно отметить свойства лишь первой из них.

Кривая $\Bif$ выпукла для $\sigma < 2\lambda / 3$, вогнута для $\sigma > 2\lambda / 3$, имеет точку возврата $(h, k) = (-1 - \lambda^2 / 6, 4\lambda^3 / 27)$ при $\sigma = 2\lambda / 3$, точку минимума $(h, k) = (-1 + \lambda^2 / 2, 0)$ при $\sigma = 0$ и проходит через точку $(-1, 0)$ при $\sigma = \lambda$.

Пусть теперь в \eqref{eeq2_10} $x \neq \sigma, y \neq \sigma$. Тогда одновременно выполняются равенства \eqref{eeq2_11}, \eqref{eeq2_13}, и, следовательно, пара $(h, k)$ удовлетворяет соотношениям \eqref{eeq2_12}, если в последних заменить $\sigma$ на $x$, а затем на $y$. Так как $x \neq y$ и кривые $\Bif$ и $\Bif_*$ не имеют самопересечений, то при этом знаки в \eqref{eeq2_12} необходимо брать различными. Таким образом, $(h, k)$ -- общая точка бифуркационных кривых. Выясним, когда это возможно. Условие пересечения \eqref{eeq2_14}, \eqref{eeq2_15} запишем в виде
\begin{equation}\label{eeq2_16}
    \ds \sigma_*^2 - \sigma^2 = \frac{4}{3}[\lambda(\sigma_* - \sigma) - 1], \quad \sigma_*^3 - \sigma^3 = \lambda(\sigma_*^2 - \sigma^2).
\end{equation}
Полагая $\tau = \sigma_* + \sigma - \lambda$, находим из первого уравнения \eqref{eeq2_16} $\sigma_* - \sigma = 4 / (\lambda - 3 \tau)$, так что
\begin{equation*}
    \ds \sigma = \frac{1}{2}\Bigl(\tau + \lambda + \frac{4}{3\tau - \lambda} \Bigr), \quad \sigma_* = \frac{1}{2}\Bigl(\tau + \lambda - \frac{4}{3\tau - \lambda} \Bigr).
\end{equation*}
Найденные значения внесем в \eqref{eeq2_14}, \eqref{eeq2_15} и второе уравнение \eqref{eeq2_16}. Получим
\begin{equation*}
    \begin{array}{c}
       \ds h = \frac{1}{4}(\lambda^2 - 4\lambda \tau - 3\tau^2), \quad k = \tau^2(\tau + \lambda), \\
       (\tau + \lambda)(3\tau - \lambda)^3 + 16 = 0.
    \end{array}
\end{equation*}
Сравнивая эти соотношения с \eqref{eeq1_17}, \eqref{eeq1_20}, заключаем, что кривые $\Bif$ и $\Bif_*$ пересекаются между собой при условии $\lambda^2 \gs 4/3$ по точкам, соответствующим равномерным вращениям твердого тела.

\begin{figure}[ht]
\centering
\includegraphics[width=0.8\linewidth]{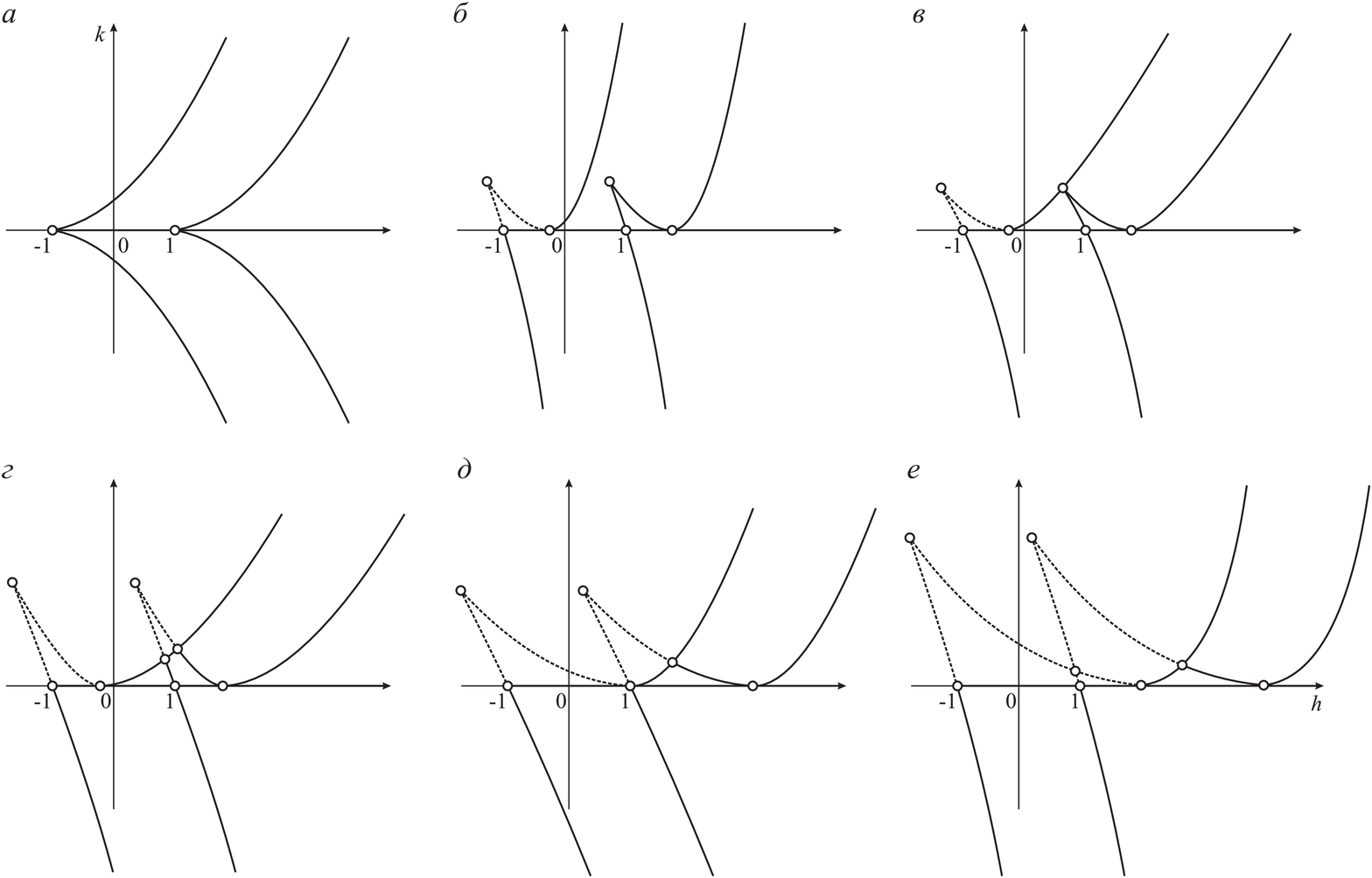}
\caption{}\label{fig_5_1}
\end{figure}

Виды бифуркационного множества $\Sigma_0(\lambda)$ показаны на рис.~\ref{fig_5_1},\,{\it а-e} для значений: а)~${\lambda = 0}$; б)~$0 < \lambda < 2/\sqrt{3}$; в)~$\lambda = 2/\sqrt{3}$; г)~$2/\sqrt{3} < \lambda < 2$; д)~$\lambda = 2$; е)~$2 < \lambda$. При замене $\lambda$ на $-\lambda$ происходит зеркальное отражение относительно оси $k = 0$. Случай $\lambda = 0$ отвечает классическому решению Горячева\,--\,Чап\-лы\-ги\-на. Подчеркнем, что $h$ и $k$ здесь безразмерные. В реальных величинах кривые $\Bif$ и $\Bif_*$ отстоят друг от друга на $2\gamma$. На участках, обозначенных пунктиром, как будет показано ниже, движений не существует.

Выполним подстановку \eqref{eeq2_3} в уравнения Эйлера\,--\,Пуассона. Система \eqref{eeq1_1} дает
\begin{equation}\label{eeq2_17}
\begin{array}{rcl}
  \ds y \frac{dx}{dt} + x \frac{dy}{dt} &=& \ds -\frac{xy}{x-y}(\sin \alpha - \sin \beta) \\
  \ds \frac{dx}{dt} + \frac{dy}{dt} &=& \ds -\frac{1}{x-y}(x\sin \alpha - y\sin \beta),
\end{array}
\end{equation}
а из первых двух уравнений \eqref{eeq1_2} имеем
\begin{equation}\label{eeq2_18}
\begin{array}{l}
  \ds x\frac{d\alpha}{dt} \sin \alpha - y \frac{d\beta}{dt} \sin \beta = \left(x + \frac{1}{2}y - \lambda \right) x \sin \alpha - \\
\qquad  \ds - \left( \frac{1}{2}x + y - \lambda \right) y \sin \beta + \\
\qquad  \ds + \frac{xy}{(x - y)^2}(\sin \alpha + \sin \beta)(\cos \alpha - \cos \beta), \\
  \ds x\frac{d\alpha}{dt} \cos \alpha - y \frac{d\beta}{dt} \cos \beta = \left(x + \frac{1}{2}y - \lambda \right) x \cos \alpha -\\
\qquad  \ds - \left( \frac{1}{2}x + y - \lambda \right) y \cos \beta + \\
\qquad  \ds + \frac{xy}{(x - y)^2}(\cos \alpha + \cos \beta)(\cos \alpha - \cos \beta).
\end{array}
\end{equation}
Решая уравнения \eqref{eeq2_17}, \eqref{eeq2_18} относительно производных, находим
\begin{equation}\label{eeq2_19}
\begin{array}{l}
  \ds \frac{dx}{dt} = -\frac{x}{x -y}\sin \alpha, \quad \frac{dy}{dt} = -\frac{y}{x -y}\sin \beta, \\
  \ds \frac{d\alpha}{dt} = x + \frac{1}{2}y - \lambda + \frac{y}{(x - y)^2}(\cos \alpha - \cos \beta), \\
  \ds \frac{d\beta}{dt} = \frac{1}{2}x + y - \lambda - \frac{x}{(x - y)^2}(\cos \alpha - \cos \beta).
\end{array}
\end{equation}
Эти уравнения образуют каноническую систему с гамильтонианом $H$, причем $x, y$ играют роль импульсов. Аналогичные \eqref{eeq2_2} канонические переменные в случае Горячева\,--\,Чаплыгина введены В.В.\,Козловым \cite{bib26}.

Зафиксируем постоянные $h$ и $k$ и исключим переменные $\alpha, \beta$ из \eqref{eeq2_18} с помощью тождеств \eqref{eeq2_9}. После преобразований получим
\begin{equation}\label{eeq2_20}
    \ds 2(x - y)\frac{dx}{dt} = -\sqrt{W(x)}, \quad 2(x - y)\frac{dy}{dt} = -\sqrt{W(y)}
\end{equation}
где
\begin{equation}\label{eeq2_21}
    W(z) = -Z(z) Z_*(z);
\end{equation}
\begin{equation}\label{eeq2_22}
    Z(z) = z(z - \lambda)^2 - 2(h + 1)z - 2k, \quad Z_*(z) = z(z - \lambda)^2 - 2(h - 1)z - 2k.
\end{equation}
Дискриминанты многочленов $Z(z)$ и $Z_*(z)$ таковы:
\begin{equation*}
\begin{array}{c}
  \ds \Delta = 4 \Bigl\{ 27 \Bigl[ k + \frac{2}{3}(h + 1) \lambda - \frac{\lambda^3}{27} \Bigr]^2 - 8 \Bigl[ h + 1 + \frac{\lambda^2}{6} \Bigr]^3 \Bigr\}, \\
  \ds \Delta_* = 4 \Bigl\{27 \Bigl[ k + \frac{2}{3}(h - 1) \lambda - \frac{\lambda^3}{27} \Bigr]^2 - 8 \Bigl[ h - 1 + \frac{\lambda^2}{6} \Bigr]^3 \Bigr\}.
\end{array}
\end{equation*}
Легко убедиться, что уравнение $\Delta = 0$ определяет на плоскости $(h, k)$ кривую \eqref{eeq2_14}. При этом параметр $\sigma$ играет роль кратного корня $Z$. То же самое относится к уравнению $\Delta_* = 0$ и кривой \eqref{eeq2_15}. Кроме того, при $k = 0$ (т.е. на луче \eqref{eeq2_1}) многочлены $Z(z)$ и $Z_*(z)$ имеют общий корень $z = 0$. Таким образом, бифуркационное множество интегрального отображения представляет собой некоторую часть кривой кратных корней многочлена \eqref{eeq2_21}, определяющего правые части уравнений \eqref{eeq2_20}. Это свойство вообще характерно для задач с разделяющимися переменными \cite{bib01} и может быть легко доказано в случае, если разделяющее преобразование не содержит постоянных интегрирования.

\subsection{Исследование основного многочлена}\label{ssec53}

В данной задаче естественно изучить проекции $V_{h,k}$ интегрального многообразия на плоскость переменных $(x, y)$. После этого, установив свойства отображения проектирования, будет нетрудно выяснить структуру самого $J_{h,k}$. Согласно формулам \eqref{eeq2_4}, \eqref{eeq2_20} область $V_{h,k}$ задается неравенствами
\begin{equation}\label{eeq3_1}
    x \gs 0, \quad y \ls 0, \quad W(x) \gs 0, \quad W(y) \gs 0.
\end{equation}

Очевидно, множество \eqref{eeq3_1} полностью ограничено числом, взаимным расположением и знаками корней многочлена $W(z)$. Получению этой вспомогательной информации и посвящен настоящий параграф.

Отметим сразу же, что согласно определениям \eqref{eeq2_21}, \eqref{eeq2_22}, $\deg W = 6$ и старший коэффициент отрицателен, так что область $V_{h,k}$ всегда ограничена. Кроме того, если все корни $W(z)$ одного знака, то $V_{h,k} = \varnothing$.

На рис.~\ref{fig_5_2} показаны области\footnote[1]{В силу отмеченной ранее симметрии при замене $(\lambda, k) \to (-\lambda, -k)$ все дальнейшие результаты излагаются для случая $\lambda \gs 0$.} с различным расположением корней многочлена $Z(z)$. Римские цифры соответствуют открытым связным множествам на плоскости $(h, k)$, в которых отсутствуют кратные корни, арабские -- участкам кривой $\Bif$ и прямой $k = 0$, узловые точки $N_1, N_2, N_3$ выделены особо.

\begin{figure}[ht]
\center{\includegraphics[width=0.3\linewidth]{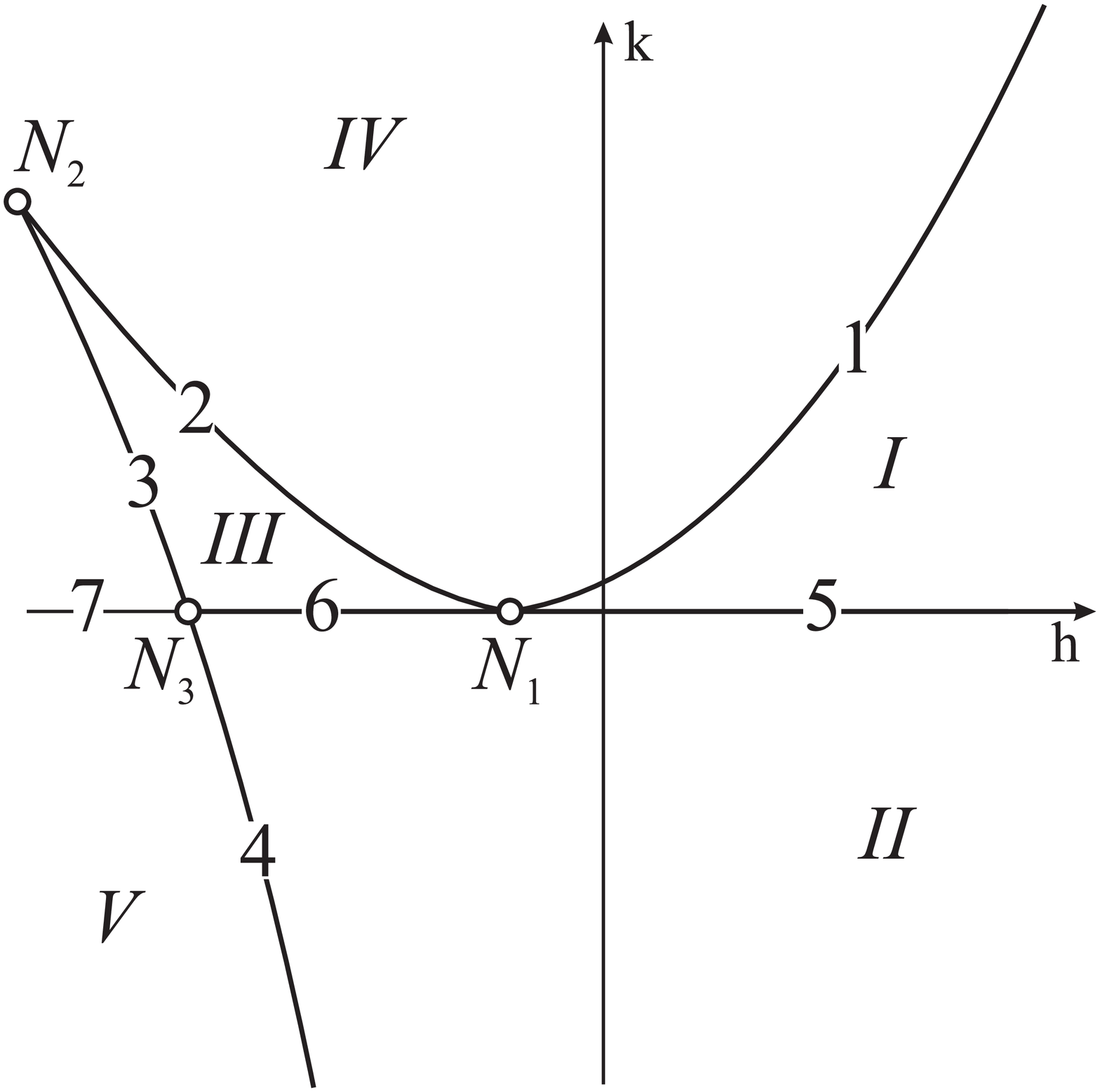}}
\caption{}\label{fig_5_2}
\end{figure}

Условимся кратный корень $Z(z)$ обозначать через $\zeta_0$. Если простой корень один, то обозначим его $\zeta$. Если же корней три, то обозначим их $\zeta_1, \zeta_2, \zeta_3$. Полная информация о многочлене $Z(z)$ приведена в табл. \ref{tab51}.

\begin{center}
\tabcolsep=1mm
\begin{tabular}{|c|c|c|c|c|c|}
  \multicolumn{6}{r}{\footnotesize Таблица \myt\label{tab51}} \\
  \hline
  Область & Корни $z$ & Область & Корни $z$ & Область & Корни $z$ \\
  \hline
  $I$ & $\zeta_1 < \zeta_2 < 0 < \zeta_3 $ & 1 & $\zeta_0 < 0 < \zeta$ & 6 & $0 = \zeta_1 < \zeta_2 <  \zeta_3 $ \\
  \hline
  $\ts{II}$ & $\zeta_1 < 0 < \zeta_2 < \zeta_3 $ & 2 & $0 < \zeta_0 < \zeta$ & 7 & $0 = \zeta$ \\
  \hline
  $\ts{III}$ & $0 < \zeta_1 < \zeta_2 <  \zeta_3 $ & 3 & $0 < \zeta < \zeta_0$ & $N_1$ & $0 = \zeta_0 < \zeta$ \\
  \hline
  $\ts{IV}$ & $0 < \zeta$ & 4 & $ \zeta < 0 < \zeta_0$ & $N_2$ & $0 < \zeta_0 = \zeta$ \\
  \hline
  $V$ & $\zeta < 0$ & 5 & $\zeta_1 < \zeta_2 = 0 <  \zeta_3 $ & $N_3$ & $0 = \zeta < \zeta_0$ \\
  \hline
\end{tabular}
\end{center}

Подобные обозначения для многочлена $Z_*(z)$ отметим индексом <<$*$>>. Результат совершенно аналогичен приведенному в табл. \ref{tab51}. Взаимное расположение корней $Z(z)$ и $Z_*(z)$ легко устанавливается с учетом равенств
\begin{equation}\label{eeq3_2}
    Z(0) = Z_*(0) = -2k, \quad Z_*(z) - Z(z) = 4z.
\end{equation}
Области на плоскости $(h, k)$ с различным типом множеств\eqref{eeq3_1} получают двойную нумерацию $\ts{I} - \ts{IV}_*, 1 - 2_*, 5 - 5_*, N_1 - 7_*, 5 - N_3^*$ и т.п. В наиболее полном наборе возможные варианты, соответствующие \eqref{eeq1_19}, приведены на рис.~\ref{fig_5_3}.

\begin{figure}[ht]
\center{\includegraphics[width=0.7\linewidth]{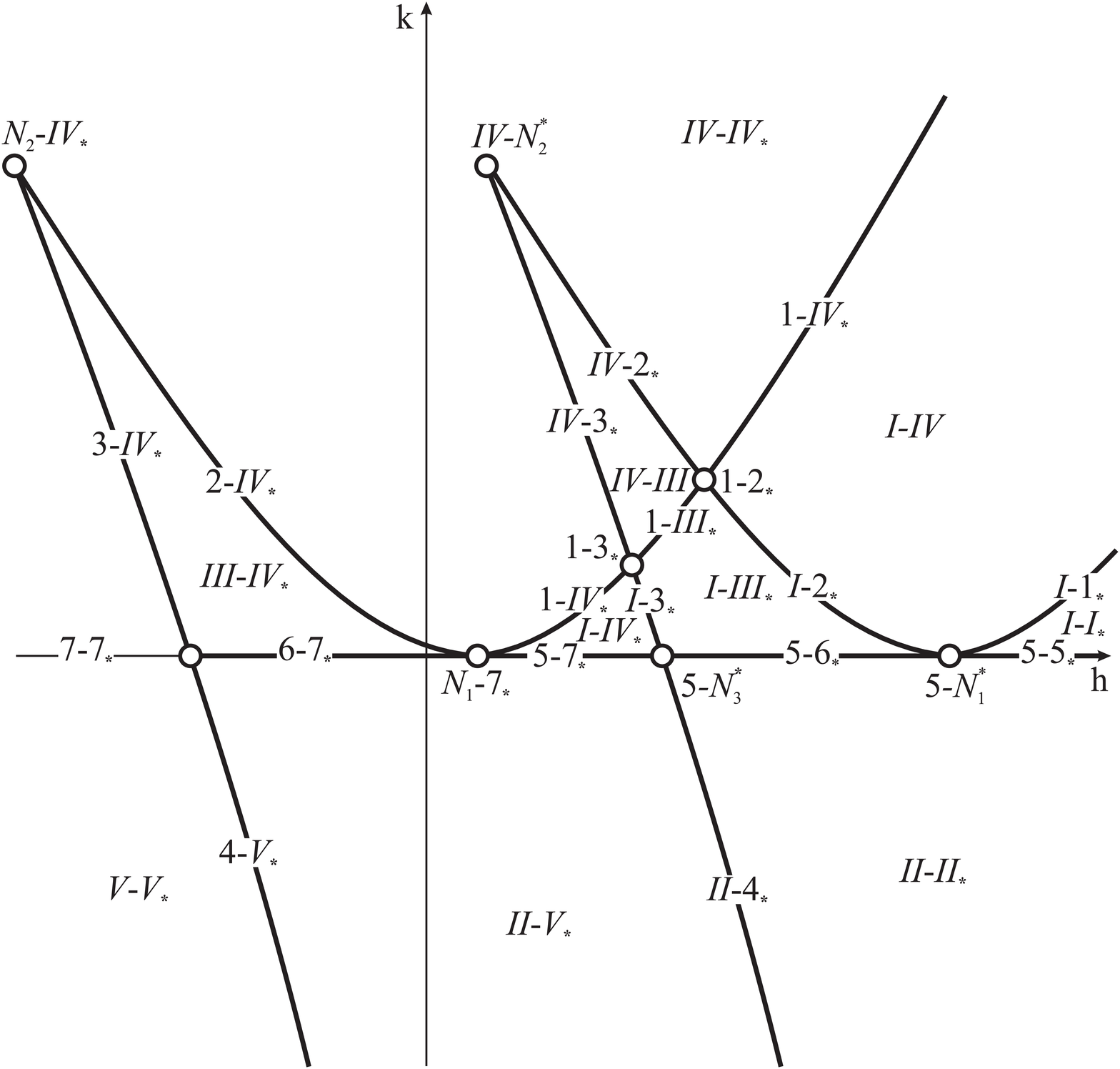}}
\caption{}\label{fig_5_3}
\end{figure}

Из \eqref{eeq3_2} следует, что, если у многочлена $Z(z)$ все корни одного знака, тот же знак имеют и все корни $Z_*(z)$, а, следовательно, и все корни $W(z)$. По сделанному выше замечанию при этом $V_{h, k} = \varnothing$. Таким образом, для всех значений $(h, k)$, шифр которых содержит $\ts{IV}, \ts{V}, 2, 3, N_2$, движений не существует (на рис.~\ref{fig_5_3} это точки, лежащие слева от всех утолщенных линий). В остальных случаях многочлен $W(z)$ имеет корни, указанные в табл.~\ref{tab52}.

Прежде чем перейти к анализу интегральных многообразий, условимся о некоторых обозначениях.

Будем говорить, что функция $z(t)$, определенная на $\bbR$, принадлежит классам $\Con(a)$, $\Osc[a, b]$, $\Asp[a, b)$ (или $\Asp(b, a]$), если она, соответственно, постоянна и равна $a$; совершает периодические колебания на отрезке $[a, b]$; принимает значение $a$ при единственном $t = t_0$, монотонна на $(-\infty, t_0), (t_0, +\infty)$ и стремится к $b$ при $t \to \pm \infty$.


\begin{center}
\tabcolsep=1mm
\renewcommand{\arraystretch}{1.2}
\small
\begin{tabular}{|c|c|c|c|}
  \multicolumn{4}{r}{\footnotesize Таблица \myt\label{tab52}} \\
  \hline
Область & Корни $W$ &  Область & Корни $W$ \\
\hline
  $I - I_*$ & $\zeta_1 < \zeta_1^* < \zeta_2^* < \zeta_2 < 0 < \zeta_3^* < \zeta_3$ & $I - 1_*$ & $\zeta_1 < \zeta_0^* < \zeta_2 < 0 < \zeta^* < \zeta_3$ \\
  \hline
  $I - \ts{IV}_*$ & $\zeta_1 < \zeta_2 < 0 < \zeta^* < \zeta_3$ & $4 - V_*$ & $\zeta < \zeta^* < 0 < \zeta_0$ \\
  \hline
  $I - \ts{III}_*$ & $\zeta_1 < \zeta_2 < 0 < \zeta_1^* < \zeta_2^* < \zeta_3^* < \zeta_3$ & $II - 4_*$ & $\zeta_1 < \zeta^* < 0 < \zeta_2 < \zeta_0^* < \zeta_3$ \\
  \hline
  $\ts{II} - V_*$ & $\zeta_1 < \zeta^* < 0 < \zeta_2 < \zeta_3$ & $N_3 - 7_*$ & $0 = \zeta^* = \zeta < \zeta_0$ \\
  \hline
  $\ts{II} - \ts{II}_*$ & $\zeta_1 < \zeta_1^* < 0 < \zeta_2 < \zeta_2^* < \zeta_3^* < \zeta_3$ & $6 - 7_*$ & $0 = \zeta^* = \zeta_1 < \zeta_2 < \zeta_3$ \\
  \hline
  $1 - \ts{IV}_*$ & $\zeta_0 < 0 < \zeta^* < \zeta$ & $N_1 - 7^*$ & $0 = \zeta^* = \zeta_0 < \zeta$ \\
  \hline
  $1 - 2_*$ & $\zeta_0 < 0 < \zeta_0^* < \zeta^* < \zeta$ & $5 - 7_*$ & $\zeta_1 < 0 = \zeta^* = \zeta_2 < \zeta_3$ \\
  \hline
  $1 - \ts{III}_*$ & $\zeta_0 < 0 < \zeta_1^* < \zeta_2^* < \zeta_3^* < \zeta$ & $5 - N_3^*$ & $\zeta_1 < 0 = \zeta^* = \zeta_2 < \zeta_0^* < \zeta_3$ \\
  \hline
  $1 - 3_*$ & $\zeta_0 < 0 < \zeta^* < \zeta_0^* < \zeta$ & $5 - 6_*$ & $\zeta_1 < 0 = \zeta_1^* = \zeta_2 < \zeta_2^* < \zeta_3^* < \zeta_3$ \\
  \hline
  $I - 2_*$ & $\zeta_1 < \zeta_2 < 0 < \zeta_0^* < \zeta^* < \zeta_3$ & $5 - N_1^*$ & $\zeta_1 < 0 = \zeta_0^* = \zeta_2 < \zeta^* < \zeta_3$ \\
  \hline
  $I - 3_*$ & $\zeta_1 < \zeta_2 < 0 < \zeta^* < \zeta_0^* < \zeta_3$ & $5 - 5_*$ & $\zeta_1 < \zeta_1^* < 0 = \zeta_2 = \zeta_2^* < \zeta_3^* < \zeta_3$\\
  \hline
\end{tabular}
\end{center}

Рассмотрим одну предельно простую ситуацию. Пусть задано уравнение
\begin{equation}\label{eeq3_3}
    \ds \left(\frac{dz}{dt}\right)^2 = P(z),
\end{equation}
где $P(z)$ -- многочлен, отрицательный на бесконечности. Покажем, как связаны его решения $z(t)$, множества $\{z:P(z) \gs 0\}$ и кривые, определяемые \eqref{eeq3_3} на плоскости $(z, dz/dt)$. Пусть $E$ -- связная компонента множества $\{z:P(z) \gs 0\}$, $\Gamma$ -- соответствующая связная компонента кривой \eqref{eeq3_3} на фазовой плоскости. Имеем следующие элементарные возможности.

1. $E = \{a \}$. Тогда $a$ -- корень многочлена $P$ четной кратности, $z(t) \in \Con(a)$, $\Gamma = \{ (a, 0) \}$.

2. $E = [a, b]$, где $a, b$ соседние простые корни $P$. Тогда $z(t) \in \Osc[a, b]$, кривая $\Gamma$ диффеоморфна окружности $S^1$.

3. $E = [a, b]$, где $a$ простой корень $P$, $b$ -- соседний с $a$ корень нечетной кратности больше единицы. Тогда либо $z(t) \in \Con(b)$, либо $z(t) \in \Asp[a, b)$. Кривая $\Gamma$ -- петля (топологическая окружность с угловой точкой). В этом случае пишем $\Gamma = S_\wedge^1$.

4. $E = [a, b]$, где $a$ и $b$ -- простые корни $P$, между которыми лежит корень $c$ четной кратности. Тогда $z(t) \in \Con(b) \cup \Asp[a, c) \cup \Asp(c, b]$ и $\Gamma = S^1 \dot{\cup} S^1$ (восьмерка).

Очевидно, $\Gamma = \hat{S}^0(E)$ -- расслоение нуль-мерных сфер, стянутых в точку над корнями $P(z)$. В системах с интегралом энергии эта конструкция хорошо известна \cite{bib41}.

Отрезок, ограниченный простыми корнями $P(z)$ и не содержащий других (случай 2), обозначим через $I$. В случае 3 пишем $E = I_\wedge$, а в случае 4 $E = I \dot{\cup} I$. Таким образом,
\begin{equation*}
\begin{array}{c}
  \hat{S}^0(\{ \cdot \}) = \{ \cdot \}, \qquad \hat{S}^0(I) = S^1, \qquad \hat{S}^0(I_\wedge) = S_\wedge^1, \qquad  \hat{S}^0(I \dot{\cup} I) = S^1 \dot{\cup} S^1.
\end{array}
\end{equation*}

\subsection{Интегральные многообразия и фазовые траектории}\label{ssec54}

Поскольку постоянная площадей все время равна нулю, опустим последний индекс в обозначении множества \eqref{ceq4_9}. Таким образом,
\begin{equation}\label{eeq4_1}
    J_{h, k} = \{(\mbs{\nu}, \mbs{\omega}) \in \mathfrak{M}: H(\mbs{\nu}, \mbs{\omega}) = h, \; K(\mbs{\nu}, \mbs{\omega}) = k, \; G(\mbs{\nu}, \mbs{\omega}) = 0 \}.
\end{equation}

Нетрудно убедиться, что рассматриваемый случай интегрируемости удовлетворяет условиям теоремы \ref{the345}. Поэтому, если на связной компоненте $J_{h, k}$ нет точек зависимости функций $H, K, G$, то она представляет собой двумерный тор с условно-периодическими траекториями. Остается выяснить число компонент при $(h, k) \notin \Sigma_0$ и установить, как устроены критические интегральные поверхности.

Обратимся к условиям \eqref{eeq2_20}. Введем <<приведенное время>> $\tau$, положив на фиксированной траектории
\begin{equation}\label{eeq4_2}
    \ds \frac{d}{d\tau} = 2[x(t) - y(t)]\frac{d}{dt}.
\end{equation}

Такая замена неправомочна лишь в случае \eqref{eeq2_5}. Соответствующая точка \eqref{eeq2_6} ни при каком $\mbs{\nu}$ не является стационарным решением уравнений Эйлера\,--\,Пуассона. Следовательно, тождество $x \equiv y \equiv 0$ невозможно.

Таким образом, траектория $(x(t), y(t))$ может лишь пересекать точку $(0, 0)$ при изолированных значениях $t$. Эти значения разбивают ось времени на конечное или бесконечное число интервалов, в которых замена \eqref{eeq4_2} возможна. Каждый интервал соответствует значениям $\tau \in (-\infty, +\infty)$, причем $\int \limits_{-\infty}^{+\infty}[x(\tau) - y(\tau)]d\tau$ сходится. Как отмечалось, эта ситуация возможна лишь на множестве \eqref{eeq2_7}. К ее детальному исследованию обратимся далее.

Вернемся к общему случаю. Из второго соотношения \eqref{eeq3_2} следует, что в области \eqref{eeq3_1}
\begin{equation*}
  Z(x) \ls 0 \ls Z_*(x), \qquad z_*(y) \ls 0 \ls Z(y).
\end{equation*}
Положим
\begin{equation}\label{eeq4_3}
  X = \sqrt{-Z(x)}, \quad X_* = \sqrt{Z_*(x)}, \quad Y = \sqrt{Z(y)}, \quad Y_* = \sqrt{-Z_*(y)}.
\end{equation}
Обозначая штрихом дифференцирование по $\tau$, запишем систему \eqref{eeq2_20} в виде
\begin{equation}\label{eeq4_4}
  x' = -XX_*, \quad y' = -YY_*.
\end{equation}
В пространстве $\bbR^4 = \bbR^2(x, x') {\times} \bbR^2(y, y')$ эти уравнения задают поверхность $\Gamma$, естественным образом представимую как произведение $\Gamma = \Gamma_x {\times} \Gamma_y$, где $\Gamma_x \subset \bbR^2(x, x')$, $\Gamma_y \subset \bbR^2(y, y')$. Эта поверхность есть образ интегрального многообразия \eqref{eeq4_1} при отображении
\begin{equation}\label{eeq4_5}
  D: (\mbs{\nu}, \mbs{\omega}) \mapsto (x, y, x', y'),
\end{equation}
определенном формулами \eqref{eeq2_3}, \eqref{eeq2_19}. Всюду на множестве $G^{-1}(0)$, кроме точек \eqref{eeq2_5}, отображение \eqref{eeq4_5} -- локальный диффеоморфизм. В частности,
\begin{equation}\label{eeq4_6}
  \left. D \right|_{J_{h,k}}: J_{h,k} \to \Gamma
\end{equation}
является накрытием. Выясним его кратность.

Из соотношений \eqref{eeq2_9}, \eqref{eeq2_22}, \eqref{eeq4_3} находим
\begin{equation}\label{eeq4_7}
  \ds 2\sqrt{x} \sin \frac{\alpha}{2} = X, \quad 2\sqrt{x} \cos \frac{\alpha}{2} = X_*, \quad 2\sqrt{-y} \sin \frac{\beta}{2} = Y, \quad 2\sqrt{-y} \cos \frac{\beta}{2} = Y_*.
\end{equation}
Подстановка этих значений в формулу \eqref{eeq2_3} приводит к выражениям
\begin{equation}\label{eeq4_8}
\begin{array}{c}
  \ds \omega_1 = \frac{1}{8}(XY_* + X_*Y), \quad \omega_2 = \frac{1}{8}(X_*Y_* - XY), \\
  \omega_3 = x + y - \lambda, \\
  \ds \nu_1 = 1 - \frac{X^2 + Y^2}{2(x-y)} =  \frac{X_*^2 + Y_*^2}{2(x-y)} - 1,\\
  \ds \nu_2 = -\frac{1}{2(x-y)}(XX_* + YY_*), \quad \nu_3 = \frac{1}{2(x-y)}(XY_* - X_*Y).
\end{array}
\end{equation}

Пусть фиксирован набор
\begin{equation}\label{eeq4_9}
  (x, y, x', y'),
\end{equation}
удовлетворяющий условиям \eqref{eeq3_1}, \eqref{eeq4_4}. Точка \eqref{eeq4_9} не изменится, если произвести одновременную замену знака у радикалов $X, X_*$ или $Y, Y_*$. Однако при такой однократной замене меняют знак величины $\omega_1, \omega_2, \nu_3$ в \eqref{eeq4_8}. Таким образом, отображение \eqref{eeq4_6} двулистно. При этом возможны две ситуации: связная компонента $\Gamma$ накрывается либо двумя компонентами $J_{h,k}$, либо одной двукратно.

Рассмотрим некритический случай $(h, k) \notin \Sigma_0$. Пусть $E = E_x {\times} E_y$ -- компонента связности поверхности \eqref{eeq4_4}. В обозначениях предыдущего параграфа $E_x = \hat{S}^0(I_x)$, $E_y = \hat{S}^0(I_y)$, где $I_x$, $I_y$ -- отрезки, концами которых являются соседние (соответственно положительные и отрицательные) простые корни многочлена $W(z)$.

\begin{propos}\label{pro541}
Если хотя бы одна из переменных $x, y$ принимает значения между корнями $Z(z)$ и $Z_*(z)$, то $D^{-1}(E)$ связно. В противном случае $D^{-1}(E)$ состоит из двух компонент.
\end{propos}
\begin{proof}
Пусть, например, функция $x(\tau)$, удовлетворяющая \eqref{eeq4_4}, колеблется между корнями $Z(z)$ и $Z_*(z)$. Зафиксируем $y$ и дадим переменной $x$ пробежать свой период. Оба радикала $X, X_*$ изменят знак. Следовательно, существует непрерывный переход вдоль $J_{h, k}$ от точки $(\omega_1, \omega_2, \omega_3, \nu_1, \nu_2, \nu_3)$ к точке $(-\omega_1, -\omega_2, \omega_3, \nu_1, \nu_2, -\nu_3)$.

Если же при изменении $x$ меняет знак одна из величин $X, X_*$, а при изменении $y$ -- одна из $Y, Y_*$, то никаким непрерывным путем в области $I_x {\times} I_y$ нельзя изменить знак $\omega_1, \omega_2, \nu_3$, что и требовалось доказать.
\end{proof}

\begin{remark}\label{rem541}
Аналогичное правило имеет место и в случае, когда одна из переменных постоянна ($I_x$ либо $I_y$ вырождается в точку). При этом $E = S^1$, а $D^{-1}(E) = S^1$, если оставшаяся переменная колеблется между корнями $Z$ и $Z_*$ и $D^{-1}(E) = 2S^1$ в противном случае.
\end{remark}

Обращаясь к табл. \ref{tab52}, видим, что удвоение прообраза происходит при $k \neq 0$ в области $\ts{I} - \ts{III}_*$ и согласно замечанию \ref{rem541} в критических случаях $1 - 2_*$, $1 - \ts{III}_*$, $\ts{I} - 2_*$. Кроме того, нечто подобное должно быть в точках $1 - 3_*$, $\ts{I} - 3_*$. На рис.~\ref{fig_5_4} показана трансформация множества $\Gamma_x$ при переходах $(\ts{I} - \ts{III}_*) \to (\ts{I} - 3_*) \to (\ts{I} - \ts{IV}_*)$ и $(1 - \ts{III}_*) \to (1 - 3_*) \to (1 - \ts{IV}_*)$. В точке бифуркации естественно обозначить $\Gamma_x = S^1 \dot{\cup} S^1 \dot{\cup} S^1$. В процессе обоих переходов $\Gamma_y$ топологически не меняется: $\Gamma_y = S^1$ при первом и $\Gamma_y = \{ \cdot \}$ при втором. Таким образом, при первом переходе бифуркация $J_{h, k}$ имеет вид
\begin{equation*}
  3{\bf T}^2 \to S^1 {\times} (S^1 \dot{\cup} S^1 \dot{\cup} S^1) \to {\bf T}^2.
\end{equation*}

\begin{figure}[ht]
\center{\includegraphics[width=0.8\linewidth]{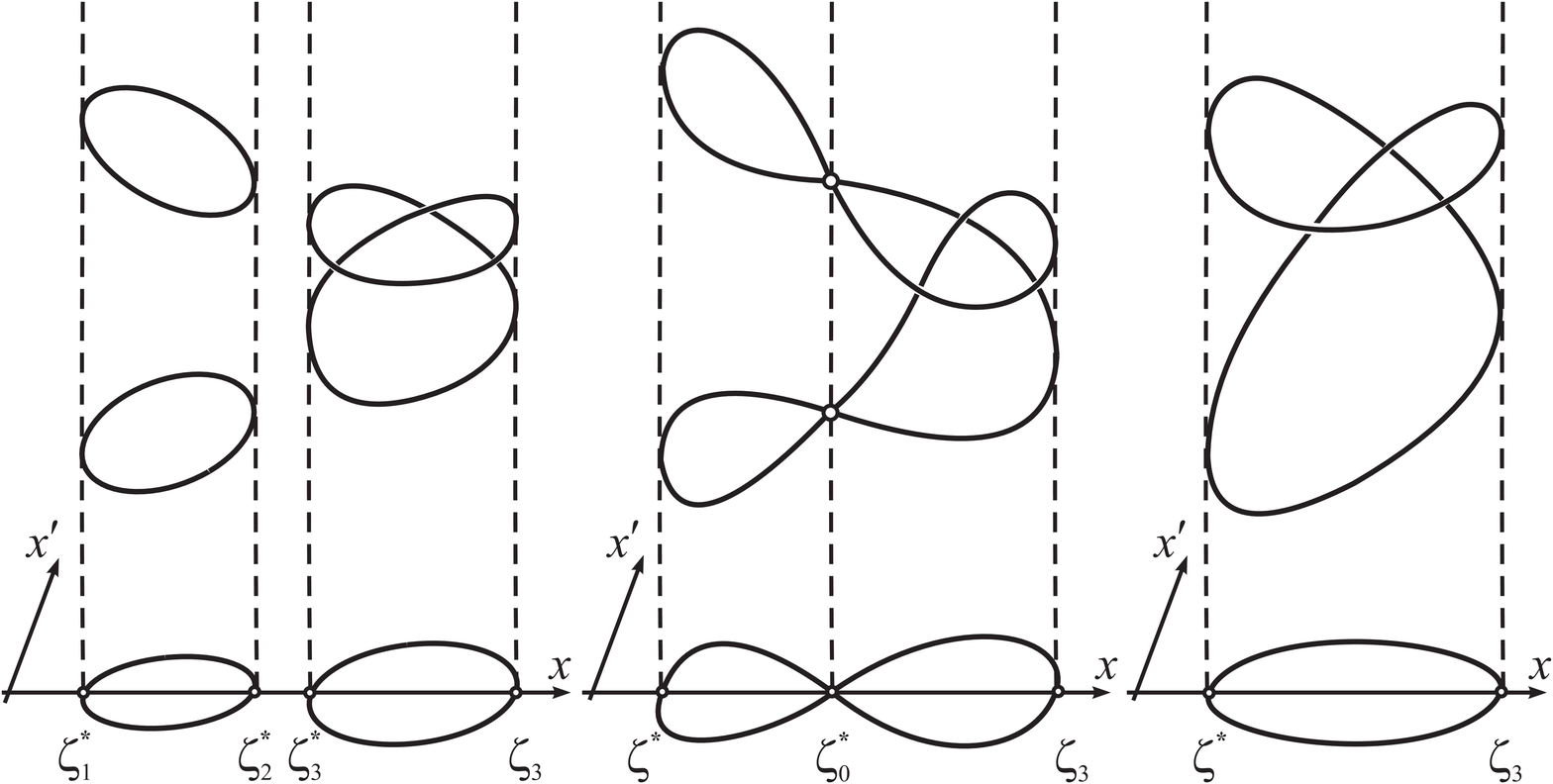}}
\caption{}\label{fig_5_4}
\end{figure}

Сводка результатов для $k \neq 0$ в наиболее полном варианте \eqref{eeq1_19} дана в табл. \ref{tab53}. Для остальных значений $\lambda$ интегральные многообразия определяются по теореме \ref{the312}. Критические поверхности $J_{h, k}$ также непрерывно зависят от $\lambda$ до тех пор, пока сохраняется расположение корней многочлена $W(z)$.

В табл. \ref{tab53} добавлена точка $1 - N_2$, существующая, когда
\begin{equation}\label{eeq4_10}
  \lambda = 2\sqrt{3},
\end{equation}
и не имеющая аналога при условии \eqref{eeq1_19}. Через $S^1_{\wedge \wedge}$ обозначена топологическая окружность с двумя угловыми точками. Рассмотрим соответствующие движения аналитически. Здесь
\begin{equation}\label{eeq4_11}
  \ds h = \frac{7}{9}, \quad k = \frac{32}{81 \sqrt{3}}.
\end{equation}
При условиях \eqref{eeq4_10}, \eqref{eeq4_11} из второго уравнения  \eqref{eeq4_4} получаем
\begin{equation}\label{eeq4_12}
  \ds y(t) \equiv -\frac{2}{3 \sqrt{3}},
\end{equation}
а первое с учетом \eqref{eeq4_2} принимает вид
\begin{equation*}
  \ds \frac{dx}{dt} = -\frac{1}{2}\left( x - \frac{4}{3 \sqrt{3}} \right)\sqrt{\left(\frac{16}{3 \sqrt{3}} - x \right)\left( x - \frac{4}{3 \sqrt{3}} \right)}.
\end{equation*}
Отсюда либо
\begin{equation}\label{eeq4_13}
  \ds x(t) \equiv -\frac{4}{3 \sqrt{3}},
\end{equation}
либо
\begin{equation}\label{eeq4_14}
  \ds x(t) = \frac{4}{3 \sqrt{3}} \frac{t^2+12}{t^2+3}.
\end{equation}

Подстановка \eqref{eeq4_12}, \eqref{eeq4_13} в \eqref{eeq4_8} приводит к стационарным решениям
\begin{equation}\label{eeq4_15}
  \ds (\mbs{\nu}, \mbs{\omega}) = \left(-\frac{1}{3}, 0, \pm\frac{2\sqrt{2}}{3}, \pm\frac{1}{3}\sqrt\frac{2}{3}, 0, -\frac{4}{3\sqrt{3}}\right).
\end{equation}
Это, конечно, равномерные вращения \eqref{eeq1_8}, \eqref{eeq1_13} - \eqref{eeq1_16}.

В случае \eqref{eeq4_12}, \eqref{eeq4_14} получаем следующую пару решений:
\begin{equation}\label{eeq4_16}
\begin{array}{lll}
  \ds \omega_1 = \pm\frac{1}{3}\sqrt{\frac{2}{3}} \frac{t(t^2+9)}{(t^2+3)^{3/2}}, & \ds \omega_2 = \frac{2\sqrt{2}}{(t^2+3)^{3/2}}, & \ds \omega_3 = \frac{4}{3\sqrt{3}} \frac{6 - t^2}{t^2 + 3}, \\
  \ds \nu_1 = \frac{27 - 18t^2 - t^4}{3(t^2 +2)^2},  & \ds \nu_2 = \mp \frac{8 \sqrt{3}t}{(t^2+3)^2},  & \ds \nu_3 = \pm \frac{2\sqrt{2}t}{3(t^2+3)^{1/2}}.
\end{array}
\end{equation}
Каждое из них имеет точки \eqref{eeq4_15} своими пределами при $t \to \pm \infty$.
Решения \eqref{eeq4_16} интересны тем, что переменные Эйлера\,--\,Пуассона являются элементарными алгебраическими функциями времени.

\begin{center}
\tabcolsep=0.8mm
\begin{tabular}{|c|c|c|c|c|c|}
  \multicolumn{6}{r}{\footnotesize Таблица \myt\label{tab53}} \\
  \hline
   & $x(t)$ & $y(t)$ & $\Gamma_x$ & $\Gamma_y$ & $J_{h, k}$ \\
  \hline
  $I - I_*$ &
  $\Osc[\zeta_3^*, \zeta_3]$ &
  \begin{tabular}{c}$\Osc[\zeta_1, \zeta_1^*]$, \\$\Osc[\zeta_2^*, \zeta_2]$\end{tabular}
  & $S^1$ & $2S^1$ & $2{\bf T}^2$ \\
  \hline
  $I - \ts{IV}_*$ & $\Osc[\zeta^*, \zeta_3]$ & $\Osc[\zeta_1, \zeta_2]$ & $S^1$ & $S^1$ & ${\bf T}^2$ \\
  \hline
  $I - \ts{III}_*$  &
  \begin{tabular}{c}$\Osc[\zeta_1^*, \zeta_2^*]$, \\$\Osc[\zeta_3^*, \zeta_3]$ \end{tabular}
  & $\Osc[\zeta_1, \zeta_2]$ & $2S^1$ & $S^1$ & $3{\bf T}^2$ \\
  \hline
  $\ts{II} - V_*$ & $\Osc[\zeta_2, \zeta_3]$ & $\Osc[\zeta_1, \zeta^*]$ & $S^1$ & $S^1$ & ${\bf T}^2$ \\
  \hline
  $\ts{II} - \ts{II}_*$  &
  \begin{tabular}{c}$\Osc[\zeta_2, \zeta_2^*]$, \\$\Osc[\zeta_3^*, \zeta_3]$ \end{tabular}
  & $\Osc[\zeta_1, \zeta_1^*]$
  & $2S^1$ & $S^1$ & $2{\bf T}^2$ \\
  \hline
  $1 - \ts{IV}_*$ & $\Osc[\zeta^*, \zeta]$ & $\Con(\zeta_0)$ & $S^1$ & $\{\cdot\}$ & $S^1$ \\
  \hline
  $1 - 2_*$ &
  \begin{tabular}{c}$\Con(\zeta_0)$,\\ $\Osc[\zeta^*, \zeta]$\end{tabular} & $\Con(\zeta_0)$ & $\{\cdot\} \cup S^1$ & $\{\cdot\}$ & $2\{\cdot\} \cup S^1$ \\
  \hline
  $1 - \ts{III}_*$  &
  \begin{tabular}{c}$\Osc[\zeta_1^*, \zeta_2^*]$,\\ $\Osc[\zeta_3^*, \zeta]$ \end{tabular}
  & $\Con(\zeta_0)$ & $2S^1$ & $\{\cdot\}$ & $3S^1$ \\
  \hline
  $1 - 3_*$  & \begin{tabular}{c}$\Con(\zeta_0)$,\\$ \Asp[\zeta^*, \zeta_0^*)$,\\ $\Asp(\zeta_0^*, \zeta]$ \end{tabular}& $\Con(\zeta_0)$ & $S^1 \dot{\cup} S^1$ & $\{\cdot\}$ & $S^1 \dot{\cup} S^1 \dot{\cup} S^1$  \\
  \hline
  $I - 2_*$  &
  \begin{tabular}{c}$\Con(\zeta_0)$, \\$\Osc[\zeta^*, \zeta_3]$ \end{tabular}
  & $\Osc[\zeta_1, \zeta_2]$ & $\{ \cdot \} \cup S^1 $ & $S^1$ & $2S^1 \cup {\bf T}^2$  \\
  \hline
  $I - 3_*$ & \begin{tabular}{c}$\Asp[\zeta^*, \zeta_0)$, \\$\Con(\zeta_0^*)$, \\ $\Asp(\zeta_0^*, \zeta_3]$ \end{tabular}& $\Osc[\zeta_1, \zeta_2]$ & $S^1 \dot{\cup} S^1$ & $S^1$ & \begin{tabular}{c}$(S^1 \dot{\cup} S^1 \dot{\cup} S^1) {\times}$ \\ ${\times} S^1$ \end{tabular}\\
  \hline
  $I - 1_*$ & $\Osc[\zeta^*, \zeta_3]$ & \begin{tabular}{c}$\Asp[\zeta_1, \zeta_0^*)$,\\ $\Con(\zeta_0^*)$,\\ $\Asp(\zeta_0^*, \zeta_2]$ \end{tabular}& $S^1$ & $S^1 \dot{\cup} S^1$ & $S^1 {\times} (S^1 \dot{\cup} S^1)$ \\
  \hline
  $4 - V_*$ & $\Osc[\zeta^*, \zeta_3]$ & $\Osc[\zeta, \zeta^*]$ & $\{ \cdot \}$ & $S^1$ & $S^1$ \\
  \hline
  $\ts{II} - 4_*$ & \begin{tabular}{c}$\Asp[\zeta_2, \zeta^*)$, \\ $\Con(\zeta_0^*)$, \\ $\Asp(\zeta_0^*, \zeta_3]$ \end{tabular}& $\Osc[\zeta_1, \zeta^*]$ & $S^1 \dot{\cup} S^1$ & $S^1$ & $(S^1 \dot{\cup} S^1) {\times} S^1$ \\
  \hline
  $1 - N_2$ &
  \begin{tabular}{c}$\Con(\zeta_0^*)$, \\$\Asp(\zeta_0^*, \zeta]$ \end{tabular}
  & $\Con(\zeta_0)$ & $S^1_\wedge$ & $\{ \cdot \}$ & $S^1_{\wedge \wedge}$ \\
  \hline
\end{tabular}
\end{center}

Обратимся к исследованию точек луча \eqref{eeq2_1}.

Пусть $h = 1$. Тогда из интеграла энергии следует, что $\mbs{\omega} \equiv 0$, $\mbs{\nu} \equiv (1, 0, 0)$. Это -- одно из стацонарных решений \eqref{eeq1_11}. Оно соответствует устойчивому равновесию тела, когда центр масс находится в нижнем положении и $J_{-1, 0} = \{ \cdot \}$.

Рассмотрим участок $6 - 7_*$ (рис.~\ref{fig_5_3}):
$$
h \in \left(-1,  \min\left(\frac{\lambda^2}{2}-1, 1 \right) \right).
$$
Обозначив $\sqrt{2(h+1)} = a$, $\sqrt{2(h-1)} = b$, получим
$$
W(z) = z^2[a^2-(z-\lambda)^2][(z-\lambda)^2+b^2],
$$
так что из \eqref{eeq2_20} $y(t) \equiv 0$, $x(t) \in \Osc[\lambda - a, \lambda + a]$. Поверхность $J_{h, 0} = S^1$.

Точка $N_1 - 7_*$ существует при условии $\lambda^2 < 4$. В этом случае $a = \lambda$ и
$$W(z) = z^3(2\lambda - z)[(z - \lambda)^2 + (4 - \lambda^2)].$$
По-прежнему $y(t) \equiv 0$, а первое уравнение \eqref{eeq2_20} примет вид
\begin{equation*}
  \ds \frac{dx}{dt} = -\frac{1}{2}\sqrt{(2\lambda - x)x}\sqrt{(z - \lambda)^2 + (4 - \lambda^2)},
\end{equation*}
поэтому $x(t) \in \Osc[0, 2\lambda]$. Вновь $J_{h, 0} = S^1$.

Участок $5 - 7_*$ характеризуется неравенством $\lambda^2 - 2 < 2h < 2$ (в частности, $\lambda^2 < 4$). При этом $a > \lambda$, и корни многочленов $Z,Z_*$ таковы:
\begin{equation*}
  \zeta_1 = \lambda - a < 0 = \zeta_2 = \zeta^* < \zeta_3 = a + \lambda.
\end{equation*}
Кривую
\begin{equation}\label{eeq4_17}
  \xi(\tau) = (x(\tau), y(\tau)), \quad -\infty < \tau < +\infty,
\end{equation}
удовлетворяющую \eqref{eeq4_4} и не вырождающуюся в точку $(0, 0)$, назовем псевдотраекторией. Очевидно,
\begin{equation}\label{eeq4_18}
  \lim \limits_{\tau \to \infty}\xi(\tau) = (0, 0).
\end{equation}
Так как в реальном времени $t$ точка $(0, 0)$ не является неподвижной, траектория $x(t), y(t)$ представляет собой последовательность псевдотраекторий
\begin{equation}\label{eeq4_19}
  \ldots, \xi_{-n} \ldots, \xi_{-1}, \xi_0, \xi_1, \ldots, \xi_n, \ldots.
\end{equation}

\begin{remark}\label{rem542}
Псевдотраектория $\xi_{i+1}(\tau)$ является продолжением псевдотраектории $\xi_i(\tau)$ тогда и только тогда, когда касательные в <<конце>> $\xi_i$ (т.е. предел касательной к $\xi_i$ при $\tau \to +\infty$) и в <<начале>> $\xi_{i+1}$ (в аналогичном смысле) симметричны друг другу относительно прямой $x = y$. Для доказательства достаточно продолжить отображение \eqref{eeq2_3} в область $x \ls 0$, $y \gs 0$ и убедиться, что точкам $(x, y, \alpha, \beta)$ и $(y, x, \beta, \alpha)$ соответствуют одни и те же значения переменных Эйлера\,--\,Пуассона.
\end{remark}

Система \eqref{eeq4_4} имеет два особых решения:
\begin{equation}\label{eeq4_20}
  \{x \equiv 0, y(\tau) \in \Asp[\zeta_1, 0)\}, \quad \{x(\tau) \in \Asp(0, \zeta_3], y \equiv 0 \}.
\end{equation}
В силу сказанного, в реальном времени они замыкаются в периодическую траекторию.
Остальные решения \eqref{eeq4_4} представлены на рис.~\ref{fig_5_5}. Одна из кривых (попадающая в угол $(\zeta_3, \zeta_1)$) геометрически проходится дважды. Выясним возможность образования замкнутых циклов, отличных от \eqref{eeq4_20}.

\begin{figure}[ht]
\center{\includegraphics[width=0.5\linewidth]{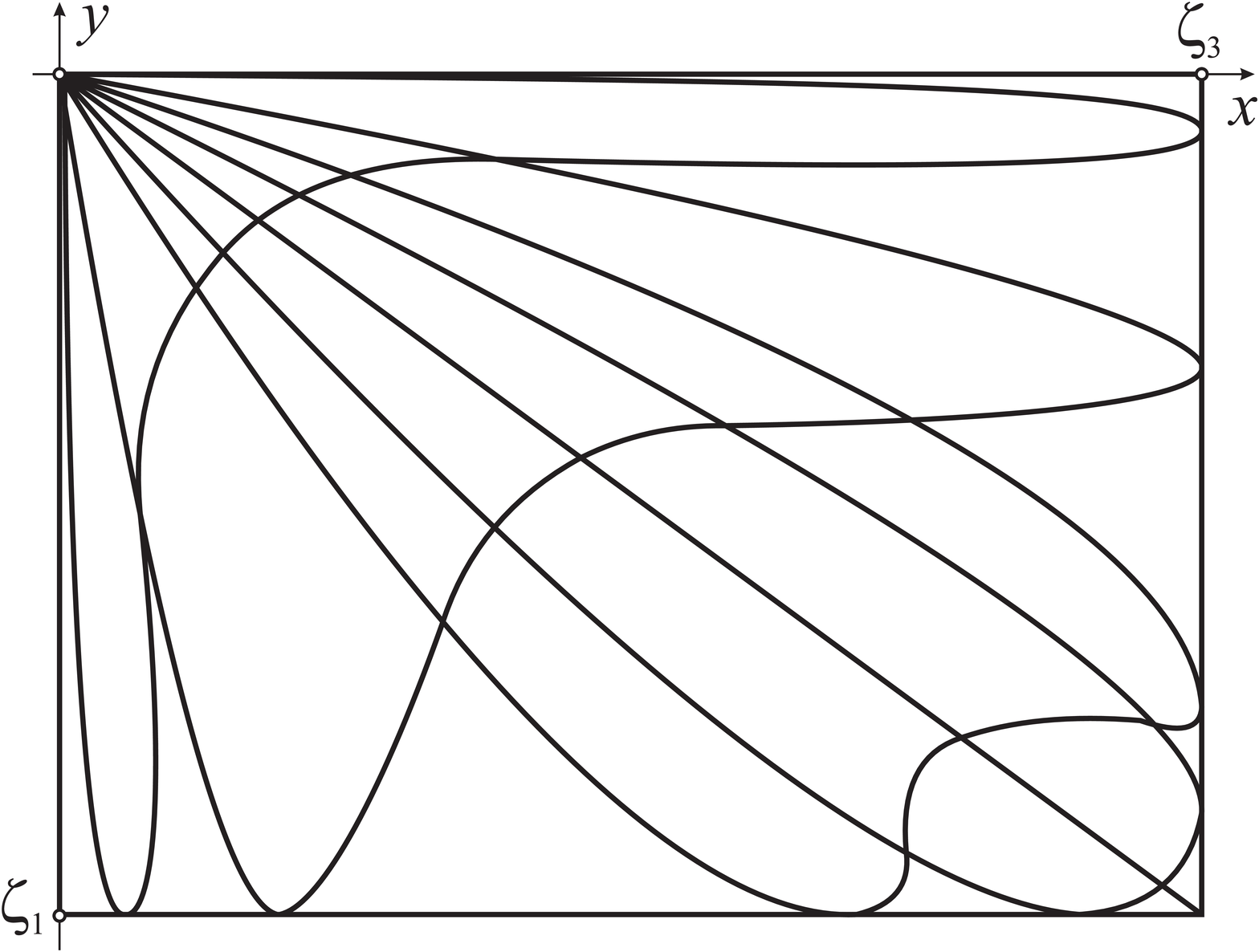}}
\caption{}\label{fig_5_5}
\end{figure}

Каждой псевдотраектории \eqref{eeq4_17} сопоставим два числа:
\begin{equation}\label{eeq4_21}
  \ds \varphi_+(\xi) = \lim \limits_{\tau \to +\infty} \frac{y(\tau)}{x(\tau)}, \quad \varphi_-(\xi) = \lim \limits_{\tau \to -\infty} \frac{y(\tau)}{x(\tau)}.
\end{equation}

\begin{lemma}\label{lem542}
Пределы \eqref{eeq4_21} существуют и конечны. Их произведение
\begin{equation}\label{eeq4_22}
  \theta = \varphi_+(\xi) \varphi_-(\xi)
\end{equation}
не зависит от псевдотраектории.
\end{lemma}
\begin{proof}
Обозначим $f(z) = \sqrt{(z - \lambda)^2 + (4 - \lambda^2)} > 0$ и зададим на множестве
$$
[\zeta_1, 0) {\times} [\zeta_1, 0) \cup (0, \zeta_3] {\times} (0, \zeta_3]
$$
функцию двух переменных
\begin{equation}\label{eeq4_23}
  \ds C(u, v) = \int \limits_u^v \frac{dz}{zf(z)\sqrt{(\zeta_3 - z)(z - \zeta_1)}},
\end{equation}
считая радикал в знаменателе положительным.

Рассмотрим кривую \eqref{eeq4_17}. В силу \eqref{eeq4_18} найдется $\tau_0$, такое, что
\begin{equation*}
  \begin{array}{c}
     x_0 = x(\tau_0) < \zeta_3, \quad x'(\tau_0) > 0,\\
     y_0 = y(\tau_0) > \zeta_1, \quad y'(\tau_0) < 0.
   \end{array}
\end{equation*}
Положим
\begin{equation}\label{eeq4_24}
  \tau_1 = \tau_0 + C(x_0, \zeta_3), \quad \tau_2 = \tau_0 + C(\zeta_1, y_0).
\end{equation}
В данном случае $W(z) = z^2f^2(z)(\zeta_3 - z)(z - \zeta_1)$, поэтому из системы \eqref{eeq4_4} находим
\begin{equation}\label{eeq4_25}
  C(x(\tau), \zeta_3) = |\tau - \tau_1|, \quad C(y(\tau), \zeta_1) = |\tau - \tau_2|.
\end{equation}
Подынтегральную функцию \eqref{eeq4_23} запишем в виде $\ds \frac{s}{z} + \Psi(z)$, где
$$
s = \frac{1}{2\sqrt{-\zeta_1\zeta_3}},
$$
а $\Psi(z)$ непрерывна при $z = 0$. Тогда
\begin{equation}\label{eeq4_26}
  \ds C(u,v) = s \ln\frac{v}{u} + \int \limits_u^v \Psi(z)dz.
\end{equation}
Из выражений \eqref{eeq4_25}, \eqref{eeq4_26} при $\tau > \max(\tau_1, \tau_2)$ имеем
\begin{equation*}
  \begin{array}{c}
    \ds \tau - \tau_1 = s \ln \frac{\zeta_3}{x(\tau)} + \int \limits_{x(\tau)}^{\zeta_3} \Psi(z)dz, \qquad
    \ds \tau - \tau_2 = s \ln \frac{\zeta_1}{y(\tau)} + \int \limits_{y(\tau)}^{\zeta_1} \Psi(z)dz.
  \end{array}
\end{equation*}
Учитывая \eqref{eeq4_24}, из предыдущих соотношений исключим $\tau$:
\begin{equation*}
\begin{array}{l}
    \ds s \ln \left| \frac{y(\tau)}{x(\tau)} \right| = 2s \ln \left| \frac{\zeta_1}{\zeta_3} \right| - s \ln \left| \frac{y_0}{x_0} \right| - \int \limits_{x_0}^{\zeta_3} \Psi(z)dz - \int \limits_{\zeta_1}^{y_0} \Psi(z)dz -\\
   \qquad \ds - \int \limits_{x(\tau)}^{\zeta_3} \Psi(z)dz - \int \limits_{\zeta_1}^{y(\tau)} \Psi(z)dz.
\end{array}
\end{equation*}
Используя \eqref{eeq4_18}, находим
\begin{equation}\label{eeq4_27}
\begin{array}{l}
    \ds s \ln \lim \limits_{\tau \to +\infty} \left| \frac{y(\tau)}{x(\tau)} \right| = 2s \ln \left| \frac{\zeta_1}{\zeta_3} \right| - s \ln \left| \frac{y_0}{x_0} \right| - \int \limits_{x_0}^{\zeta_3} \Psi(z)dz - \\
  \qquad \ds  - \int \limits_{\zeta_1}^{y_0} \Psi(z)dz - \int \limits_{\zeta_1}^{\zeta_3} \Psi(z)dz.
\end{array}
\end{equation}
Точно так же при $\tau < \min(\tau_1, \tau_2)$
\begin{equation*}
  \ds s \ln \left| \frac{y(\tau)}{x(\tau)} \right| = s \ln \left| \frac{y_0}{x_0} \right| + \int \limits_{x_0}^{\zeta_3} \Psi(z)dz + \int \limits_{\zeta_1}^{y_0} \Psi(z)dz - \int \limits_{x(\tau)}^{\zeta_3} \Psi(z)dz - \int \limits_{\zeta_1}^{y(\tau)} \Psi(z)dz,
\end{equation*}
откуда
\begin{equation}\label{eeq4_28}
  \ds s \ln \lim \limits_{\tau \to -\infty} \left| \frac{y(\tau)}{x(\tau)} \right| =  s \ln \left| \frac{y_0}{x_0} \right| + \int \limits_{x_0}^{\zeta_3} \Psi(z)dz + \int \limits_{\zeta_1}^{y_0} \Psi(z)dz - \int \limits_{\zeta_1}^{\zeta_3} \Psi(z)dz.
\end{equation}

Равенства \eqref{eeq4_27}, \eqref{eeq4_28} доказывают существование пределов \eqref{eeq4_21}. Складывая \eqref{eeq4_27}, \eqref{eeq4_28}, находим
\begin{equation*}
  \ds s\ln[\varphi_+ (\xi)\varphi_- (\xi)]=2\left[ s\ln\left|\frac{\zeta_1}{\zeta_3} \right| - \int \limits_{\zeta_1}^{\zeta_3}\Psi(z)dz\right],
\end{equation*}
поэтому величина
\begin{equation}\label{eeq4_29}
  \ds \theta = \frac{\zeta_1^2}{\zeta_3^2}\exp \left[-\frac{2}{s} \int \limits_{\zeta_1}^{\zeta_3}\Psi(z)dz\right]
\end{equation}
не зависит от выбора псевдотраектории, что и требовалось.
\end{proof}

Запишем величину $\theta$ в явном виде
\begin{equation}\label{eeq4_30}
\begin{array}{c}
  \ds \theta = \left(\frac{a-\lambda}{a+\lambda}\right)^2\exp\left\{-2\int \limits_{-a}^a \frac{(\zeta - \lambda)(\zeta^2-a^2+4)}{\sqrt{(\zeta^2+4-\lambda^2)(a^2-\zeta^2)}} {\times} \right.\\
  {\times} \left. \left[2\sqrt{a^2-\lambda^2} + \sqrt{(\zeta^2+4-\lambda^2)(a^2-\zeta^2)}\right]^{-1}d\zeta \right\}.
\end{array}
\end{equation}

Рассмотрим теперь траекторию \eqref{eeq4_19}. По замечанию \ref{rem542} для всех $n \in \mathbb{Z}$ выполняется условие
\begin{equation}\label{eeq4_31}
  \varphi_-(\xi_{n+1}) = 1/\varphi_+(\xi_n).
\end{equation}
Траектория замкнута тогда и только тогда, когда для некоторого $m$ справедливо $\varphi_-(\xi_0) = 1/\varphi_+(\xi_m)$, что в силу \eqref{eeq4_22}, \eqref{eeq4_31} дает $\theta^{m+1} = 1$.
Итак, имеем две возможности. Либо
\begin{equation}\label{eeq4_32}
  \theta = 1,
\end{equation}
и тогда все траектории замкнуты (в реальном времени замыкается уже каждая псевдотраектория), либо
\begin{equation}\label{eeq4_33}
  \theta \neq 1.
\end{equation}
В случае \eqref{eeq4_33} единственная замкнутая траектория \eqref{eeq4_20} является предельным циклом для всех остальных при $t \to \pm \infty$.

Равенство \eqref{eeq4_32}, очевидно, имеет место при $\lambda = 0$. Это объясняется тем, что на участке $k = 0$ в классическом решении существует дополнительный интеграл Горячева \cite{bib15}, расслаивающий поверхность $J_{h, 0}$ на периодические траектории.
Заметим, что
\begin{equation}\label{eeq4_34}
  \ds \left. \frac{\partial \theta}{\partial \lambda} \right|_{\lambda = 0} = \frac{2}{a}\bigl[\Phi(a) - 2\bigr],
\end{equation}
где
\begin{equation*}
  \ds \Phi(a) = \int \limits_{-\pi/2}^{\pi/2} \frac{(4-a^2 \cos^2\varphi)d\varphi}{\sqrt{4 + a^2\sin^2\varphi}\,\bigl[2+\sqrt{4 + a^2\sin^2\varphi}\cos\varphi\bigr]}.
\end{equation*}
Нетрудно показать, что $\Phi(a) < 2$, если $0 < a^2 < 4$. Тогда из \eqref{eeq4_34} следует, что при малых $\lambda$ выполнено \eqref{eeq4_33}\footnote[1]{Более того, численный анализ интеграла \eqref{eeq4_30} дает $\theta < 1$ на всем участке $5 - 7_*$.}. Таким образом, аналога интеграла Горячева при $\lambda \neq 0$ не существует.

Выбирая на множестве $\{(x, y): x \in [0, \zeta_3], \; y \in [\zeta_1, 0]\}$ непрерывные векторные поля из допустимых скоростей, определяемых системой \eqref{eeq2_20}, получаем части интегральной поверхности, показанные на рис.~\ref{fig_5_6}. Они складываются в прямоугольник, при необходимом отождествлении сторон которого возникает двумерный тор с нестягиваемым путем $a_+b_+b_-a_- = \delta$, отвечающим периодическому решению \eqref{eeq4_20}. Формулы \eqref{eeq4_8} показывают, что точки, в которых $x = 0$ или $y = 0$ при $k = 0$, имеют один прообраз при отображении \eqref{eeq4_6}. Следовательно, $D^{-1}(\delta) = S^1$. Поскольку у остальных точек два прообраза, то $J_{h, 0}$ есть расслоение над $S^1$ со слоем $S^1 \dot{\cup} S^1$.

\begin{figure}[ht]
\center{\includegraphics[width=0.5\linewidth]{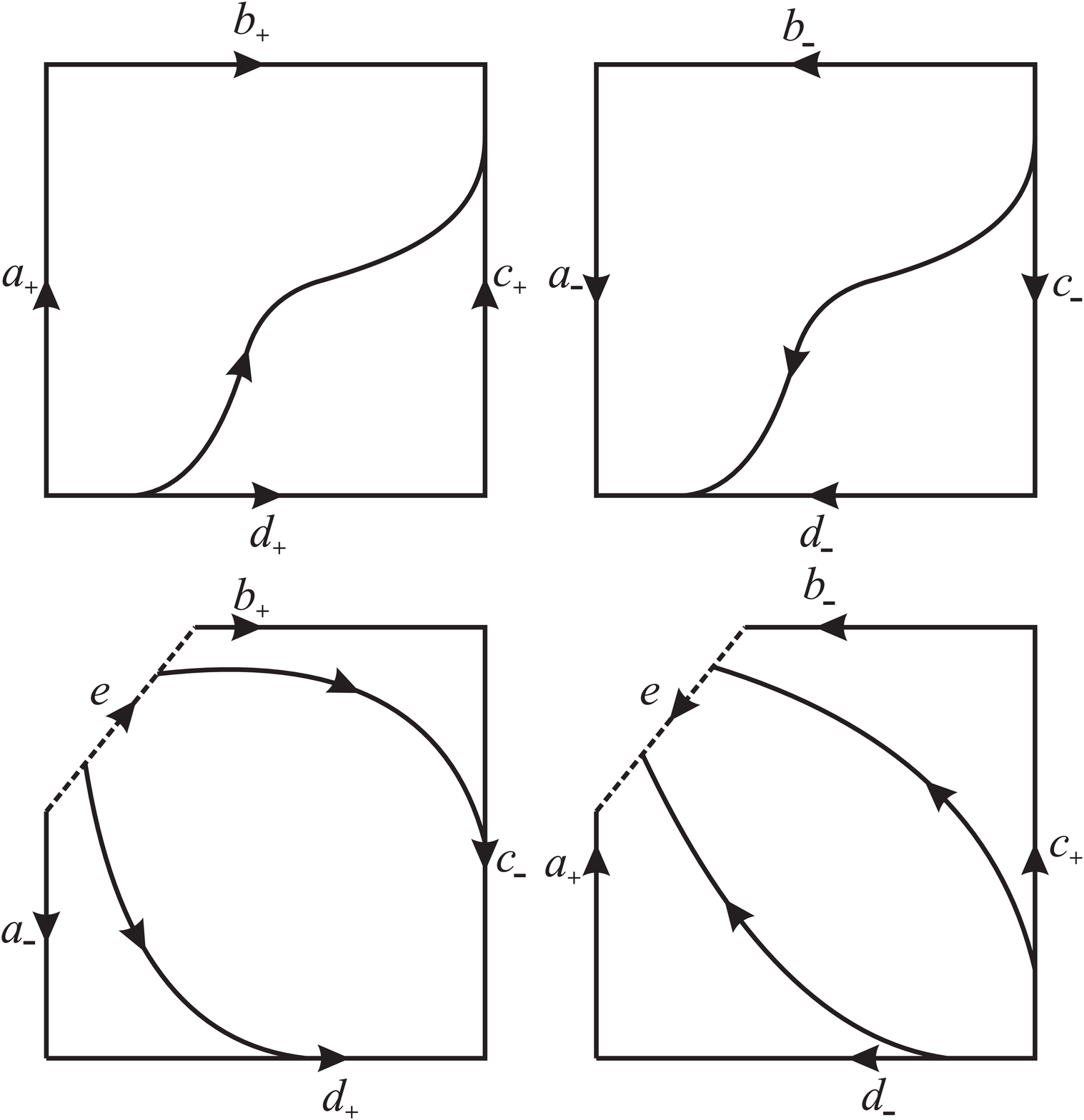}}
\caption{}\label{fig_5_6}
\end{figure}

Проведем явное доказательство того, что два листа асимптотической поверхности $J_{h, 0}$ пересекаются по предельному циклу трансверсально.
Обозначим, как и раньше
\begin{equation}\label{eeq4_35}
  a = \sqrt{2(h+1)}.
\end{equation}
На участке $5 -7_*$ (рис.~\ref{fig_5_3}) имеем $\lambda < a< 2$. Полагая в \eqref{eeq1_1} - \eqref{eeq1_3}
\begin{equation}\label{eeq4_36}
  \omega_1 = \omega_2 = \nu_3 =0,
\end{equation}
находим выражение остальных переменных через функции Якоби с модулем $a/2$:
\begin{equation}\label{eeq4_37}
  \nu_1 = 1-(a^2/2)\sn^2\varphi, \quad \nu_2 = -2a\sn\varphi\dn\varphi, \quad \omega_3 = a\cn\varphi, \quad \varphi = 2(t - t_0).
\end{equation}
Решение \eqref{eeq4_36}, \eqref{eeq4_37} соответствует \eqref{eeq4_20}. К этой траектории проведем через точку $\varphi = 0$ ортогональную гиперплоскость $\nu_2 = 0$. Ее пересечение с $J_{h, 0}$ определяется уравнениями
\begin{equation}\label{eeq4_38}
  \begin{array}{c}
    \nu_1^2 + \nu_3^2 = 1, \quad 4\omega_1 \nu_1 + (\omega_3 + \lambda)\nu_3 = 0, \\
    (\omega_3 - \lambda)(\omega_1^2 + \omega_2^2) + \omega_1\nu_3 = 0, \quad 4(\omega_1^2 + \omega_2^2) + \omega_3^2 - 2\nu_1 = 2h.
  \end{array}
\end{equation}
Выбирая в качестве малого параметра $\varepsilon$ в окрестности предельного цикла величину  $\nu_3$, получим параметрическое задание сечения поверхности $J_{h, 0}$:
\begin{equation*}
  \begin{array}{c}
    \ds \nu_1 = 1 - \frac{\varepsilon^2}{2}, \quad \nu_2 = 0, \quad \nu_3 = \varepsilon, \\
    \ds \omega_1 = -\frac{\varepsilon}{4}(a + \lambda), \quad \omega_2 = \pm \frac{\varepsilon}{4}\sqrt{\frac{a + \lambda}{a - \lambda}(4 + \lambda^2 - a^2)}, \quad \omega_3 = a - \frac{\varepsilon^2}{a - \lambda}.
  \end{array}
\end{equation*}
Это, очевидно, <<крест>>, что и требовалось доказать.

Отметим, что уравнения \eqref{eeq4_36} - \eqref{eeq4_38} могут послужить основой для изучения расщепления сепаратрис \cite{bib26} в случаях, близких к решению Сретенского.

\begin{figure}[ht]
\centering
\center{\includegraphics[width=0.5\linewidth]{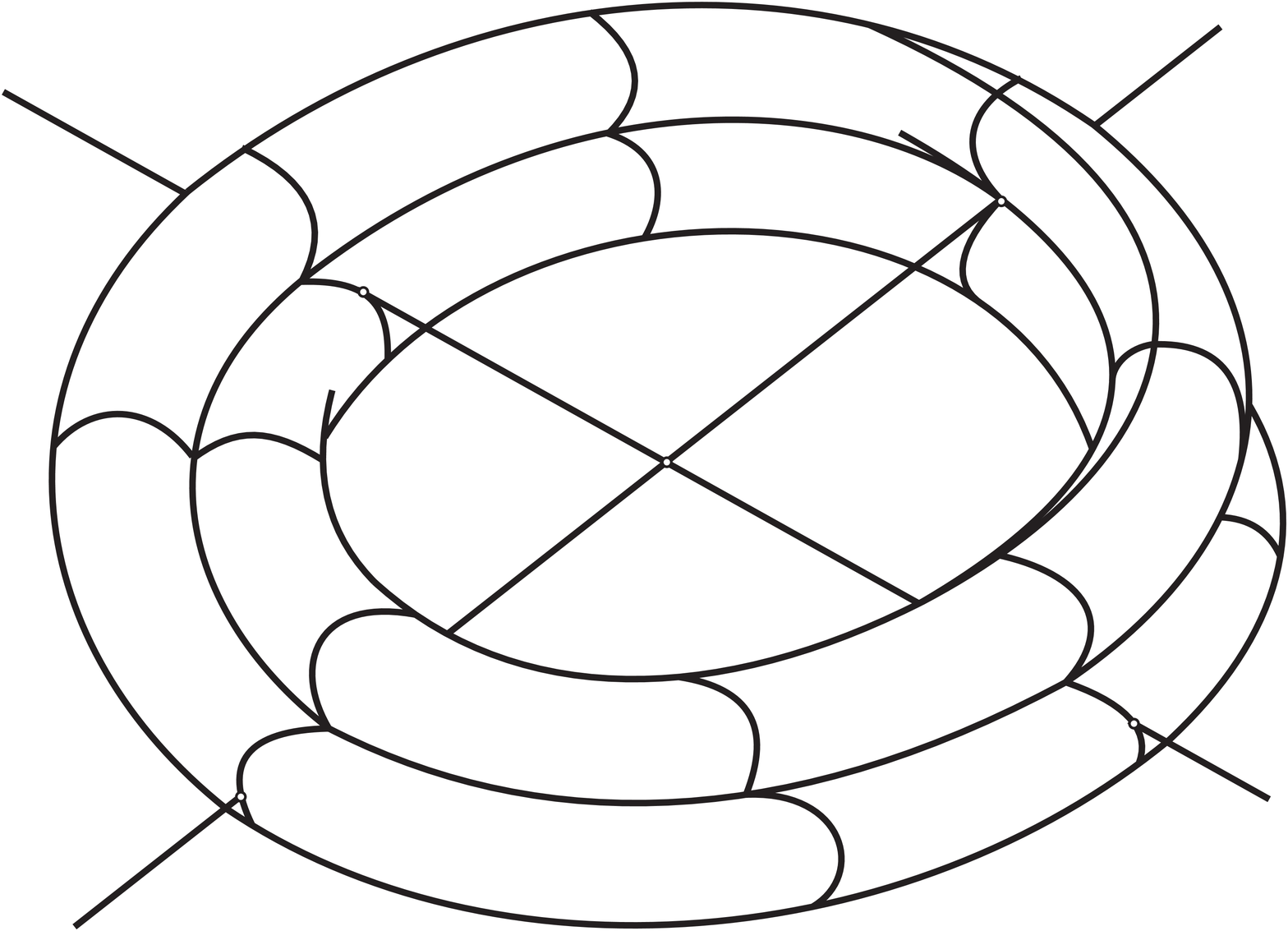}}
\caption{}\label{fig_5_7}
\end{figure}

Установим окончательно глобальный вид поверхности $J_{h, 0}$ в случае $5 - 7_*$. Ориентируемые расслоения над $S^1$ со слоем $S^1 \dot{\cup} S^1$ могут быть двух типов: уже встречавшееся прямое произведение $S^1 {\times} (S^1 \dot{\cup} S^1)$ (см., например, рис.~\ref{fig_3_2}) и нетривиальное расслоение, в котором при обходе окружности <<восьмерка>> отображается на себя центрально-симметрично. Соответствующая поверхность показана на рис.~\ref{fig_5_7}. Обозначим ее через $S^1 * (S^1 \dot{\cup} S^1)$.

При переходе $(I - \ts{IV}_*) \to (5 - 7_*) \to (\ts{II} \to \ts{IV}_*)$ (рис.~\ref{fig_5_3}) происходит перестройка ${\bf T}^2 \to J_{h, 0} \to {\bf T}^2$ (см. табл. \ref{tab53}). Ясно, что это возможно лишь при условии $J_{h, 0} = S^1 * (S^1 \dot{\cup} S^1)$. Соответствующую трансформацию двумерного тора можно по аналогии с одномерным случаем назвать <<бифуркацией удвоения длины>>.

Рассмотрим теперь точку $h = 1$, $k = 0$. Она имеет шифр: $5 - N_3^*$, если $0 < \lambda < 2$; $N_1 - N_3^*$, если $\lambda = 2$; $6 - N_3^*$, если $\lambda > 2$. В двух последних случаях корни $W(z)$ неотрицательны, и потому $y(t) \equiv 0$. Для $x(t)$ имеем уравнение
\begin{equation*}
  \ds \frac{dx}{dt} = \frac{1}{2}(x - \lambda)\sqrt{[(\lambda + 2) - x][x - (\lambda - 2)]}.
\end{equation*}
В силу \eqref{eeq2_3} получаем
\begin{equation}\label{eeq4_39}
  \ds \omega_1 = \omega_2 \equiv 0, \quad \frac{d\omega_3}{dt} = \frac{1}{2} \omega_3 \sqrt{(2 - \omega_3)(\omega_3 + 2)}.
\end{equation}
Эти зависимости описывают движения физического маятника, полная энергия которого равна потенциальной энергии в верхнем (неустойчивом) положении равновесия. Интегральное многообразие, таким образом, есть $S^1 \dot{\cup} S^1$.

Пусть $0 < \lambda < 2$. В точке $5 - N_3^*$ система \eqref{eeq4_4} имеет особые решения:
\begin{gather}
  x \equiv \lambda, \quad  y \equiv 0,\label{eeq4_40}\\
  x \equiv 0, \quad y(\tau) \in \Asp[\lambda - 2, 0),\label{eeq4_41}\\
  \ds x = \frac{\lambda}{2}\left[1 \pm \frac{e^u-1}{e^u+1} \right], \quad y = 0,\label{eeq4_42}\\
  x(\tau) \in \Asp (\lambda, \lambda + 2], \quad y \equiv 0,\label{eeq4_43}\\
  x \equiv \lambda, \quad y(\tau) \in \Asp[\lambda - 2, 0),\label{eeq4_44}
\end{gather}
где $u = u(\tau)$ -- монотонная функция, удовлетворяющая уравнению
\begin{equation*}
  \ds \frac{du}{d\tau} = \frac{\sqrt{4(1+e^u)^2 - \lambda^2}}{1 + e^u}.
\end{equation*}

Решения \eqref{eeq4_40} -- \eqref{eeq4_43} соответствуют движениям физического маятника \eqref{eeq4_39}. При этом псевдотраектории \eqref{eeq4_41}, \eqref{eeq4_42}, смыкаясь в точке $x = y = 0$, образуют траекторию $\omega_3(t) \in \Asp[-2, 0)$. В целом решения системы \eqref{eeq4_4} приведены на рис.~\ref{fig_5_8}. Там же указана схема их прохождения в реальном времени $t$. Таким образом, все траектории в рассматриваемом случае гомоклинические, т.е. своей $\alpha-$ и $\omega-$предельной точкой они имеют неустойчивое положение равновесия тела.

\begin{figure}[ht]
\center{\includegraphics[width=0.5\linewidth]{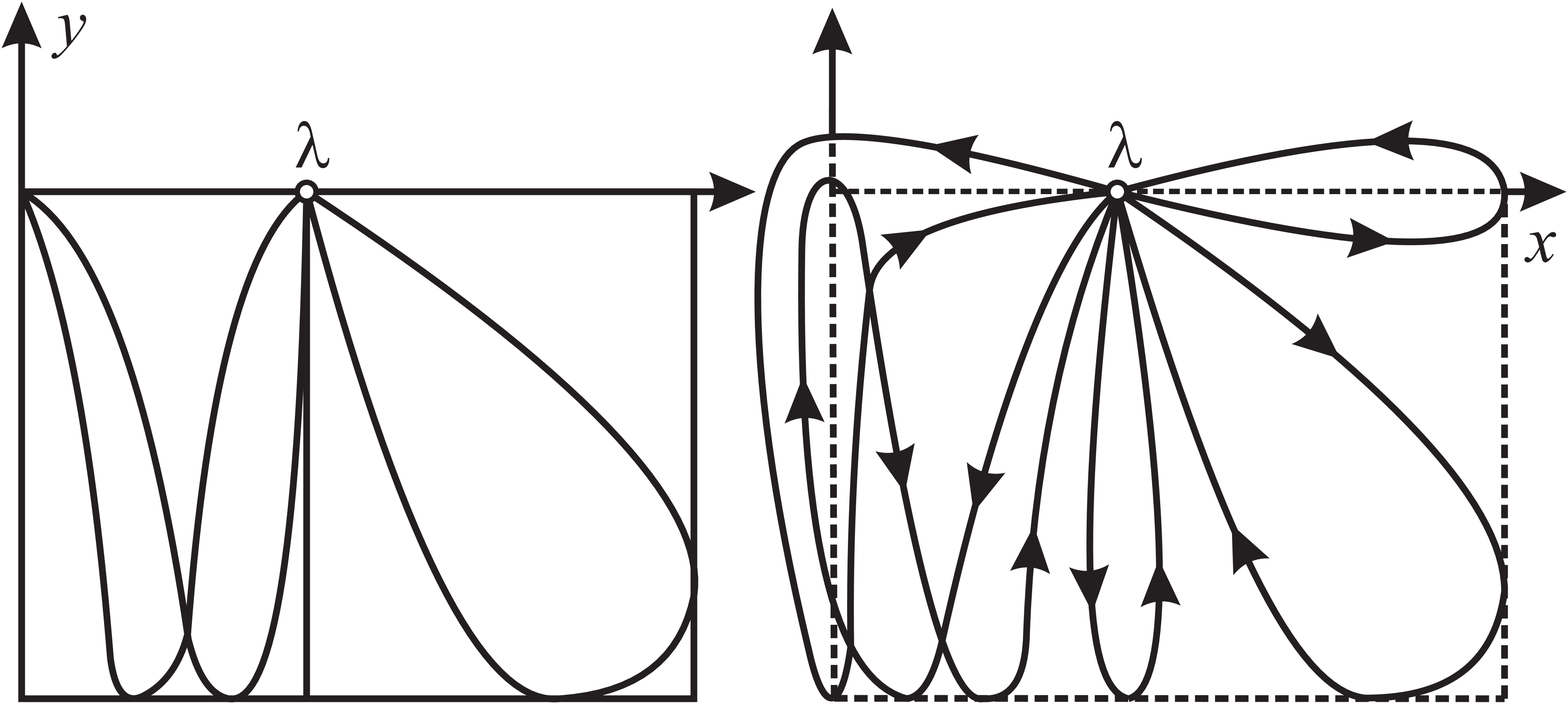}}
\caption{}\label{fig_5_8}
\end{figure}

\begin{figure}[ht]
\center{\includegraphics[width=0.5\linewidth]{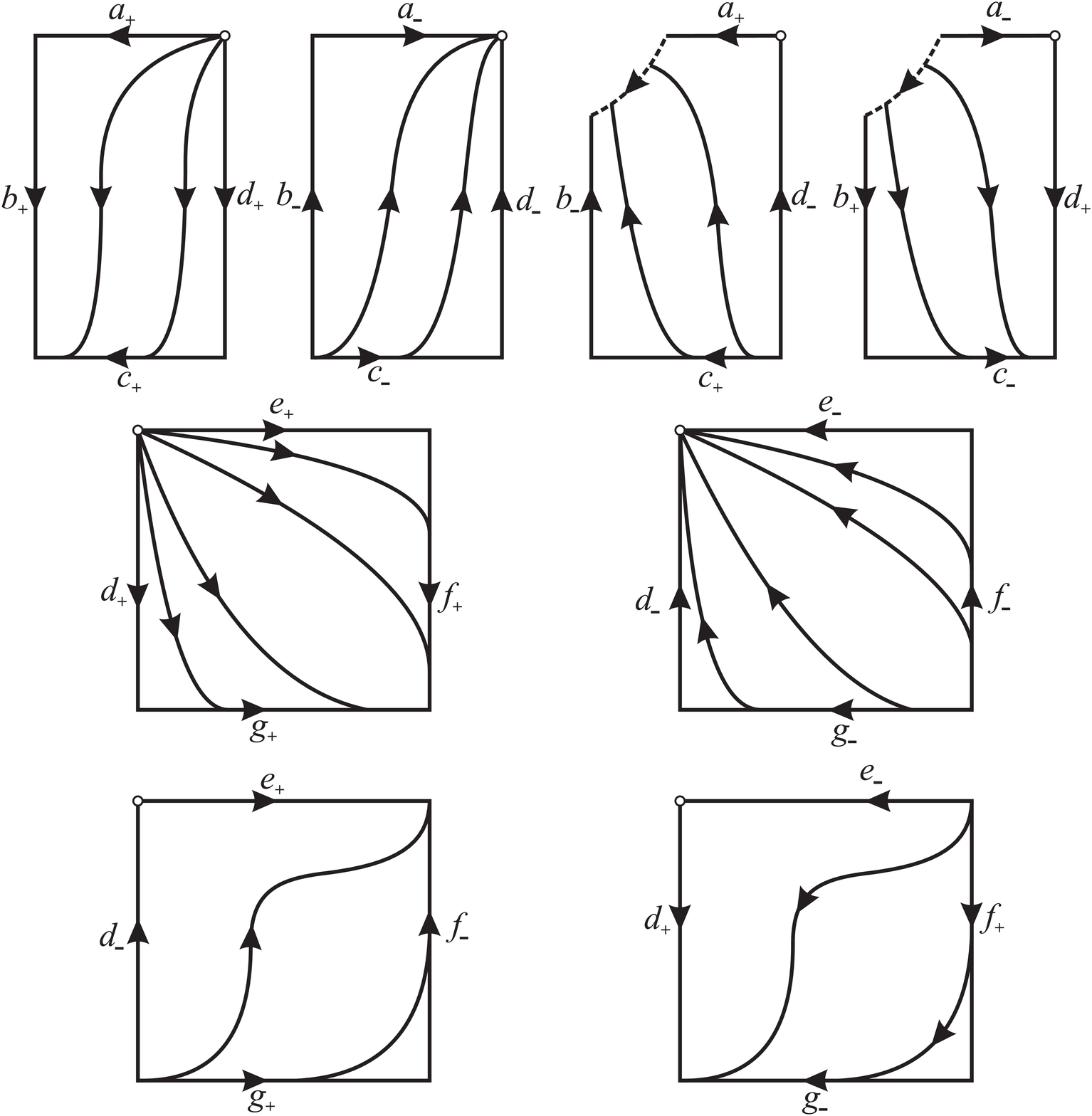}}
\caption{}\label{fig_5_9}
\end{figure}

Процесс построения интегральной поверхности показан на рис.~\ref{fig_5_9}. Прямоугольники в верхней части рисунка соответствуют траекториям, на которых $x \ls \lambda$, в нижней части -- $x \gs \lambda$ (отрезки $c_\pm, f_\pm, g_\pm$ являются не решениями, а лишь линиями склейки). Производя должные отождествления и учитывая удвоение всех точек, кроме \eqref{eeq4_40} -- \eqref{eeq4_43}, получаем в случае $x \ls \lambda$ описанную выше поверхность $S^1 * (S^1 \dot{\cup} S^1)$, а в случае $x \gs \lambda$ произведение $S^1 {\times} (S^1 \dot{\cup} S^1)$. Обе поверхности целиком <<сотканы>> из петель-траекторий, двоякоасимптотических к верхнему неустойчивому положению равновесия тела-носителя. Они образуют интегральное многообразие, пересекаясь по <<восьмерке>> $d_+d_- \dot{\cup} d_+d_-$. В принятых обозначениях (соглашение \ref{sog411}) имеем
\begin{equation*}
  J_{1,0} = (S^1 {\times} (S^1 \dot{\cup} S^1)) \bigcup \limits^{S^1 \dot{\cup} S^1} (S^1 * (S^1 \dot{\cup} S^1)).
\end{equation*}
Достаточно ясно, как эта поверхность получается из $J_{h, 0}$ для случая $5 - 7_*$ (рис.~\ref{fig_5_3}). Возьмем на $S^1 * (S^1 \dot{\cup} S^1)$ две гомотопные между собой <<восьмерки>> и отождествим их по отображению, индуцированному гомотопией. Результат будет гомеоморфен $J_{1, 0}$. Конечно, в трехмерном пространстве реализовать $J_{1, 0}$ нельзя. Однако легко построить наглядную модель, заменив <<восьмерку>> отрезком: отождествление между собой двух секущих отрезков листа Мёбиуса равносильно склейке секущего отрезка листа Мёбиуса шириной~$l$ с образующей цилиндра высотой~$l$.

Обратимся к участку $5 - 6_*$. Имеются две возможности (см. табл. \ref{tab52}):
\begin{equation}\label{eeq4_45}
  x \in [0, \zeta_2^*], \quad y \in [\zeta_1, 0]
\end{equation}
или
\begin{equation}\label{eeq4_46}
  x \in [\zeta_3^*, \zeta_3], \quad y \in [\zeta_1, 0].
\end{equation}

Вариант \eqref{eeq4_45} аналогичен участку $5 -7_*$. Одна периодическая траектория склеивается из псевдотраекторий $x \equiv 0$, $y(\tau) \in \Asp[\zeta_1, 0)$ и $x(\tau) \in \Asp(0, \zeta_2^*]$, $y\equiv 0$ (вращение маятника с большой энергией), остальные стремятся к этому движению при $t \to \infty$. В случае \eqref{eeq4_46} замена \eqref{eeq4_2} всюду невырождена, и движения таковы: одно периодическое $x(t) \in \Osc[\zeta_3^*, \zeta_3]$, ${y \equiv 0}$, остальные -- двоякоасимптотические $x(t) \in \Osc[\zeta_3^*, \zeta_3]$,
$y  \in \Asp[\zeta_1, 0)$.

Первому варианту в пространстве $(\mbs{\nu}, \mbs{\omega})$ отвечает косое произведение <<восьмерки>> на окружность, второму -- $S^1 {\times} (S^1 \dot{\cup} S^1)$. Несвязное объединение этих поверхностей есть $J_{h, 0}$. Непрерывность трансформация интегральной поверхности при переходе $(5 - 7_*) \to (5 - N_1^*) \to (5 - 6^*)$ совершенно наглядна.

В точке $5 - N_1^*$ $(k = 0, h = 1 + \lambda^2 / 2)$ ноль становится трехкратным корнем $W(z)$. Аналогично случаю $N_1 - 7_*$ имеем решение
\begin{equation}\label{eeq4_47}
  x \equiv 0, \quad y(t) \in \Osc[\zeta_1, 0].
\end{equation}
Это решение изолировано. Остальные $x(t) \in \Osc[\zeta_3^*, \zeta_3]$, ${y \equiv 0}$ и $x(t) \in \Osc[\zeta_3^*, \zeta_3]$, $y  \in \Asp[\zeta_1, 0)$, как и в предыдущем случае, дают поверхность $S^1 {\times} (S^1 \dot{\cup} S^1)$. Многообразие $J_{h, 0}$ -- несвязное объединение данной поверхности и окружности, отвечающей \eqref{eeq4_47}.

Наконец, в точках с шифром $5 - 5_*$ $(k = 0, h > 1 + \lambda^2 / 2)$ система \eqref{eeq2_20} имеет решения
\begin{equation}\label{eeq4_48}
  x \equiv 0, \quad y(t) \in \Osc[\zeta_1, \zeta_1^*]; \qquad x(t) \in \Osc[\zeta_3^*, \zeta_3], \quad y \equiv 0
\end{equation}
(вращения маятника) и
\begin{equation}\label{eeq4_49}
  x(t) \in \Osc[\zeta_3^*, \zeta_3], \quad y(t) \in \Osc[\zeta_1, \zeta_1^*].
\end{equation}
Интегральное многообразие есть несвязное объединение тора с условно-пе\-ри\-оди\-чес\-кими траекториями \eqref{eeq4_49} (по предложению \ref{pro541} раздвоения нет) и двух окружностей, соответствующих периодическим решениям \eqref{eeq4_48}:
$$
J_{h, 0} = 2S^1 \cup {\bf T}^2.
$$
Отметим, что ввиду явной несимметрии многочлена $W(z)$ число вращения на ${\bf T}^2$, равное отношению периодов функций \eqref{eeq4_49}, зависит от $\lambda$. Это вновь доказывает несуществование при $\lambda \neq 0$ интеграла, аналогичного интегралу Горячева.

\subsection{Геометрический анализ случая Горячева -- Чаплыгина}\label{ssec55}
Проекции интегральных многообразий на сферу Пуассона изучим в классическом случае
\begin{equation}\label{eeq5_1}
  \lambda = 0.
\end{equation}
Известно \cite{bib26}, что при этом для почти всех начальных движений средняя прецессия тела равна нулю. Поэтому знание того, как с течением времени меняется положение вертикали в подвижном триэдре, в значительной мере проясняет и характер движения самого твердого тела.

Уравнения, определяющие поверхность \eqref{eeq4_1} в $\mathfrak{M}$, при условии \eqref{eeq5_1} запишем в виде
\begin{eqnarray}
& &  4(\omega_1 \nu_1 + \omega_2 \nu_2) + \omega_3 \nu_3 = 0,\label{eeq5_2}\\
& &  4(\omega_1^2 + \omega_2^2) + \omega_3^2 -2 \nu_1 = 2h,\label{eeq5_3}\\
& &  4\omega_3(\omega_1^2 + \omega_2^2) + 2\omega_1 \nu_3 = 2k.\label{eeq5_4}
\end{eqnarray}

Область возможности движения, т.е. образ $J_{h, k}$ при отображении \eqref{ceq4_3}, обозначим $U_{h, k}$. В обобщенную границу ОВД отображаются точки $J_{h, k}$, удовлетворяющие условию \eqref{ceq4_12}, которое в данном случае имеет вид
\begin{equation}\label{eeq5_5}
  2[2(\omega_1^2 + \omega_2^2) -\omega_3^2](\nu_1 \omega_2 - \nu_2\omega_1) + (\nu_2 \omega_3 - \nu_3\omega_2)\nu_3 = 0.
\end{equation}
Запишем его в переменных $x, y$. Для этого выполним подстановку выражений \eqref{eeq4_8}, \eqref{eeq4_3} с учетом \eqref{eeq5_1}. В результате получим
\begin{equation}\label{eeq5_6}
  \begin{array}{l}
\ds    \left[2\left(x + \frac{1}{2}y\right)\left(\frac{1}{2}x + y\right)(x - y)^2 - \right.\\[3mm]
\ds\qquad     \left. - \left(x + \frac{1}{2}y\right)Z(x) -\left(\frac{1}{2}x + y\right)Z(y)\right]X_*Y_* -\\[3mm]
\ds    -\left[2\left(x + \frac{1}{2}y\right)\left(\frac{1}{2}x + y\right)(x - y)^2 - \right.\\[3mm]
\ds\qquad  -\left. \left(x + \frac{1}{2}y\right)Z_*(x) -\left(\frac{1}{2}x + y\right)Z_*(y)\right]XY = 0.
  \end{array}
\end{equation}
Для построения обобщенной границы необходимо теперь отыскать множество решений уравнения \eqref{eeq5_6}, удовлетворяющих \eqref{eeq3_1}, и применить к этому множеству отображение $(x, y) \to \mbs{\nu}$, определяемое равенствами \eqref{eeq4_8}.

Избавляясь от иррациональностей в \eqref{eeq5_6}, найдем
\begin{equation}\label{eeq5_7}
  \begin{array}{l}
    3x^5y^5 +8hx^4y^4 + 4[h^2 - 1 - 3k(x+y) + h(x+y)^2]x^3y^3 - \\[3mm]
    \qquad - 6k[k + 4h(x+y) - 2(x+y)^3]x^2y^2- [4k^2h + 8k(h^2 - 1)(x + y) -\\[3mm]
    \qquad - 9k^2(x + y)^2 - 8kh(x + y)^3 + 4(h^2 - 1)(x + y)^4]xy +\\[3mm]
   \qquad +2k[3k^2 + 8kh(x + y) + 4(h^2 - 1)(x + y)^2 - \\[3mm]
   \qquad -6k(x + y)^3 - 4h(x + y)^4](x + y) = 0.
  \end{array}
\end{equation}
Это уравнение удобно для численного анализа.

Обобщенная граница ОВД делит сферу Пуассона на области, внутри которых неизменно количество допустимых скоростей, т.е. число решений относительно $\omega_1, \omega_2, \omega_3$ системы \eqref{eeq5_2}~-- \eqref{eeq5_4}. Укажем способ определения этого числа.
Очевидно, каждая из упомянутых областей, будучи открытым множеством, имеет точки, в которых
\begin{equation}\label{eeq5_8}
  \nu_2\nu_3 \neq 0.
\end{equation}
Из уравнения \eqref{eeq5_3} находим
\begin{equation}\label{eeq5_9}
  4(\omega_1^2 + \omega_2^2) = 2(h + \nu_1) - \omega_3^2,
\end{equation}
после чего из \eqref{eeq5_2}, \eqref{eeq5_4}
\begin{equation}\label{eeq5_10}
  \begin{array}{c}
    \ds \omega_1 = \frac{1}{4\nu_3}[\omega_3^2 - 2(h + \nu_1)\omega_3 + 2k], \\
    \ds \omega_2 = -\frac{1}{4\nu_2\nu_3}\left\{\nu_1\omega_3^3 - [2(h + \nu_1)\nu_1-\nu_3^2]\omega_3 + 2k\nu_1\right\}.
  \end{array}
\end{equation}
Подстановка значений \eqref{eeq5_10} в \eqref{eeq5_9} приводит к уравнению, определяющему $\omega_3$:
\begin{equation}\label{eeq5_11}
\begin{array}{l}
  (\nu_1^2+\nu_2^2)\omega_3^6 + 2[\nu_1\nu_3^3 - 2(\nu_1^2+\nu_2^2)(h + \nu_1)]\omega_3^4 + 4k(\nu_1^2+\nu_2^2)\omega_3^3+ \\[3mm] \qquad
  +[\nu_3^4 + 4\nu_2^2\nu_3^2 - 4\nu_1\nu_3^2(h + \nu_1) + 4(\nu_1^2+\nu_2^2)(h + \nu_1)^2]\omega_3^2 + \\[3mm] \qquad
  +4k[\nu_1\nu_3^2 - 2(\nu_1^2+\nu_2^2)(h + \nu_1)]\omega_3 + \\[3mm] \qquad
  +4[k^2(\nu_1^2  + \nu_2^2) - 2\nu_2^2\nu_3^2(h + \nu_1)]=0.
\end{array}
\end{equation}

Обозначим через $P(\omega_3)$ выражение, стоящее в левой части \eqref{eeq5_11}, рассматриваемое как многочлен от $\omega_3$. Поскольку $\omega_1$, $\omega_2$ при фиксированном $\mbs{\nu}$ из области \eqref{eeq5_8} однозначно выражены через $\omega_3$ равенством \eqref{eeq5_10}, то число допустимых скоростей в точке $\mbs{\nu}$ равно числу различных действительных корней многочлена $P(\omega_3)$. В точках обобщенной границы $P(\omega_3)$ имеет кратный корень. В частности, уравнение обобщенной границы непосредственно в координатах $\nu_1, \nu_2, \nu_3$ можно получить, приравняв нулю результат многочлена $P(\omega_3)$ и его производной.

Как и ранее (см. $\S$ \ref{ssec43}), назовем разделяющим множество точек $(h, k) \in \mathbb{R}^2$, при переходе через которое меняется вид ОВД. Разделяющее множество состоит из некоторых кривых, которые также назовем разделяющими.

Заметим, что уравнения \eqref{beq4_5}, \eqref{eeq5_2} -- \eqref{eeq5_4} допускают одновременную замену знаков у $\omega_1, \omega_2, \nu_3$ и у $\omega_2, \nu_2$. В силу этого ОВД симметрична относительна сечений сферы Пуассона плоскостями
\begin{equation}\label{eeq5_12}
  \nu_2 = 0
\end{equation}
и
\begin{equation}\label{eeq5_13}
  \nu_3 = 0
\end{equation}

Отображение $(\nu_3, \omega_3) \mapsto (-\nu_3, -\omega_3)$ переводит точку $(h, k)$ в точку $(-h, -k)$. С другой стороны, в силу отмеченной симметрии, ОВД переводится сама в себя. Это означает, что ${U_{h, k}} = {U_{-h, -k}}$ и достаточно ограничиться значениями
\begin{equation}\label{eeq5_14}
  k > 0
\end{equation}
(в случае Горячева $k = 0$ все траектории замкнуты, и геометрический анализ ничего нового уже не дает).

Бифуркационное множество, отвечающее рассматриваемому случаю, указано на рис.~\ref{fig_5_1},\,{\it а}. В полуплоскости \eqref{eeq5_14} ОВД не пуста лишь в области между кривыми $\Bif$ и $\Bif_*$ и в области, лежащей правее $\Bif_*$. Для краткости присвоим этим областям шифры $I$ (старый шифр $I-\ts{IV}_*$) и $\ts{II}$ (старый шифр $I-I_*$).

Выясним, какие перестройки обобщенной границы ОВД происходят на сечениях \eqref{eeq5_12} и \eqref{eeq5_13}.

В случае \eqref{eeq5_13} из системы \eqref{beq4_5}, \eqref{eeq5_2} -- \eqref{eeq5_5} находим
\begin{equation}\label{eeq5_15}
  \ds \nu_1 = \frac{3}{2}k^{2/3} - h, \quad \nu_2 = \pm \left[1 - \left(\frac{3}{2}k^{2/3} - h \right)^2\right]^{1/2}.
\end{equation}
Эти значения вещественны лишь в области $I$. Таким образом, здесь в качестве разделяющих кривых выступают бифуркационные: в области $I$ обобщенная граница имеет две симметричные относительно плоскости \eqref{eeq5_12} точки пересечения с плоскостью \eqref{eeq5_13}, в области $\ts{II}$ пересечений нет.

Полезно отметить следующее. Точка $(1, 0, 0)$ всегда принадлежит ОВД (если, конечно, последняя не пуста). Точка $(-1, 0, 0)$ принадлежит внутренности ОВД для значений $(h, k)$ из области $\ts{II}$ и лежит на обобщенной границе, если $(h, k) \in \Bif_*$. Следовательно, для почти всех начальных данных траектория центра масс проходит сколь угодно близко к его нижнему положению. Для верхнего положения центра масс это свойство сохраняется лишь при условии $h \gs \frac{3}{2}k^{2/3} + 1$.

Полюсы $(0, \pm1, 0)$ сферы Пуассона принадлежат ОВД, если $h \gs \frac{3}{2}k^{2/3}$. Лишь при этом условии, следовательно, вторая главная ось инерции может занимать вертикальное положение. В этом смысле кривая
\begin{equation}\label{eeq5_16}
  h = \frac{3}{2}k^{2/3},
\end{equation}
конгруэнтная бифуркационным кривым, является разделяющей, хотя, как будет видно из дальнейшего, изменения типа ОВД при ее пересечении не происходит.

Пусть выполнено \eqref{eeq5_12}. Обозначая
\begin{equation*}
  \begin{array}{l}
    Q(\omega_3) = 8(h + \nu_1)\nu_1^2 - (1+3\nu_1^2)\omega_3^2 ,\\[3mm]
    R(\omega_3) = \nu_1\omega_3^3 + (1 - 2h\nu_1 - 3\nu_1^2)\omega_3 + 27 \nu_1,
  \end{array}
\end{equation*}
перепишем систему \eqref{eeq5_2} -- \eqref{eeq5_4} в виде
\begin{equation}\label{eeq5_17}
  4\nu_3\omega_1 = \omega_3^2 - 2(h + \nu_1)\omega_3 + 2k, \quad 16\nu_1^2\omega_2^2 = Q(\omega_3), \quad R(\omega_3) = 0.
\end{equation}
Отсюда следует, что точка $(\nu_1, 0, \nu_3)$ принадлежит ОВД тогда и только тогда, когда существует вещественный корень $R(\omega_3)$, на котором $Q(\omega_3) \gs 0$. Уравнение \eqref{eeq5_5} в подстановке \eqref{eeq5_17} становится таким: $Q(\omega_3)R'(\omega_3) = 0$. Следовательно, в точках обобщенной границы имеет решение относительно $\omega_3$ хотя бы одна из систем $R(\omega_3) = 0$, $R'(\omega_3) = 0$ или $R(\omega_3) = 0$, $Q(\omega_3) = 0$. Условия совместности этих систем имеют соответственно вид
\begin{eqnarray}
& \nu_1(2h - 3k^{2/3}) = 1 - 3\nu_1^2,\label{eeq5_18}\\
& (1-\nu_2^2)^2[8\nu_1^2 h^3 - 8\nu_1 h^2 + 2(1 + \nu_1^2)(1 - 3\nu_1^2)h + \nonumber \\
& + 2\nu_1(1 + \nu_1^2)^2] - (1+3\nu_1^2)^3k^2 = 0.\label{eeq5_19}
\end{eqnarray}
Разделяющий случай получим, когда уравнения \eqref{eeq5_18} и \eqref{eeq5_19} выполнены при одном и том же значении $\nu_1$, т.е. точка, в которой $R(\omega_3)$ имеет кратный корень, лежит на границе области совместности системы \eqref{eeq5_17}. Выразим $h$ из условия \eqref{eeq5_18} и подставим в \eqref{eeq5_19}:
\begin{equation}\label{eeq5_20}
\begin{array}{l}
  (1 - 9\nu_1^2)^2k^2 - 9\nu_1(1 - \nu_1^2)^2(1 - 9\nu_1^2)k^{4/3} +  \\[3mm]
\qquad   +  24\nu_1(1 - \nu_1^2)^2(1 - 3\nu_1^2)k^{2/3} - 16\nu_1^3(1 - \nu_1^2)^3 = 0.
\end{array}
\end{equation}
Таким образом, получено кубическое уравнение относительно $k^{2/3}$. Его корни легко найти, если заметить, что один из них кратный. Определив из \eqref{eeq5_20} $k = k(\nu_1)$, зависимость $h$ от $\nu_1$ установим из \eqref{eeq5_18}. В результате запишем параметрические уравнения следующих разделяющих кривых:
\begin{eqnarray}
& & \left\{  \begin{array}{l}
     \ds h = \frac{1 + 15\nu_1^4}{2\nu_1(1 - 9\nu_1^2)} \\
     \ds k = \left[\frac{4\nu_1(1 - \nu_1^2)}{1 - 9\nu_1^2} \right]^{3/2}
  \end{array} \right., \quad    \nu_1 \in [-1, -1/3) \cup (0, 1/3); \label{eeq5_21}\\
& & \left\{  \begin{array}{l}
    \ds h = \frac{1 - 3\nu_1^4}{2\nu_1} \\
    \ds k = \left[\nu_1(1 - \nu_1^2)\right]^{3/2}
  \end{array} \right., \quad     \nu_1 \in (0, 1].\label{eeq5_22}
\end{eqnarray}

Отметим здесь также кривую
\begin{equation}\label{eeq5_23}
  2h = k^2,
\end{equation}
аналогичную \eqref{eeq5_16}: если $2h \gs k^2$, то полюсы $(0, 0\pm1)$ сферы Пуассона принадлежат ОВД и ось динамической симметрии твердого тела при почти всех начальных данных проходит сколь угодно близко к вертикали.

Изучим теперь возможные перестройки, происходящие в области \eqref{eeq5_8}. Они появляются в типичном случае, когда на линии складки возникают (попарно) точки сборки, образуя в проекции на сферу криволинейный треугольник <<хвоста>>. Как было показано в предположении \eqref{eeq5_8}, система \eqref{eeq5_2} -- \eqref{eeq5_4} сводится к одному уравнению \eqref{eeq5_11}. Обобщенная граница ОВД находится из условий
\begin{equation}\label{eeq5_24}
  P(\omega_3) = 0, \quad P'(\omega_3) = 0.
\end{equation}
Таким образом, реализуется ситуация, подробно изученная в $\S$ \ref{ssec34}: особые точки (точки возврата) обобщенной границы удовлетворяют уравнению
\begin{equation}\label{eeq5_25}
  P''(\omega_3) = 0,
\end{equation}
а количество особых точек в области \eqref{eeq5_8} может измениться лишь при переходе через такие значения $h, k$, при которых в дополнение к \eqref{eeq5_24}, \eqref{eeq5_25} справедливо равенство
\begin{equation}\label{eeq5_26}
  P'''(\omega_3) = 0.
\end{equation}
Система \eqref{eeq5_24} -- \eqref{eeq5_26} приводит либо к случаю Горячева $k = 0$, который условились не рассматривать, либо, с учетом единичности вектора $\mbs{\nu}$, к параметрическим уравнениям искомой разделяющей кривой $(-1 < \nu_3 < 1)$:
\begin{equation}\label{eeq5_27}
\left\{  \begin{array}{l}
    \ds h = -\frac{1}{2}(11 - 24\nu_3^2 + 12\nu_3^4)\left[\frac{4 - 3\nu_3^2}{(1 - \nu_3^2)(121 - 228\nu_3^2 + 108\nu_3^4)} \right]^{1/2}, \\
    \ds k = 8\nu_3^3\left[\frac{(1 - \nu_3^2)(4 - 3\nu_3^2)}{121 - 228\nu_3^2 + 108\nu_3^4}\right].
  \end{array}\right.
\end{equation}

На рис.~\ref{fig_5_10} нанесены разделяющие кривые \eqref{eeq5_21}, \eqref{eeq5_22}, \eqref{eeq5_27}. Совместно с бифуркационным множеством они разбивают область значений $(h, k)$ с непустыми интегральными многообразиями (точки, лежащие правее кривой $\Bif$\,) на 10 подобластей (\textit{а} - \textit{е}, \textit{а}$'$ - \textit{г}$'$). Для каждой из этих подобластей с помощью ЭВМ на основе уравнений \eqref{eeq5_6}, \eqref{eeq5_7}, \eqref{eeq5_11} выполнено построение обобщенных границ и определено число допустимых скоростей в связных компонентах их дополнения. Соответствующие ОВД показаны на рис.~\ref{fig_5_11}. Здесь требуются некоторые пояснения.

\begin{figure}[ht]
\center{\includegraphics[width=0.5\linewidth]{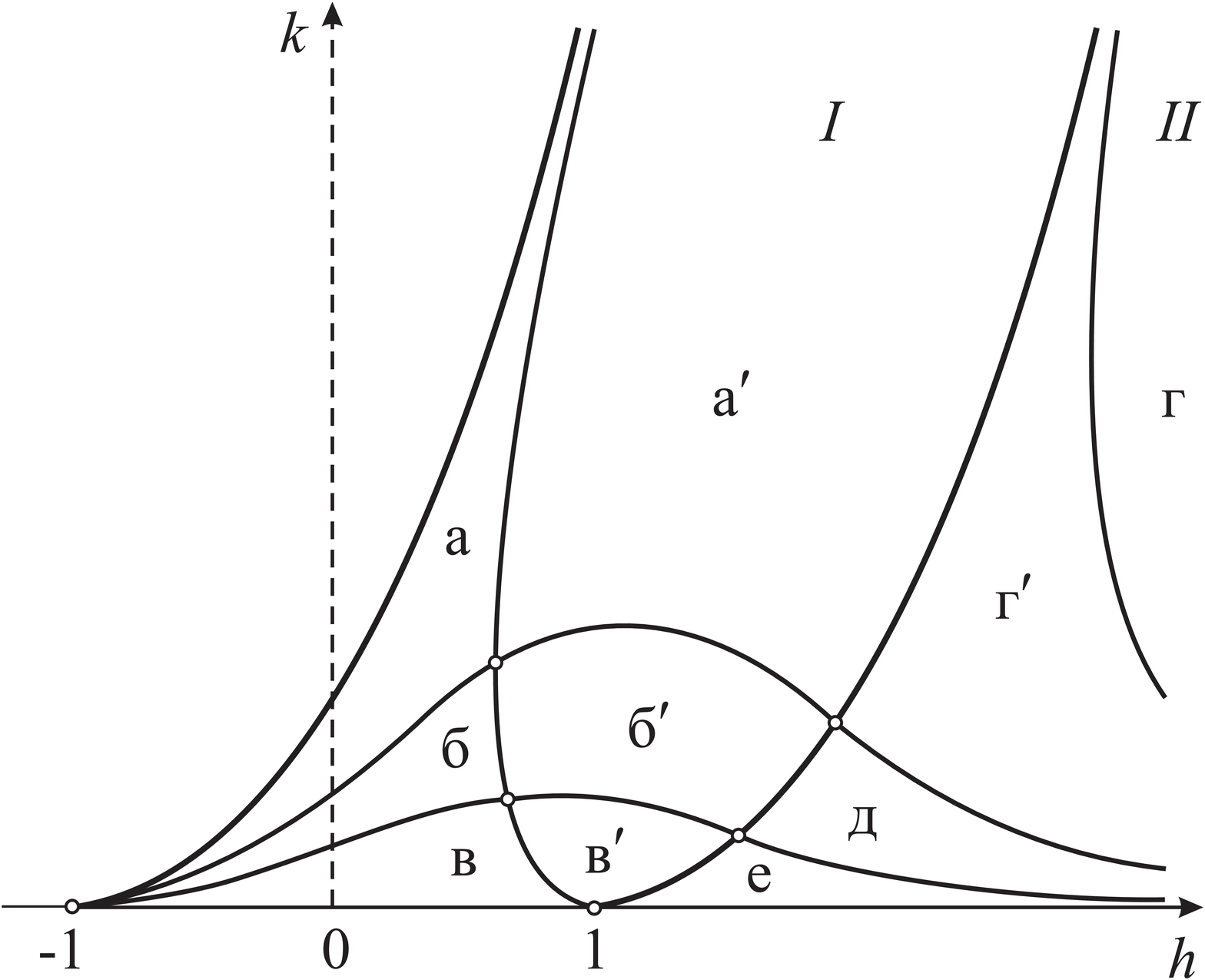}}
\caption{}\label{fig_5_10}
\end{figure}

На рис.~\ref{fig_5_11},\,\textit{а}--\textit{в} области возможности движения изображены в стереографической проекции из полюса $(-1, 0, 0)$. Сплошная линия соответствует подобластям \textit{а}\,--\,\textit{в} (рис.~\ref{fig_5_10}). Пунктиром показаны изменения, происходящие при переходе в подобласти \textit{а}$'$\,--\,\textit{в}$'$. Цифрами обозначено число допустимых скоростей~-- прообразов точки сферы при отображении на нее интегрального тора. При возникновении пунктирных <<треугольников>> число допустимых скоростей внутри них становится равно четырем.

\begin{figure}[ht]
\center{\includegraphics[width=0.6\linewidth]{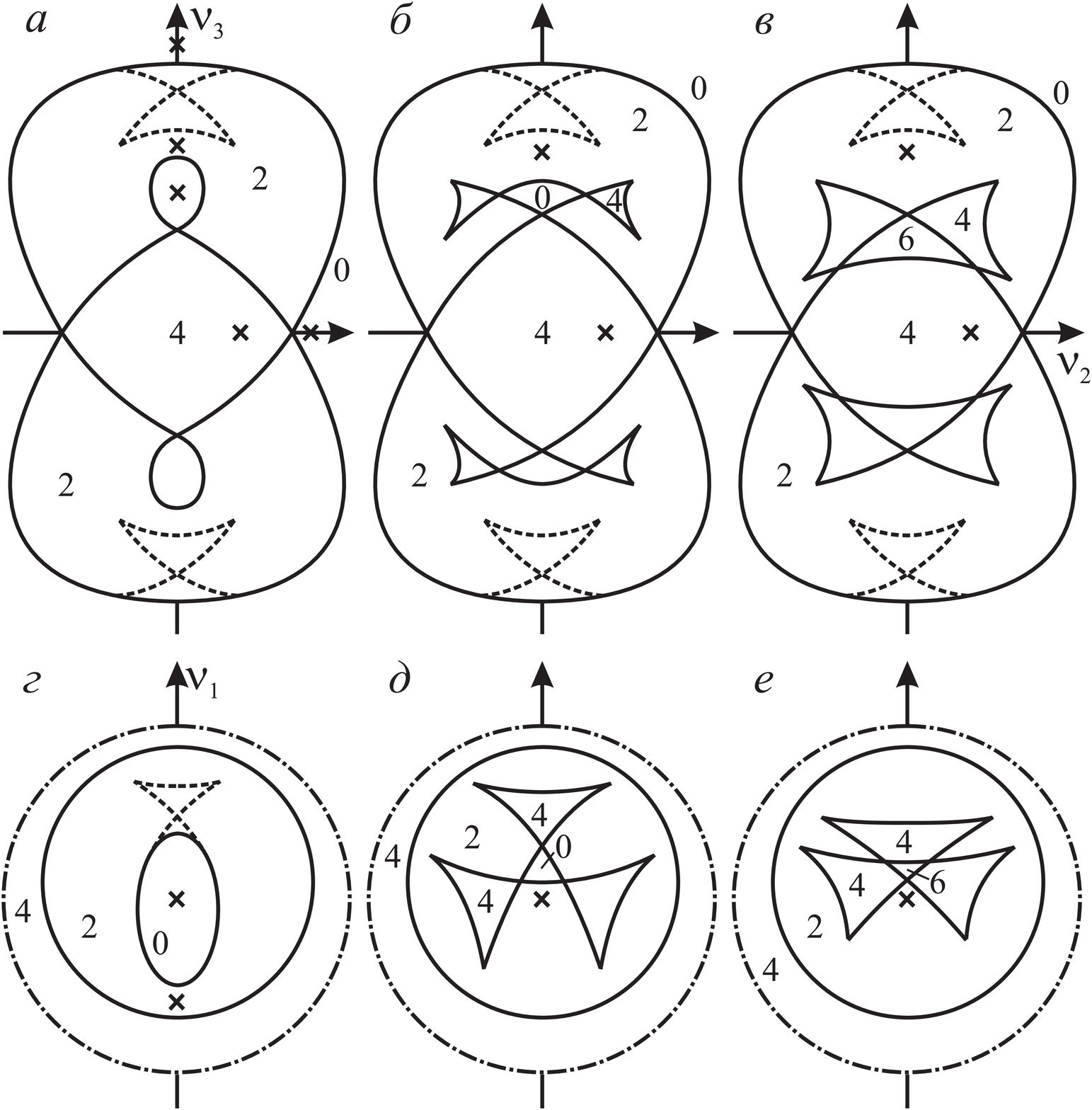}}
\caption{}\label{fig_5_11}
\end{figure}

Возможны различные положения точек $(0, 1, 0)$ и $(0, 0, 1)$ относительно ОВД. На рис.~\ref{fig_5_11} они отмечены крестиком. Какая именно возможность реализуется, нетрудно установить с учетом результатов, изложенных при выводе уравнений кривых \eqref{eeq5_16}, \eqref{eeq5_23}.

По доказанному в $\S$~\ref{ssec54} области $\ts{II}$ отвечает интегральное многообразие, состоящее из двух торов. Проекции каждого из них на сферу симметричны относительно плоскости $\nu_3 = 0$. На рис.~\ref{fig_5_11},\,\textit{г}--\textit{е} приведена картина, возникающая в полусферах $\nu_3 > 0$ и $\nu_3 < 0$ (штрихпунктирная окружность есть сечение $\nu_3 = 0$). Пунктиром показаны изменения, происходящие в подобласти \textit{г}$'$. Крестиком отмечены возможные положения точек $(0, 0, \pm1)$, полусферы для удобства изображения несколько деформированы. Цифры, как и ранее, означают число допустимых скоростей. Узкая полоска около сечения $\nu_3 = 0$ есть образ одного из торов (заметим, что при $k \to 0$ она стягивается к окружности $\nu_3 = 0$, предельные движения -- маятниковые). Эта полоса целиком лежит в образе другого тора, поэтому допустимых скоростей в ней четыре.

Поскольку фазовые траектории условно-периодические, становится полностью ясен характер движений <<в теле>> направляющего вектора вертикали.

Нетривиальные формы ОВД доказывают следующий глубокий факт: разделение переменных в случае Горячева\,--\,Чап\-лы\-ги\-на не может быть произведено точечным каноническим преобразованием -- в противном случае связная компонента области возможности движения была бы диффеоморфна прямоугольнику или кольцу, как в примере \ref{exa331}.

\subsection{Комментарий к главе 5}\label{ssec56}

Решение, анонсированное Л.Н.\,Сретенским в работе \cite{bib45} и более подробно изложенное в \cite{bib44}, представляет собой обобщение классического случая интегрируемости Горячева\,--\,Чап\-лы\-ги\-на \cite{bib15,bib64} на задачу о движении гиростата.

Анализом случая Горячева\,--\,Чап\-лы\-ги\-на занимались многие авторы. Быстрые вращения изучены методом малого параметра в работах Л.Н.\,Сре\-тен\-ско\-го~\cite{bib43} и Ю.А.\,Ар\-хан\-гель\-ско\-го~\cite{bib06}. А.И.\,Дошкевич~\cite{bib19} исследовал зависимость от времени переменных, в которых задача сводится к квадратурам. В работах \cite{bib15} -- \cite{bib17} дана геометрическая интерпретация отмеченного С.А.\,Чаплыгиным класса периодических движений тела. В наших обозначениях -- это движения, соответствующие точкам кривой $\Bif$ при отсутствии гиростатического момента. Аналогичную интерпретацию получили собственно критические периодические движения, отвечающие точкам кривой $\Bif_*$ \cite{bib12}. Движения асимптотического характера, существующие при тех же ограничениях на константы интегралов, не рассматривались.

В.В.\,Козловым \cite{bib26} проанализированы общие свойства волчка Горячева--Чап\-лы\-ги\-на в предположении независимости первых интегралов. Установлены теоремы о движениях оси динамической симметрии и линии узлов, характерные для интегрируемых задач динамики твердого тела.

Несмотря на то, что по числу свободных параметров решение Л.Н. Сретенского стоит в одном ряду с решениями Н.Е. Жуковского и С.В. Ковалевской, качественным исследованиям оно не подвергалось. Не изучались также и бифуркации первых интегралов в случае Горячева\,--\,Чап\-лы\-ги\-на. Затрагивая вопрос о количестве связных компонент интегрального многообразия в некритических случаях, В.В.\,Козлов \cite{bib26} фактически доказал следующее: если постоянные первых интегралов $h$ и $k$ таковы, что при обращении в нуль параметра Пуанкаре (т.е. величины $\gamma$, равной произведению веса тела на расстояние от центра масс до неподвижной точки) соответствующее многообразие $J_{h, k}$ остается непустым, то для достаточно малых $\gamma \neq 0$ оно состоит из двух торов. Это утверждение справедливо, конечно, и в случае Сретенского, но поскольку расстояние по оси $h$ между кривыми $\Bif$ и $\Bif_*$ в исходных <<размерных>> величинах равно $2\gamma$, то его условиям удовлетворяют лишь точки областей $I - I_*$, $\ts{II} - \ts{II}_*$ (рис.~\ref{fig_5_3}). Остальные области, в частности, область $I$ в случае Горячева\,--\,Чап\-лы\-ги\-на (рис.~\ref{fig_5_10}), при $\gamma = 0$ исчезает.

В данной главе выполнено исследование поведения вспомогательных переменных, непосредственно связанных с компонентами угловой скорости тела-носителя в сопутствующих осях, дано полное описание дифференцируемого типа интегральных многообразий и критических интегральных поверхностей, построены области возможности движения в случае Горячева\,--\,Чап\-лы\-ги\-на. В результате выявлены закономерности, специфичные для решения Чаплыгина\,--\,Сре\-тен\-ского.

\clearpage

\section{Фазовая топология решения Ковалевской}\label{sec6}

\subsection{Классы Аппельрота и критические значения\\  интегрального отображения}\label{ssec61}

Решение Ковалевской характеризуется следующими условиями: $A_1 = A_2 = 2A_3$, центр масс лежит в экваториальной плоскости эллипсоида инерции, гиростатический момент отсутствует $(\mbs{\lambda} = 0)$, потенциальное поле есть поле силы тяжести. Выберем подвижные оси так, чтобы центр масс лежал на первой из них. Полагая вектор $\mbs{\nu}_0$ направленным вертикально вниз, запишем потенциальную энергию $\Pi = -\gamma \nu_1$, где $\gamma \neq 0$ есть параметр Пуанкаре. Назначим $\sqrt{\gamma/A_3}$ единицей угловой скорости, а $\sqrt{A_3/\gamma}$ -- единицей времени. Систему \eqref{beq4_29} представим в безразмерных величинах:
\begin{equation}\label{feq1_1}
  \begin{array}{c}
    2\dot{\omega}_1 = \omega_2\omega_3, \quad 2\dot{\omega}_2 = -\omega_3\omega_1 - \nu_3, \quad \dot{\omega}_3 = \nu_2,\\[2mm]
    \dot{\nu}_1 = \omega_3\nu_2 - \omega_2\nu_3, \quad \dot{\nu}_2 = \omega_1\nu_3 - \omega_3\nu_1, \quad \dot{\nu}_3 = \omega_2\nu_1 - \omega_1\nu_2.
  \end{array}
\end{equation}
Ее первые интегралы таковы:
\begin{eqnarray}
& &  2(\omega_1^2 + \omega_2^2) + \omega_3^2 - 2\nu_1 = 2h,\label{feq1_2}\\
& &  2(\omega_1\nu_1 + \omega_2\nu_2) + \omega_3\nu_3 = 2g,\label{feq1_3}\\
& &  \nu_1^2 + \nu_2^2 + \nu_3^2 = 1,\label{feq1_4}\\
& &  (\omega_1^2 - \omega_2^2 + \nu_1)^2 + (2\omega_1\omega_2 + \nu_2)^2 = k.\label{feq1_5}
\end{eqnarray}

Введем переменные Ковалевской $s_1, s_2$:
\begin{equation}\label{feq1_6}
  s_{1,2} = h + \frac{R(x_1, x_2) \mp \sqrt{R(x_1)R(x_2)}}{(x_1 - x_2)^2}.
\end{equation}
Здесь
\begin{eqnarray}
& &  x_1 = \omega_1 + i\omega_2, \quad x_2 = \omega_1 - i\omega_2,\label{feq1_7}\\
& &  R(x_1, x_2) = -x_1^2x_2^2 + 2hx_1x_2 + 2g(x_1 + x_2) +1 - k,\label{feq1_8}\\
& &  R(x) = -x^4 + 2hx^2 + 4gx + 1 -k.\label{feq1_9}
\end{eqnarray}
Зависимость переменных \eqref{feq1_6} от времени определяется уравнениями
$$(s_1 - s_2)^2\dot{s}_1^2 = -2\Phi(s_1), \qquad (s_1 - s_2)^2\dot{s}_2^2 = -2\Phi(s_2),
$$
в которых
\begin{eqnarray}
& &   \Phi(s) = (s - h + \sqrt{k})(s - h - \sqrt{k})\varphi(s),\label{feq1_10}\\
& &  \varphi(s) = s^3 - 2hs^2+(h^2 + 1 - k)s - 2g^2,\label{feq1_11}
\end{eqnarray}
причем \eqref{feq1_11} есть резольвента Эйлера многочлена \eqref{feq1_9}.

Движения, соответствующие таким постоянным первых интегралов, при которых многочлен \eqref{feq1_10} имеет кратный корень, согласно классификации, данной Г.Г.\,Ап\-пель\-ротом \cite{bib03}, называют простейшими. Нетрудно видеть, что поверхность кратных корней в $\bbRR(h,k,g)$ состоит из плоскости
\begin{equation}\label{feq1_12}
  k = 0
\end{equation}
(1-й класс простейших движений по Аппельроту, или класс Делоне), поверхности
\begin{equation}\label{feq1_13}
  k = (h-2g^2)^2
\end{equation}
(2-й и 3-й классы простейших движений) и поверхности кратных корней резольвенты \eqref{feq1_11}, имеющей уравнение
\begin{equation}\label{feq1_14}
  (1-k)(h^2+1-k)^2 - 2\bigl[9h(1-k)+h^3\bigr]g^2 + 27g^4 = 0
\end{equation}
(4-й класс простейших движений). Поверхность \eqref{feq1_14} можно представить и в параметрической форме:
\begin{equation}\label{feq1_15}
  \ds h = s + \frac{g^2}{s^2}, \quad k = 1 - \frac{2g^2}{s} + \frac{g^4}{s^4}
\end{equation}
или
\begin{equation}\label{feq1_16}
  \ds h = x^2 - \frac{g}{x}, \quad k = 1 + 2gx + x^4.
\end{equation}
В записи \eqref{feq1_15} параметр $s$ играет роль кратного корня $\varphi(s)$, а в записи \eqref{feq1_16} $x$ есть кратный корень исходного многочлена \eqref{feq1_9}.

Изучим особенности системы \eqref{feq1_2} -- \eqref{feq1_5}. Критические точки интегрального отображения \eqref{ceq4_10} в данном случае удобно находить из условия
\begin{equation}\label{feq1_17}
  \rang \left(
  \begin{array}{cccccc}
    2\omega_1 & 2\omega_2 & -1 & 0 & \omega_3 & 0 \\
    2\nu_1 & 2\nu_2 & 2\omega_1 & 2\omega_2 & \nu_3 & \omega_3 \\
    0 & 0 & \nu_1 & \nu_2 & 0 & \nu_3 \\
    2(\omega_1 \eta_1 + \omega_2 \eta_2) & 2(\omega_1 \eta_2 - \omega_2 \eta_1) & \eta_1 & \eta_2 & 0 & 0
  \end{array}
  \right) <4.
\end{equation}
Здесь обозначено для краткости
\begin{equation}\label{feq1_18}
  \eta_1 = \omega_1^2 - \omega_2^2 + \nu_1, \quad \eta_2 = 2\omega_1 \omega_2 + \nu_2.
\end{equation}

Рассмотрим некоторые случаи вырождения переменных \eqref{feq1_6}. Пусть в соотношениях \eqref{feq1_2} -- \eqref{feq1_5}
\begin{equation}\label{feq1_19}
  \omega_2 = 0.
\end{equation}
Внесем значения $h, g, k$ из \eqref{feq1_2}, \eqref{feq1_3}, \eqref{feq1_5} в многочлен \eqref{feq1_9} и его производную, полагая попутно $x = \omega_1$. В результате преобразований получим
\begin{equation}\label{feq1_20}
  R(\omega_1) = (\omega_1 \omega_3 + \nu_3)^2, \quad R'(\omega_1) = 2\omega_3(\omega_1 \omega_3 + \nu_3).
\end{equation}
Заменим в матрице \eqref{feq1_17} последнюю строку на комбинацию строк с коэффициентами $\omega_1^2$, $\omega_1$, $1$, $-1$ соответственно. После элементарных преобразований, сводящихся к сокращению ненулевых множителей и перестановке строк либо столбцов, придем к условию
\begin{equation}\label{feq1_21}
  \rang \left(
  \begin{array}{cccccc}
    \omega_1 \omega_3 + \nu_3 & 0 & 0 & 0 & 0 & \omega_1(\omega_1 \omega_3 + \nu_3) \\
    0 & -1 & 0 & 0 & \omega_1 & \omega_3 \\
    \omega_3 & 2\omega_1 & \nu_2 & 0 & \nu_1 & \nu_3 \\
    \nu_3 & \nu_1 & 0 & \nu_2 & 0 & 0
  \end{array}
  \right) <4.
\end{equation}
Равенство нулю определителя, составленного из первых столбцов, дает
$$
\nu_2(\omega_1 \omega_3 + \nu_3) = 0.
$$
Если $\omega_1 \omega_3 + \nu_3 = 0$, то в силу \eqref{feq1_20} многочлен \eqref{feq1_9} имеет кратный корень, и выполнено \eqref{feq1_16} с $x = \omega_1$. Пусть
\begin{equation}\label{feq1_22}
  \nu_2 = 0, \quad \omega_1 \omega_3 + \nu_3 \neq 0.
\end{equation}
Условие \eqref{feq1_21} примет вид $(\omega_1^2 + \nu_1)(2\omega_1\nu_3 - \omega_3\nu_1) = 0$. Случай $\omega_1^2 + \nu_1 = 0$ вместе с \eqref{feq1_19}, \eqref{feq1_22} приводит к равенству \eqref{feq1_12}. Если же
\begin{equation}\label{feq1_23}
  2\omega_1\nu_3 - \omega_3\nu_1 = 0,
\end{equation}
то, используя формулы \eqref{feq1_4}, \eqref{feq1_19}, \eqref{feq1_22}, находим
$$
h - 2g^2 = -(\omega_1^2 + \nu_1), \qquad k = (\omega_1^2 + \nu_1)^2,
$$
откуда вытекает \eqref{feq1_13}. Итак, все критические значения, достигаемые в случае \eqref{feq1_19}, отвечают классам Аппельрота.

Исследуем еще одну возможность:
\begin{equation}\label{feq1_24}
  \omega_3 = 0, \quad \nu_3 = 0, \quad \omega_2 \neq 0.
\end{equation}
В матрице \eqref{feq1_17} последние два столбца нулевые. Приравнивая нулю оставшийся определитель четвертого порядка, получаем
$$
2(\omega_1^2  - \omega_2^2 + \nu_1)\nu_1\nu_2 - (2\omega_1\omega_2 + \nu_2)(\nu_1^2 - \nu_2^2) = 0,
$$
что позволяет ввести неопределенный множитель $\varkappa$:
\begin{equation}\label{feq1_25}
  \omega_1^2  - \omega_2^2 + \nu_1 = \varkappa(\nu_1^2 - \nu_2^2), \quad 2\omega_1\omega_2 + \nu_2 = 2\varkappa\nu_1\nu_2.
\end{equation}
Подстановка этих выражений в \eqref{feq1_5} с учетом \eqref{feq1_4} приводит к равенству
\begin{equation}\label{feq1_26}
  \varkappa = \pm \sqrt{k}.
\end{equation}
Из \eqref{feq1_4} и \eqref{feq1_25} находим
\begin{equation}\label{feq1_27}
  (\omega_1^2 + \omega_2^2)^2 = (\varkappa \nu_1 - 1)^2 + \varkappa^2\nu_2^2.
\end{equation}
Систему \eqref{feq1_25} запишем в виде
\begin{equation*}
  \begin{array}{c}
    (\varkappa \nu_1 - 1)(\omega_1 \nu_1 + \omega_2 \nu_2) + \varkappa \nu_2(\omega_2 \nu_1 - \omega_1 \nu_2) = \omega_1(\omega_1^2 + \omega_2^2),\\
    \varkappa \nu_2(\omega_1 \nu_1 + \omega_2 \nu_2) - (\varkappa \nu_1 - 1)(\omega_2 \nu_1 - \omega_1 \nu_2) = \omega_2(\omega_1^2 + \omega_2^2),
  \end{array}
\end{equation*}
откуда, учитывая \eqref{feq1_27}, имеем
\begin{equation*}
  \ds \omega_1 \nu_1 + \omega_2 \nu_2 = \frac{\omega_1}{\varkappa - \omega_1^2 - \omega_2^2}, \quad \omega_2 \nu_1 - \omega_1 \nu_2 = \frac{\omega_2}{\varkappa + \omega_1^2 + \omega_2^2}.
\end{equation*}
Следовательно,
\begin{equation*}
  \ds \nu_1 = \frac{\omega_1^2 - \omega_2^2 + \varkappa}{\varkappa^2 - (\omega_1^2 + \omega_2^2)^2}, \quad \nu_2 = \frac{2\omega_1 \omega_2}{\varkappa^2 - (\omega_1^2 + \omega_2^2)^2}.
\end{equation*}
Введя найденные выражения вместе с \eqref{feq1_24} в соотношения \eqref{feq1_2} -- \eqref{feq1_4}, получим
\begin{equation}\label{feq1_28}
  \begin{array}{c}
    \ds h = \omega_1^2 + \omega_2^2 - \frac{\omega_1^2 - \omega_2^2 + \varkappa}{\varkappa^2 - (\omega_1^2 + \omega_2^2)^2}, \quad k = \frac{\omega_1}{\varkappa^2 - \omega_1^2 - \omega_2^2}, \\[3mm]
    (\omega_1^2 + \omega_2^2)^4 - (1 + 2\varkappa^2)(\omega_1^2 + \omega_2^2)^2 -2 \varkappa (\omega_1^2 + \omega_2^2)^2 - \varkappa^2(1 - \varkappa^2) = 0,
  \end{array}
\end{equation}
откуда $h - 2g^2 + \varkappa = 0$. Последнее в силу \eqref{feq1_26} приводит к соотношению \eqref{feq1_13}. Таким образом, и при условии \eqref{feq1_24} критические значения отвечают классам Аппельрота.

Обратимся к общему случаю и покажем, что в предположении
\begin{equation}\label{feq1_29}
  \omega_3^2 + \nu_3^2 \neq 0, \quad \omega_2 \neq 0
\end{equation}
новых критических значений система \eqref{feq1_2} -- \eqref{feq1_5} не имеет.

Вначале отметим вытекающее из \eqref{feq1_2} -- \eqref{feq1_5}, \eqref{feq1_7}, \eqref{feq1_8} тождество
$$
R(x_1, x_2) = (\omega_1\omega_3 + \nu_3)^2 + \omega_2^2 \omega_3^2,
$$
так что при условии \eqref{feq1_29}
\begin{equation}\label{feq1_30}
  R(x_1, x_2) \neq 0, \quad x_1 \neq x_2.
\end{equation}
Введем в дополнение к \eqref{feq1_7} переменные $\xi_1 = \eta_1 + i\eta_2$, $\xi_2 = \eta_1 - i\eta_2$. Согласно \eqref{feq1_18} они связаны невырожденным преобразованием с $\nu_1, \nu_2$. Исключая из \eqref{feq1_2} -- \eqref{feq1_6} величины $\omega_3$, $\nu_3$, приходим к двум соотношениям \cite{bib24}
\begin{equation}\label{feq1_31}
  \begin{array}{c}
    \xi_1 \xi_2 = k, \\
    R(x_2)\xi_1 + R(x_1)\xi_2 + R_1(x_1, x_2) + (x_1 - x_2)^2k = 0,
  \end{array}
\end{equation}
где
\begin{equation}\label{feq1_32}
  R_1(x_1, x_2) = -2hx_1^2x_2^2 - 4g(x_1 + x_2)x_1x_2 - (1-k)(x_1 + x_2)^2 + 2(1-k)h - 4g^2.
\end{equation}
Первого неравенства \eqref{feq1_29} достаточно для того, чтобы условие \eqref{feq1_17} было равносильно системе, определяющей зависимость уравнений \eqref{feq1_31}:
\begin{eqnarray}
&  R(x_1)\xi_2 = R(x_2)\xi_1,\label{feq1_33}\\
&  \begin{array}{c}
    \ds R'(x_1)\xi_2 + \frac{\partial R_1(x_1, x_2)}{\partial x_1} + 2(x_1 - x_2)k = 0, \\
    \ds R'(x_2)\xi_1 + \frac{\partial R_1(x_1, x_2)}{\partial x_2} - 2(x_1 - x_2)k = 0.
  \end{array}\label{feq1_34}
\end{eqnarray}
Следуя \cite{bib24}, выразим $\xi_1, \xi_2$ из соотношений \eqref{feq1_31}:
\begin{equation}\label{feq1_35}
  \begin{array}{c}
    2R(x_2)\xi_1 = -[R_1(x_1, x_2) + (x_1 - x_2)^2 k] + W(x_1, x_2), \\
    2R(x_1)\xi_2 = -[R_1(x_1, x_2) + (x_1 - x_2)^2 k] - W(x_1, x_2),
  \end{array}
\end{equation}
где $W(x_1, x_2) = \bigl\{\left[R_1(x_1, x_2) - (x_1 - x_2)^2 k \right]^2 - 4kR^2(x_1, x_2) \bigr\}^{1/2}$.
Таким образом, условие \eqref{feq1_33} приводит к равенству $W(x_1, x_2) = 0$ или, принимая обозначение \eqref{feq1_26}, к соотношению
\begin{equation}\label{feq1_36}
  R_1(x_1, x_2) - (x_1 - x_2)^2\varkappa^2 = 2\varkappa R(x_1, x_2).
\end{equation}
Кроме того, подставляя $\xi_1, \xi_2$ из \eqref{feq1_35} в \eqref{feq1_34}, имеем
\begin{equation*}
  \begin{array}{c}
     \ds R(x_1) \bigl[\frac{1}{2} \frac{\partial R_1(x_1, x_2)}{\partial x_1} + (x_1 - x_2)^2 \varkappa^2 \bigr] - \frac{1}{4} R'(x_1)\left[R_1(x_1, x_2) - (x_1 - x_2)^2 \varkappa^2 \right] = 0,\\
     \ds R(x_2) \bigl[\frac{1}{2} \frac{\partial R_1(x_1, x_2)}{\partial x_2} + (x_1 - x_2)^2 \varkappa^2 \bigr] - \frac{1}{4} R'(x_2)\left[R_1(x_1, x_2) - (x_1 - x_2)^2 \varkappa^2 \right] = 0.
   \end{array}
\end{equation*}
Преобразуем эту систему с помощью матрицы
$$
\left(\begin{array}{cc}
1 & 1 \\
x_2 & -x_1
\end{array}\right).
$$
После сокращения на $x_1 - x_2 \neq 0$ получим
\begin{equation}
\begin{array}{l}
     2(1 - \varkappa^2)^2 + 2(1 - \varkappa^2)[h^2 - 1 + x_1^2x_2^2 + 3g(x_1 + x_2)] +  \\[3mm]
     \quad + 2h^2 x_1^2x_2^2 + h[-(x_1 + x_2)^2 + 4g(x_1 + x_2)x_1x_2 - 4g^2] + \\[3mm]
     \quad + (x_1^2 + x_2^2)x_1x_2 + 2g(x_1^2x_2^2 - 2)(x_1 + x_2)+\\[3mm]
     \quad+ 4g^2(x_1^2 + 3x_1x_2 + x_2^2) = 0,
   \end{array}\label{feq1_37}
\end{equation}
\begin{equation}
[R_1(x_1, x_2) - (x_1 - x_2)^2 \varkappa^2]g + (x_1 + x_2 + 2gx_1x_2)R(x_1,x_2) = 0.\label{feq1_38}
\end{equation}
Из \eqref{feq1_36}, \eqref{feq1_38} получим
$$
[x_1 + x_2 + 2g(x_1x_2 + \varkappa)]R(x_1, x_2) = 0,
$$
так что по предположению \eqref{feq1_30}
\begin{equation}\label{feq1_39}
  x_1 + x_2 = -2g(x_1x_2 + \varkappa).
\end{equation}
С помощью этого выражения исключим $x_1+x_2$ из \eqref{feq1_37}, \eqref{feq1_38} и в результате преобразований запишем
\begin{equation*}
  \begin{array}{l}
     (\varkappa - h + 2g^2)[4g^2(x_1 x_2 + \varkappa)^2 - (h + \varkappa)(x_1^2x_2^2 + 1 - \varkappa^2)] = 0, \\[3mm]
     (\varkappa - h + 2g^2)[(x_1x_2+\varkappa)^2 - 1] = 0.
   \end{array}
\end{equation*}
Отсюда либо $\varkappa - h + 2g^2 = 0$, что приводит к \eqref{feq1_13}, либо ${x_1x_2 = \pm 1 - \varkappa}$, ${x_1 + x_2 = \mp g}$, ${2g^2 = (h + \varkappa)(1 \mp \varkappa)}$. В последнем случае непосредственная проверка дает $R(x_1, x_2) = 0$, а это противоречит \eqref{feq1_30}. Тем самым доказано, что бифуркационное множество $\Sigma$ представляет собой часть поверхности кратных корней многочлена \eqref{feq1_10}, отвечающую действительным решениям системы \eqref{feq1_2} -- \eqref{feq1_5}.

В заключение отметим, что множество критических точек состоит из тех траекторий уравнений \eqref{feq1_1}, которые в терминологии Аппельрота соответствуют особо замечательным движениям, т.е. таким, в которых одна из величин \eqref{feq1_6} остается постоянной и равной кратному корню многочлена \eqref{feq1_10}.

\subsection{Бифуркационное множество и интегральные многообразия}\label{ssec62}

Вид уравнений, определяющих листы бифуркационной поверхности, показывает, что удобно рассмотреть сечения $\Sigma_g$ множества $\Sigma$ плоскостями ${g = \cons}$. Множества $\Sigma_g$ имеют и самостоятельное значение, как показано в $\S$~\ref{ssec34}. Поскольку $\Sigma_g = \Sigma_{-g}$, ограничимся случаем $g \gs 0$. Считаем, что $\Sigma_g\subset \bbR^2(k, h)$.

Уравнение \eqref{feq1_13} определяет на плоскости $\bbR^2(k, h)$ параболу с вершиной $(0, 2g^2)$. Здесь комментариев не требуется.

Обратимся к кривой \eqref{feq1_14}. При $g = 0$ она распадается на прямую $k = 1$ и параболу $k = 1 + h^2$. Если $g > 0$, воспользуемся записью \eqref{feq1_16}. Исследуемая кривая имеет точку возврата $x = -(g/2)^{1/3}$, вертикальную асимптоту $k = 1$ ($x \to \pm0$, $h \to \mp \infty$). При $x \to \pm \infty$ обе координаты $k, h$ стремятся к $+\infty$ так, что точка $(k(x), h(x))$ асимптотически приближается к соответствующей кривой $k = h^2 \pm 4g\sqrt{h} + 1$.

В общем случае кривая \eqref{feq1_16} имеет с параболой \eqref{feq1_13} точки пересечения
\begin{eqnarray}
& &  (k, h) = ((g^2 + 1)^2, g^2 - 1), \quad x = g,\label{feq2_1}\\
& &  (k, h) = ((g^2 - 1)^2, g^2 + 1), \quad x = -g\label{feq2_2}
\end{eqnarray}
и точку касания
\begin{equation}\label{feq2_3}
  \ds (k, h) = \left(\frac{1}{16g^4}, \frac{1}{4g^2} + 2g^2 \right), \quad x = -\frac{1}{2g}.
\end{equation}
Исключение составляет значение $g^2 = 1/2$, когда точки \eqref{feq2_3}, \eqref{feq2_2} совпадают между собой и с точкой возврата. При $g^2 = 4/(3\sqrt{3})$ точка возврата попадает на ось $k = 0$ (являющуюся, очевидно, частью границы области существования движений). Еще одна особенность возникает при $g = 1$: точка \eqref{feq2_2} переходит с одной ветви параболы \eqref{feq1_13} на другую.

В промежутке $4/(3\sqrt{3}) < g^2 < 1$ кривая \eqref{feq1_16} имеет две общие точки с прямой ${k = 0}$ при ${h > 2g^2}$. Их можно записать параметрически:
\begin{equation}\label{feq2_4}
  \ds h = \frac{1 + 3s^2}{2s}, \quad g^2 = \frac{(1+s^2)^2}{4s}, \quad 0 < s < 1.
\end{equation}
В случае $g^2 = 1$ одна из точек \eqref{feq2_4}, отвечающая большему значению $s$, равному в этот момент $1$, сливается с точкой \eqref{feq2_2} и при дальнейшем увеличении $g^2$ переходит на часть прямой $k = 0$ с $h < 2g^2$, где действительные движения отсутствуют.

Отметим, что кривая \eqref{feq2_4} и образ на плоскости $\{(h, g)\}$ множества точек \eqref{feq2_1}, \eqref{feq2_2}, состоящий из двух парабол $h = g^2 \pm 1$, образуют бифуркационное множество интегралов энергии и площадей, изучавшиеся в \cite{bib67}. А.\,Якоб ошибочно включил сюда и отрезки ${g^2 = 4/(3\sqrt{3})}$, ${\sqrt{3} < h < (9 + 4\sqrt{3})/9}$, не приводя никаких доводов в пользу этого включения. Нетрудно проверить непосредственно, что эти отрезки не являются критическими значениями соответствующего эффективного потенциала \cite{bib51}.

Проследим трансформацию кривой \eqref{feq1_16} при $g \to 0$. Ветвь $-\infty < x < 0$ переходит в луч $\{k = 1, h \gs 0\}$ и верхнюю часть параболы $k = 1 + h^2$ ($h \gs 0$). Точка возврата, следовательно, попадает в точку $(1, 0)$. Ветвь $0 < x < +\infty$ <<склеивается>> с лучом $\{k = 1, h \ls 0\}$ и той же половиной параболы $k = 1 + h^2$. Точка касания кривой \eqref{feq1_16} и параболы \eqref{feq1_13} при $g \to 0$ уходит в бесконечность.

\begin{figure}[h]
\centering
\includegraphics[width=0.9\linewidth]{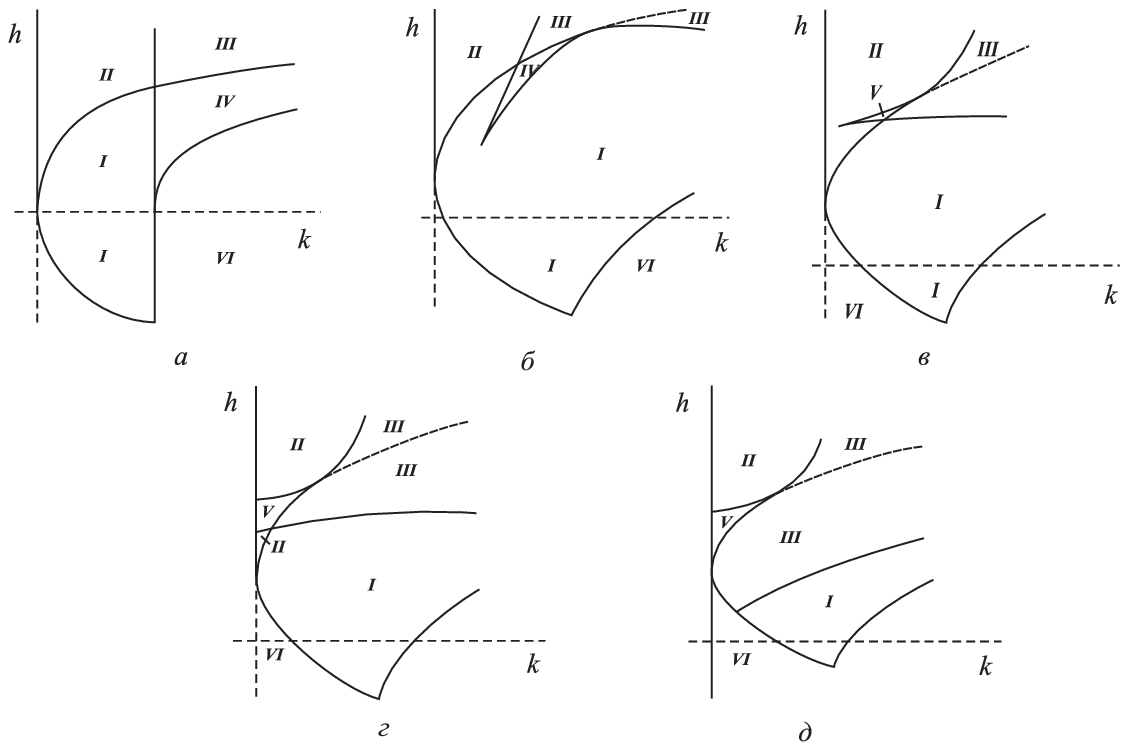}
\caption{}\label{fig_6_1}
\end{figure}

Качественно различные виды множеств $\Sigma_g$ приведены на рис.~\ref{fig_6_1},\,{\it а-д} для следующих значений постоянной площадей $g$, выбранной в качестве параметра: а)~$g = 0$; б)~$0 < g < 1/2$; в)~$1/2 < g^2 < 4/(3\sqrt{3})$; г)~$4/(3\sqrt{3}) < g^2 < 1$; д)~$g^2 > 1$.

Как уже отмечалось, в силу компактности интегральных многообразий их вид может изменяться лишь при переходе точки $(h, k, g)$ через критическое значение, т.е. через множество \eqref{feq1_12} -- \eqref{feq1_14}. Следовательно, $J_{h,k,g} = \varnothing$ в областях, объединенных под номером $\ts{VI}$ (поскольку они содержат точки с $k<0$ или $h < -1$).

Покажем, что в $\Sigma_g$ не включается часть верхней ветви
\begin{equation}\label{feq2_5}
  h = 2g^2+\sqrt{k}
\end{equation}
параболы \eqref{feq1_13}, лежащая правее точки \eqref{feq2_3} касания с кривой \eqref{feq1_16}. Для этого достаточно доказать, что при условии \eqref{feq2_5} действительным критическим точкам отвечают значения
\begin{equation}\label{feq2_6}
  \ds \sqrt{k} \ls \frac{1}{4g^2}
\end{equation}
(этот факт соответствует утверждению \cite{bib03,bib21} об отсутствии в 3-м классе при $\sqrt{k} > 1/(4g^2)$ особо замечательных движений).

Значения \eqref{feq2_5} достигаются, во-первых, если выполнены условия \eqref{feq1_19}, \eqref{feq1_22}, \eqref{feq1_23}
\begin{equation}\label{feq2_7}
  2\omega_1\nu_3 - \omega_3\nu_1 = 0, \quad \omega_2 = 0, \quad \nu_2 = 0,
\end{equation}
во-вторых, при условиях \eqref{feq1_28} и $\varkappa = -\sqrt{k}$, и в-третьих, в случае \eqref{feq1_39} с $\varkappa = -\sqrt{k}$. На основании двух последних возможностей приходим к уравнению
\begin{equation}\label{feq2_8}
  g(\omega_1^2 + \omega_2^2) + \omega_1 + g\sqrt{k} = 0.
\end{equation}
Оно имеет действительные решения относительно $\omega_1, \omega_2$ тогда и только тогда, когда имеет действительные корни трехчлен $g\omega_1^2  + \omega_1 + g\sqrt{k}$, что и приводит к неравенству \eqref{feq2_6}. Обратимся к случаю \eqref{feq2_7}. Из \eqref{feq1_3} и первого уравнения \eqref{feq2_7} находим
\begin{equation*}
  \ds \nu_1 = \frac{4g\omega_1}{4\omega_1^2 + \omega_3^2}, \quad \nu_2 = \frac{2g\omega_3}{4\omega_1^2 + \omega_3^2}.
\end{equation*}
Подставляя эти выражения в \eqref{feq1_4}, получаем $4\omega_1^2 + \omega_3^2 = 4g^2$, после чего соотношение \eqref{feq1_5} принимает вид \eqref{feq2_8}, т.е. и здесь необходимо выполнение неравенства \eqref{feq2_6}.

\begin{figure}[ht!]
\centering
\includegraphics[width=0.5\linewidth]{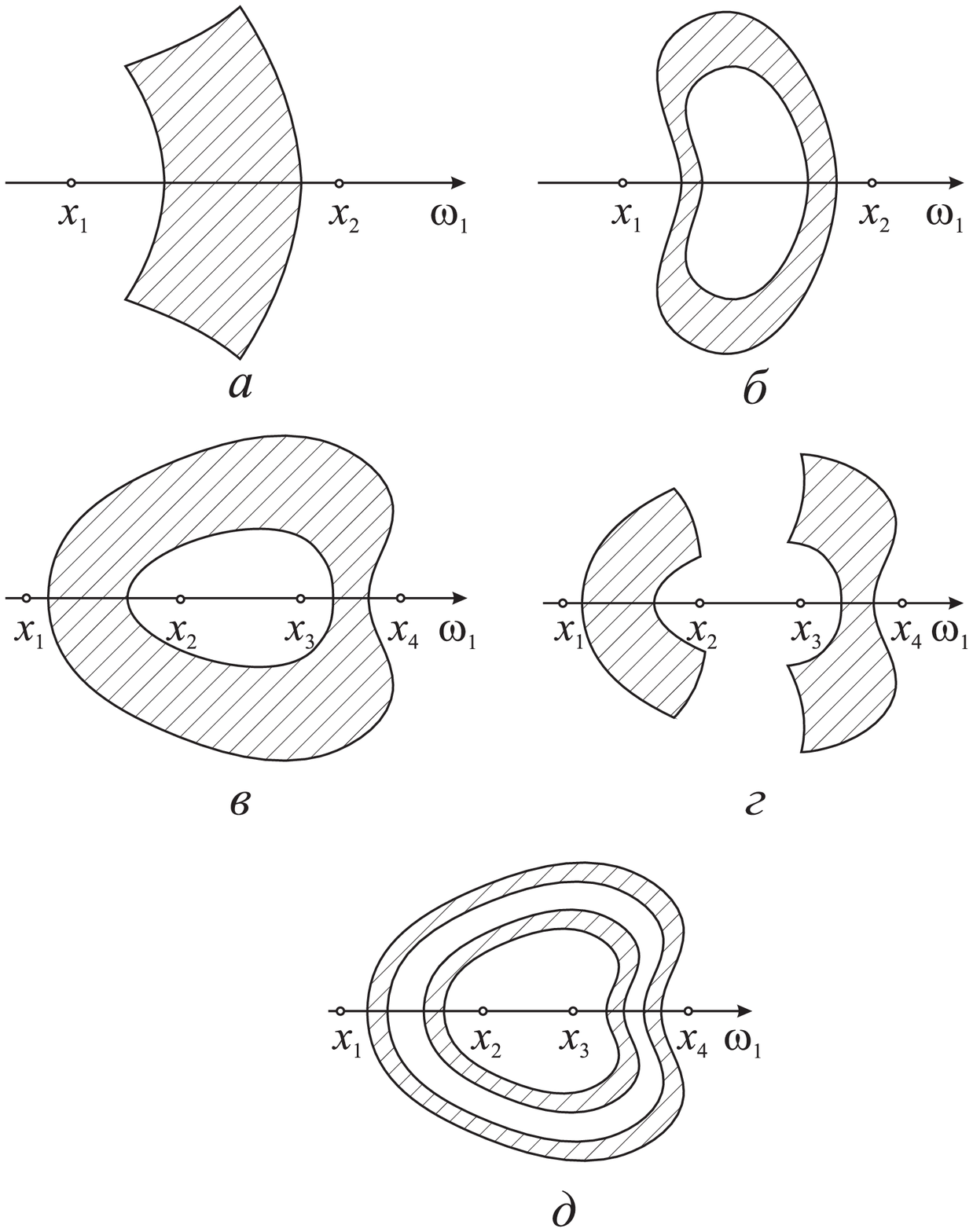}
\caption{}\label{fig_6_2}
\end{figure}

На рис.~\ref{fig_6_1},\,\textit{а}--\textit{д} одинаковыми цифрами обозначены области, объединение которых в пространстве $\bbI$ дает связную компоненту $\bbRR\backslash \Sigma$. Таким образом, существует пять компонент, в которых интегральные многообразия непусты. Чтобы установить число торов, входящих в $J_{h, k, g}$, рассмотрим проекцию последнего на плоскость $\omega_1\omega_2$ (что в \cite{bib03} названо областью действительных движений). Нетрудно установить связь между подобластями $I-V$ в $\bbRR\backslash \Sigma$ и случаями, рассмотренными в \cite{bib03}. В результате для подобластей $I-V$ получаем проекции, указанные соответственно на рис. \ref{fig_6_2},\,\textit{а}--\textit{д} (ось $\omega_2$ вертикальна, $x_i$ -- вещественные корни многочлена $R(x)$).

Установим число прообразов внутренних точек заштрихованных множеств. Поскольку все они имеют выход на ось $\omega_1$, достаточно сделать это при $\omega_2 = 0$. По формулам \eqref{feq1_2}, \eqref{feq1_3} имеем
\begin{equation}\label{feq2_9}
  \ds \nu_1 = \omega_1^2 + \frac{1}{2}\omega_3^2 - h, \quad \omega_3\nu_3 = -\omega_1\omega_3^2 + 2(g + h\omega_1 - \omega_1^3).
\end{equation}
В результате подстановки значений \eqref{feq2_9} в \eqref{feq1_4}, \eqref{feq1_5} с учетом обозначений \eqref{feq1_8}, \eqref{feq1_32} получаем
\begin{equation}\label{feq2_10}
  4R(\omega_1)\omega_3^2 = [R'(\omega_1)]^2, \quad 4R^2(\omega_1)\nu_2^2 = 4kR^2(\omega_1) - R_1^2(\omega_1, \omega_1).
\end{equation}
Пересечение области действительных движений с осью $\omega_1$ состоит из отрезков в множестве $\{\omega_1: R(\omega_1) > 0\}$, концы которого являются нулями функции
$$\ds F(\omega_1) = k - \frac{R_1^2(\omega_1, \omega_1)}{4R^2(\omega_1)}.
$$
Пусть $\omega_1$ -- внутренняя точка такого отрезка, т.е. $F(\omega_1) > 0$. Если при этом $R'(\omega_1) \neq 0$, то система \eqref{feq2_10} имеет четыре решения относительно $\omega_3, \nu_2$, а $\nu_1, \nu_3$ однозначно определяются из \eqref{feq2_9}. Если же $R'(\omega_1) = 0$, то непосредственно из \eqref{feq1_2} -- \eqref{feq1_5} находим $\omega_3 = 0$, $\nu_1 = g/\omega_1$, $\nu_2^2 = F(\omega_1)$, $\nu_3^2 = R(\omega_1)$ и решений снова четыре.

Итак, в каждую внутреннюю точку области действительных движений проектируется четыре точки интегрального многообразия. Следовательно, в область, диффеоморфную кольцу, отображается два тора, а в область, диффеоморфную прямоугольнику, -- один. Окончательно, в компоненте $I$ множества $\bbRR\backslash \Sigma$ интегральное многообразие состоит из одного двумерного тора, в компонентах $\ts{II}-\ts{IV}$ из двух, а в компоненте $V$ из четырех двумерных торов.

\subsection{Области возможности движения и\\   критические интегральные поверхности}\label{ssec63}
Обращаясь к вопросам геометрического анализа, ограничимся случаем
\begin{equation}\label{feq3_1}
  g=0.
\end{equation}
При этом средняя прецессия тела вокруг вертикали равна нулю \cite{bib26}. Поэтому, зная характер траектории вектора $\mbs{\nu}$ на сфере Пуассона, получаем представление о  движении тела с точностью до ограниченных колебаний около вертикальной оси. Кроме того, в предположении \eqref{feq3_1} выкладки значительно упрощаются, и выделяется геометрический аспект предлагаемого метода. Общий случай может послужить предметом отдельной фундаментальной работы (достаточно сослаться на монографию \cite{bib21}, посвященную в основном изучению одних лишь особо замечательных движений).

При условии \eqref{feq3_1} перепишем уравнения интегрального многообразия $J_{h, k}$ (третий, нулевой, индекс опущен) в пространстве $S^2 \times \bbRR$ следующим образом:
\begin{eqnarray}
& &  2(\omega_1^2 + \omega_2^2) + \omega_3^2 = 2(h + \nu_1),\label{feq3_2}\\
& &  2(\omega_1\nu_1 + \omega_2\nu_2) + \omega_3\nu_3 = 0,\label{feq3_3}\\
& &  (\omega_1^2 + \omega_2^2)^2 + 2[\nu_1(\omega_1^2 - \omega_2^2) + 2\nu_2\omega_1\omega_2] + \nu_1^2 + \nu_2^2 = k.\label{feq3_4}
\end{eqnarray}

Проекцию $J_{h, k}$ на сферу Пуассона обозначим, как обычно, через $U_{h, k}$. Ее обобщенная граница определяется уравнением \eqref{ceq4_12}:
\begin{equation}\label{feq3_5}
  \omega_3[(\omega_1^2 + \omega_2^2)(\nu_1\omega_2 - \nu_2\omega_3) - (\nu_1^2 + \nu_2^2)\omega_2] - \nu_3[\nu_2(\omega_1^2 - \omega_2^2) - 2\nu_1\omega_1\omega_2]=0.
\end{equation}

Выясним, когда область возможности движения содержит в себе пересечения сферы Пуассона с главными осями инерции тела (т.е. когда главные оси могут занимать вертикальное положение либо проходить сколь угодно близко от него).

Рассмотрим точку
\begin{equation}\label{feq3_6}
  \mbs{\nu} = (1, 0, 0).
\end{equation}
Из уравнения \eqref{feq3_3} следует, что $\omega_1 = 0$, а из \eqref{feq3_2}, \eqref{feq3_4} получим ${\omega_2^2 = 1 \pm \sqrt{k}}$, ${\omega_3^2 = 2(h \mp \sqrt{k})}$.
Обращаясь к рис.~\ref{fig_6_1},\,{\it а}, видим, что центр масс тела может занимать нижнее положение, если $(h, k)$ принадлежит области $\ts{II}$ (при этом в точке \eqref{feq3_6} имеется 8 допустимых скоростей) или областям $I, \ts{III}$ (4 допустимых скорости).

Аналогично для точки
\begin{equation}\label{feq3_7}
  \mbs{\nu} = (-1, 0, 0)
\end{equation}
имеем $\omega_1 = 0$, $\omega_2^2 = -1 \pm \sqrt{k}$, $\omega_3^2 = 2(h \mp \sqrt{k})$.
Поэтому центр масс тела может занимать верхнее положение (а для почти всех начальных условий проходить сколь угодно близко от него), если $(h, k)$ принадлежит области $\ts{III}$ (при этом в точке \eqref{feq3_7} существует 4 допустимых скорости).

При
\begin{equation}\label{feq3_8}
  \mbs{\nu} = (0, \pm 1, 0)
\end{equation}
найдем $\omega_1^2 = \sqrt{k-1}$, $\omega_2 = 0$, $\omega_3^2 = 2(h - \sqrt{k - 1})$, поэтому точки \eqref{feq3_8} принадлежат ОВД, если $(h, k)$ лежит в областях $\ts{III} - \ts{IV}$ (4 допустимых скорости).

Полагая
\begin{equation}\label{feq3_9}
  \mbs{\nu} = (0, 0, \pm 1)
\end{equation}
из уравнений \eqref{feq3_2} -- \eqref{feq3_4} получим $\omega_3 = 0$, $\omega_1^2 + \omega_2^2 = h$, $(\omega_1^2 + \omega_2^2)^2 = k$, так что полюсы \eqref{feq3_9} принадлежат ОВД лишь в критическом случае
\begin{equation}\label{feq3_10}
  k = h^2, \quad h \gs 0.
\end{equation}
При этом, если $h > 0$, то в точках \eqref{feq3_9} имеется целая окружность допустимых скоростей. Для всех значений $(h, k)$, кроме \eqref{feq3_10}, ось динамической симметрии в процессе движения тела отделена от вертикали, поскольку ОВД компактна (см. также \cite{bib03}).

Заметим теперь, что, как и в случае Горячева\,--\,Чап\-лы\-ги\-на, область возможности движения симметрична относительно сечений сферы
\begin{equation}\label{feq3_11}
  \nu_2 = 0
\end{equation}
и
\begin{equation}\label{feq3_12}
  \nu_3 = 0.
\end{equation}
Поэтому изучим перестройки обобщенной границы, возникающие на данных сечениях.

Пусть выполнено \eqref{feq3_11} и $\nu_1 \ne 0$ (иначе имеем случай \eqref{feq3_9}). Из системы \eqref{feq3_2} -- \eqref{feq3_4} получаем
\begin{equation}\label{feq3_13}
  \ds \omega_1 = -\frac{\omega_3\nu_3}{2\nu_1}, \quad \omega_2^2 = \frac{Q(\omega_3)}{4\nu_1^2}, \quad X(\omega_3) = 0,
\end{equation}
где
\begin{equation*}
  \begin{array}{c}
     Q(\omega_3) = 4(h + \nu_1)\nu_1^2 - (1+ \nu_1^2)\omega_3^2, \\
     X(\omega_3) = \nu_1 \omega_3^4 - 4[\nu_1(h + \nu_1) - 1]\omega_3^2 - 4\nu_1(h^2-k).
   \end{array}
\end{equation*}
Преобразуем уравнение \eqref{feq3_5}: $Q(\omega_3)X'(\omega_3) = 0$. Условия совместности систем $X(\omega_3) = 0$, $Q(\omega_3) = 0$ и $X(\omega_3) = 0$, $X'(\omega_3) = 0$ имеют соответственно вид
\begin{equation}\label{feq3_14}
  (1 - \nu_1^2)^2h^2 + 4\nu_1(1-\nu_1^2)h + 4\nu_1^2 - (1 + \nu_1^2)^2k = 0,
\end{equation}
и
\begin{equation}\label{feq3_15}
  2\nu_1(1 - \nu_1^2)h - (1-\nu_1^2)^2-\nu_1^2k = 0.
\end{equation}
Пересечение обобщенной границы ОВД с окружностью \eqref{feq3_11} состоит, таким образом, из точек, удовлетворяющих условиям \eqref{feq3_14} или \eqref{feq3_15}. Разделяющие случаи возможны, если одно из уравнений \eqref{feq3_14}, \eqref{feq3_15} имеет кратный корень или если эти уравнения имеют общий корень. В первом случае приходим к бифуркационным кривым $k = h^2$, $k = 1 + h^2$. Во втором -- находим параметрические уравнения разделяющей кривой
\begin{equation*}
  \ds h = \frac{1 + \nu_1^4}{\nu_1(1 - \nu_1^2)}, \quad k = \left(\nu_1 + \frac{1}{\nu_1}\right)^2, \quad \nu_1 \in (0, 1),
\end{equation*}
или в явном виде
\begin{equation}\label{feq3_16}
  \ds h = \frac{k - 2}{\sqrt{k-4}}, \quad k > 4.
\end{equation}

Обратимся к случаю \eqref{feq3_12}. Из \eqref{feq3_2}, \eqref{feq3_3}
\begin{equation}\label{feq3_17}
  \omega_1 = \pm\sqrt{\Omega}\,\nu_2, \quad \omega_2 = \mp\sqrt{\Omega}\,\nu_1, \quad \omega_3^2 = 2R(\Omega),
\end{equation}
где
\begin{equation}\label{feq3_18}
  \Omega = \omega_1^2 + \omega_2^2, \quad R(\Omega) = h + \nu_1 - \Omega.
\end{equation}
Подстановка \eqref{feq3_17}, \eqref{feq3_18} в \eqref{feq3_4}, \eqref{feq3_5} приводит к системе
\begin{equation*}
  Z(\Omega) = 0, \quad \Omega R(\Omega) Z'(\Omega) = 0,
\end{equation*}
в которой $Z(\Omega) = \Omega^2 - 2\nu_1 \Omega + 1 - k$. Здесь нетрудно разобрать все возможные ситуации. Кроме бифуркационных кривых разделяющим оказывается отрезок
\begin{equation}\label{feq3_19}
  \{h = 0, \quad 0 < k < 1\}.
\end{equation}

В табл. \ref{tab61} показано изменение числа допустимых скоростей на окружности \eqref{feq3_12} в зависимости от положения точки $(h, k)$ и значения переменной $\nu_1$. Через $\hat{\partial}U_{h,k}$ обозначена обобщенная граница ОВД, для особых значений переменной $\nu_1$ введены обозначения $\sigma_* = \sqrt{1 - k}$, $\sigma^* = \sqrt{h^2 + 1 -k}$.

\begin{center}
\begin{tabular}{|c|c|c|c|c|}
  \multicolumn{5}{r}{\footnotesize Таблица \myt\label{tab61}} \\
  \hline
  Параметры & $\hat{\partial}U_{h,k}$ &
  {\renewcommand{\arraystretch}{0.7}
  \begin{tabular}{c} \strut Число\\ допустимых\\ скоростей \end{tabular}}
   & $\mathop{\rm int}\nolimits U_{h,k}$ &
   {\renewcommand{\arraystretch}{0.7}
   \begin{tabular}{c} \strut Число\\ допустимых\\ скоростей \end{tabular}}  \\
  \hline
  \begin{tabular}{c} $-\sqrt{k} < h < 0$\\ $0 < k < 1$ \end{tabular} & $\sigma^*$ & 2 & $(\sigma^*, 1]$ & 4 \\
  \hline
  $0 < h < \sqrt{k}$ & $\sigma_*$ & 2 & $(\sigma_*,\sigma^*) $ & 8 \\
  $0 < k < 1$ & $\sigma^*$ & 6 & $(\sigma_*,1]$ & 4 \\
  \hline
  \begin{tabular}{c} $h > \sqrt{k}$\\$ 0 < k < 1$ \end{tabular} & $\sigma_*$ & 4 & $(\sigma_*, 1]$ & 8 \\
  \hline
  \begin{tabular}{c} $\sqrt{k-1} < h <\sqrt{k}$\\ $k>1$ \end{tabular} & \begin{tabular}{c} $-\sigma^*$ \\ $\sigma^*$ \end{tabular} & \begin{tabular}{c} 2 \\ 2 \end{tabular} & $(-\sigma^*, \sigma^*)$ & 4 \\
  \hline
  $h > \sqrt{k}, \quad k > 1$ & $\varnothing$ & $-$ & $[-1, 1]$& 4 \\
  \hline
\end{tabular}
\end{center}

Изучим теперь точки сферы, в которых
\begin{equation}\label{feq3_20}
  \nu_2 \nu_3 \ne 0.
\end{equation}
Введем следующие обозначения:
\begin{equation}\label{feq3_21}
  \Omega = \omega_1^2 + \omega_2^2, \quad x = \omega_1 \nu_1 + \omega_2 \nu_2, \quad y = \omega_2 \nu_1 - \omega_1 \nu_2, \quad z = \nu_1^2 + \nu_2^2.
\end{equation}
Исключая $\omega_3$ из \eqref{feq3_2}, \eqref{feq3_3}, находим
\begin{equation}\label{feq3_22}
  2x^2 = (z - 1)[\Omega - (h + \nu_1)].
\end{equation}
Используя связь между величинами \eqref{feq3_21} и преобразуя уравнение \eqref{feq3_4}, имеем
\begin{equation}\label{feq3_23}
  \begin{array}{c}
    x^2 + y^2 = z\Omega, \\
    z\Omega^2 + 2[\nu_1(x^2 - y^2) - 2\nu_2xy]+z(z - k) = 0.
  \end{array}
\end{equation}
Приравняем нулю результант \eqref{feq3_23} по $y$:
\begin{equation*}
  16x^4 + 8[\nu_1\Omega^2 - 2z\Omega + \nu_1(z - k)]x^2 + z[\Omega^2 - 2\nu_1\Omega + (z - k)]^2 = 0.
\end{equation*}
Подставляя сюда выражение \eqref{feq3_22}, получаем одно уравнение, заменяющее систему \eqref{feq3_2} -- \eqref{feq3_4}:
\begin{equation}\label{feq3_24}
  a_4\Omega^4 - 4a_3\Omega^3 + 2a_2\Omega^2 - 4a_1 \Omega +a_0 = 0,
\end{equation}
в котором
\begin{equation*}
  \begin{array}{l}
    a_4 = z, \qquad a_3 = \nu_1, \\
    a_2 = 2\nu_1^2 - 2(z - 1)h\nu_1 - (z^2 + kz - 2), \\
    a_1 = -(z + k - 2)\nu_1 - 2(z - 1)h, \\
    a_0 = 4(k - 1)(z - 1)\nu_1^2 + 4(z + k - 2)(z - 1)h\nu_1 + 4(z - 1)^2 h^2 + z(z - k)^2.
  \end{array}
\end{equation*}
Преобразуем уравнение \eqref{feq3_5}. Исключив $\omega_3$ с помощью \eqref{feq3_3}, в обозначениях \eqref{feq3_21} запишем
\begin{equation*}
  2(z\Omega - \nu_1)xy = \nu_2[(1+z)x^2 - (1 - z)y^2].
\end{equation*}
Вводя в полученное соотношение $x^2, xy, y^2$, найденные из \eqref{feq3_22}, \eqref{feq3_23}, придем к уравнению
\begin{equation}\label{feq3_25}
  a_4\Omega^3 - 3a_3\Omega^2 + a_2\Omega - a_1 = 0.
\end{equation}
Таким образом, обобщенная граница ОВД отвечает, как и следовало ожидать, кратным корням уравнения \eqref{feq3_24}. Корни кратности три
\begin{equation}\label{feq3_26}
  3a_4\Omega^2 - 6a_3\Omega + a_2 = 0
\end{equation}
соответствуют особым точкам обобщенной границы, а общее число особых точек может измениться лишь в случае наличия четырехкратного корня
\begin{equation}\label{feq3_27}
  a_4\Omega - a_3 = 0.
\end{equation}
Если отбросить уже изученный случай $\nu_2 = 0$, система \eqref{feq3_24} -- \eqref{feq3_27} дает уравнение новой (и последней) разделяющей кривой
\begin{equation}\label{feq3_28}
  \ds h^2 = k - \frac{1}{k}, \quad k \gs 1.
\end{equation}


\begin{figure}[ht]
\center{\includegraphics[width=0.9\linewidth]{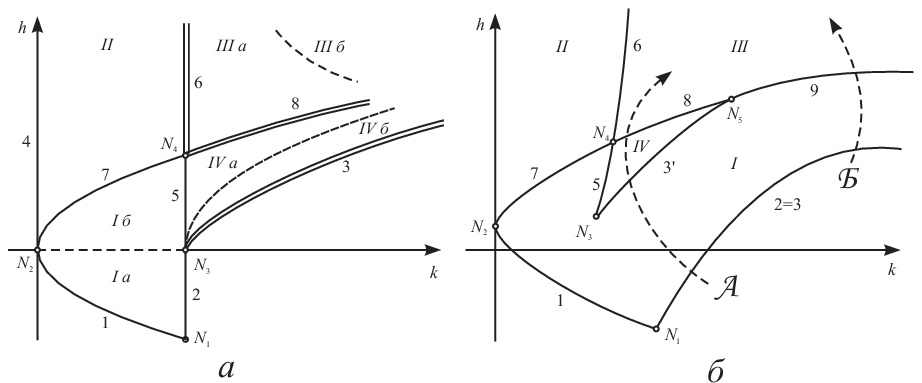}}
\caption{}\label{fig_6_3}
\end{figure}

На рис.~\ref{fig_6_3},\,{\it а} нанесены разделяющие множества \eqref{feq3_16}, \eqref{feq3_19}, \eqref{feq3_28}, арабскими цифрами обозначены качественно различные участки бифуркационных кривых (смысл двойных линий поясним далее) и выделены узловые точки $N_1 - N_4$. Имеется семь открытых областей с устойчивыми относительно возмущений $h$ и $k$ при фиксированном $g = 0$ видами ОВД. Для построения соответствующих рисунков можно использовать численные методы. С этой целью преобразуем уравнение обобщенной границы.

Комбинируя \eqref{feq3_24}, \eqref{feq3_25}, приходим к системе
\begin{equation}\label{feq3_29}
  U(z, \Omega)\nu_1 - V(z, \Omega) = 0, \quad V(z, \Omega)\nu_1 - W(z, \Omega) = 0,
\end{equation}
в которой
\begin{equation*}
  \begin{array}{l}
    U = 4\left\{\Omega^4 + [(3k - 2)z - 2k +1]\Omega^2 + 2(kz - 1)(z - 1)h\Omega -\right. \\
    \qquad \left. - (k - 1)[z^2 + (k - 3)z - k + 2] \right\},\\
    V = 2\left\{z\Omega^5 + 2z(k - 1)(z - 1)\Omega^3 - 4(z - 1)h\Omega^2 - \right. \\
    \qquad -[4(z - 1)^2h^2 + (2k -1)z^3 + 2(k^2 - 3k + 1)z^2 - (k^2 + 2k - 4)z + \\ \qquad \left.+4(k - 1)]\Omega + (k - 1)h(z - 1)^2 \right\}, \\
    W = z\Omega^6 + 2z(z - 1)h\Omega^5 - [z^2 + (k - 6)z + 4]\Omega^4 - \\
    \qquad -4(z - 1)h[z^2 + (k - 2)z + 4]\Omega^3 - \\
    \qquad - [20(z - 1)h^2 + z^3 - 2(5k - 2)z^2 + (k - 2)^2z + 4(k - 2)]\Omega^2 - \\
    \qquad - 2(z - 1)h[4(z - 1)^2h^2 - z^3 -2(3k -2)z^2 -  \\
    \qquad - (k^2 - 4k -4)z + z(k - 2)]\Omega - (z + k - 2)[4(z - 1)h^2 - z^3 + 2kz^2 - k^2z].
  \end{array}
\end{equation*}
Условие совместности системы \eqref{feq3_29}
\begin{equation}\label{feq3_30}
  UW - V^2 = 0
\end{equation}
преобразуем к виду $Z(z, \Omega)P(z, \Omega) = 0$, где
\begin{equation*}
  \begin{array}{l}
  Z(z, \Omega) = z\Omega^4 - 2[z^2 + (k - 2)z + 2]\Omega^2 - \\
  \qquad -8(z - 1)h\Omega + z(z + k - 2)^2 - 4(z - 1)^2 h^2, \\
  P(z, \Omega) = \Omega^6 - 2h\Omega^5 + [2(2k - 1)z - k -1]\Omega^4 - 4[(2k -1)z - k]\Omega^3 + \\
  \qquad +[(2k -1)^2z^2 - 2(2k - 1)z - k(k - 2) - 4kz(z - 1)h^2]\Omega^2 -\\
  \qquad - 2(k - z)^2h\Omega + (k - 1)(k - z)^2.
  \end{array}
\end{equation*}
На основании равенства $Z(z, \Omega) = 0$ имеем
\begin{equation*}
  W - zU = [\Omega^2 + 2h(z - 1)\Omega + z + k -2]Z(z, \Omega) = 0,
\end{equation*}
откуда в силу \eqref{feq3_29}, \eqref{feq3_30} $\nu_1^2 = z$ и, следовательно, $\nu_2 = 0$, что противоречит \eqref{feq3_20}. Итак, в области \eqref{feq3_20} величины $z, \Omega$ связаны уравнением
\begin{equation}\label{feq3_31}
  P(z, \Omega) = 0.
\end{equation}
Меняя $z$ в пределах $0 \ls z \ls 1$, определяем, вообще говоря, многозначную зависимость ${\Omega = \Omega(z)}$. Величину $\nu_1$ находим из системы \eqref{feq3_29}:
\begin{equation}\label{feq3_32}
  \ds \nu_1 = \frac{V(z, \Omega(z))}{U(z, \Omega(z))} \equiv \frac{W(z, \Omega(z))}{V(z, \Omega(z))},
\end{equation}
после чего
\begin{equation}\label{feq3_33}
  \nu_2 = \pm\sqrt{z - \nu_1^2}, \quad \nu_3 = \pm\sqrt{1 - z}.
\end{equation}


\begin{figure}[ht]
\centering
\includegraphics[width=1\linewidth]{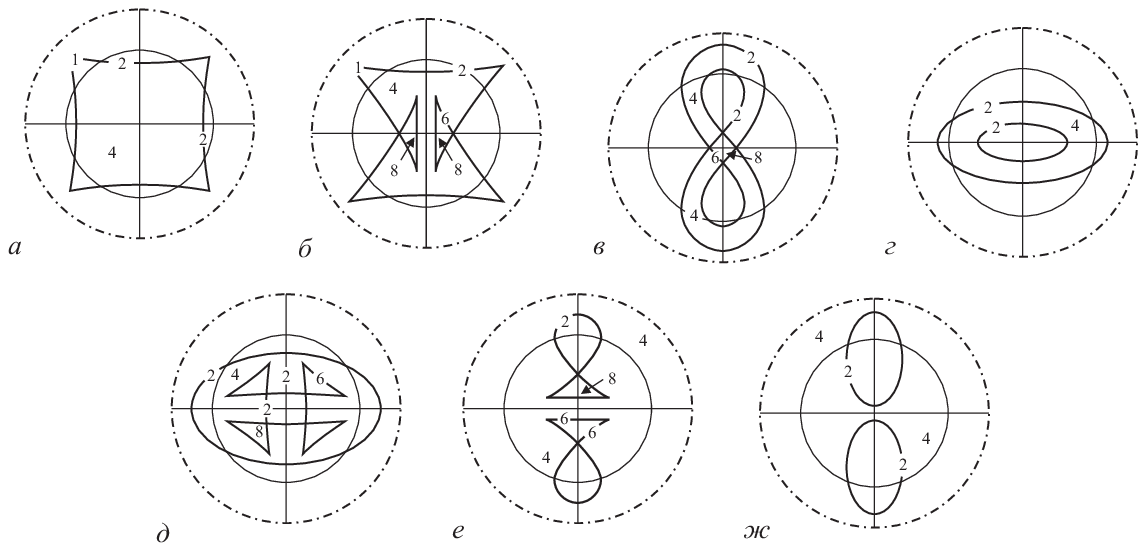}
\caption{}\label{fig_6_4}
\end{figure}

Реализуя вычисления по формулам \eqref{feq3_31} -- \eqref{feq3_33}, получаем виды ОВД, показанные на рис.~\ref{fig_6_4} в стереографической проекции полюса $(1, 0, 0)$. Сам полюс (<<бесконечно удаленная точка>>) изображен штрихпунктирной линией, тонкая сплошная окружность есть сечение сферы $\nu_1 = 0$, образы окружностей \eqref{feq3_11} и \eqref{feq3_12} -- вертикальный и горизонтальный отрезки. Соответствие рис.~\ref{fig_6_4},\,{\it а\,--\,ж} и областей на рис.~\ref{fig_6_3},\,{\it а} таково: {\it a -- \ts{I}а}, {\it б -- \ts{I}б}, {\it в -- \ts{II}}; {\it г -- \ts{IV}б}, {\it д -- \ts{IV}а}, {\it ж -- \ts{III}б}.

Число допустимых скоростей в связных компонентах дополнения обобщенной границы определяется из табл.~\ref{tab61}. Лишь в случаях {\it д} и {\it е} возникают компоненты, не имеющие выхода на сечение \eqref{feq3_12}. Здесь можно прибегнуть к анализу системы \eqref{feq3_11}, \eqref{feq3_13}. Впрочем, результат ясен из общих соображений.

Случай {\it а} хорошо известен: тор <<складывается>> в цилиндр, который затем проектируется <<ортогонально образующей>>. В случае {\it б} цилиндр предварительно деформирован поворотом одного из основания на угол меньше $\pi$. В случаях \textit{в}~-- \textit{ж} имеется кольцевидная область с самопересечением или особенностями типа <<сборки>> на границах ({\it д, е}). На эту область проектируются оба интегральных тора, порождая противоположные движения по кольцу (при $g = 0$ приведенная система на $S^2$ обратима). Неустойчивые случаи, отвечающие разделяющим кривым \eqref{feq3_16}, \eqref{feq3_19}, \eqref{feq3_28}, очевидно, отражают моменты исчезновения <<хвостов>>.

Изучим теперь интегральные поверхности в точках бифуркационного множества. Собственно критические движения, фазовые траектории которых состоят из критических точек интегрального отображения, достаточно хорошо изучены \cite{bib03,bib21}. Перечислим их согласно обозначениям на  рис.~\ref{fig_6_3},\,{\it а}.

Точке $N_1$ отвечает устойчивое равновесие тела. Область возможности движения сводится к единственной точке -- полюсу \eqref{feq3_6}.

На участках 1, 2 осуществляются колебания физического маятника вокруг второй и третьей осей инерции с амплитудой меньше $\pi$. ОВД сводится к отрезкам сечений \eqref{feq3_11}, \eqref{feq3_12}. В точках $N_2$, $N_3$ амплитуда колебаний достигает $\pi$. Во всех этих случаях интегральная поверхность диффеоморфна окружности.

Движения физического маятника происходят также на участках 5 (колебания около третьей оси с амплитудой больше $\pi$), 6 (вращения вокруг третьей оси в двух направлениях), 7 (колебания около второй оси с амплитудой больше $\pi$), 8 (вращения вокруг второй оси в двух направлениях). Однако в этих  случаях они уже не исчерпывают всех движений на соответствующей интегральной поверхности (в терминологии \cite{bib03} существуют простейшие движения, отличные от особо замечательных). Поэтому разберем указанные случаи более подробно.

Рассмотрим участок 4 (случай Делоне). Здесь возможны лишь собственно критические (особо замечательные) движения. Интегральное многообразие состоит из двух периодических траекторий. Действительно, из \eqref{feq3_2} -- \eqref{feq3_4} получаем
\begin{equation*}
  \nu_1 = \sqrt{1-\nu_3^2} - h \nu_3^2, \quad \nu_2 = \pm \sqrt{h\nu_3^2\left(2\sqrt{1 - \nu_3^2} - h\nu_3^2 \right)},
\end{equation*}
поэтому область возможности движения имеет вид <<восьмерки>> с центром в точке $\nu_1$, петли которой охватывают ось $\nu_3$. Движения по <<восьмерке>> происходят в обоих направлениях (в центре четыре допустимых скорости, в остальных точках~-- две). Следовательно, $J_{h, 0} = 2S^1$, как и утверждалось.

Рассмотрим участок 3. При исследовании кривой
\begin{equation}\label{feq3_34}
  k = h^2 + 1
\end{equation}
в работе \cite{bib21} допущены неточности. А именно в ней утверждается, что при $h = -\sqrt{k - 1}$ существуют особо замечательные движения, а при $h = \sqrt{k - 1}$, кроме названных, присутствуют и другие, простейшие движения. Из рис.~\ref{fig_6_1} (непрерывный переход от {\it б} к {\it а}) явствует, что это не так. Действительно, при условии \eqref{feq3_34} уравнение \eqref{feq3_31} приобретает вид
\begin{equation*}
  (\Omega - h)^2[(\Omega^2 + z - 1 - h^2)^2 + 4h^2z\Omega^2]=0,
\end{equation*}
откуда следует, что на границе ОВД
\begin{equation}\label{feq3_35}
  \Omega = h.
\end{equation}
Таким образом, в силу \eqref{feq3_21} $h > 0$ (случай $h = 0$, $k = 1$ уже отмечался -- это точка $N_3$). Подставляя \eqref{feq3_35}, в \eqref{feq3_32}, \eqref{feq3_33}, после исключения $z$ находим
\begin{equation}\label{feq3_36}
  \nu_2^2 = 1 - 2h\nu_1 - \nu_1^2, \quad \nu_3^2 = 2h\nu_1.
\end{equation}
Замкнутая кривая \eqref{feq3_36} лежит в полусфере $\nu_1 \gs 0$, содержит точки $\nu_2 = \pm1$ и охватывает ось $\nu_1$. На самой кривой две допустимых скорости, вне ее скоростей нет. Следовательно, ОВД диффеоморфна окружности. По ней происходят движения в двух направлениях. Интегральная поверхность $J_{h, k} = 2S^1$. Возвращаясь к  рис.~\ref{fig_6_3},\,{\it а} подведем итог: участкам бифуркационного множества, изображенного двойной линией, соответствуют две критические окружности, простой линией -- одна критическая окружность.

Обратимся к более подробному исследованию участков 5 -- 8. Здесь, кроме критических движений, имеются гладкие листы асимптотических поверхностей. Особые точки проектирования этих поверхностей на сферу Пуассона не принадлежат к числу критических точек интегрального отображения. Поэтому к ним применимы все рассуждения, использовавшиеся при построении ОВД в некритических случаях. В частности, при численном построении обобщенных границ могут с тем же успехом быть использованы уравнения \eqref{feq3_31} -- \eqref{feq3_33}. Никаких дополнительных особенностей здесь не возникает.

Рассмотрим точку на участке 5. Здесь $k = 0$, $0 < h < 1$. Соответствующая ОВД приведена на рис.~\ref{fig_6_5},\,{\it а}. Она разделяет случаи, изображенные на рис.~\ref{fig_6_4},\,{\it б, д}. С точки зрения трансформации ОВД переход из области {\it Iб} в область {\it IVа} (рис.~\ref{fig_6_3}) выглядит так: стороны внешней границы фигуры на рис.~\ref{fig_6_4},\,{\it б} складываются вдвое, порождая рис.~\ref{fig_6_5},\,{\it а}; затем горизонтальная средняя линия критической ОВД <<раздваивается>>, в результате чего приходим к рис.~\ref{fig_6_4},\,{\it д}. В фазовом же пространстве происходит следующее. На торе $J_{h, 1 - \varepsilon}$ (область {\it Iб}) выбираются два гомотопных между собой, но не гомотопных нулю цикла, которые при $\varepsilon \to +0$ склеиваются. В результате $J_{h, 1} = S^1 \times (S^1 \dot{\cup} S^1)$ (см. рис.~\ref{fig_3_2}). Средняя линия этой поверхности (след центра <<восьмерки>>) как раз и проектируется в горизонтальный отрезок на рис.~\ref{fig_6_5},\,{\it а}. Она является периодической траекторией движения физического маятника и служит $\alpha-$ и $\omega-$предельным циклом для остальных траекторий, лежащих на $J_{h, 1}$. Разрежем теперь $J_{h, 1}$ по средней линии так, чтобы каждая <<восьмерка>> распалась на на две окружности. Получим два тора, входящие в состав $J_{h, 1 + \varepsilon}$ (область {\it IVа}). Они проектируются на одну и ту же область, изображенную на рис.~\ref{fig_6_4},\,{\it д} (что связано с уже упоминавшейся обратимостью системы на $S^2$ при $g = 0$), поэтому число допустимых скоростей в этой области всюду четное.

\begin{figure}[ht]
\centering
\includegraphics[width=0.75\linewidth]{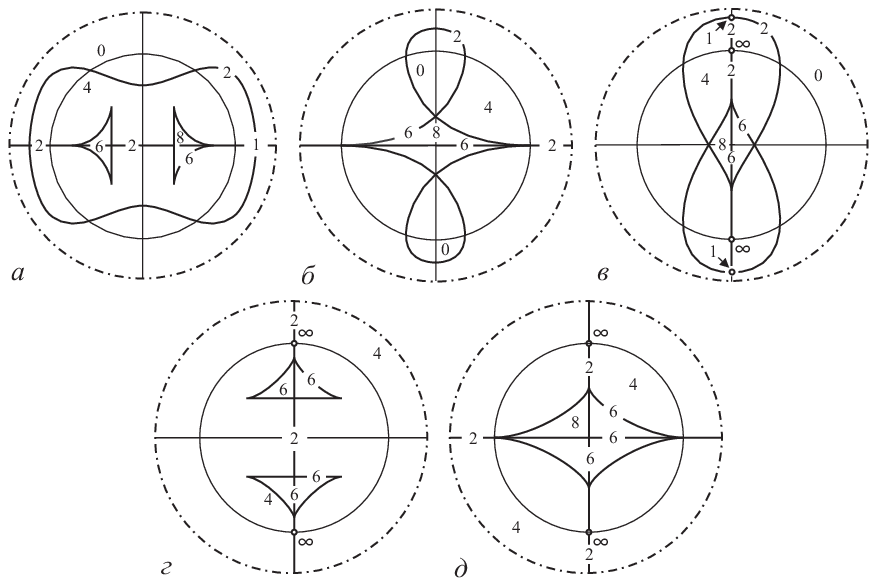}
\caption{}\label{fig_6_5}
\end{figure}


Область возможности движения, соответствующая точке $(1, h)$, $h > 1$ (участок~6 на рис.~\ref{fig_6_3},\,{\it а}) приведена на рис.~\ref{fig_6_5},\,{\it б}. ОВД для точек $(1 \mp \varepsilon, h)$ имеют вид, указанный на рис.~\ref{fig_6_4},\,{\it в, е}. Перестройка происходит с каждым интегральным тором в отдельности (число допустимых скоростей всюду четное). Качественно ее можно описать следующим образом. Рассмотрим на торе $(\varphi_1 \mathop{\rm mod}\nolimits 2, \varphi_2 \mathop{\rm mod}\nolimits 1)$, входящем в состав $J_{h, 1 - \varepsilon}$, цикл $\varphi_2 = 0$ и отождествим на нем точки, удаленные друг от друга на единицу. Результирующая поверхность гомеоморфна произведению <<восьмерки>> на отрезок, у которого края отождествлены по отображению $S^1 \dot{\cup} S^1 \to S^1 \dot{\cup} S^1$, сохраняющему ориентацию, но меняющему местами петли <<восьмерки>> (центральная симметрия). Эта поверхность, уже встречавшаяся в $\S$~\ref{ssec54}, обозначена $S^1 * (S^1 \dot{\cup} S^1)$. Она ориентируема, но представляет собой пространство нетривиального расслоения над окружностью со слоем $S^1 \dot{\cup} S^1$ (циклы, пространственно близкие к средней линии, должны иметь удвоенную длину (см. рис.~\ref{fig_5_7})). Раздвоим теперь среднюю линию этой поверхности так, чтобы каждая превратилась в  одну окружность. Получим тор $(\varphi_1 \mathop{\rm mod}\nolimits  1, \varphi_2 \mathop{\rm mod}\nolimits  2)$, а след средней линии на нем -- цикл $\varphi_1 = \varphi_2$ типа \eqref{feq2_1}. Итак, в данном случае $J_{h, 1} = 2S^1 * (S^1 \dot{\cup} S^1)$. Средние линии образовавшихся поверхностей -- это вращения маятника. Они являются предельными движениями для остальных траекторий связной компоненты при $t \to \pm \infty$.

Рассмотрим участок 7 (рис.~\ref{fig_6_3},\,{\it а}). На рис.~\ref{fig_6_5},\,{\it в} приведена ОВД для точки $(h^2, h)$, $h < 1$. В полюсах \eqref{feq3_9} сферы Пуассона имеется целая окружность допустимых скоростей. Они характеризуют возможные состояния тела, в которых ось динамической симметрии вертикальна. Каждая траектория поочередно пересекает полюсы \eqref{feq3_9} и при $t \to \pm \infty$ приближается к следу маятникового движения. Интегральная поверхность $S^1 \times (S^1 \dot{\cup} S^1)$ получается из близких интегральных многообразий точно так же, как на участке 5. Нетрудно проследить и трансформацию областей возможности движения при переходе из области $\ts{II}$ (рис.~\ref{fig_6_4},\,{\it в}) в область {\it Iб} (рис.~\ref{fig_6_4},\,{\it б}). Интересно отметить, что непрерывно меняя отображение проектирования $J_{h, h^2} \to S^2$, можно ликвидировать <<скрученность>> проекций, т.е. получить переход от области на рис.~\ref{fig_6_4},\,{\it а}  к области на рис.~\ref{fig_6_4},\,{\it г}. Такой подход реализуется, например, в задаче Клебша при нулевой постоянной интеграла площадей. Процесс непосредственного построения критической интегральной поверхности по соответствующей ОВД подробно описан в работе~\cite{bib60}.

\begin{figure}[ht]
\centering
\includegraphics[width=\linewidth]{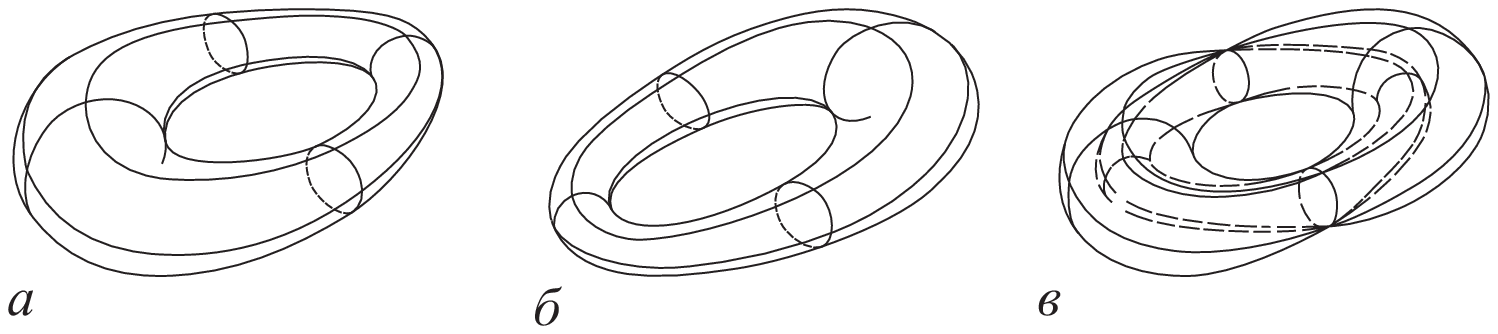}
\caption{}\label{fig_6_6}
\end{figure}


При переходе через участок 8 сверху вниз происходит изменение ОВД, которое можно проследить на рис.~\ref{fig_6_4},\,{\it е}, \ref{fig_6_5},\,{\it г}, \ref{fig_6_4},\,{\it д}. Здесь на каждом интегральном торе возникают два предельных цикла, отвечающих вращениям маятника и склеивающихся с соответствующими циклами второго тора. Результат есть поверхность $(S^1 \ddot{\cup} S^1) {\times} S^1$  (рис.~\ref{fig_6_6}). Попутно выясняется наглядный смысл глубокой теоремы Аппельрота \cite{bib03} о том, что в рассматриваемых движениях (третья группа 3-го класса) ось динамической симметрии периодически становится вертикальной, причем за исключением особо замечательных (в данном случае -- маятниковых) движений направление этой оси в вертикальном положении всегда одно и то же. В самом деле, прообразом полюса $(0, 0, 1)$ является окружность на одном из торов, трансверсальная предельным циклам (например, самая внутренняя окружность на  рис.~\ref{fig_6_6},\,{\it а}). Прообраз полюса $(0, 0, -1)$ есть аналогичная окружность на другом торе (например, самая внешняя окружность на рис.~\ref{fig_6_6},\,{\it б}). В результате все асимптотические движения периодически пересекают лишь одну из этих окружностей, а предельные поочередно пересекают обе. Кроме того, чем ближе асимптотическая траектория, проходящая через заданный полюс, к предельному циклу, тем, очевидно, ближе она периодически подходит к другому полюсу, никогда его не достигая. Этот факт тоже отмечен в одной из теорем Аппельрота. Можно дать следующее качественное описание поведения траекторий на сфере: пучок, выходящий из полюса $(0, 0, 1)$, <<улавливается>> нижним <<карманом>> (см.  рис.~\ref{fig_6_5},\,{\it г}), отражаясь от соответствующих складок, он преломляется и через определенный промежуток времени вновь концентрируется в исходном полюсе. Аналогичное явление происходит и с траекториями, пересекающими полюс $(0, 0, -1)$.

\begin{figure}[ht]
\centering
\includegraphics[width=0.4\linewidth]{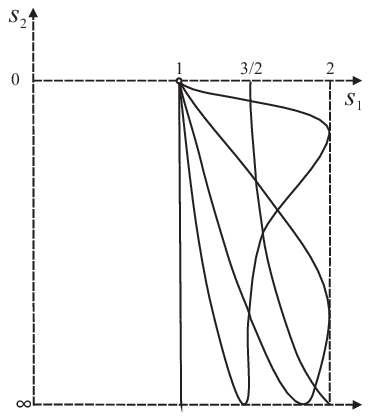}
\caption{}\label{fig_6_7}
\end{figure}

Изучим, наконец, точку $N_4$. Здесь $h = 1$, $k = 1$. Соответствующая ОВД указана на рис.~\ref{fig_6_5},\,{\it д}. Интегральная поверхность получается из рассмотренных ранее с помощью некоторых отождествлений, которые можно проследить на рисунках. Она слишком сложна для наглядного изображения. Относительно характера траекторий скажем следующее. Все они гомоклинические для неподвижной точки $(-1, 0, 0)$, что видно непосредственно из уравнений Ковалевской. Имеются две петли, не пересекающие полюсы $(0, 0, \pm1)$, -- движения по сечению $\nu_3 = 0$. В процессе  всех остальных движений ось динамической симметрии дважды становится вертикальной. Действительно, на плоскости переменных \eqref{feq1_6} траектории заключены в полуполосе $[1, 2] \times [-\infty, 0]$ и устроены, как показано на  рис.~\ref{fig_6_7}. Образ множества состояний с вертикальной третьей осью -- кривая, соединяющая точки $s_1 = 3/2, s_2 = 0$ и $s_1 = 2, s_2 = -\infty$. Эта кривая пересекает каждую траекторию в двух точках. Более детальный анализ здесь не проводится.

В заключение рассмотрим некоторые замечания к вопросу об эволюции критических поверхностей при изменении постоянной интеграла площадей. До тех пор пока некоторый участок бифуркационного множества меняется непрерывно (в очевидном смысле), сохраняется и топологическая структура соответствующей интегральной поверхности. Рассмотрим, например, переход к небольшим ненулевым значениям $g$ (рис.~\ref{fig_6_3},\,{\it б}). Произошло раздвоение участка~3. От него отщепились участки~$3'$ и~9, <<захватив>> одну из критических окружностей. Участок $8$ стал конечной длины. В остальных точках бифуркационного множества ничего не изменилось. Переходы, указанные пунктирными стрелками, таковы:

А) $\varnothing \to S^1 \to {\bf T}^2 \to S^1 \cup {\bf T}^2 \to 2{\bf T}^2 \to (S^1 \ddot{\cup} S^1) \times S^1 \to 2{\bf T}^2$;

Б) $\varnothing \to S^1 \to {\bf T}^2 \to (S^1 \dot{\cup} S^1) \times S^1 \to 2{\bf T}^2$.

Интегральная поверхность в $N_3$ -- средняя между $(S^1 \dot{\cup} S^1) \times S^1$ и $S^1 \cup {\bf T}^2$. Очевидно, это $S_\wedge^1 \times S^1$ ($S_\wedge^1$ -- окружность с угловой точкой). Угловая точка дает периодическое решение -- собственно критическое многообразие. Остальные траектории к нему асимптотически приближаются.

Анализируя эволюцию сечения бифуркационного множества на рис.~\ref{fig_6_1}, замечаем, что дополнительного исследования требуют лишь критические интегральные поверхности, отвечающие точкам границы области $V$.

Переход из области $V$ в область $\ts{III}$ совершенно очевиден, если обратиться к рис.~\ref{fig_6_2}: ситуация, промежуточная между рис.~\ref{fig_6_2},\,{\it в} и рис.~\ref{fig_6_2},\,{\it д} -- это суперпозиция проекций двух поверхностей вида $S^1 \times (S^1 \dot{\cup} S^1)$. Поэтому здесь бифуркация такова: $4{\bf T}^2 \to 2S^1 \times (S^1 \dot{\cup} S^1) \to 2{\bf T}^2$.

Перестройки между областями $\ts{II}$  и $V$ менее ясны. Наиболее простой способ их выяснения~-- рассмотреть <<след>> этих перестроек на плоскости $k = 0$ (см. рис.~\ref{fig_6_1},\,{\it г}). При этом из системы первых интегралов получаем
\begin{equation}\label{feq3_37}
  4\omega_1^2 + \omega_3^2 = 2h, \quad 2h(\omega_1^2 + \omega_2^2) + 4g\omega_1 = \pm \omega_3 \sqrt{2(h - 2g^2)}.
\end{equation}
\begin{equation}\label{feq3_38}
  \nu_1 = \omega_2^2 - \omega_1^2, \quad \nu_2 = -2\omega_1\omega_2, \quad h\nu_3 = \pm \omega_1 \sqrt{2(h - 2g^2)} + g \omega_3.
\end{equation}

Поскольку величины $\nu_i$ однозначно выражены через $\omega_1, \omega_2, \omega_3$ равенствами \eqref{feq3_38} (знаки в \eqref{feq3_37} и \eqref{feq3_38} согласованы), то интегральное многообразие, в данном случае одномерное, диффеоморфно кривой $\Gamma$ -- линии пересечения цилиндра и параболоида \eqref{feq3_37}. Естественно, что бифуркации $\Gamma$ происходят при значениях \eqref{feq2_4}. Соответствующий анализ весьма прост \cite{bib25,bib62}: на ограниченной ветви кривой \eqref{feq2_4} $\Gamma = 2(S^1 \dot{\cup} S^1)$, на неограниченной -- множество $\Gamma$ состоит из двух окружностей и двух изолированных точек. Возвращаясь к значениям $k > 0$, заключаем, что переход из области $V$ в область $\ts{II}$ через нижнюю границу (см. рис.~\ref{fig_6_1},\,{\it в, г}) сопровождается бифуркацией $4{\bf T}^2 \to 2S^1 \times (S^1 \dot{\cup} S^1) \to 2{\bf T}^2$, а через верхнюю -- бифуркацией $4{\bf T}^2 \to 2{\bf T}^2 \cup 2S^1 \to 2{\bf T}^2$. В последнем случае на двух непрерывно изменяющихся торах нет критических точек интегрального отображения.

\subsection{Комментарий к главе 6}\label{ssec64}

Среди точных решений задачи о движении твердого тела с закрепленной точкой наиболее сложным в плане аналитического и качественного исследования является, несомненно, решение, указанное С.В.\,Ковалевской \cite{bib24}. Оригинальность идей, приведших к его открытию, виртуозность приемов, сводящих к квадратурам уравнения Эйлера\,--\,Пуассона, широта чисто математических проблем, -- все это предопределило интерес ученых к решению Ковалевской, не ослабевающий на протяжении столетия. Отметим лишь некоторые из результатов, находящихся в непосредственной связи с настоящей работой.

Г.Г.\,Аппельрот \cite{bib03} первым выделил четыре класса особых движений гироскопа Ковалевской. Эти классы определяются наличием кратного корня многочлена пятой степени, находящегося в правой части уравнений, которые описывают зависимость от времени вспомогательных переменных Ковалевской. Более подробно особые движения изучил А.Ф.\,Ипатов \cite{bib21}. Он же дал геометрическую трактовку классов Аппельрота как отвечающих частям поверхности кратных корней упомянутого многочлена в пространстве констант первых интегралов. В работах \cite{bib61} -- \cite{bib63} методом годографов изучены движения, относящиеся к первому и второму классам, и так называемые особо замечательные движения третьего класса, в которых подвижный годограф угловой скорости тела -- замкнутая кривая.

В действительности классам Аппельрота в пространстве констант первых интегралов соответствует множество меры нуль. Остальные движения не были подробно изучены. В существенном предположении независимости интегралов уравнений Эйлера\,--\,Пуассона поведение углов прецессии и собственного вращения исследовалось В.В.\,Козловым \cite{bib26}. До последнего времени, однако, не было известно, при каких именно значениях постоянных первые интегралы являются независимыми. В данной главе доказано, в частности, что случаям зависимости интегралов отвечают именно классы Аппельрота. Конечно, результат этот подсказан всем опытом работы с разделяющимися переменными. Но все же он требует доказательства, так как в отличие, например, от лиувиллевых систем или случая Чаплыгина\,--\,Сретенского здесь не удается указать разделяющего преобразования, не содержащего постоянных интегралов. Более того, известные формулы, выражающие переменные Эйлера\,--\,Пуассона через переменные Ковалевской, выводятся по существу в предположении независимости интегралов.

В настоящей главе получено также полное описание интегральных многообразий и их бифуркаций, выполнен геометрический анализ при нулевой постоянной площадей. Отметим, что здесь критические интегральные поверхности строились на основе областей возможности движения на сфере Пуассона без использования вспомогательных переменных, так как связь последних с переменными Эйлера\,--\,Пуассона существенно сложнее, чем в случае Чаплыгина\,--\,Сретенского.

\clearpage

\end{document}